\documentclass{sig-alternate-05-2015}

\sloppy
\usepackage{enumitem}\setitemize{leftmargin=12pt}\setenumerate{leftmargin=12pt}

\usepackage{times}
\usepackage{theorem} 
\usepackage{stmaryrd} 
\usepackage{epsfig} 
\usepackage{misc}
\usepackage{amsmath,amssymb}
\usepackage{bbm}
\usepackage{bm}

\usepackage{boxedminipage}
\usepackage{pifont}
\usepackage{graphicx}
\usepackage{xspace} 
\usepackage{color}
\usepackage{url}

\newcommand{\coNP}{\textnormal{\sc coNP}\xspace}

\newtheorem{aremark}{Remark}
\newtheorem{adefinition}{Definition}
\newtheorem{atheorem}{Theorem}
\newtheorem{aexample}{Example}
\newtheorem{aclaim}{Claim}
\newtheorem{alemma}{Lemma}
\newtheorem{aproposition}{Proposition}
\newtheorem{acorollary}{Corollary}

\newenvironment{definition}{\vspace{-3mm}\begin{adefinition}}{\end{adefinition}\vspace{-1mm}}
\newenvironment{theorem}{\vspace{-3mm}\begin{atheorem}}{\end{atheorem}\vspace{-1mm}}
\newenvironment{example}{\vspace{-3mm}\begin{aexample}}{\end{aexample}\vspace{-1mm}}

\newenvironment{lemma}{\vspace{-3mm}\begin{alemma}}{\end{alemma}\vspace{-1mm}}

\newenvironment{corollary}{\vspace{-3mm}\begin{acorollary}}{\end{acorollary}\vspace{-1mm}}

\newcommand{\PTIME}{\PTime}

\newcommand{\SAT}{\textsc{SAT}}
\newcommand{\midsqcup}{\mathop{\mathop{\mbox{\midmathxx\symbol{116}}}}\limits}
\newcommand{\NPTime}{\textnormal{\sc NP}\xspace}

\newcommand{\isdef}{\mathrel{\mathop:}=}
\newcommand{\nat}{\mathbb{N}}
\newcommand{\set}[1]{\{{#1}\}}
\newcommand{\card}[1]{\lvert{#1}\rvert}
\newcommand{\union}{\cup}
\newcommand{\isect}{\cap}

\newcommand{\len}[1]{\lvert{#1}\rvert}

\newcommand{\uGF}{\ensuremath{\text{uGF}(=)}}
\newcommand{\uGC}{\ensuremath{\text{uGC}_2(=)}}
\newcommand{\bag}{\text{bag}}
\newcommand{\dist}{\mathsf{dist}}
\newcommand{\tp}{\mathsf{tp}}

\newcommand{\openGF}{\text{openGF}}
\newcommand{\openGC}{\ensuremath{\text{openGC}_2}}

\newcommand{\RunFit}{\textsc{RF}}

\DeclareGraphicsRule{.pdftex}{pdf}{.pdftex}{}


\clubpenalty=10000
\widowpenalty = 10000

\begin{document}

\CopyrightYear{2017}
\setcopyright{acmlicensed}
\conferenceinfo{PODS'17,}{May 14 -- 19, 2017, Chicago, IL, USA.}
\isbn{978-1-4503-4198-1/17/05}\acmPrice{\$15.00}
\doi{http://dx.doi.org/10.1145/3034786.3056108}

\title{Dichotomies in Ontology-Mediated Querying with the Guarded Fragment}

\numberofauthors{4}
\addtolength{\auwidth}{2cm}

\author{
 \alignauthor
 Andr{\'e} Hernich\\
        \affaddr{University of Liverpool}\\
        \affaddr{United Kingdom}\\
        \email{andre.hernich@liverpool.ac.uk}
 \alignauthor
 Carsten Lutz\\
        \affaddr{University of Bremen}\\
        \affaddr{Germany}\\
        \email{clu@informatik.uni-bremen.de}
\and
\alignauthor
 Fabio Papacchini\\
        \affaddr{University of Liverpool}\\
        \affaddr{United Kingdom}\\
        \email{fabio.papacchini@liverpool.ac.uk}
\alignauthor Frank Wolter\\
        \affaddr{University of Liverpool}\\
        \affaddr{United Kingdom}\\
        \email{wolter@liverpool.ac.uk}
}

\maketitle   
\begin{abstract}
  We study the complexity of ontology-mediated querying when
  ontologies are formulated in the guarded fragment of first-order
  logic (GF). Our general aim is to classify the data complexity on
  the level of ontologies where query evaluation w.r.t.\ an ontology
  \Omc is considered to be in \PTime if all (unions of conjunctive)
  queries can be evaluated in \PTime w.r.t.~\Omc and {\sc coNP}-hard
  if at least one query is {\sc coNP}-hard w.r.t.~\Omc. We identify
  several large and relevant fragments of GF that enjoy a dichotomy
  between \PTime and {\sc coNP}, some of them additionally admitting a
  form of counting.  In fact, almost all ontologies in the BioPortal
  repository fall into these fragments or can easily be rewritten to
  do so. We then establish a variation of Ladner's Theorem on the
  existence of {\sc NP}-intermediate problems and use this result to
  show that for other fragments, there is provably no such dichotomy.
  Again for other fragments (such as full GF), establishing a
  dichotomy implies the Feder-Vardi conjecture on the complexity
  of constraint satisfaction problems. We also link these results to
  Datalog-rewritability and study the decidability of whether a given
  ontology enjoys \PTime query evaluation, presenting both positive
  and negative results.
\end{abstract}

\keywords{Ontology-Based Data Access; Query Answering; Dichotomies}

\section{Introduction}

In \emph{Ontology-Mediated Querying}, incomplete data is enriched with
an ontology that provides domain knowledge, enabling more complete
answers to queries \cite{DBLP:journals/jods/PoggiLCGLR08,DBLP:conf/rweb/BienvenuO15,DBLP:conf/rweb/KontchakovZ14}.
This paradigm has recently received a lot of interest, a significant fraction of the
research being concerned with the (data) complexity of querying
\cite{DBLP:journals/jar/OrtizCE08,DBLP:journals/ai/CalvaneseGLLR13} and, closely related, with the rewritability of
ontology-mediated queries into more conventional database query
languages \cite{Romans,DBLP:journals/ai/GottlobKKPSZ14,DBLP:journals/tods/GottlobOP14,DBLP:journals/ai/KaminskiNG16,DBLP:conf/ijcai/GottlobMP15}.
A particular emphasis has been put on designing ontology languages that result in \PTime data complexity,
and in delineating these from the {\sc coNP}-hard cases. This question
and related ones have given rise to a considerable array of ontology
languages, including many description logics (DLs) \cite{dllite-jair09,DBLP:conf/rweb/Krotzsch12}
and a growing number of classes of tuple-generating dependencies
(TGDs), also known as Datalog$^\pm$ and as existential rules
\cite{DBLP:journals/ai/CaliGP12,DBLP:conf/rweb/MugnierT14}. A general and uniform framework is provided by the
\emph{guarded fragment (GF)} of first-order logic and extensions
thereof, which subsume many of the mentioned ontology languages
\cite{DBLP:journals/corr/BaranyGO13,DBLP:journals/jacm/BaranyCS15}.

In practical applications, ontologies often need to use language
features that are only available in computationally expensive ontology
languages, but do so in a way such that one may hope for hardness to
be avoided. This observation has led to a more fine-grained study of
data complexity than on the level of ontology languages, initiated in
\cite{KR12-csp}, where the aim is to classify the complexity of individual
ontologies while quantifying over the actual query: query evaluation
w.r.t.\ an ontology \Omc is in \PTime if every CQ can be evaluated in
\PTime w.r.t.\ \Omc and it is {\sc coNP}-hard if there is at least one
CQ that is {\sc coNP}-hard to evaluate w.r.t.~\Omc. In this way, one
can identify tractable ontologies within ontology languages that are,
in general, computationally hard. Note that an even more fine-grained
approach is taken in \cite{DBLP:journals/tods/BienvenuCLW14}, where one aims to classify the
complexity of each pair $(\Omc,q)$ with \Omc an ontology and $q$ an
actual query. Both approaches are reasonable, the first one being
preferable when the queries to be answered are not fixed at the design
time of the ontology; this is actually often the case because
ontologies are typically viewed as general purpose artifacts to be used
in more than a single application. In this paper, we follow the former
approach.

\begin{figure*}[th]
  \begin{boxedminipage}{\textwidth}
  \centering
 \includegraphics{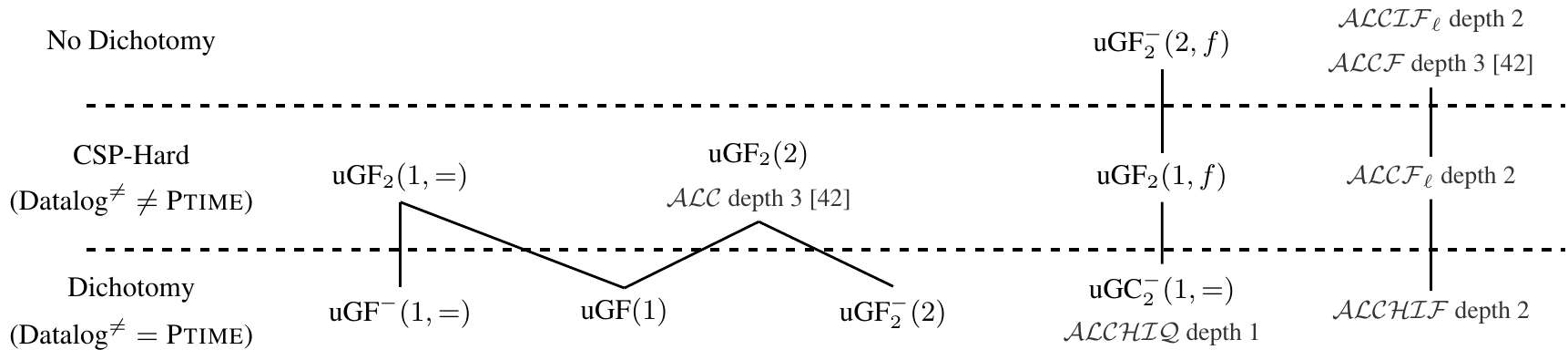}
  \end{boxedminipage}
  \caption{Summary of the results---Number in brackets indicates
    depth, $f$ presence of partial functions, $\cdot_2$ restriction to
    two variables, $\cdot^-$ restricts outermost guards to be
    equality, \Fmc globally function roles, $\Fmc_\ell$ concepts $(\leq 1 \, R)$.}
\label{fig:results}
\end{figure*}
The main aim of this paper is to \emph{identify fragments of GF (and
  of extensions of GF with different forms of counting) that result in
  a dichotomy between {\sc PTime} and {\sc coNP} when used as an
  ontology language and that cover as many real-world ontologies as
  possible, considering conjunctive queries (CQs) and unions thereof
  (UCQs) as the actual query language.} We also aim to provide insight
into which fragments of GF (with and without counting) do \emph{not}
admit such a dichotomy, to understand the relation between {\sc PTime}
data complexity and rewritability into Datalog (with inequality in
rule bodies, in case we start from GF with equality or counting), and
to clarify whether it is decidable whether a given ontology has {\sc
  PTime} data complexity.  Note that we concentrate on
\emph{fragments} of GF because for the full guarded fragment, proving
a dichotomy between {\sc PTime} and {\sc coNP} implies the
long-standing Feder-Vardi conjecture on constraint satisfaction
problems \cite{FederVardi} which indicates that it is very difficult
to obtain (if it holds at all).  In particular, we concentrate on the
fragment of GF that is invariant under disjoint unions, which we call
uGF, and on fragments thereof and their extension with forms of
counting.  Invariance under disjoint unions is a fairly mild
restriction that is shared by many relevant ontology languages, and it
admits a very natural syntactic characterization.


 Our results are summarized in Figure~\ref{fig:results}. We first
explain the fragments shown in the figure and then survey the obtained
results.  A uGF ontology is a set of sentences of the form $\forall
\vec{x} (R(\vec{x}) \rightarrow \varphi(\vec{x}))$ where $R(\vec{x})$
is a guard (possibly equality) and $\varphi(\vec{x})$ is a GF formula
that does not contain any sentences as subformulas and in which
equality is not used as a guard. The \emph{depth} of such a sentence
is the quantifier depth of $\varphi(\vec{x})$ (and thus the outermost
universal quantifier is not counted). A main parameter that we vary is
the depth, which is typically very small in real world ontologies. In
Figure~\ref{fig:results}, the depth is the first parameter displayed
in brackets. As usual, the subscript $\cdot_2$ indicates the
restriction to two variables while a superscript $\cdot^-$ means that
the guard $R(\vec{x})$ in the outermost universal quantifier can only
be equality, $=$ means that equality is allowed (in non-guard
positions), $f$ indicates the ability to declare binary relation
symbols to be interpreted as partial functions, and GC$_2$ denotes the
two variable guarded fragment extended with counting quantifiers, see
\cite{DBLP:conf/jelia/Kazakov04,DBLP:journals/logcom/Pratt-Hartmann07}.
While guarded fragments are displayed in black,
description logics (DLs) are shown in grey and smaller font
size. We use standard DL names except that `\Fmc' denotes globally
functional roles while `$\Fmc_\ell$' refers to counting concepts of
the form $(\leq 1 \, R)$. We do not explain DL names
here, but refer to the standard literature \cite{handbook}.

The bottommost part of Figure~\ref{fig:results} displays fragments for
which there is a dichotomy between {\sc PTime} and {\sc coNP}, the
middle part shows fragments for which such a dichotomy implies the
Feder-Vardi conjecture (from now on called CSP-hardness), and the
topmost part is for fragments that provably have no dichotomy (unless
$\text{\sc PTime} = \text{\sc NP}$).  The vertical lines indicate that
the linked results are closely related, often indicating a fundamental
difficulty in further generalizing an upper bound. For example,
uGF$^-(1,=)$ enjoys a dichotomy while uGF$_2(1,=)$ is CSP-hard, which
demonstrates that generalizing the former result by dropping the
restriction that the outermost quantifier has to be equality
(indicated by $\cdot^-$) is very challenging (if it is possible at
all).\footnote{A tentative proof of the Feder-Vardi conjecture has
  very recently been announced in
  \cite{DBLP:journals/corr/RafieyKF17}, along with an invitation to
  the research community to verify its validity.}  Our positive
results are thus optimal in many ways. All results hold both when CQs
and when UCQs are used as the actual query; in this context, it is
interesting to note that there is a GF ontology (which is not an uGF
ontology) for which CQ answering is in {\sc PTime} while UCQ-answering
is {\sc coNP}-hard. In the cases which enjoy a dichotomy, we also show
that {\sc PTime} query evaluation coincides with rewritability into
Datalog (with inequality in the rule bodies if we start from a
fragment with equality or counting). In contrast, for all fragments
that are CSP-hard or have no dichotomy, these two properties do
provably not coincide. This is of course independent of whether or not
the Feder-Vardi conjecture holds.

For $\mathcal{ALCHIQ}$ ontologies of depth~1, we also show that it is
decidable and \ExpTime-complete whether a given ontology admits {\sc
  PTime} query evaluation (equivalently: rewritability into
Datalog$^{\neq}$). For uGC$^-_2(1,=)$, we show a {\sc NExpTime} upper
bound. For $\mathcal{ALC}$ ontologies of depth~2, we establish {\sc
  NExpTime}-hardness. The proof indicates that more sophisticated
techniques are needed to establish decidability, if the problem is
decidable at all (which we leave open).

To understand the practical relevance of our results, we have analyzed
411 ontologies from the BioPortal repository \cite{whetzel2011bioportal}. After
removing all constructors that do not fall within $\mathcal{ALCHIF}$, an impressive
405 ontologies turned out to have depth~2 and thus belong to a fragment with dichotomy (sometimes modulo an
easy complexity-preserving rewriting).
For $\mathcal{ALCHIQ}$, still 385 ontologies had depth~1 and so belonged to a fragment with dichotomy.
%
As a concrete and simple
example, consider the two uGC$^-_2(1)$-ontologies
$$
\begin{array}{r@{\;}c@{\;}l}
  \Omc_1 &=& \{ \forall x \, (\mn{Hand}(x) \rightarrow
  \exists^{{=}5}  y\, \mn{hasFinger}(x,y)) \} \\[1mm]
  \Omc_2 &=& \{ \forall x \, (\mn{Hand}(x) \rightarrow
  \exists  y\, (\mn{hasFinger}(x,y) \wedge \mn{Thumb}(y))) \}
\end{array}
$$
which both enjoy {\sc PTime} query evaluation (and thus rewritability
into Datalog$^{\neq}$), but where query evaluation w.r.t.\ the union
$\Omc_1 \cup \Omc_2$ is {\sc coNP}-hard. Note that such subtle
differences cannot be captured when data complexity is studied on the
level of ontology languages, at least when basic compositionality
conditions are desired. 

We briefly highlight some of the techniques used to establish our
results.  An important role is played by the notions of
materializability and unraveling tolerance of an
ontology~\Omc. Materializability means that for every instance~\Dmf,
there is a universal model of \Dmf and \Omc, defined in terms of query
answers rather than in terms of homomorphisms (which, as we show, need
not coincide in our context). Unraveling tolerance means that the
ontology cannot distinguish between an instance and its unraveling
into a structure of bounded treewidth. While non-materializability of
\Omc implies that query evaluation w.r.t.\ \Omc is {\sc coNP}-hard,
unraveling tolerance of \Omc implies that query evaluation w.r.t.\
\Omc is in {\sc \PTime} (in fact, even rewritable into Datalog).
To establish dichotomies, we prove for the relevant fragments that
materializability implies unraveling tolerance which, depending on the fragment, can be
technically rather subtle.
To prove CSP-hardness or non-dichotomies, very informally speaking, we
need to express properties in the ontology that a (positive
existential) query cannot `see'. This is often very subtle and can
often be achieved only partially. While the latter is not a major
problem for CSP-hardness (where we need to deal with CSPs that `admit
precoloring' and are known to behave essentially in the same way as
traditional CSPs), it poses serious challenges when proving
non-dichotomy. To tackle this problem, we establish a variation of
Ladner's theorem on {\sc NP}-intermediate problems such that instead
of the word problem for {\sc NP} Turing machines, it speaks about the
run fitting problem, which is to decide whether a given partially
described run of a Turing machine (which corresponds to a precoloring
in the CSP case) can be extended to a full run that is accepting. Also
our proofs of decidability of whether an ontology admits {\sc PTime}
query evaluation are rather subtle and technical, involving e.g.\
mosaic techniques.

Due to space constraints, throughout the paper we defer proof details
to the appendix.

\medskip
\noindent {\bf Related Work}. Ontology-mediated querying has first
been considered in \cite{DBLP:journals/ai/LevyR98a,DBLP:conf/pods/CalvaneseGL98}; other important papers
include \cite{Romans,DBLP:journals/jair/CaliGK13,DBLP:journals/corr/BaranyGO13}. It is a form of reasoning
under integrity constraints, a traditional topic in database theory,
see e.g.\ \cite{DBLP:journals/jacm/BeeriV84,DBLP:journals/tods/BeeriB79} and references therein, and it is also
related to deductive databases, see e.g.\ the monograph
\cite{Minkerbook}. Moreover, ontology-mediated querying has drawn
inspiration from query answering under views \cite{LICS00,PODS03}.
In recent years, there has been significant interest in complete
classification of the complexity of hard querying problems. In the
context of ontology-mediated querying, relevant references include
\cite{KR12-csp,DBLP:journals/tods/BienvenuCLW14,DBLP:conf/ijcai/LutzSW13}. In fact, this paper closes a
number of open problems from \cite{KR12-csp} such as that \ALCI
ontologies of depth two enjoy a dichotomy and that materializability
(and thus \PTime complexity and Datalog-rewritability) is decidable in
many relevant cases.  Other areas of database theory where complete
complexity classifications are sought include consistent query
answering
\cite{DBLP:conf/pods/KoutrisW15,DBLP:conf/icdt/KoutrisS14,DBLP:journals/tocl/Fontaine15,DBLP:conf/pods/FaginKK15},
probabilistic databases \cite{DBLP:series/synthesis/2011Suciu}, and
deletion propagation
\cite{DBLP:conf/pods/Kimelfeld12,DBLP:journals/pvldb/FreireGIM15}.

\section{Preliminaries}
\label{sect:prelims}
We assume an infinite set $\Delta_{D}$ of data constants,
an infinite set $\Delta_{N}$ of labeled nulls disjoint from $\Delta_{D}$,
and a set $\Sigma$ of relation symbols
containing infinitely many relation symbols of any arity $\geq 1$.
A \emph{(database) instance} \Dmf is a non-empty set of \emph{facts}
$R(a_1,\dotsc,a_k)$, where $R \in \Sigma$, $k$ is the arity of $R$,
and $a_1,\dotsc,a_k\in \Delta_{D}$. We generally assume that instances
are finite, unless otherwise specified. An \emph{interpretation}
$\Amf$ is a non-empty set of \emph{atoms} $R(a_1,\dotsc,a_k)$, where
$R \in \Sigma$, $k$ is the arity of $R$, and $a_1,\dotsc,a_k\in
\Delta_{D}\cup \Delta_{N}$.  We use $\text{sig}(\Amf)$ and
$\text{dom}(\Amf)$ to denote the set of relation symbols and,
respectively, constants and labelled nulls in $\Amf$.  We always
assume that $\text{sig}(\Amf)$ is finite while $\text{dom}(\Amf)$ can
be infinite.  Whenever convenient, 
interpretations $\Amf$ are presented in the form
$(A,(R^{\mathfrak{A}})_{R\in \text{sig}(\Amf)})$ where $A =
\text{dom}(\Amf)$ 
and $R^{\mathfrak{A}}$ is a $k$-ary relation on $A$ for each $R\in
\text{sig}(\Amf)$ of arity $k$. An interpretation \Amf is a \emph{model} of
an instance \Dmf, written $\Amf \models \Dmf$, if $\Dmf \subseteq
\Amf$. We thus make a strong open world assumption (interpretations
can make true additional facts and contain additional constants and
nulls) and also assume \emph{standard names} (every constant in \Dmf
is interpreted as itself in \Amf). Note that every instance is also an
interpretation.

%

Assume $\Amf$ and $\Bmf$ are interpretations. A \emph{homomorphism}
$h$ from $\Amf$ to $\Bmf$ is a mapping from $\text{dom}(\Amf)$ to
$\text{dom}(\Bmf)$ such that $R(a_{1},\ldots,a_{k})\in \Amf$ implies
$R(h(a_{1}),\ldots,h(a_{k}))\in \Bmf$ for all $a_{1},\ldots,a_{k}\in
\text{dom}(\Amf)$ and $R\in \Sigma$ of arity $k$. We say that $h$
\emph{preserves a set $D$} of constants and labelled nulls if $h(a)=a$
for all $a\in D$ and that $h$ is an \emph{isomorphic embedding} if it
is injective and $R(h(a_{1}),\ldots,h(a_{k}))\in \Bmf$ entails
$R(a_{1},\ldots,a_{k})\in \Amf$. An interpretation $\Amf \subseteq
\Bmf$ is a \emph{subinterpretation} of $\Bmf$ if
$R(a_{1},\ldots,a_{k})\in \Bmf$ and $a_{1},\ldots,a_{k}\in
\text{dom}(\Amf)$ implies $R(a_{1},\ldots,a_{k})\in \Amf$; if
$\mn{dom}(\Amf)=A$, we denote $\Amf$ by $\Bmf_{|A}$ and call it
\emph{the subinterpretation of $\Bmf$ induced by $A$}.

\emph{Conjunctive queries (CQs) $q$ of arity $k$} take the form
$q(\vec{x}) \leftarrow \phi$, where $\vec{x}=x_{1},\ldots,x_{k}$ is
the tuple of \emph{answer variables} of $q$, and $\phi$ is a
conjunction of \emph{atomic formulas} $R(y_{1},\ldots,y_{n})$ with
$R\in \Sigma$ of arity $n$ and $y_{1},\ldots,y_{n}$ variables.  As
usual, all variables in $\vec{x}$ must occur in some atom of $\phi$.
Any CQ $q(\vec{x}) \leftarrow \phi$ can be regarded as an instance
$\Dmf_q$, often called the \emph{canonical database of $q$}, in which
each variable $y$ of $\phi$ is represented by a unique data
constant~$a_y$, and that for each atom $R(y_1,\dotsc,y_k)$ in $\phi$
contains the atom $R(a_{y_1},\dotsc,a_{y_k})$.  A tuple $\vec{a}=
(a_{1},\ldots,a_{k})$ of constants is an \emph{answer to
  $q(x_{1},\ldots,x_{k})$ in $\Amf$}, in symbols $\Amf\models
q(\vec{a})$, if there is a homomorphism $h$ from $\Dmf_q$ to $\Amf$
with $h(a_{x_{1}},\ldots,a_{x_{k}})=\vec{a}$.
%
%
A \emph{union of conjunctive queries (UCQ)} $q$ takes the form
$q_1(\vec{x}),\dotsc,q_n(\vec{x})$,
where each $q_i(\vec{x})$ is a CQ.
The $q_i$ are called \emph{disjuncts} of $q$.
A tuple $\vec{a}$ of constants is an \emph{answer} to $q$ in $\Amf$,
denoted by $\Amf \models q(\vec{a})$,
if $\vec{a}$ is an answer to some disjunct of $q$ in $\Amf$.

We now introduce the fundamentals of ontology-mediated querying.  An
\emph{ontology language}~\Lmc is a set of first-order sentences over
signature $\Sigma$ (that is, function symbols are not allowed) and an
\emph{\Lmc-ontology} \Omc is a finite set of sentences from $\Lmc$. We
introduce various concrete ontology languages throughout the paper,
including fragments of the guarded fragment and descriptions logics.
An interpretation $\Amf$ is a \emph{model of an ontology} $\Omc$, in
symbols $\Amf\models \Omc$, if it satisfies all its sentences.
An instance $\Dmf$ is \emph{consistent~w.r.t.~$\Omc$}
if there is a model of $\Dmf$ and $\Omc$.

An \emph{ontology-mediated query (OMQ)} is a pair $(\Omc,q)$, where
\Omc is an ontology and $q$ a UCQ.  The semantics of an
ontology-mediated query is given in terms of \emph{certain answers},
defined next. Assume that $q$ has arity $k$ and $\Dmf$ is an
instance. Then a tuple $\vec{a}$ of length $k$ in $\text{dom}(\Dmf)$
is a \emph{certain answer to $q$ on an instance $\Dmf$ given \Omc}, in
symbols $\Omc,\Dmf\models q(\vec{a})$, if $\Amf\models q(\vec{a})$ for
all models $\Amf$ of $\Dmf$ and $\Omc$.  The \emph{query evaluation
  problem for an OMQ $(\Omc,q(\vec{x}))$} is to decide, given an
instance $\Dmf$ and a tuple $\vec{a}$ in $\Dmf$,
whether $\Omc,\Dmf\models q(\vec{a})$.

We use standard notation for Datalog programs (a brief introduction is given in the appendix).
An OMQ $(\Omc,q(\vec{x}))$ is called \emph{Datalog-rewritable}
if there is a Datalog program $\Pi$ such that for all instances \Dmf
and $\vec{a} \in \mn{dom}(\Dmf)$, 
$\Omc,\Dmf\models q(\vec{a})$ iff $\Dmf\models
\Pi(\vec{a})$. \emph{Datalog$^{\neq}$-rewritability} is defined
accordingly, but allows the use of inequality in the body of Datalog
rules. We are mainly interested in the following properties of ontologies.
\begin{definition}\label{def:main}
  Let $\mathcal{O}$ be an ontology and $\Qmc$ a class of queries.
  Then
\begin{itemize}
\item \emph{\Qmc-evaluation w.r.t.\ $\Omc$ is in {\sc PTime}} if for
  every $q \in \Qmc$, the query evaluation problem for ($\Omc,q)$ is
  in {\sc PTime}.

\item \emph{\Qmc-evaluation w.r.t.\ $\Omc$ is Datalog-rewritable}
  (resp.\ \emph{Datalog$^{\neq}$-rewritable}) if for every $q \in
  \Qmc$, the query evaluation problem for ($\Omc,q)$ is
  Datalog-rewritable (resp.\ Datalog$^{\neq}$-rewritable).

\item \emph{\Qmc-evaluation w.r.t.\ $\Omc$ is {\sc coNP}-hard} if there
  is a $q \in \Qmc$ such that the query evaluation problem for $(\Omc,q)$
  is {\sc coNP}-hard.
\end{itemize}
\end{definition}

\subsection{Ontology Languages}
As ontology languages, we consider fragments of the guarded fragment
(GF) of FO, the two-variable guarded fragment of FO with counting, and
DLs.  Recall that GF formulas~\cite{ANvB98} are obtained by starting from atomic
formulas $R(\vec{x})$ over $\Sigma$ and equalities $x=y$ and then
using the boolean connectives and \emph{guarded quantifiers} of the
form
$$
\forall \vec{y}(\alpha(\vec{x},\vec{y})\rightarrow \varphi(\vec{x},\vec{y}))
,\quad
\exists \vec{y}(\alpha(\vec{x},\vec{y})\wedge \varphi(\vec{x},\vec{y}))
$$
where $\varphi(\vec{x},\vec{y})$ is a guarded formula with free
variables among $\vec{x},\vec{y}$ and $\alpha(\vec{x},\vec{y})$ is an
atomic formula or an equality $x=y$ that contains all variables in
$\vec{x},\vec{y}$. The formula $\alpha$ is called the \emph{guard of
  the quantifier}.

In ontologies, we only allow GF sentences $\varphi$ that are
\emph{invariant under disjoint unions}, that is, for all families
$\Bmf_{i}$, $i\in I$, of interpretations with mutually disjoint
domains, the following holds: $\Bmf_{i}\models \vp$ for all $i\in I$
if, and only if, $\bigcup_{i\in I}\Bmf_{i}\models \vp$.
We give a syntactic characterization of GF sentences that are invariant
under disjoint unions.  Denote by \emph{openGF} the fragment of GF
that consists of all (open) formulas whose subformulas are all open
and in which equality is not used as a guard. The fragment \emph{uGF}
of GF is the set of sentences obtained from openGF by a \emph{single guarded universal quantifier}:
if $\varphi(\vec{y})$ is in openGF, then $\forall \vec{y}(\alpha(\vec{y})\rightarrow \varphi(\vec{y}))$ is
in uGF, where $\alpha(\vec{y})$ is an atomic formula or an equality
$y=y$ that contains all variables in $\vec{y}$. We often omit equality guards in uGF
sentences of the form $\forall y(y=y \rightarrow \varphi(y))$ and
simply write $\forall y \varphi$. A \emph{uGF ontology} is a finite
set of sentences in uGF.
%
%
%
%
\begin{theorem}\label{thm:inv}
A GF sentence is invariant under disjoint unions iff it is equivalent to a uGF sentence.
\end{theorem}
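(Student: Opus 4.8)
The plan is to prove the two implications separately. The direction from right to left is a direct semantic argument: since invariance under disjoint unions is a property of the class of models, it is preserved under logical equivalence, so it suffices to show that every uGF sentence is invariant under disjoint unions. The heart of this is a \emph{locality lemma} for openGF: if $\psi(\vec y)$ is in openGF, $(\Bmf_i)_{i\in I}$ is a family of interpretations with pairwise disjoint domains, and $\vec b$ is a tuple lying entirely within a single $\Bmf_{i_0}$, then $\bigcup_{i\in I}\Bmf_i\models\psi(\vec b)$ iff $\Bmf_{i_0}\models\psi(\vec b)$. This is shown by induction on $\psi$; atoms, equalities and Boolean connectives are immediate from disjointness of the domains, and for a guarded quantifier $\exists\vec z\,(\alpha(\vec y,\vec z)\wedge\theta)$ (and its dual) one uses that $\alpha$ is a \emph{relational} atom (openGF forbids equality guards) containing all of $\vec z$, so any witnessing tuple $\vec c$ occurs in a fact together with an element of $\vec b$ and therefore lies in $\Bmf_{i_0}$, which makes the induction hypothesis applicable. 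Given the lemma, a uGF sentence $\forall\vec y\,(\alpha(\vec y)\rightarrow\psi(\vec y))$ is invariant because every tuple $\vec b$ with $\alpha(\vec b)$ in the disjoint union lies in a single summand (as $\alpha$ is an atom covering all of $\vec y$, or $\alpha$ is $y=y$ and $\vec y$ is a single element), hence the sentence holds in the union iff it holds in every summand.

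For the forward direction, let $\varphi$ be a GF sentence that is invariant under disjoint unions. The first step is to bring $\varphi$ into the normal form $\varphi\equiv\Phi(E_1,\dots,E_m)$, where $\Phi$ is propositional and each \emph{primitive} $E_j$ has the shape $\exists\vec y_j\,(\alpha_j(\vec y_j)\wedge\psi_j(\vec y_j))$ with $\psi_j$ in openGF and $\alpha_j$ a relational atom covering $\vec y_j$ (or the trivial guard $y=y$ with $\vec y_j$ a single variable). This is a careful but elementary induction: equality guards are removed by substitution, and every GF subformula with at least one free variable is rewritten as a disjunction $\bigvee_k(\sigma_k\wedge\chi_k)$ of pairs (GF sentence $\wedge$ openGF formula), the decisive step being that a sentence can be pulled out of a guarded quantifier since it has no free variables. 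The point of the normal form is that the complement of a primitive is essentially a uGF sentence: $\neg E_j=\forall\vec y_j\,(\alpha_j(\vec y_j)\rightarrow\neg\psi_j(\vec y_j))$ and $\neg\psi_j$ is again in openGF.

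It then remains to show that invariance forces $\varphi$ to collapse to a conjunction of \emph{negated} primitives, $\varphi\equiv\bigwedge_{j\in S}\neg E_j$ for some $S\subseteq\{1,\dots,m\}$. By the locality lemma each $E_j$ is ``existential over summands'', i.e.\ it holds in a disjoint union iff it holds in one summand; hence the $E$-profile $e(\Bmf)\in\{0,1\}^m$ of a disjoint union (recording which $E_j$ hold in it) is the bitwise disjunction of the profiles of its summands. Invariance of $\varphi$ thus becomes the combinatorial statement that $\Phi(\bigvee_i e_i)\Leftrightarrow\bigwedge_i\Phi(e_i)$ for every family $(e_i)$ of \emph{achievable} profiles. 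From this one reads off that $\Phi$ is antitone along $\leq$ on the achievable profiles and that its $1$-set is closed under $\vee$; being finite, this set then has a greatest element $e^\ast$, and $S=\{j: e^\ast_j=0\}$ does the job (the degenerate cases where $\varphi$ is valid or unsatisfiable are covered by trivial uGF sentences). Finally, a finite conjunction $\bigwedge_{j\in S}\neg E_j$ of uGF sentences is equivalent to a single one: after renaming variables apart, designate in each $\neg E_j=\forall\vec y_j\,(\alpha_j\rightarrow\neg\psi_j)$ one variable as a common anchor $x$, leave the remaining variables of $\vec y_j$ under an inner guarded quantifier, and conjoin everything under one dummy outermost quantifier $\forall x\,(x=x\rightarrow\cdots)$, correctness being $\forall x\,\forall\vec z\,(\alpha\rightarrow\theta)\equiv\forall\vec y\,(\alpha\rightarrow\theta)$.

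I expect the normalization step---obtaining $\varphi\equiv\Phi(E_1,\dots,E_m)$ with genuinely openGF bodies, which requires tracking equality guards and nested sentences through the rewriting---to be the main technical obstacle, while the combinatorial collapse from disjoint-union invariance to a conjunction of negative literals is the conceptual core.
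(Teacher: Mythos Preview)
Your proof is correct, and the forward direction takes a genuinely different route from the paper. Both arguments share the same first step: every GF sentence is equivalent to a Boolean combination of uGF sentences (the paper's appendix proves this by eliminating equality guards and then Shannon-expanding on ``simple'' sentence subformulas, which is essentially your normal form $\Phi(E_1,\dots,E_m)$ with $\neg E_j$ in uGF). The divergence is in how the Boolean combination is collapsed. The paper argues non-constructively via compactness: letting $\mathrm{cons}(\varphi)$ be the set of uGF consequences of $\varphi$, one assumes $\mathrm{cons}(\varphi)\not\models\varphi$, takes a countermodel $\Amf_0$, builds $\Amf_1$ as the disjoint union of witnesses $\Amf_{\neg\psi}\models\varphi\wedge\neg\psi$ for every uGF $\psi\notin\mathrm{cons}(\varphi)$, and observes that $\Amf_1$ and $\Amf_0\sqcup\Amf_1$ agree on all uGF sentences yet disagree on $\varphi$, contradicting the normal form. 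Your argument instead stays finitary: the locality lemma makes each primitive $E_j$ distribute existentially over disjoint unions, so the achievable $E$-profiles form a join-subsemilattice of $\{0,1\}^m$ and invariance forces the $\Phi=1$ set to be a principal down-set, whence $\varphi$ is a conjunction of $\neg E_j$'s. Your approach is more elementary and explicitly constructive (it names the equivalent uGF sentence), at the cost of needing the normal form in a sharper shape; the paper's compactness argument is slicker and instantiates a standard preservation-theorem template, but is non-constructive.
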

\begin{proof}
The direction from right to left is straightforward. For the converse direction,
observe that every GF sentence is equivalent to a Boolean combination of uGF sentences.
Now assume that $\varphi$ is a GF sentence and
invariant under disjoint unions.
Let $\text{cons}(\varphi)$ be the set of all sentences $\chi$ in uGF
with $\varphi \models \chi$.  By compactness of FO it is sufficient to
show that $\text{cons}(\varphi)\models \varphi$.  If this is not the
case, take a model $\Amf_{0}$ of $\text{cons}(\varphi)$ refuting
$\varphi$ and take for any sentence $\psi$ in uGF that is not in
$\text{cons}(\varphi)$ an interpretation $\Amf_{\neg\psi}$ satisfying
$\varphi$ and refuting $\psi$. Let $\Amf_{1}$ be the disjoint union of
all $\Amf_{\neg \psi}$. By preservation of $\varphi$ under disjoint
unions, $\Amf_{1}$ satisfies $\varphi$. By reflection of $\varphi$ for
disjoint unions, the disjoint union $\Amf$ of $\Amf_{0}$ and
$\Amf_{1}$ does not satisfy $\varphi$. Thus $\Amf_{1}$ satisfies
$\varphi$ and $\Amf$ does not satisfy $\varphi$ but by construction
$\Amf$ and $\Amf_{1}$ satisfy the same sentences in uGF.  This is
impossible since $\varphi$ is equivalent to a Boolean combination of
uGF sentences.
\end{proof}
The following example shows that some very simple Boolean combinations of uGF
sentences are not invariant under disjoint unions.
\begin{example}\label{ex:examples33}
Let
\begin{eqnarray*}
\Omc_{\text{UCQ/CQ}} & = & \{(\forall x (A(x) \vee B(x)) \vee \exists x E(x)\}\\
\Omc_{\text{Mat/PTime}} & = & \{\forall x A(x) \vee \forall x B(x)\}
\end{eqnarray*}
Then $\Omc_{\text{Mat/PTime}}$ is not preserved under disjoint unions since $\Dmf_{1}=\{A(a)\}$
and $\Dmf_{2}=\{B(b)\}$ are models of $\Omc_{\text{Mat/PTime}}$ but $\Dmf_{1}\cup \Dmf_{2}$ refutes $\Omc_{\text{Mat/PTime}}$;
$\Omc_{\text{UCQ/CQ}}$ does not reflect disjoint unions since the disjoint union of $\Dmf_{1}'=\{E(a)\}$
and $\Dmf_{2}'=\{F(b)\}$ is a model of $\Omc_{\text{UCQ/CQ}}$ but $\Dmf_{2}'$ refutes $\Omc_{\text{UCQ/CQ}}$.
We will use these ontologies later to explain why we
restrict this study to fragments of GF that are invariant under disjoint unions.
\end{example}
%
%

%
When studying uGF ontologies, we are going to vary several parameters.
The \emph{depth} of a formula $\varphi$ in openGF is the nesting depth
of guarded quantifiers in $\varphi$. The \emph{depth} of a sentence
$\forall \vec{y}(\alpha(\vec{y})\rightarrow \varphi(\vec{y}))$ in uGF
is the depth of $\varphi(\vec{y})$, thus the outermost guarded
quantifier is not counted. The \emph{depth} of a uGF ontology is the
maximum depth of its sentences. We indicate restricted depth in
brackets, writing e.g.\ uGF$(2)$ to denote the set of all uGF
sentences of depth at most 2.
\begin{example}\label{ex:3}
The sentence
$$
\forall xy (R(x,y) \rightarrow (A(x) \vee \exists z S(y,z)))
$$
is in uGF$(1)$ since the openGF formula $A(x) \vee \exists z S(y,z)$
has depth~1.
\end{example}
For every GF sentence $\varphi$, one can construct in polynomial
time a conservative extension $\varphi'$ in uGF$(1)$ by
converting into Scott normal form \cite{DBLP:journals/jsyml/Gradel99}. Thus, the
satisfiability and CQ-evaluation problems for full GF can be polynomially
reduced to the corresponding problem for uGF$(1)$.
%
%

We use uGF$^{-}$ to denote the fragment of uGF where only equality
guards are admitted in the outermost universal quantifier applied to
an openGF formula. Thus, the sentence in Example~\ref{ex:3} (1) is a
uGF sentence of depth 1, but not a uGF$^{-}$ sentence of
depth~1. It is, however, equivalent to the following uGF$^{-}$ sentence of depth 1:
$$
\forall x(\exists y ((R(y,x) \wedge \neg A(y)) \rightarrow \exists z S(x,z)))
$$
An example of a uGF sentence of depth 1 that is not equivalent to a uGF$^{-}$
sentence of depth 1 is given in Example~\ref{ex:ex} below.
Intuitively, uGF sentences of depth 1 can be thought of as
uGF$^-$ sentences of `depth $1.5$' because giving up $\cdot^-$ allows
an additional level of `real' quantification (meaning: over guards
that are not forced to be equality), but only in a syntactically
restricted way.

The two-variable fragment of uGF is denoted with uGF$_2$. More
precisely, in uGF$_2$ we admit only the two fixed variables $x$ and
$y$ and disallow the use of relation symbols of arity exceeding two.
We also consider two extensions of uGF$_2$ with forms of counting.
First, uGF$_2(f)$ denotes the extension of uGF$_2$ with function
symbols, that is, an uGF$_2(f)$ ontology is a finite set of uGF$_2$
sentences and of \emph{functionality axioms} $\forall x\forall
y_{1}\forall y_{2} ((R(x,y_{1})\wedge R(x,y_{2}))\rightarrow
(y_{1}=y_{2}))$ \cite{DBLP:journals/jsyml/Gradel99}. Second, we consider the extension
uGC$_{2}$ of uGF$_{2}$ with counting quantifiers. More precisely, the
language openGC$_2$ is defined in the same way as the two-variable
fragment of openGF, but in addition admits \emph{guarded counting
  quantifiers}
\cite{DBLP:journals/logcom/Pratt-Hartmann07,DBLP:conf/jelia/Kazakov04}:
if $n\in \mathbb{N}$, $\{z_{1},z_{2}\}=\{x,y\}$, and
$\alpha(z_{1},z_{2})\in \{R(z_{1},z_{2}),R(z_{2},z_{1})\}$ for some
$R\in \Sigma$ and $\varphi(z_{1},z_{2})$ is in openGC$_2$, then
$\exists^{\geq n}z_{1}(\alpha(z_{1},z_{2}) \wedge
\varphi(z_{1},z_{2}))$ is in openGC$_2$.  The ontology language
uGC$_{2}$ is then defined in the same way as uGF$_{2}$, using
openGC$_{2}$ instead of openGF$_{2}$. The \emph{depth} of formulas in
uGC$_{2}$ is defined in the expected way, that is, guarded counting
quantifiers and guarded quantifiers both contribute to it.

The above restrictions can be freely combined and we use the obvious
names to denote such combinations. For example, uGF$^-_2(1,f)$ denotes
the two-variable fragment of uGF with function symbols and where all
sentences must have depth 1 and the guard of the outermost quantifier
must be equality. Note that uGF admits equality, although in a
restricted way (only in non-guard positions, with the possible
exception of the guard of the outermost quantifier). We shall also
consider fragments of uGF that admit no equality at all except as a
guard of the outermost quantifier. To emphasize that the restricted
use of equality is allowed, we from now on use the equality symbol in
brackets whenever equality is present, as in uGF$(=)$, uGF$^-(1,=)$,
and uGC$^-_2(1,=)$. Conversely, uGF, uGF$^-(1)$,
and uGC$^-_2(1)$ from now on denote the corresponding fragments
where equality is only allowed as a guard of the outermost quantifier.


\medskip

Description logics are a popular family of ontology languages that are
related to the guarded fragments of FO introduced above. We briefly
review 
the basic description logic \ALC, further details on this and other
DLs mentioned in this paper can be found in the appendix and in
\cite{handbook}.  DLs generally use relations of arity one and two,
only.  An \emph{\ALC concept} is formed according to the syntax rule
$$
C,D ::= A \mid \top \mid \bot \mid
\neg C \mid C \sqcap D \mid C \sqcup D \mid \exists R . C \mid
\forall R .C
$$
where $A$ ranges over unary relations and $R$ over binary
relations. An \emph{\ALC ontology} \Omc is a finite set of
\emph{concept inclusions} $C \sqsubseteq D$, with $C$ and $D$ \ALC
concepts. The semantics of \ALC concepts $C$ can be given by
translation to openGF formulas $C^*(x)$ with one free variable $x$ and
two variables overall. A concept inclusion $C\sqsubseteq D$ then
translates to the uGF$_{2}^{-}$ sentence $\forall x (C^*(x)
\rightarrow D^*(x))$.  The \emph{depth} of an $\mathcal{ALC}$ concept
is the maximal nesting depth of $\exists R$ and $\forall R$.  The
\emph{depth} on an $\mathcal{ALC}$ ontology is the maximum depth of
concepts that occur in it.  Thus, every $\mathcal{ALC}$ ontology of
depth $n$ is a uGF$^{-}_{2}$ ontology of depth $n$. When translating
into uGF$_2$ instead of into uGF$^{-}_{2}$, the depth might decrease
by one because one can exploit the outermost quantifier (which does
not contribute to the depth). A more detailed description of the relationship
between DLs and fragments of uGF is given in the appendix.
\begin{example}\label{ex:ex}
The \ALC concept inclusion $\exists S.A \sqsubseteq \forall R . \exists S . B$
  has depth~2, but is equivalent to the uGF$_{2}(1)$ sentence
$$
\forall xy( R(x,y) \rightarrow ((\exists S.A)^{\ast}(x) \rightarrow
(\exists S.B)^{\ast}(y))
$$
\end{example}
Note that for any ontology $\Omc$ in any DL considered in this paper one can construct
in a straightforward way in polynomial time a conservative extension $\Omc^{\ast}$ of
$\Omc$ of depth one. In fact, many DL algorithms for satisfiability or query evaluation
assume that the ontology is of depth one and normalized.

We also consider the extensions of $\mathcal{ALC}$ with inverse roles
$R^{-}$ (denoted in the name of the DL by the letter $\mathcal{I}$),
role inclusions $R\sqsubseteq S$ (denoted by $\mathcal{H}$), qualified
number restrictions $(\geq n \; R \;C)$ (denoted by $\mathcal{Q}$),
partial functions as defined above (denoted by $\mathcal{F}$), and local functionality expressed by $(\leq 1 R)$
(denoted by $\mathcal{F}_{\ell}$). The depth of
ontologies formulated in these DLs is defined in the obvious
way. Thus, $\mathcal{ALCHIQ}$ ontologies (which admit all the
constructors introduced above) translate into uGC$^{-}_{2}$
ontologies, preserving the depth.

For any syntactic object $O$ (such as an ontology or a query), we use
$|O|$ to denote the number of symbols needed to write $O$, counting
relation names, variable names, and so on as a single symbol and
assuming that numbers in counting quantifiers and DL number
restrictions are coded in unary.

\subsection{Guarded Tree Decompositions}
We introduce guarded tree decompositions and rooted acyclic queries~\cite{DBLP:books/daglib/p/Gradel014}.
A set $G \subseteq \text{dom}(\Amf)$ is \emph{guarded} in the
interpretation $\mathfrak{A}$
if $G$ is a singleton or there are $R \in \Sigma$ and
$R(a_1,\dotsc,a_k) \in \mathfrak{A}$ such that $G =
\{a_1,\dotsc,a_k\}$.  By $S(\mathfrak{A})$, we denote the set of all
guarded sets in $\mathfrak{A}$.  A tuple $(a_1,\dotsc,a_k) \in A^k$ is
\emph{guarded} in $\mathfrak{A}$ if $\{a_1,\dotsc,a_k\}$ is a subset
of some guarded set in $\mathfrak{A}$. A \emph{guarded tree
  decomposition} of $\mathfrak{A}$ is a triple $(T,E,\text{bag})$ with
$(T,E)$ an acyclic undirected graph and $\text{bag}$ a function that
assigns to every $t\in T$ a set $\text{bag}(t)$ of atoms such that
$\Amf_{|\text{dom}(\text{bag}(t))}=\text{bag}(t)$ and
\begin{enumerate}
\item $\mathfrak{A} = \bigcup_{t\in T}\text{bag}(t)$;
\item $\text{dom}(\text{bag}(t))$ is guarded for every $t\in T$;
\item $\{t \in T\mid a\in \text{dom}(\text{bag}(t))\}$ is connected in
  $(T,E)$, for every $a\in \text{dom}(\mathfrak{A})$.
\end{enumerate}
We say that $\mathfrak{A}$ is \emph{guarded tree decomposable} if
there exists a guarded tree decomposition of $\mathfrak{A}$.  We call
$(T,E,\text{bag})$ a \emph{connected guarded tree decomposition
  (cg-tree decomposition)} if, in addition, $(T,E)$ is connected
(i.e., a tree) and $\text{dom}(\text{bag}(t)) \cap
\text{dom}(\text{bag}(t'))\not=\emptyset$ for all $(t,t')\in E$. In
this case, we often assume that $(T,E)$ has a designated root $r$,
which allows us to view $(T,E)$ as a directed tree whenever convenient.

A CQ $q\leftarrow \phi$ is a \emph{rooted acyclic query (rAQ)} if
there exists a cg-tree decomposition $(T,E,\text{bag})$ of the
instance $\Dmf_{q}$ with root $r$ such that
$\text{dom}(\text{bag}(r))$ is the set of answer variables of $q$.
Note that, by definition, rAQs are non-Boolean queries.
\begin{example}
The CQ
$$
q(x) \leftarrow \phi, \quad \phi = R(x,y)\wedge R(y,z)\wedge R(z,x)
$$
is not an rAQ since $\Dmf_{q}$ is not guarded tree decomposable. By adding
the conjunct $Q(x,y,z)$ to $\phi$ one obtains an rAQ.
\end{example}
We will frequently use the following construction: let $\Dmf$ be an
instance and $\mathcal{G}$ a set of guarded sets in $\Dmf$. Assume
that $\Bmf_{G}$, $G\in \mathcal{G}$, are interpretations such that
$\text{dom}(\Bmf_{G})\cap \text{dom}(\Dmf)=G$ and
$\text{dom}(\Bmf_{G_{1}})\cap \text{dom}(\Bmf_{G_{2}})=G_{1}\cap
G_{2}$ for any two distinct guarded sets $G_{1}$ and $G_{2}$ in
$\mathcal{G}$. Then the interpretation
$$
\Bmf = \Dmf \cup \bigcup_{G\in \mathcal{G}}\Bmf_{G}
$$
is called the \emph{interpretation obtained from $\Dmf$ by hooking
  $\Bmf_{G}$ to $\Dmf$ for all $G\in \mathcal{G}$}.
If the $\Bmf_{G}$ are cg-tree decomposable interpretations with $\text{dom}(\text{bag}(r))=G$ for
the root $r$ of a (fixed) cg-tree decomposition of $\Bmf_{G}$, then $\Bmf$ is called a
\emph{forest model of $\Dmf$ defined using $\mathcal{G}$}.
If $\mathcal{G}$ is the set of all maximal guarded sets in $\Dmf$, then we call $\Bmf$
simply a \emph{forest model of~$\Dmf$}. The following result can be proved using standard
guarded tree unfolding~\cite{DBLP:journals/jsyml/Gradel99,DBLP:books/daglib/p/Gradel014}.
\begin{lemma}\label{lem:forestmodel}
  Let $\Omc$ be a uGF$(=)$ or uGC$_{2}(=)$ ontology, $\Dmf$ a
  possibly infinite instance, and $\Amf$ a model of $\Dmf$ and $\Omc$.
  Then there exists a forest model $\Bmf$ of $\Dmf$ and $\Omc$ and a homomorphism $h$ from $\Bmf$ to $\Amf$ that
  preserves $\text{dom}(\Dmf)$.
%
\end{lemma}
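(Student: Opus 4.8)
The plan is to obtain $\Bmf$ by a \emph{guarded tree unfolding} of $\Amf$ that leaves $\Dmf$ untouched. Since $\Dmf$ itself need not be guarded tree decomposable, we do not unfold inside it: for each maximal guarded set $G$ of $\Dmf$ we build a cg-tree decomposable interpretation $\Bmf_G$ by unfolding $\Amf$ outwards from $G$, and set $\Bmf=\Dmf\cup\bigcup_G\Bmf_G$, choosing fresh elements so that distinct $\Bmf_G$ share with $\Dmf$ only $G$ and with each other only $G\cap G'$. Concretely, $\Bmf_G$ has one bag per finite sequence $G=\bar t_0,\dots,\bar t_n$ of guarded sets of $\Amf$ with $\bar t_i\cap\bar t_{i+1}\ne\emptyset$, $\bar t_{i+1}\ne\bar t_i$, and $\bar t_{i+1}\ne\bar t_{i-1}$; the bag at such a node is an isomorphic copy of $\Amf_{|\bar t_n}$, equal to $\Amf_{|G}$ at the root and otherwise built from the already created elements on $\bar t_n\cap\bar t_{n-1}$ together with fresh elements for $\bar t_n\setminus\bar t_{n-1}$. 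The homomorphism $h\colon\Bmf\to\Amf$ is the projection sending each copied element to the element of $\Amf$ it copies; thus $h$ is the identity on $\text{dom}(\Dmf)$, and $h$ restricted to any bag is an isomorphic embedding into $\Amf$ with guarded image, so $h$ is a homomorphism. It is essential that the root bag is $\Amf_{|G}$ rather than $\Dmf_{|G}$, so that $h$ is an isomorphic embedding on the root bags as well; since $\Dmf_{|G}\subseteq\Amf_{|G}$ and every fact of $\Dmf$ has a guarded argument set and hence lies in some $\Dmf_{|G}$, we still have $\Dmf\subseteq\Bmf$.

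I would first check that $\Bmf$ is a forest model of $\Dmf$ in the sense defined above: the sequence-tree is a cg-tree decomposition of $\Bmf_G$ with root bag $G$ --- acyclicity and connectedness of the set of bags containing a fixed element hold because fresh copies are used at each step, so no element reappears after leaving a bag --- each $\Bmf_G$ is hooked to $\Dmf$ at $G$, and the only elements shared by $\Bmf_{G_1}$ and $\Bmf_{G_2}$ lie in $G_1\cap G_2\subseteq\text{dom}(\Dmf)$. Together with $\Dmf\subseteq\Bmf$ and the homomorphism $h$, this leaves only $\Bmf\models\Omc$ to be shown.

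For $\Bmf\models\Omc$, the point is that $h$ induces a guarded bisimulation between $\Bmf$ and $\Amf$, with partial isomorphisms the restrictions of $h$ to bags of $\Bmf$: the \emph{forth} condition is the fact that $h$ is an isomorphic embedding on each bag, and the \emph{back} condition holds because for every bag $t$ and every guarded set $\bar G$ of $\Amf$ with $\bar G\cap h(\text{dom}(t))\ne\emptyset$ there is a bag $t'$, equal or adjacent to $t$, with $h(\text{dom}(t'))=\bar G$ and $\text{dom}(t)\cap\text{dom}(t')=\{x\in\text{dom}(t):h(x)\in\bar G\}$ --- realized by $t$ itself, by the parent of $t$, or by the child created for $\bar G$. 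By the standard invariance of GF under guarded bisimulations one then proves, by induction on openGF formulas $\psi$, that $\Bmf\models\psi(\vec b)$ iff $\Amf\models\psi(h(\vec b))$ for every guarded tuple $\vec b$ of $\Bmf$ (equality subformulas are harmless since in a GF formula they occur only within a common guard, on which $h$ is injective). Since every sentence of $\Omc$ has the form $\forall\vec y\,(\alpha(\vec y)\rightarrow\varphi(\vec y))$ with $\alpha$ a guard and $\varphi$ in openGF, this equivalence applied to guarded tuples of $\Bmf$ yields $\Bmf\models\Omc$ from $\Amf\models\Omc$.

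The main obstacle --- and the only genuinely delicate point --- is the uGC$_2(=)$ case, where the unfolding must in addition preserve numbers of neighbours so that guarded counting quantifiers are evaluated identically in $\Bmf$ and $\Amf$. As relation symbols have arity at most two, every guarded set is a singleton or an edge, and the refinement is: when unfolding a bag $t$, create for each element $b$ of $t$ and each guarded edge $\{h(b),a'\}$ of $\Amf$ that is not already realized inside $t$ or by the parent of $t$ exactly one child (self-loops $R(a,a)$ travel with $a$ for free, each bag being an induced substructure of $\Amf$). This gives, for every binary $R$ and every element $b$ of $\Bmf$, a bijection via $h$ between the $R$-successors of $b$ in $\Bmf$ and the $R$-successors of $h(b)$ in $\Amf$, and symmetrically for predecessors; the inductive equivalence then extends to openGC$_2$, with the guarded counting quantifier case handled by this bijection together with the induction hypothesis on the counted subformula. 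Arranging the bookkeeping so that the edge back to the parent is neither duplicated nor dropped is the one thing that must be done with care; the remainder is standard guarded-unfolding technology \cite{DBLP:journals/jsyml/Gradel99,DBLP:books/daglib/p/Gradel014}, with the counting refinement following \cite{DBLP:journals/logcom/Pratt-Hartmann07,DBLP:conf/jelia/Kazakov04}.
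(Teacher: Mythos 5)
Your construction follows the same plan as the paper's proof: unfold $\Amf$ outwards from $\Dmf$ into a forest model, show the projection $h$ is a homomorphism, and conclude $\Bmf\models\Omc$ via (counting) guarded bisimulation. The uGF$(=)$ half is essentially correct and matches the paper up to inessential choices (the paper restricts the sequences to \emph{maximal} guarded sets of $\Amf$ and forces the first step to leave $\text{dom}(\Dmf)$, but neither restriction is needed without counting). The uGC$_2(=)$ half, however, has a genuine gap --- in exactly the step you flag as ``the one thing that must be done with care'' --- and it lies not at the edge back to the parent but at the roots, across the different trees.

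Your rule for which children to create is applied independently inside each $\Bmf_G$, so it cannot prevent duplication across the trees $\Bmf_G$ hung at distinct maximal guarded sets $G$ that share a constant $a$ of $\Dmf$. Take $\Dmf=\{R(a,b),\,R(a,c)\}$ and $\Amf=\Dmf\cup\{R(a,d)\}$ with $d\notin\text{dom}(\Dmf)$. The maximal guarded sets of $\Dmf$ are $\{a,b\}$ and $\{a,c\}$; the root bags $\Amf_{|\{a,b\}}$ of $\Bmf_{\{a,b\}}$ and $\Amf_{|\{a,c\}}$ of $\Bmf_{\{a,c\}}$ each see the edge $\{a,d\}$ as ``not realized inside $t$ or by the parent,'' and each creates a fresh child carrying a copy of $d$. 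So in $\Bmf=\Dmf\cup\Bmf_{\{a,b\}}\cup\Bmf_{\{a,c\}}$ the constant $a$ has two $R$-successors mapped by $h$ to $d$ (plus a spurious fresh copy of $c$ from inside $\Bmf_{\{a,b\}}$, unless ``parent of the root'' is read as $\Dmf$). The claimed bijection between the $R$-successors of $a$ in $\Bmf$ and those of $h(a)$ in $\Amf$ fails, the counting guarded bisimulation breaks, and $\Bmf$ can violate a sentence such as $\forall x\,\neg\exists^{\geq 4}y\,R(x,y)$ that $\Amf$ satisfies. The paper avoids this by changing the decomposition itself, not just the bookkeeping: in the uGC$_2(=)$ case it unfolds from \emph{singletons} $\{c\}$, one tree per constant of $\Dmf$, forces the first step of each such tree to leave $\text{dom}(\Dmf)$, and includes all of $\{R(\vec a)\in\Amf\mid\vec a\subseteq\text{dom}(\Dmf)\}$ in $\Bmf$, so every outgoing edge of a constant is simulated exactly once. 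With a maximal-guarded-set-rooted decomposition a constant may belong to several roots, and no local, per-tree rule can allocate each of its outgoing edges to exactly one of its trees.
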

\section{Materializability}

We introduce and study materializability of ontologies as a necessary
condition for query evaluation to be in {\sc PTime}. In brief, an
ontology \Omc is materializable if for every instance \Dmf, there is a
model \Amf of \Omc and \Dmf such that for all queries, the answers on
\Amf agree with the certain answers on \Dmf given~\Omc. We show that
this sometimes, but not always, coincides with existence of universal
models defined in terms of homomorphisms. We then prove that in
uGF$(=)$ and uGC$_2(=)$, non-materializability implies {\sc coNP}-hard
query answering while this is not the case for GF. Using these
results, we further establish that in uGF$(=)$ and uGC$_{2}(=)$, query
evalution w.r.t.\ ontologies to be in {\sc PTime},
Datalog$^{\not=}$-rewritable, and {\sc coNP}-hard does not depend on
the query language, that is, all these properties agree for rAQs, CQs,
and UCQs. Again, this is not the case for GF.

%
%
\begin{definition}[Materializability]\label{cond0}
  Let $\Omc$ be an FO$(=)$-ontology, $\Qmc$ a class of queries, and $\mathcal{M}$ a class of instances.
  Then
  \begin{itemize}

  \item an interpretation $\Bmf$ is a \emph{$\Qmc$-materialization} of $\Omc$ and an instance $\Dmf$
   if it is a model of $\Omc$ and $\Dmf$ and for all
$q(\vec{x}) \in \Qmc$ and $\vec{a}$ in $\text{dom}(\Dmf)$,
    $\mathfrak{B} \models q(\vec{a})$ iff $\Omc,\Dmf\models q(\vec{a})$.

  \item $\Omc$ is \emph{$\Qmc$-materializable for~$\mathcal{M}$}
  if for every instance $\Dmf\in \mathcal{M}$ that is consistent w.r.t.~$\Omc$, there is a
  $\Qmc$-materialization of $\Omc$ and~$\Dmf$.
  \end{itemize}
  If $\mathcal{M}$ is the class of all instances, we simply speak of
 $\Qmc$-materializability of \Omc.
\end{definition}
We first observe that the materializability of ontologies does not
depend on the query language (although concrete materializations do).
\begin{theorem}\label{thm:materializabilityeq}
Let \Omc be a uGF$(=)$ or uGC$_{2}(=)$ ontology and \Mmc a class of
instances. Then the following conditions are equivalent:
\begin{enumerate}
\item $\Omc$ is rAQ-materializable for \Mmc;
\item $\Omc$ is CQ-materializable for \Mmc;
\item $\Omc$ is UCQ-materializable for \Mmc.
\end{enumerate}
\end{theorem}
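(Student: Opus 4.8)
The three statements will be proved by routing essentially all the work through $(1)\Rightarrow(2)$. If $\Bmf$ is a $\Qmc$-materialization of $\Omc$ and $\Dmf$ and $\Qmc'\subseteq\Qmc$, then $\Bmf$ is a $\Qmc'$-materialization of $\Omc$ and $\Dmf$; since every rAQ is a CQ and every CQ is a single-disjunct UCQ, this gives $(3)\Rightarrow(2)\Rightarrow(1)$. For $(2)\Rightarrow(3)$ I would observe that a CQ-materialization $\Bmf$ of $\Omc$ and $\Dmf$ is automatically a UCQ-materialization: given a UCQ $q=q_1,\dots,q_n$ and a tuple $\vec a$ over $\text{dom}(\Dmf)$, if $\Omc,\Dmf\models q(\vec a)$ then $q(\vec a)$ holds in the model $\Bmf$, and if $\Bmf\models q(\vec a)$ then $\Bmf\models q_i(\vec a)$ for some $i$, whence $\Omc,\Dmf\models q_i(\vec a)$ by the CQ-materialization property and so $\Omc,\Dmf\models q(\vec a)$. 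This step is purely propositional; guardedness enters only in $(1)\Rightarrow(2)$. So the theorem reduces to showing that rAQ-materializability for $\Mmc$ implies CQ-materializability for $\Mmc$.

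For this, fix $\Dmf\in\Mmc$ consistent with $\Omc$ and let $\Bmf_0$ be an rAQ-materialization of $\Omc$ and $\Dmf$. Using Lemma~\ref{lem:forestmodel} I would replace $\Bmf_0$ by a forest model $\Bmf=\Dmf\cup\bigcup_G\Bmf_G$ of $\Dmf$ and $\Omc$ that admits a homomorphism into $\Bmf_0$ preserving $\text{dom}(\Dmf)$ and in which the bag hooked at each maximal guarded set $G$ of $\Dmf$ has $\Dmf_{|G}$ as its root bag, so that $\Bmf_{|\text{dom}(\Dmf)}=\Dmf$; the standard guarded unravelling delivers such a $\Bmf$. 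This $\Bmf$ is again an rAQ-materialization: as a model it realizes all certain rAQ answers, and any rAQ answer it realizes is transported along the homomorphism into $\Bmf_0$ and hence is certain. It thus suffices to show that such a $\Bmf$ is a CQ-materialization, i.e.\ that $\Bmf\models q(\vec a)$ implies $\Omc,\Dmf\models q(\vec a)$ for every CQ $q$ and every $\vec a$ over $\text{dom}(\Dmf)$.

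Fix a homomorphism $h\colon\Dmf_q\to\Bmf$ with $h(\vec x)=\vec a$ and set $V_0=h^{-1}(\text{dom}(\Dmf))$; the answer variables lie in $V_0$. The \emph{core} subquery $q_0$, consisting of all atoms whose variables are entirely in $V_0$, is mapped by $h$ into $\Dmf$. Every other atom belongs to a unique connected component of the subquery of $q$ on the variables outside $V_0$, and each such component is mapped into a single $\Bmf_G$ (distinct $\Bmf_G$ overlap only inside $\text{dom}(\Dmf)$). Reading such a component together with the tree path in $\Bmf_G$ that links it to the root $G$, and taking the elements of $G$ — guarded by the atom over $G$ belonging to $\Dmf$ — as answer part, yields a rooted acyclic query $r$ with $\Bmf\models r(\vec{c}_G)$, where $\vec{c}_G$ enumerates $G$; since $\Bmf$ is an rAQ-materialization, $\Omc,\Dmf\models r(\vec{c}_G)$. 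Now let $\Amf$ be any model of $\Omc$ and $\Dmf$. Mapping $q_0$ into $\Dmf\subseteq\Amf$ by $h$, and each component into $\Amf$ via a match of the corresponding $r$ at $\vec{c}_G$ — which exists since $\Omc,\Dmf\models r(\vec{c}_G)$ and which agrees with $h$ on the shared $V_0$-variables, as these are pinned to coordinates of $\vec{c}_G$ — yields a homomorphism $\Dmf_q\to\Amf$ with $\vec x\mapsto\vec a$: the pieces live on disjoint variable sets, agree where they overlap (inside $\text{dom}(\Dmf)$, via $h$), and every atom of $q$ lies in $q_0$ or in exactly one component. Hence $\Amf\models q(\vec a)$, and as $\Amf$ was arbitrary, $\Omc,\Dmf\models q(\vec a)$. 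The argument is uniform for uGF$(=)$ and uGC$_2(=)$, both covered by Lemma~\ref{lem:forestmodel}, with neither CQs nor rAQs using counting.

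I expect the delicate point to be turning the tree-shaped pieces of a CQ match into genuine rooted acyclic queries: a connected subquery that merely maps homomorphically into a cg-tree-decomposable interpretation need not have a connected guarded tree decomposition of its own, its attachment to $\text{dom}(\Dmf)$ may be a guarded tuple rather than a guarded set, and pieces attaching nowhere (floating Boolean parts) need a connecting path prepended before they become rooted. All this is absorbed by exploiting the ambient guard atoms of $\Dmf$ — harmless, since they hold in every model — together with guarded-tree unravelling of the piece along the cg-tree decomposition of the relevant $\Bmf_G$; the rest is bookkeeping around Lemma~\ref{lem:forestmodel}.
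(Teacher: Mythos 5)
Your proposal is correct and follows essentially the same route as the paper: the only non-trivial implication is (1)$\Rightarrow$(2), handled by passing to a forest model via Lemma~\ref{lem:forestmodel} and cutting a CQ match into a core evaluated over $\Dmf$ plus tree-shaped pieces, each turned into an entailed rAQ rooted at a maximal guarded set $G$. The "delicate point" you flag is resolved exactly as you suggest and as the paper does — one takes the induced subinterpretation of $\Bmf_G$ on the image of the piece (extended by a connecting path to the root bag) as the rAQ, rather than the query component itself, so cg-tree decomposability is inherited from $\Bmf_G$.
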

\begin{proof} The only non-trivial implication is (1) $\Rightarrow$
  (2). It can be proved by using Lemma~\ref{lem:forestmodel} and
  showing that if \Amf is a rAQ-materialization of an ontology \Omc
  and an instance \Dmf, then any forest model \Bmf of \Omc and \Dmf
  which admits a homomorphism to \Amf that preserves $\mn{dom}(\Dmf)$
  is a CQ-materialization of \Omc and \Dmf.
\end{proof}
Because of Theorem~\ref{thm:materializabilityeq}, we from now on speak
of \emph{materializability} without reference to a query language and
of \emph{materializations} instead of UCQ-materializations (which are
then also CQ-materializations and rAQ-materializations).

\smallskip

A notion closely related to materializations are (homomorphically)
universal models as used e.g.\ in data exchange~\cite{FaginKMP05,DeutschNR08}. A model of an ontology \Omc and an
instance \Dmf is \emph{hom-universal} if there is a homomorphism
preserving $\text{dom}(\Dmf)$ into any model of \Omc and~\Dmf.  We say
that an ontology \Omc \emph{admits hom-universal models} if there is a
hom-universal model for \Omc and any instance \Dmf. It is well-known
that hom-universal models are closely related to what we call
UCQ-materializations. In fact, in many DLs and in uGC$_{2}(=)$,
materializability of an ontology \Omc coincides with \Omc admitting
hom-universal models (although for concrete models, being
hom-universal is not the same as being a materialization). We show in
the long version that this is not the case for ontologies in uGF$(2)$
(with three variables).
The proof also shows that admitting
hom-universal models is not a necessary condition for query evaluation
to be in {\sc PTime} (in contrast to materializability).
%
%
\begin{lemma}
\label{lem:canhom}
A uGC$_{2}(=)$ ontology is materializable iff it admits hom-universal
models.  This does not hold for uGF$(2)$ ontologies.
\end{lemma}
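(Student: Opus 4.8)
The plan is to prove the two directions separately. The direction from right to left uses neither guardedness nor counting: if $\Bmf$ is a hom-universal model of $\Omc$ and a consistent instance $\Dmf$, then for every UCQ $q(\vec x)$ and tuple $\vec a$ over $\mn{dom}(\Dmf)$ the implication ``$\Omc,\Dmf\models q(\vec a)\Rightarrow\Bmf\models q(\vec a)$'' is immediate since $\Bmf$ is a model of $\Omc$ and $\Dmf$, and the converse holds because UCQs are preserved under homomorphisms fixing $\vec a$ and, by assumption, $\Bmf$ maps homomorphically into every model of $\Omc$ and $\Dmf$ preserving $\mn{dom}(\Dmf)$. Hence $\Bmf$ is a materialization, so $\Omc$ admitting hom-universal models implies that $\Omc$ is materializable (by Theorem~\ref{thm:materializabilityeq} the choice of query language is immaterial here).

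For the converse for uGC$_2(=)$, fix a materializable $\Omc$ and a consistent instance $\Dmf$; the goal is to build a hom-universal model of $\Omc$ and $\Dmf$. Two reductions come first. By Lemma~\ref{lem:forestmodel}, the materialization to be constructed may be assumed to be a \emph{forest model}: if $\Bmf_0$ is any materialization of $\Omc$ and $\Dmf$ and $h\colon\Bmf'\to\Bmf_0$ is a homomorphism from a forest model $\Bmf'$ of $\Omc$ and $\Dmf$ preserving $\mn{dom}(\Dmf)$, then $\Bmf'$ is again a materialization, since every UCQ match in $\Bmf'$ is carried by $h$ into $\Bmf_0$ and is therefore a certain answer. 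Moreover, since $\Omc$ has a fixed depth $d$ and its counting thresholds are bounded by some $N$, the part of an interpretation hooked to a guarded set that matters for satisfaction of $\Omc$ is already captured by its truncation to depth $d$ with multiplicities counted up to $N$, of which there are only finitely many; call these the \emph{truncated tree types}.

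The main step is to construct, by a fixpoint over the finite set of truncated tree types, a \emph{canonical} forest model $\Bmf^\ast$ of $\Omc$ and $\Dmf$ that realizes at each guarded set only the successor behaviour \emph{forced} by $\Omc$, i.e.\ behaviour a homomorphic image of which occurs below the corresponding guarded set in every model of $\Omc$. Two properties then remain: (i) $\Bmf^\ast\models\Omc$; and (ii) $\Bmf^\ast$ is hom-universal. For (ii): given any model $\Amf$ of $\Omc$ and $\Dmf$, a homomorphism $\Bmf^\ast\to\Amf$ preserving $\mn{dom}(\Dmf)$ is built top-down, the required successor type at each step being forced and hence realized in $\Amf$ as $\Amf\models\Omc$; since only finitely many truncated tree types occur, the tree of partial homomorphisms of finite truncations of $\Bmf^\ast$ is finitely branching, and as all those truncations map into $\Amf$ it has an infinite branch by K\"onig's Lemma, i.e.\ a total homomorphism. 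Thus $\Bmf^\ast$ is both a model of $\Omc$ and $\Dmf$ and hom-universal, so $\Omc$ admits hom-universal models. The crux is (i), and it is the one place where materializability is genuinely used: a universal sentence of $\Omc$ triggered at a guarded set $G$ requires the downward $d$-neighbourhood of $G$ to satisfy an openGC$_2$-formula which, in the presence of disjunction and guarded counting quantifiers, may be satisfiable in several homomorphically incomparable ways; materializability ensures there is always a homomorphically weakest satisfying configuration, so the forced successors kept in $\Bmf^\ast$ already suffice. I expect (i) -- in particular the bookkeeping for guarded counting quantifiers and for the restricted equality allowed in uGC$_2(=)$ -- to be the main obstacle; the other steps are routine.

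Finally, for the failure of the equivalence for uGF$(2)$, the plan is to exhibit a concrete uGF$(2)$ ontology $\Omc$ over a signature containing a ternary relation (so genuinely using three variables) together with a consistent instance $\Dmf$ such that $\Omc$ is materializable and UCQ-evaluation w.r.t.\ $\Omc$ is in {\sc PTime}, while $\Omc$ and $\Dmf$ have no hom-universal model. This exploits exactly the gap between the two notions for infinite structures: every finite conjunctive query holding in the (necessarily infinite) materialization of $\Omc$ and $\Dmf$ is a certain answer, yet there is a ``non-standard'' model $\Amf$ of $\Omc$ and $\Dmf$ -- built using all three variables -- into which that materialization admits no global homomorphism, the witnesses inside $\Amf$ for the various finite queries failing to be assembled into a single homomorphism. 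Writing down $\Omc$, $\Dmf$ and $\Amf$ explicitly, together with verifying materializability and the {\sc PTime} upper bound, is the main work here and is deferred to the long version.
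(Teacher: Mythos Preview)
Your ``$\Leftarrow$'' direction is fine and matches the paper. The issue is the ``$\Rightarrow$'' direction for uGC$_2(=)$.

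Your plan is to build a fresh canonical forest model $\Bmf^\ast$ by a fixpoint on truncated tree types, keeping at each guarded set only the ``forced'' successor behaviour, and then to argue separately that (i) $\Bmf^\ast\models\Omc$ and (ii) $\Bmf^\ast$ is hom-universal. You correctly flag (i) as the crux, but the justification you give --- ``materializability ensures there is always a homomorphically weakest satisfying configuration'' --- is not a proof. Materializability is a \emph{global} statement about CQ answers on models of $\Omc$ and $\Dmf$; it does not obviously localise to the claim that at each guarded set there is a single homomorphically minimal way to satisfy the relevant openGC$_2$ constraint. Bridging this gap is exactly as hard as the lemma itself, so as written your argument is circular at the key step.

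The paper avoids this difficulty entirely with a much more direct argument: take \emph{any} CQ-materialization of $\Omc$ and $\Dmf$, pass to a forest model (Lemma~\ref{lem:forestmodel}), and then perform a selective filtration so that every guarded set meets only finitely many others. The resulting $\Bmf'$ is still a model of $\Omc$ and $\Dmf$, and the CQ-materialization property now does all the work: for every finite $F\subseteq\mn{dom}(\Bmf')$, the induced subinstance $\Bmf'_{|F}$ \emph{is} a CQ with answer variables $\mn{dom}(\Dmf)\cap F$, trivially satisfied in $\Bmf'$, hence a certain answer, hence satisfied in every model $\Amf$ of $\Omc$ and $\Dmf$. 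This gives a homomorphism $\Bmf'_{|F}\to\Amf$ preserving $\mn{dom}(\Dmf)\cap F$ for every finite $F$; local finiteness on both sides then lets a pigeonhole/K\"onig argument assemble these into a global homomorphism. No analysis of ``forced'' local configurations is needed, and (i) is free because $\Bmf'$ was a model to begin with.

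For the uGF$(2)$ counterexample you only sketch the shape of the phenomenon and defer the construction. The paper gives a concrete ontology: a ternary relation $W$ is used to encode a ``partial wheel'' around each $C$-point, which can be generated clockwise or counter\-clockwise; the two resulting models are homomorphically incomparable but indistinguishable by CQs. A naive version of this is \emph{not} materializable (the left/right choice is visible as a disjunction), so the actual construction hides the choice behind formulas of the form $\exists y(\mn{gen}(x,y)\wedge\neg L(y))$ versus $\exists y(\mn{gen}(x,y)\wedge\neg R(y))$, together with axioms guaranteeing both positive witnesses always exist, and further relativises the wheel-generation so it only applies to fresh (non-instance) individuals. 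Getting this hiding right is the real content of the counterexample; your proposal does not engage with it.
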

The following theorem links materializability to computational
complexity, thus providing the main reason for our interest into this
notion. The proof is by reduction of 2+2-SAT~\cite{Schaerf-93}, a
variation of a related proof from \cite{KR12-csp}.
\begin{theorem}
\label{thm:nomatlower}
Let \Omc be an FO$(=)$-ontology that is invariant under disjoint unions.
If \Omc is not materializable, then rAQ-evaluation w.r.t.\ \Omc is
{\sc coNP}-hard.
\end{theorem}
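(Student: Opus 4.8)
The plan is to turn the failure of materializability into a ``disjunctive switch'' and feed it into a reduction from $2{+}2$-SAT, which is \NPTime-complete~\cite{Schaerf-93}; reducing $2{+}2$-SAT to the \emph{complement} of query evaluation then gives the asserted \coNP-hardness.

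First I would put non-materializability into a workable form. Since $\Omc$ is not materializable, there is an instance $\Dmf_0$ that is consistent w.r.t.\ $\Omc$ but such that no model of $\Omc$ and $\Dmf_0$ is a materialization; unfolding the definition, every model of $\Omc$ and $\Dmf_0$ satisfies $q(\vec{a})$ for some rAQ $q$ and some tuple $\vec{a}$ over $\mn{dom}(\Dmf_0)$ with $\Omc,\Dmf_0\not\models q(\vec{a})$. Adding the negations of all non-certain rAQ-facts to $\Omc\cup\Dmf_0$ therefore yields an unsatisfiable first-order theory, so by compactness there are finitely many rAQs $q_1,\dots,q_n$ with tuples $\vec{a}_1,\dots,\vec{a}_n$ over $\mn{dom}(\Dmf_0)$, none of $q_i(\vec{a}_i)$ entailed, with $\Omc,\Dmf_0\models q_1(\vec{a}_1)\vee\dots\vee q_n(\vec{a}_n)$. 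Picking $n$ minimal forces $n\ge 2$ and yields, for each $i$, a model $\Bmf_i$ of $\Omc$ and $\Dmf_0$ satisfying $q_i(\vec{a}_i)$ but none of the $q_j(\vec{a}_j)$ with $j\ne i$. Thus, over $\Dmf_0$, every model is forced to pick a disjunct, while each single choice is realizable.

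Next I would build the reduction. Given a $2{+}2$-CNF formula $\varphi$ with propositional letters $v_1,\dots,v_k$ and clauses $c_1,\dots,c_m$, let $\Dmf_\varphi$ contain a fresh disjoint isomorphic copy $\Dmf_0^{v}$ of $\Dmf_0$ for each letter $v$ (a ``switch''), together with fresh ``wiring'' facts over auxiliary relations not occurring in $\Omc$, linking a node for each clause $c$ to the switches of its two positive and its two negative literals, plus enough guard atoms to keep everything guarded tree decomposable. Since $\Omc$ is invariant under disjoint unions, the disjoint union of models of $\Omc$ is again a model of $\Omc$, and adding facts over the auxiliary relations does not affect satisfaction of $\Omc$; from this one gets that $\Dmf_\varphi$ is consistent w.r.t.\ $\Omc$, that in every model $\Bmf$ of $\Omc$ and $\Dmf_\varphi$ each switch behaves like $\Dmf_0$ (so $\Bmf\models q_i(\vec{a}_i^{v})$ for some $i$), and that every choice of a disjunct at each switch extends to a model of $\Omc$ and $\Dmf_\varphi$. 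Declare $v$ \emph{true in $\Bmf$} iff $\Bmf\models q_1(\vec{a}_1^{v})$, and design a single rooted acyclic query $q^\ast$ --- read off the wiring --- that has a match iff some clause is \emph{falsified} by the assignment induced by $\Bmf$, i.e.\ both of its positive literals' switches refute $q_1$ while both of its negative literals' switches satisfy $q_1$. If $\varphi$ is unsatisfiable, the assignment induced by any model falsifies some clause, so $\Omc,\Dmf_\varphi\models q^\ast$; if $\varphi$ is satisfiable, the disjoint union of a copy of $\Bmf_1$ over each true switch and a copy of some $\Bmf_j$ ($j\ne 1$) over each false switch, plus a fixed model of $\Omc$ carrying the clause nodes, plus the wiring facts, is a model of $\Omc$ and $\Dmf_\varphi$ in which no clause is falsified, so $q^\ast$ has no match. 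Hence $\Omc,\Dmf_\varphi\models q^\ast$ iff $\varphi$ is unsatisfiable.

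The step I expect to be the main obstacle is packaging the ``falsified clause'' condition into a single \emph{rooted acyclic} query. The condition ``a positive literal's switch refutes $q_1$'' is non-monotone, while $q^\ast$ must be positive; this is harmless when the disjunctive behaviour of $\Omc$ is binary ($n=2$), since then ``refutes $q_1(\vec{a}_1^{v})$'' is equivalent to the positive statement ``satisfies $q_2(\vec{a}_2^{v})$'' and $q^\ast$ is just the falsified-clause pattern assembled from copies of $q_1$ and $q_2$ (made guarded tree decomposable by the guard atoms in the wiring). For a genuinely $n$-ary minimal disjunction one has to encode the choice among the $q_i$ explicitly into the wiring of $\Dmf_\varphi$ and let $q^\ast$ navigate it, which is the delicate part that generalizes the argument of \cite{KR12-csp} from description logics to arbitrary ontologies invariant under disjoint unions. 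A secondary task is to verify that the auxiliary wiring produces no spurious matches of $q^\ast$ in the models constructed for the satisfiable case, and that $q^\ast$ together with $\Dmf_\varphi$ is set up so that $q^\ast$ is indeed an rAQ (connected and guarded tree decomposable) rather than only a CQ.
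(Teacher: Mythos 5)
Your proposal follows essentially the same route as the paper's: your compactness step is exactly the paper's Theorem~\ref{thm:disjproperty} (materializability equals the disjunction property), and the $2{+}2$-SAT reduction with a fresh copy of the witness instance per variable, wiring over relation symbols not in $\text{sig}(\Omc)$, and disjoint-union invariance to decouple the switches is precisely the adaptation of \cite{KR12-csp} that the paper sketches (including your observation that multi-variable answer tuples just force higher-arity fresh wiring relations). The $n>2$ obstacle you flag is real but needs no new idea: it is dispatched by the same selector/dummy-copy device the paper uses to fold a disjunction of rAQs into a single rAQ in the proof of Theorem~\ref{thm:PTime} (the instance offers, for each $j\geq 2$, a selector node pointing to $q_j$'s answer tuple in the real gadget and to trivially matching fresh copies of the canonical databases of the other $q_{j'}$, so that the conjunctive subpattern of $q^\ast$ matches iff some $q_j$ with $j\geq 2$ holds).
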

We remark that, in the proof of Theorem~\ref{thm:nomatlower}, we use
instances and rAQs that use additional fresh (binary) relation symbols, that
is, relation symbols that do not occur in \Omc.

The ontology $\Omc_{\text{Mat/PTime}}$ from
Example~\ref{ex:examples33} shows that Theorem~\ref{thm:nomatlower}
does not hold for GF ontologies, even if they are of depth 1 and use
only a single variable.  In fact, $\Omc_{\text{Mat/PTime}}$ is not
CQ-materializable, but CQ-evaluation is in
{\sc PTime}  (which is both easy to see).
\begin{theorem}\label{thm:PTime}
  For all $\uGF$ and $\uGC$ ontologies $\Omc$, the following are
  equivalent:
  \begin{enumerate}
  \item rAQ-evaluation w.r.t.~$\Omc$ is in \PTime;

  \item CQ-evaluation w.r.t.~$\Omc$ is in \PTime;

  \item UCQ-evaluation w.r.t.~$\Omc$ is in \PTime.

  \end{enumerate}
  This remains true when `in \PTime' is replaced with
  `Datalog$^{\neq}$-rewritable' and with `\coNP-hard'
  (and with `Datalog-rewritable' if \Omc is a uGF ontology).
\end{theorem}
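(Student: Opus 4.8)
The plan is to establish the cycle of implications $(3)\Rightarrow(2)\Rightarrow(1)\Rightarrow(3)$ for each of the three versions of the statement (\PTime, Datalog$^{\neq}$-rewritability, \coNP-hardness). The implications $(3)\Rightarrow(2)\Rightarrow(1)$ are trivial in all three cases, since rAQs are a subclass of CQs and CQs are a subclass of UCQs: if UCQ-evaluation has the property then so does CQ-evaluation, and likewise from CQ to rAQ; for \coNP-hardness the direction is reversed but equally immediate. So the entire content lies in the implication $(1)\Rightarrow(3)$, i.e.\ in bootstrapping from good behaviour on rAQs to good behaviour on arbitrary UCQs.

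The key tool is Theorem~\ref{thm:materializabilityeq} together with Theorem~\ref{thm:nomatlower} and Lemma~\ref{lem:forestmodel}. First I would dispatch the \coNP-hardness equivalence: if rAQ-evaluation w.r.t.\ $\Omc$ is \emph{not} \coNP-hard, then in particular $\Omc$ must be materializable, since by Theorem~\ref{thm:nomatlower} non-materializability already gives \coNP-hardness for some rAQ. By Theorem~\ref{thm:materializabilityeq}, materializability does not depend on the query language, so $\Omc$ is UCQ-materializable. One then argues the contrapositive of $(1)\Rightarrow(3)$: assuming UCQ-evaluation is \coNP-hard but rAQ-evaluation is not, we have a materializable $\Omc$ for which some UCQ is \coNP-hard, which I intend to contradict by the reduction described next. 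For the \PTime and Datalog$^{\neq}$ directions, the strategy is: given that rAQ-evaluation w.r.t.\ $\Omc$ is in \PTime (resp.\ Datalog$^{\neq}$-rewritable), $\Omc$ is materializable (the rAQ-evaluation problem being in \PTime excludes \coNP-hardness unless $\PTime=\coNP$, but more directly one uses that the techniques behind Theorem~\ref{thm:nomatlower} yield an actual rAQ that is \coNP-hard, hence not in \PTime, when materializability fails), hence UCQ-materializable; then show UCQ-evaluation reduces to rAQ-evaluation.

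The heart of the argument, and the step I expect to be the main obstacle, is the reduction of UCQ-evaluation to rAQ-evaluation over a materializable ontology. The idea is that, because $\Omc$ is a uGF$(=)$ or uGC$_2(=)$ ontology, Lemma~\ref{lem:forestmodel} guarantees that certain answers can be computed over a forest model: every model homomorphically maps onto a forest model preserving $\mn{dom}(\Dmf)$, so $\Omc,\Dmf\models q(\vec a)$ iff $q(\vec a)$ holds in the (materialization realized as a) forest model. A UCQ match into a forest model decomposes along the guarded tree structure: each connected component of the query maps either entirely inside a single ``core'' region near $\Dmf$ or spreads into the tree-shaped attached interpretations $\Bmf_G$, and in the latter case the portion lying in the tree part is governed by a rooted acyclic query hooked at a guarded set. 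The plan is to make this precise by (i) decomposing an arbitrary UCQ into a bounded set of ``patterns'', each of which splits the query atoms into a part that must be satisfied in $\Dmf$ extended by a bounded neighbourhood and a part that is a union of rAQs anchored at guarded tuples, and (ii) encoding the existence of a matching pattern via a polynomial-time Turing reduction (resp.\ a Datalog$^{\neq}$ program composition) to finitely many instances of rAQ-evaluation w.r.t.\ $\Omc$. The subtlety is that the ``core'' part of a UCQ match need not be acyclic, so one cannot directly call an rAQ oracle on it; this is handled by observing that, on the data side, satisfaction of the cyclic core is decidable in \PTime outright (it involves only $\Dmf$ together with boundedly many freshly materialized guarded facts, whose presence is itself detected by rAQ-evaluation, since single guarded atoms are rAQs), and then the acyclic continuations into the forest are exactly what the rAQ-oracle computes. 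Closure of \PTime and of Datalog$^{\neq}$ under the resulting bounded composition, and the handling of equality/counting by allowing inequalities in the Datalog case, give the three claimed equivalences; for the pure uGF case (no equality, no counting) the same reduction produces an ordinary Datalog program, yielding the parenthetical addendum.
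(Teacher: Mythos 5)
Your overall plan — dispose of non-materializable $\Omc$ via Theorem~\ref{thm:nomatlower}, then exploit a forest-model decomposition of UCQ matches (Lemma~\ref{lem:forestmodel}) into a ``core'' CQ over $\Dmf$ plus rAQ components anchored at guarded tuples — is close in spirit to the paper's proof, which is built on a notion of strong decomposition of a UCQ into pairs $(\phi(\vec y),\Cmc)$ with $\phi$ evaluated over $\Dmf$ and $\Cmc$ a finite set of rAQs. For the \PTime{} and Datalog$^{\neq}$ variants your bounded-oracle composition is essentially what the paper does, modulo one point you gloss over: connected components of the query that spread into the attached trees but touch no element of $\dom(\Dmf)$ give rise to \emph{Boolean} CQs, not rAQs, and one must argue (this is Lemma~\ref{bounded-depth} in the paper) that such components always have a match at bounded distance from $\dom(\Dmf)$ so that they can be replaced by rAQs that descend a bounded number of steps into the forest. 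Your sketch treats every forest excursion as already ``anchored at a guarded set,'' which is not automatic.

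The genuine gap is in the \coNP-hardness direction. You reduce evaluating a UCQ to \emph{finitely many oracle calls} to rAQ-evaluation, i.e.\ a polynomial-time Turing reduction. That suffices to conclude \PTime{} (and Datalog$^{\neq}$-rewritability) for UCQs from the same property for rAQs, but it does \emph{not} let you transfer \coNP-hardness. If UCQ-evaluation for $(\Omc,q)$ is \coNP-hard and Turing-reduces to the union of problems ``evaluate $(\Omc,q_i)$'' for finitely many rAQs $q_i$, you cannot conclude that any single $q_i$ is \coNP-hard; Turing reductions distribute hardness over the oracles, and \coNP{} is not known to be closed under them. The contrapositive you invoke (``UCQ-evaluation \coNP-hard but rAQ-evaluation not'') is therefore not refuted by your reduction. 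The paper addresses this with an additional construction: it builds, in polynomial time, a \emph{single} instance $\tilde{\Dmf}$ and a \emph{single} rAQ $\tilde q$ such that $\Omc,\Dmf \models q(\vec a)$ iff $\Omc,\tilde{\Dmf}\models \tilde q(a_0)$. This is achieved by introducing fresh relation symbols $R,S,T_i$, adding to $\Dmf$ the canonical databases of all rAQs in the decomposition together with gadget atoms $R(a_0,a_p)$, $S(a_p,a_{p,\pi})$, $T_i(a_{p,\pi},\cdot)$ for each pair $p=(\phi,\Cmc)$ and assignment $\pi$ with $\Dmf\models\phi(\pi(\vec y))$, and then letting $\tilde q(x)$ be the rAQ $R(x,y)\wedge S(y,z)\wedge\bigwedge_i(T_i(z,\vec u_i)\wedge q'_i(\vec u_i))$. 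That many-one reduction is what transfers \coNP-hardness, and you would need something equivalent; without it, the third equivalence in the theorem does not follow from the argument you give.
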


\begin{proof}
  By Theorem~\ref{thm:nomatlower}, we can concentrate on ontologies
  that are materializable.  For the non-trivial implication of Point~3
  by Point~1, we exploit materializability to rewrite UCQs into a
  finite disjunction of queries $q \wedge \bigwedge_i q_i$ where $q$
  is a ``core CQ'' that only needs to be evaluated over the input
  instance $\Dmf$ (ignoring labeled nulls) and each $q_i$ is a rAQ.
  This is similar to squid decompositions
  in~\cite{DBLP:journals/jair/CaliGK13}, but more subtle due to the
  presence of subqueries that are not connected to any answer variable
  of $q$.  
  Similar constructions are used also to deal with
  Datalog$^{\neq}$-rewritability and with {\sc coNP}-hardness.
\end{proof}

The ontology $\Omc_{\text{UCQ/CQ}}$ from Example~\ref{ex:examples33}
shows that Theorem~\ref{thm:PTime} does not hold for GF ontologies,
even if they use only a single variable and are of depth 1 up to an
outermost universal quantifier with an equality guard.
\begin{lemma}
\label{lem:ucqcq}
  CQ-evaluation w.r.t.\ $\Omc_{\text{UCQ/CQ}}$ is in \PTime and
  UCQ-evaluation w.r.t.\ $\Omc_{\text{UCQ/CQ}}$ is {\sc coNP}-hard.
\end{lemma}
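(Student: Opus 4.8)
For the first statement I will prove the stronger fact that $\Omc_{\text{UCQ/CQ}},\Dmf\models q(\vec a)$ iff $\Dmf\models q(\vec a)$, for every CQ $q(\vec x)$, every instance $\Dmf$, and every tuple $\vec a$ over $\mn{dom}(\Dmf)$; since the right-hand side is ordinary evaluation of $q$ over the instance $\Dmf$ and hence in \PTime for fixed $q$, this suffices. The direction from right to left is immediate because every model of $\Dmf$ and $\Omc_{\text{UCQ/CQ}}$ contains $\Dmf$. For the converse I argue by contraposition: assuming $\Dmf\not\models q(\vec a)$, I exhibit a model of $\Dmf$ and $\Omc_{\text{UCQ/CQ}}$ refuting $q(\vec a)$. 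If $\Dmf$ contains a fact $E(c)$, then $\Dmf$ itself is such a model. Otherwise, $\Dmf^{+}:=\Dmf\cup\{E(c)\}$ for a fresh null $c$ is a model of $\Dmf$ and $\Omc_{\text{UCQ/CQ}}$; if $\Dmf^{+}\not\models q(\vec a)$ we are done, and otherwise any witnessing homomorphism $\Dmf_q\to\Dmf^{+}$ must have $c$ in its image (else it would be a homomorphism into $\Dmf$, contradicting $\Dmf\not\models q(\vec a)$), whence $q$ contains an $E$-atom (the only fact of $\Dmf^{+}$ mentioning $c$ being $E(c)$); but then $\Dmf\cup\{A(d)\mid d\in\mn{dom}(\Dmf)\}$ is a model of $\Dmf$ and $\Omc_{\text{UCQ/CQ}}$ with no $E$-fact, so it refutes $q(\vec a)$.

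For the second statement I reduce from the complement of $2$-colorability of $3$-uniform hypergraphs, which is \coNP-complete since hypergraph $2$-colorability (equivalently, positive \textsc{Not-All-Equal-3-Sat}) is \NPTime-complete. Given a $3$-uniform hypergraph $H=(V,E)$, fix an ordering of the three vertices of each hyperedge and set $\Dmf_H:=\{\mn{Hyp}(u,v,w)\mid\{u,v,w\}\in E\}$, where $\mn{Hyp}$ is a fresh ternary relation symbol. Let $q$ be the Boolean UCQ with the three disjuncts $q_E\leftarrow E(x)$, $q_A\leftarrow \mn{Hyp}(x,y,z)\wedge A(x)\wedge A(y)\wedge A(z)$, and $q_B\leftarrow \mn{Hyp}(x,y,z)\wedge B(x)\wedge B(y)\wedge B(z)$. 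I claim $\Omc_{\text{UCQ/CQ}},\Dmf_H\models q$ iff $H$ is not $2$-colorable, which (the reduction being polynomial) yields \coNP-hardness. For the claim: every model $\Amf$ of $\Dmf_H$ and $\Omc_{\text{UCQ/CQ}}$ satisfies one of the two disjuncts of the ontology sentence; if it satisfies $\exists x\,E(x)$ then $\Amf\models q_E$, and if it satisfies $\forall x(A(x)\vee B(x))$ then every element of $\mn{dom}(\Amf)$ lies in $A^{\Amf}\cup B^{\Amf}$, which induces a $2$-coloring of $V$, so when $H$ is not $2$-colorable some hyperedge is monochromatic and hence $\Amf\models q_A$ or $\Amf\models q_B$; thus $H$ not $2$-colorable implies $\Omc_{\text{UCQ/CQ}},\Dmf_H\models q$. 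Conversely, a proper $2$-coloring of $H$ gives a model of $\Dmf_H$ and $\Omc_{\text{UCQ/CQ}}$ by coloring the vertices accordingly with $A$/$B$ and adding no $E$-fact; this model refutes $q_E$ (no $E$-fact) and refutes $q_A,q_B$ (no monochromatic hyperedge), so $\Omc_{\text{UCQ/CQ}},\Dmf_H\not\models q$.

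The one subtle point, which I expect to be the main obstacle, is the $\exists x\,E(x)$ disjunct of $\Omc_{\text{UCQ/CQ}}$. On the one hand it trivializes the CQ case: there is always a model barely larger than $\Dmf$ (either $\Dmf$ itself, or $\Dmf$ plus one $E$-atom, or $\Dmf$ plus an all-$A$ coloring) that refutes any non-satisfied CQ, so the ontology contributes nothing to CQ answers. On the other hand, in the hardness reduction it makes the disjunct $q_E$ indispensable — without it, the model $\Dmf_H\cup\{E(c)\}$ would refute $q$ irrespective of colorability and the reduction would break. Getting this interplay right, together with the small case analysis on whether $q$ (respectively $\Dmf$) mentions $E$ in the first part, is the crux; the remaining steps are routine.
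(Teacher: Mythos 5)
Your proposal is correct, and both halves are sound. A few points of comparison with the paper's (very terse) proof sketch.

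For the upper bound, you prove the clean and slightly stronger invariant that $\Omc_{\text{UCQ/CQ}},\Dmf\models q(\vec a)$ iff $\Dmf\models q(\vec a)$ for every CQ $q$, by a case analysis on whether $\Dmf$ contains an $E$-fact and whether $q$ mentions $E$ --- the key point being that the $\exists x\,E(x)$ disjunct always offers an escape for one of the two natural extensions of $\Dmf$ (add $E(c)$ for a fresh null, or add $A(d)$ everywhere). This is exactly the kind of case analysis the paper alludes to, just packaged as a precise equivalence; the argument is correct in every branch (in particular, the step ``$c$ in the image $\Rightarrow$ $q$ has an $E$-atom'' is sound since $E(c)$ is the only fact of $\Dmf^+$ mentioning $c$ and every variable of a CQ occurs in some atom).

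For the lower bound, the paper says to re-use the 2+2-SAT construction from the proof of Theorem~\ref{thm:nomatlower}, suitably adapted (this needs adaptation precisely because $\Omc_{\text{UCQ/CQ}}$ is \emph{not} invariant under disjoint unions, so that theorem cannot be invoked as a black box). You instead give a direct reduction from the complement of 2-colorability of 3-uniform hypergraphs (monotone NAE-3SAT), using a fresh ternary relation to encode hyperedges and a 3-disjunct Boolean UCQ. The reduction is correct: the ``if'' direction uses the canonical (possibly improper) coloring $v\mapsto A$ iff $v\in A^{\Amf}$ and the ``only if'' direction uses the obvious model induced by a proper 2-coloring, which refutes all three disjuncts. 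This route is self-contained and more transparent than replaying the 2+2-SAT machinery; both routes are legitimate, and your observation about the indispensability of the $q_E$ disjunct is precisely what makes the reduction go through. One trivial edge case you could mention for pedantry: when the hypergraph has no hyperedges, $\Dmf_H$ would be empty, and the paper requires instances to be non-empty, so one should add a dummy fact or assume $E\neq\emptyset$; this does not affect hardness.
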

The lower bound essentially follows the construction in the proof of
Theorem~\ref{thm:nomatlower} and the upper bound is based on a
case analysis, depending on which relations occur in the CQ and in
the input instance.
%
%
%
\section{Unravelling Tolerance}
\label{sec:unravelling-tolerance}
While materializability of an ontology is a necessary condition for
\PTime query evaluation in uGF$(=)$ and uGC$_{2}(=)$, we now identify
a sufficient condition called unravelling tolerance that is based on
unravelling instances into cg-tree decomposable instances (which might
be infinite). In fact, unravelling tolerance is even a sufficient
condition of Datalog$^{\not=}$-rewritability and we will later
establish our dichotomy results by showing that, for the ontology
languages in question, materializability implies unravelling
tolerance.

%
We start with introducing suitable forms of unravelling (also called guarded tree unfolding,
see~\cite{DBLP:books/daglib/p/Gradel014} and references therein).
The \emph{uGF-unravelling} $\Dmf^{u}$ of an instance $\Dmf$ is constructed
as follows. Let $T(\Dmf)$ be the set of all sequences
$t=G_{0}G_{1}\cdots G_{n}$ where $G_{i}$, $0\leq i \leq n$, are
maximal guarded sets of $\Dmf$ and
\begin{description}
\item[(a)] $G_{i}\not= G_{i+1}$,
\item[(b)] $G_{i}\cap G_{i+1}\not=\emptyset$, and
\item[(c)] $G_{i-1}\not= G_{i+1}$.
\end{description}
In the following, we associate each $t \in T(\Dmf)$ with a set of atoms $\mn{bag}(t)$. Then we define $\Dmf^u$ as
$\bigcup_{t\in T(\Dmf)}\text{bag}(t)$ and note that
$(T(\Dmf),E,\text{bag})$ is a cg-tree decomposition of $\Dmf^{u}$
where $(t,t')\in E$ if $t'=tG$ for some $G$.

Set $\text{tail}(G_{0}\cdots G_{n})=G_{n}$.  Take an infinite supply
of \emph{copies} of any
$d\in \text{dom}(\Dmf)$. We set
$e^{\uparrow}=d$ if $e$ is a copy of~$d$. We define $\text{bag}(t)$
(up to isomorphism) proceeding by induction on the length of the
sequence $t$. For any $t=G$, $\text{bag}(t)$ is an instance whose
domain is a set of copies of $d\in G$ such that the mapping
$e\mapsto e^{\uparrow}$ is an isomorphism from $\text{bag}(G)$ onto
the subinstance $\Dmf_{|G}$ of $\Dmf$ induced by $G$.  To define
$\text{bag}(t')$ for $t'=tG'$ when
$\mn{tail}(t)=G$, 
take for any $d\in G'\setminus G$ a fresh
copy $d'$ of $d$ and define $\text{bag}(t')$ with domain $\{ d' \mid
d\in G'\setminus G\}\cup \{ e \in \text{bag}(t) \mid e^{\uparrow} \in
G' \cap G)\}$ such that the mapping $e\mapsto e^{\uparrow}$ is an
isomorphism from $\text{bag}(t')$ onto $\Dmf_{|G'}$.
The following example illustrates the construction of $\Dmf^{u}$.
\begin{example}\label{ex:ddd}
(1) Consider the instance $\Dmf$ depicted below with the maximal guarded sets
$G_{1},G_{2},G_{3}$. Then the unravelling $\Dmf^{u}$ of $\Dmf$ consists
of three isomorphic chains (we depict only one such chain):

\includegraphics{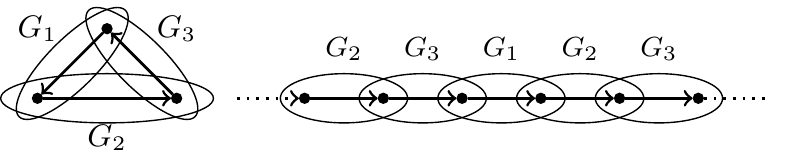}

(2) Next consider the instance $\Dmf$ depicted below which has the shape of a
tree of depth one with root $a$ and has three maximal guarded sets $G_{1},G_{2},G_{3}$.
Then the unravelling $\Dmf^{u}$ of $\Dmf$ consists of three isomorphic trees of depth one
of infinite outdegree (again we depict only one):

\includegraphics{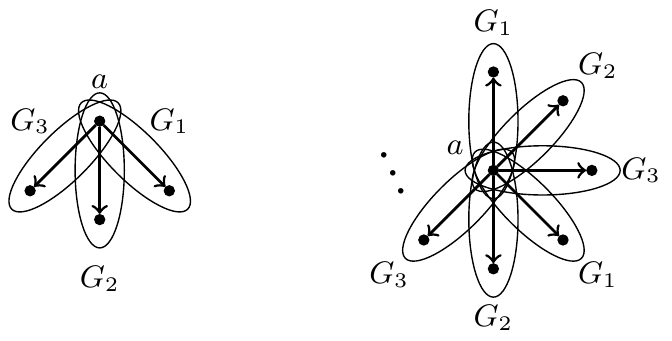}
\end{example}
By construction, the mapping $h: e \mapsto e^{\uparrow}$ is a homomorphism from $\Dmf^{u}$ onto $\Dmf$ and
the restriction of $h$ to any guarded set $G$ is an isomorphism. It follows that for any uGF$(=)$ ontology
$\Omc$, UCQ $q(\vec{x})$, and $\vec{a}$ in $\Dmf^{u}$, if $\Omc,\Dmf^{u}\models q(\vec{a})$, then
$\Omc,\Dmf\models q(h(\vec{a}))$. This implication does not hold for ontologies in the guarded fragment
with functions or counting. To see this, let
$$
\Omc=\{\forall x(\exists^{\geq 4}y R(x,y) \rightarrow A(x))\}
$$
Then $\Omc,\Dmf^{u}\models A(a)$ for the instance $\Dmf$ from Example~\ref{ex:ddd} (2) but $\Omc,\Dmf\not\models A(a)$.
For this reason the uGF-unravelling is not appropriate for the guarded fragment with functions or
counting. By replacing Condition~(c) by the stronger condition
\begin{description}
\item[(c$'$)] $G_{i}\cap G_{i-1} \not= G_{i} \cap G_{i+1}$,
\end{description}
we obtain an unravelling that we call \emph{uGC$_{2}$-unravelling} and that we apply whenever all relations
have arity at most two. One can show that the uGC$_{2}$-unravelling of an instance preserves the number of
$R$-successors of constants in $\Dmf$ and that, in fact, the implication `$\Omc,\Dmf^{u}\models q(\vec{a}) \Rightarrow
\Omc,\Dmf\models q(h(\vec{a}))$' holds for every uGC$_{2}(=)$ ontology $\Omc$, UCQ $q(\vec{x})$, and tuple $\vec{a}$ in
the uGC$_{2}$-unravelling $\Dmf^{u}$ of $\Dmf$.

\smallskip

We are now ready to define unravelling tolerance.
For a maximal guarded set $G$ in $\Dmf$, the \emph{copy in
  $\text{bag}(G)$ of a tuple $\vec{a}=(a_{1},\ldots,a_{k})$ in $G$} is
the unique $\vec{b}= (b_{1},\ldots,b_{k})$ in
$\text{dom}(\text{bag}(G))$ such that $b_{i}^{\uparrow} = a_{i}$ for
$1\leq i \leq k$.

\begin{definition}\label{def:unrav}
  A uGF$(=)$ (resp.\ uGC$_{2}(=)$) ontology $\Omc$ is
  \emph{unravelling tolerant} if for every instance $\Dmf$, every rAQ
  $q(\vec{x})$, and every tuple $\vec{a}$ in $\Dmf$ such that the set
  $G$ of elements of $\vec{a}$ is maximally guarded in $\Dmf$ the
  following are equivalent:
\begin{enumerate}
\item $\Omc,\Dmf\models q(\vec{a})$;
\item $\Omc,\Dmf^{u}\models q(\vec{b})$ where $\vec{b}$ is the copy of
  $\vec{a}$ in $\text{bag}(G)$
\end{enumerate}
where $\Dmf^u$ is the uGF-unravelling (resp.\ the
uGC$_{2}$-unravelling) of~$\Dmf$.
\end{definition}
We have seen above that the implication (2) $\Rightarrow$ (1) in
Definition~\ref{def:unrav} holds for every uGF$(=)$ and uGC$_{2}(=)$
ontology and every UCQ. Note that it is pointless to define unravelling tolerance using the implication (1) $\Rightarrow$ (2)
for UCQs or CQs that are not acyclic. The following example shows that (1) $\Rightarrow$ (2) does not always hold for rAQs.
\begin{example}
Consider the uGF ontology $\Omc$ that contains the sentences
$$
\forall x \big(X(x) \rightarrow (\exists y (R(x,y) \wedge X(y)) \rightarrow E(x))\big)
$$
with $X\in \{A,\neg A\}$ and
$$
\forall x \big(E(x) \rightarrow ((R(x,y) \vee R(y,x)) \rightarrow E(y))\big)
$$
For instances $\Dmf$ not using $A$, $\Omc$ states that $E(a)$ is entailed for all $a\in \text{dom}(\Dmf)$
that are $R$-connected to some $R$-cycle in $\Dmf$ with an odd number of constants. Thus, for the instance
$\Dmf$ from Example~\ref{ex:ddd} (1) we have $\Omc,\Dmf \models E(a)$ for every $a\in \text{dom}(\Dmf)$
but $\Omc,\Dmf^{u}\not\models E(a)$ for any $a\in \text{dom}(\Dmf^{u})$.
\end{example}
We now show that, as announced, unraveling tolerance implies that
query evaluation is Datalog$^{\not=}$-rewritable.
\begin{theorem}\label{thm:datalog}
  For all $\uGF$ and $\uGC$ ontologies \Omc, unravelling tolerance of
  \Omc implies that rAQ-evaluation w.r.t.\ \Omc is
  Datalog$^{\neq}$-rewritable (and Datalog-rewritable if \Omc is
  formulated in uGF).
\end{theorem}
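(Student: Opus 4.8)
The plan is to build, for a given $\uGF$ (or $\uGC$) ontology $\Omc$ that is unravelling tolerant, a Datalog$^{\neq}$ program $\Pi$ (respectively Datalog) that computes certain answers to a fixed rAQ $q(\vec x)$. By Definition~\ref{def:unrav}, evaluating $q$ on $\Dmf$ under $\Omc$ reduces to evaluating $q$ on the (possibly infinite) unravelling $\Dmf^u$ under $\Omc$. The key point is that $\Dmf^u$ is cg-tree decomposable, so the behaviour of $\Omc$ on $\Dmf^u$ is \emph{local}: whether a guarded tuple gets a given type is determined by a bounded-size neighbourhood in the tree decomposition, which in turn corresponds to a bounded-radius neighbourhood of a guarded set in $\Dmf$ itself. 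Thus I would first show that $\Omc,\Dmf^u \models q(\vec b)$ can be characterized by a recursive condition on guarded sets of $\Dmf$, and then encode that recursion as a Datalog program.

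Concretely, the first step is a \emph{types and consequences} analysis. For a uGF(=) ontology of depth $d$, fix the finite set $\mn{Tp}$ of ``local types'' that can be realized on a guarded set, where a type records the (atomic) facts on the set together with a bounded amount of information about the depth-$d$ behaviour forced by $\Omc$. I would define, for each maximal guarded set $G$ of $\Dmf$, the set of types consistent with $\Omc$ at $G$ given the types already forced at the neighbouring guarded sets reachable in $\Dmf^u$ — this is where conditions (a)--(c) (resp.\ (c$'$) for $\uGC$) matter, since they control which neighbours the unravelling ``sees''. Because $\Omc$ is preserved under disjoint unions and $\Dmf^u$ has a genuine tree decomposition, a standard fixpoint / automata-on-trees argument shows that there is a \emph{least} assignment of forced consequences, and a guarded fact $R(\vec a)$ is a certain consequence of $\Omc$ on $\Dmf$ iff it appears in this least assignment at the bag of the guarded set $\{\vec a\}$. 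This least-fixpoint structure is exactly what Datalog computes.

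The second step is to turn this into the program $\Pi$. Introduce intensional predicates, one for each relevant ``state'' (a type, or a piece of a partial run of the consequence computation), over tuples of elements of a guarded set; the arity is bounded by the maximum arity of symbols in $\Omc$ (or by $2$ in the $\uGC$ case). The rules of $\Pi$ are: (i) initialization rules that read off the atomic type of each maximal guarded set from the instance; (ii) propagation rules that, given the states currently derived on a guarded set and on an adjacent guarded set sharing the right intersection pattern, derive the new states forced by applying the $\uGF$ sentences — each such rule is itself a guarded-ish conjunctive condition on a bounded neighbourhood, hence expressible in a Datalog rule body (using $\neq$ to enforce the distinctness conditions among shared and non-shared elements, which is exactly why Datalog$^{\neq}$ rather than plain Datalog is needed when equality is present); (iii) a ``goal'' rule that fires $q(\vec x)$ once enough states have been derived to witness a homomorphism from $\Dmf_q$ into the saturated $\Dmf^u$ — here I would use that $q$ is a \emph{rooted acyclic} query, so the homomorphism can be searched for along the tree, and its existence is again a bounded, recursively-checkable condition. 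For the pure uGF case there is no equality in non-guard positions beyond what the guards give, the distinctness conditions disappear, and the program is plain Datalog. Soundness of $\Pi$ follows because every fact it derives corresponds to a genuine entailment on $\Dmf^u$ (monotone, finitely many rule applications, each rule sound w.r.t.\ the semantics of the corresponding $\uGF$ sentence); completeness follows from the least-fixpoint characterization of the first step together with unravelling tolerance, which bridges $\Dmf^u$ back to $\Dmf$.

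The main obstacle I expect is the bounded-locality claim in the first step: one must argue that the certain consequences of $\Omc$ on the infinite tree-like structure $\Dmf^u$ are determined by a fixpoint over \emph{finitely many} types on guarded sets, and that propagation only needs to look at \emph{immediately adjacent} bags (with the correct intersection pattern dictated by (a)--(c), resp.\ (c$'$)). This is where the guarded tree decomposition and invariance under disjoint unions must be combined carefully — in particular in the $\uGC$ case, where the uGC$_2$-unravelling preserves successor \emph{counts} and one must check that counting constraints are still handled by a bounded state (relying on numbers being coded in unary, so thresholds are polynomially bounded). Once locality is in place, the translation into Datalog$^{\neq}$ (or Datalog) is routine bookkeeping, and the equivalence $\Omc,\Dmf\models q(\vec a)$ iff $\Dmf\models\Pi(\vec a)$ falls out of unravelling tolerance plus the fixpoint characterization.
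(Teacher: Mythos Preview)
Your high-level architecture---types attached to guarded tuples, propagation along overlapping guarded sets, a Datalog encoding---matches the paper, but the \emph{direction} of the fixpoint is inverted, and this is a genuine gap. You describe computing a ``least assignment of forced consequences'' and firing the goal once ``enough states have been derived to witness a homomorphism from $\Dmf_q$ into the saturated $\Dmf^u$''. That is a Horn-style saturation picture. It does not survive disjunction, which uGF has: from $\forall x(A(x)\to B(x)\vee C(x))$, $\forall x(B(x)\to D(x))$, $\forall x(C(x)\to D(x))$ and $\Dmf=\{A(a)\}$ you must answer $D(a)$ without ever being able to ``derive'' $B(a)$ or $C(a)$. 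No monotone rule that is individually sound (in your sense) will produce $D(a)$.

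What the paper actually does is a \emph{type elimination}: the intensional predicates are indexed by \emph{sets} $\Theta$ of types, $P_\Theta(\vec a)$ means ``only the types in $\Theta$ are still possible for $\vec a$'', and the rules intersect and refine these sets using compatibility with overlapping guarded tuples. The goal fires when every surviving type at $\vec a$ contains $q$, or when some tuple's set collapses to $\emptyset$ (inconsistency). This is a least fixpoint in Datalog, but semantically it is a greatest fixpoint on possible types. The correctness argument for the ``if'' direction is then not ``we built the materialization'' but rather: if goal does not fire, pick at each bag of $\Dmf^u$ a surviving type (choosing compatibly along the tree), hook a model realizing that type to each bag, and obtain a model of $\Omc$ and $\Dmf^u$ refuting $q(\vec b)$; unravelling tolerance then transports this back to $\Dmf$. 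Your locality intuition is correct, but it manifests as ``compatibility of types across adjacent bags'' rather than ``bounded-radius derivation of consequences''. Finally, the Datalog vs.\ Datalog$^{\neq}$ distinction is not about enforcing shared/non-shared elements in propagation; inequality is needed only to initialize types with the correct (in)equality information when $\Omc$ uses equality, which is why it disappears for plain uGF.
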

\begin{proof}
  We sketch the proof for the case that $\Omc$ is a $\uGF$ ontology;
  similar constructions work for the other cases.
  Suppose that $\Omc$ is unravelling tolerant, and that $q(\vec{x})$ is a rAQ.
  We construct a Datalog$^{\neq}$ program $\Pi$ that, given an instance $\Dmf$,
  computes the certain answers $\vec{a}$ of $q$ on $\Dmf$ given $\Omc$,
  where w.l.o.g.\ we can restrict our attention to answers $\vec{a}$
  such that the set $G$ of elements of $\vec{a}$ is maximally guarded in $\Dmf$.
  By unravelling tolerance,
  it is enough to check if $\Omc,\Dmf^u \models q(\vec{b})$,
  where $\vec{b}$ is the copy of $\vec{a}$ in $\bag(G)$
  and $\Dmf^u$ is the uGF-unravelling of $\Dmf$.

  The Datalog$^{\neq}$ program $\Pi$
  assigns to each maximally guarded tuple $\vec{a} = (a_1,\dotsc,a_k)$ in $\Dmf$
  a set of \emph{types}.
  Here, a type is a maximally consistent set of uGF formulas
  with free variables in $x_1,\dotsc,x_k$,
  where the variable $x_i$ represents the element $a_i$.
  It can be shown that we only need to consider types
  with formulas of the form $\phi$ or $\lnot \phi$,
  where $\phi$ is obtained from a subformula of $\Omc$ or $q$
  by substituting a variable in $x_1,\dotsc,x_k$
  for each of its free variables,
  or $\phi$ is an atomic formula in the signature of $\Omc,q$
  with free variables in $x_1,\dotsc,x_k$.
  In particular, the set of all types is finite.
  We further restrict our attention to types $\theta$
  that are realizable in some model of $\Omc$,
  i.e., there is a model $\Bmf(\theta)$ of $\Omc$
  containing all elements of $\vec{a}$
  that is a model of each formula in $\theta$
  under the interpretation $x_i \mapsto a_i$.
  The Datalog$^{\neq}$ program $\Pi$ ensures the following:
  \begin{enumerate}
  \item
    for any two maximally guarded tuples $\vec{a} = (a_1,\dotsc,a_k)$,
    $\vec{b} = (b_1,\dotsc,b_l)$ in $\Dmf$
    that share an element,
    and any type $\theta$ assigned to $\vec{a}$
    there is a type $\theta'$ assigned to $\vec{b}$
    that is \emph{compatible} to $\theta$
    (intuitively, the two types agree on all formulas
     that only talk about elements shared by $\vec{a}$ and $\vec{b}$);
  \item
    a tuple $\vec{a} = (a_1,\dotsc,a_k)$ is an answer to $\Pi$
    if all types assigned to $\vec{a}$ contain $q(x_1,\dotsc,x_k)$,
    or some maximally guarded tuple $\vec{b}$ in $\Dmf$
    has no type assigned to it.
  \end{enumerate}
  It can be shown that $\vec{a}$ is an answer to $\Pi$
  iff $\Omc,\Dmf^u \models q(\vec{a})$.

  The interesting part is the ``if'' part.
  Suppose $\vec{a} = (a_1,\dotsc,a_k)$ is \emph{not} an answer to $\Pi$.
  Then, each maximally guarded tuple $\vec{b}$ in $\Dmf$
  is assigned to at least one type,
  and for some type $\theta^*$ assigned to $\vec{a}$
  we have $q(x_1,\dotsc,x_k) \notin \theta^*$.
  We use this type assignment
  to label each maximally guarded tuple $\vec{b}$ of $\Dmf^u$
  with a type $\theta_{\vec{b}}$
  so that
  (1) for each maximally guarded tuple $\vec{c}$ of $\Dmf^u$
  that shares an element with $\vec{b}$
  the two types $\theta_{\vec{b}}$ and $\theta_{\vec{c}}$ are compatible;
  and
  (2) $\theta^* = \theta_{\vec{a}^*}$,
  where $\vec{a}^*$ is the copy of $\vec{a}$ in $G = \set{a_1,\dotsc,a_k}$.
  We can now show that the interpretation $\Amf$ obtained from $\Dmf^u$
  by hooking $\Bmf(\theta_{\vec{b}})$ to $\Dmf^u$,
  for all maximally guarded tuples $\vec{b}$ of $\Dmf^u$,
  is a model of $\Omc$ and $\Dmf^u$ with $\Amf \not\models q(\vec{a}^*)$.
\end{proof}

\section{Dichotomies}
\label{sect:dichotomies}

We prove dichotomies between {\sc PTime} and {\sc coNP} for query
evaluation in the five ontology languages displayed in the bottommost
part of Figure~\ref{fig:results}. In fact, the dichotomy is even
between Datalog$^{\not=}$-rewritability and {\sc coNP}.
%
The proof establishes that for ontologies $\Omc$ formulated in any of
these languages, CQ-evaluation w.r.t.\ \Omc is
Datalog$^{\not=}$-rewritable iff it is in {\sc PTime} iff \Omc is
unravelling tolerant iff $\Omc$ is materializable for the class of
(possibly infinite) cg-tree decomposable instances iff \Omc is
materializable and that, if none of this is the case, CQ-evaluation
w.r.t.\ \Omc is {\sc coNP}-hard. The main step towards the dichotomy
result is provided by the following theorem.
%
%
%
\begin{theorem}\label{thm:infiniteunravel}
  Let $\Omc$ be an ontology formulated in one of uGF$(1)$,
  uGF$^{-}(1,=)$, uGF$^{-}_{2}(2)$, uGC$^{-}_{2}(1,=)$, or an
  $\mathcal{ALCHIF}$ ontology of depth~2.  If $\Omc$ is materializable
  for the class of (possibly infinite) cg-tree decomposable instances $\Dmf$ with
  $\text{sig}(\Dmf) \subseteq \text{sig}(\Omc)$, then $\Omc$ is
  unravelling tolerant.
\end{theorem}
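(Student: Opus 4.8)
The plan is to establish unravelling tolerance by showing that, under the materializability hypothesis, the uGF-unravelling (resp.\ uGC$_2$-unravelling) $\Dmf^u$ of an instance $\Dmf$ entails a certain answer $q(\vec{b})$ exactly when $\Dmf$ entails $q(\vec{a})$. The easy direction (2)$\Rightarrow$(1) already holds for all these languages and every UCQ, as noted after Definition~\ref{def:unrav}, so the work is in (1)$\Rightarrow$(2). I would argue the contrapositive: assuming $\Omc,\Dmf^u\not\models q(\vec{b})$, I produce a model $\Amf$ of $\Omc$ and $\Dmf$ with $\Amf\not\models q(\vec{a})$. Since $\Dmf^u$ is cg-tree decomposable with $\text{sig}(\Dmf^u)\subseteq\text{sig}(\Omc)$, the hypothesis gives a materialization $\Mmf$ of $\Omc$ and $\Dmf^u$, which witnesses $\Mmf\not\models q(\vec{b})$. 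By Lemma~\ref{lem:forestmodel} I may assume $\Mmf$ is a forest model of $\Dmf^u$. The goal is then to ``fold'' the forest model $\Mmf$ of $\Dmf^u$ back down along the unravelling homomorphism $h\colon e\mapsto e^\uparrow$ to obtain a model of $\Dmf$.

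The core construction proceeds as follows. First I associate, to each bag $\text{bag}(t)$ of the cg-tree decomposition of $\Dmf^u$ together with its attached dangling trees in $\Mmf$, a \emph{type} recording the openGF (resp.\ openGC$_2$) formulas from the closure of $\Omc$ and $q$ that hold at the (copies of) elements in $\text{tail}(t)$ inside $\Mmf$; the restriction to bounded depth in all five listed languages is exactly what makes this type finite and makes a single bag plus one ``layer'' of successors enough to determine satisfaction of $\Omc$ at a guarded tuple --- this is where uGF$(1)$ versus full uGF, and uGC$_2^-(1,=)$ versus higher depth, matters. Next, two copies in $\Dmf^u$ of the same constant $d\in\text{dom}(\Dmf)$ need not carry the same type in $\Mmf$, so I pick, for each maximal guarded set $G$ of $\Dmf$ and each tuple over $G$, one ``representative'' copy in $\text{bag}(G)$ and read off its type; compatibility of overlapping guarded sets must be maintained, which may force a preliminary step of passing to a quotient or re-choosing representatives consistently (here the tree structure of $T(\Dmf)$ and conditions (a)--(c), resp.\ (c$'$), are used to propagate a coherent choice). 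Finally I build $\Amf$ by hooking to $\Dmf$, at each maximal guarded set $G$, a copy of the dangling forest that $\Mmf$ attaches to the representative bag for $G$; one then checks that $\Amf\models\Omc$ (each guarded quantifier is satisfied because locally $\Amf$ looks, up to the relevant depth, exactly like $\Mmf$ around the representative bag) and that $\Amf\not\models q(\vec{a})$ (a homomorphism from $\Dmf_q$ into $\Amf$ would, since $q$ is a rAQ and is rooted at the guarded set $G$ of $\vec{a}$, lift through the local isomorphisms to a homomorphism into $\Mmf$ witnessing $\Mmf\models q(\vec{b})$, a contradiction).

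The main obstacle is the type-coherence step: arranging that the finitely many types chosen along the unravelling tree glue into a single well-defined assignment on $\Dmf$ whose associated hooked forests simultaneously satisfy $\Omc$ on \emph{all} maximal guarded sets, including on the overlaps where two guarded sets of $\Dmf$ share an element that had several copies with a priori different types in $\Dmf^u$. For uGF$(1)$ and uGF$^-_2(2)$ this is delicate because the outermost (non-equality) guard in depth-$1$ uGF sentences --- and the extra ``$1.5$th'' quantifier level it encodes, cf.\ the discussion after Example~\ref{ex:3} --- can see two overlapping bags at once; for uGC$^-_2(1,=)$ and $\mathcal{ALCHIF}$ of depth~2 the counting/functionality axioms additionally constrain how successors may be duplicated or identified, which is precisely why the uGC$_2$-unravelling (condition (c$'$)) rather than the uGF-unravelling must be used, and why the folding must preserve $R$-successor counts. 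I expect each of the five cases to require its own bookkeeping for this step, with the uGF$(1)$ and uGF$^-_2(2)$ cases being the most technically involved, and the remaining verifications ($\Amf\models\Omc$ and $\Amf\not\models q(\vec{a})$) being comparatively routine given the local-isomorphism property of the unravelling.
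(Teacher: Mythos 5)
Your plan correctly identifies the contrapositive strategy, the use of Lemma~\ref{lem:forestmodel} to pass to a forest-model materialization of $\Dmf^u$, and the role of bounded depth; and for uGF$^-(1,=)$ and uGC$^-_2(1,=)$, whose sentences are $\forall x\,\psi(x)$ with $\psi$ of depth one, the ``pick one representative copy of each constant and hook its dangling tree'' construction is essentially what the paper does. For uGF$(1)$, uGF$^-_2(2)$, and $\mathcal{ALCHIF}$ of depth~$2$, however, this is precisely the naive fold that the paper shows \emph{fails}: with a propagation axiom $\forall xy(R(x,y)\rightarrow(\varphi(x)\rightarrow\varphi(y)))$, where $\varphi$ involves negation, different copies of the same constant in $\Dmf^u$ can carry incompatible $\varphi$-status in any forest materialization, and when $\Dmf$ is a cycle there is no globally coherent choice of representatives --- the unravelling tree $T(\Dmf)$ does not respect the cycle, so your proposal to ``propagate a coherent choice'' along it has nowhere to go, and the folded interpretation ends up violating the axiom. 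In particular, your claim that the folded model ``locally looks, up to the relevant depth, exactly like'' the materialization around the representative bag is false, and no quotient of types read off a \emph{single} materialization of $\Dmf^u$ repairs it.

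The missing idea is a saturation-and-re-materialization step before folding. The paper enlarges $\Dmf^u$ to $\Dmf^{u+}$ by attaching, to each bag $\text{bag}(t)$, a copy of every rAQ entailed there; this relies on the automorphisms $\hat{h}_{t,t'}$ of $\Dmf^u$ for $\text{tail}(t)=\text{tail}(t')$ (which need the condition $G_{i-1}\neq G_{i+1}$ and the use of maximal guarded sets in the unravelling) and on homomorphism preservation, which holds for uGF$(1)$ and uGF$^-_2(2)$ precisely because they lack equality and counting, and which requires separate care for $\mathcal{ALCHIF}$ so that attaching rAQ copies does not violate functionality. A fresh forest-model materialization $\Bmf^{u+}$ of $\Dmf^{u+}$ then has an embedding property (every finite piece of a dangling tree at $t$ isomorphically embeds into the dangling forest at any $t'$ with the same tail, extending $\hat{h}_{t,t'}$), and one can \emph{uniformize} it so that all bags with the same tail carry isomorphic dangling trees; only this uniformized model folds back to a model of $\Omc$ over $\Dmf$. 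Finally, the conclusion that the folded model refutes $q(\vec{a})$ goes through a connected guarded bisimulation between the folded model and the uniformized one, not the homomorphism-lifting you sketch: a homomorphism from $\Dmf_q$ into the (cyclic) folded model need not lift through the unravelling.
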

\begin{proof}
  We sketch the proof for uGF$(1)$ and uGF$^{-}_{2}(2)$ ontologies
  $\Omc$ and then discuss the remaining cases.  Assume that \Omc
  satisfies the precondition from
  Theorem~\ref{thm:infiniteunravel}. Let $\Dmf$ be an instance and
  $\Dmf^{u}$ its uGF unravelling. Let $\vec{a}$ be a tuple in a
  maximal guarded set $G$ in $D$ and $\vec{b}$ be the copy in
  $\text{dom}(\text{bag}(G))$ of $\vec{a}$. Further let $q$ be an rAQ
  such that $\Omc,\Dmf^{u}\not\models q(\vec{b})$.  We have to show
  that $\Omc,\Dmf\not\models q(\vec{a})$. Using the condition that $\Omc$ is materializable
  for the class of cg-tree decomposable instances $\Dmf$ with
  $\text{sig}(\Dmf) \subseteq \text{sig}(\Omc)$, it can be shown
  that there exists a materialization $\Bmf$ of $\Omc$ and $\Dmf^{u}$.

  By Lemma~\ref{lem:forestmodel} we may assume that $\Bmf$
  is a forest model which is obtained from $\Dmf^{u}$ by hooking
  cg-tree decomposable $\Bmf_{\text{bag}(t)}$ to maximal guarded
  $\text{bag}(t)$ in
  $\Dmf^{u}$. 
  Now we would like to obtain a model of $\Omc$ and the original
  $\Dmf$ by hooking for any maximal guarded $G$ in $\Dmf$ the
  interpretation $\Bmf_{\text{bag}(G)}$ to $\Dmf$ rather than to
  $\Dmf^{u}$. However, the resulting model is then not guaranteed to
  be a model of $\Omc$. The following example illustrates this. Let
  $\Omc$ contain
$$
\forall x \exists y (S(x,y) \wedge A(y)),
$$
and for $\varphi(x) =  \exists z (S(x,z) \wedge \neg A(z))$
$$
\forall xy( R(x,y) \rightarrow (\varphi(x)\rightarrow \varphi(y))
$$
Thus in every model of $\Omc$ each node has an $S$-successor in $A$ and having an
$S$-successor that is not in $A$ is propagated along $R$. $\Omc$ is unravelling tolerant.
Consider the instance $\Dmf$ from Example~\ref{ex:ddd} (1)
depicted here again with the maximal guarded sets $G_{1},G_{2},G_{3}$.
\includegraphics{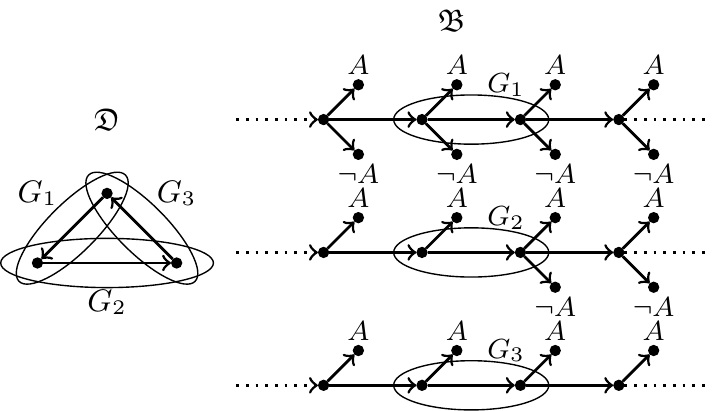}
\newline
We have seen that the unravelling $\Dmf^{u}$ of $\Dmf$ consists of three chains. An example
of a forest model $\Bmf$ of $\Omc$ and $\Dmf^{u}$ is given in the figure.
Even in this simple example a naive way of hooking the models $\Bmf_{G_{i}}$, $i=1,2,3$, to the original
instance $\Dmf$ will lead to an interpretation not satisfying $\Omc$ as the propagation
condition for $S$-successors not in $A$ will not be satisfied.
To ensure that we obtain a model of $\Omc$ we first define a new instance $\Dmf^{u+}\supseteq \Dmf^{u}$
by adding to each maximal guarded set in $\Dmf^{u}$ a copy of any entailed rAQ.
The following facts are needed for this to work:
\begin{enumerate}
\item Automorphisms: for any $t,t'\in T(\Dmf)$ with $\text{tail}(t)=\text{tail}(t')$
there is an automorphism $\hat{h}_{t,t'}$ of $\Dmf^{u}$ mapping
$\text{bag}(t)$ onto $\text{bag}(t')$ and such that
$\hat{h}_{t,t'}(a)^{\uparrow}=a^{\uparrow}$ for all $a\in \text{dom}(\Dmf^{u})$. (This is trivial in the
example above.) It is for this property that we need that $\Dmf^{u}$ is obtained from $\Dmf$ using maximal
guarded sets only and the assumption that $G_{i-1}\not=G_{i+1}$. It follows that if $\text{tail}(t)=\text{tail}(t')$
then the same rAQs are entailed at $\text{bag}(t)$ and $\text{bag}(t')$ in $\Dmf^{u}$.
\item Homomorphism preservation: if there is a homomorphism $h$ from instance $\Dmf$ to instance $\Dmf'$ then
$\Omc,\Dmf\models q(\vec{a})$ entails $\Omc,\Dmf'\models q(h(\vec{a}))$. Ontologies in uGF$(1)$ and uGF$^{-}_{2}(2)$
have this property as they do not use equality nor counting. Because of homomorphism preservation the answers in $\Dmf^{u}$
to rAQs are invariant under moving from $\Dmf^{u}$ to $\Dmf^{u+}$. Note that the remaining ontology languages
in Theorem~\ref{thm:infiniteunravel} do not have this property.
\end{enumerate}
Now using that $\Dmf^{u+}$ is materializable w.r.t.~$\Omc$ one can uniformize a materialization
$\Bmf^{u+}$ of $\Dmf^{u+}$ that is a forest model in such a
way that the automorphisms $\hat{h}_{t,t'}$ for $\text{tail}(t)=\text{tail}(t')$ extend to
automorphisms of the resulting model $\Bmf^{u\ast}$ which also still satisfies $\Omc$. In the example,
after uniformization all chains will behave in the same way in the sense that every node
receives an $S$-successor not in $A$.
We then obtain a forest model $\Bmf^{\ast}$ of $\Dmf$ by hooking the interpretations
$\Bmf^{u\ast}_{\text{bag}(G)}$ to the maximal guarded sets $G$ in $\Dmf$. $(\Bmf^{\ast},\vec{a})$ and
$(\Bmf^{u\ast},\vec{b})$ are guarded bisimilar.
Thus $\Bmf^{\ast}$ is a model of $\Omc$ and $\Bmf^{\ast}\not\models q(\vec{a})$, as required.

For uGF$^{-}(1,=)$ and uGC$^{-}_{2}(1,=)$ the intermediate step of constructing $\Dmf^{u+}$ is not required
as sentences have smaller depth and no uniformization is needed to satisfy the ontology in the new model.
For $\mathcal{ALCHIF}$ ontologies of depth~2 uniformization by constructing $\Dmf^{u+}$ is needed and has to be done
carefully to preserve functionality when adding copies of entailed rAQs to $\Dmf^{u}$.
\end{proof}
We can now prove our main dichotomy result.
\begin{theorem}\label{thm:dichotomy}
  Let $\Omc$ be an ontology formulated in one of uGF$(1)$,
  uGF$^{-}(1,=)$, uGF$^{-}_{2}(2)$, uGC$^{-}_{2}(1,=)$, or an
  $\mathcal{ALCHIF}$ ontology of depth~2. Then the following
  conditions are equivalent (unless $\PTime = \text{\sc NP}$):
\begin{enumerate}
\item $\Omc$ is materializable; 
\item $\Omc$ is materializable for the class of cg-tree decomposable instances $\Dmf$
with $\text{sig}(\Dmf)\subseteq \text{sig}(\Omc)$;
\item $\Omc$ is unravelling tolerant;
\item query evaluation w.r.t.~$\Omc$ is Datalog$^{\not=}$-rewritable \\
  (and Datalog-rewritable if \Omc is formulated in uGF);
\item query evaluation w.r.t.~$\Omc$ is in {\sc PTime}.
\end{enumerate}
Otherwise, query evaluation w.r.t.~$\Omc$ is {\sc coNP}-hard.
\end{theorem}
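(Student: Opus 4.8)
The plan is to establish the cycle of implications among (1)--(5) together with the final dichotomy, leveraging the results already proved. First I would observe that several of the implications are immediate from earlier machinery. Since each of the five listed ontology languages is a fragment of uGF$(=)$ or uGC$_2(=)$, Theorem~\ref{thm:PTime} tells us that \PTime, Datalog$^{\neq}$-rewritability, and \coNP-hardness do not depend on whether we use rAQs, CQs, or UCQs as the query language, so throughout we may work with rAQs. The implication (3)~$\Rightarrow$~(4) is exactly Theorem~\ref{thm:datalog} (including the uGF/Datalog refinement), and (4)~$\Rightarrow$~(5) is trivial since a Datalog$^{\neq}$ program runs in \PTime in data complexity. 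The implication (1)~$\Rightarrow$~(2) is trivial because the class of cg-tree decomposable instances over $\text{sig}(\Omc)$ is a subclass of all instances. The implication (2)~$\Rightarrow$~(3) is the content of Theorem~\ref{thm:infiniteunravel}, which is precisely tailored to these five languages. So the cycle will be closed once we prove (5)~$\Rightarrow$~(1), i.e.\ that non-materializability forces \coNP-hardness, and we also need the final clause stating that failure of (1) yields \coNP-hardness.

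The step (5)~$\Rightarrow$~(1), equivalently the contrapositive ``not materializable $\Rightarrow$ not in \PTime'' (under $\PTime \neq \text{\sc NP}$), is supplied by Theorem~\ref{thm:nomatlower}: each of the five languages consists of sentences invariant under disjoint unions (they are uGF$(=)$ or uGC$_2(=)$ ontologies, hence uGF/uGC sentences, which by construction are invariant under disjoint unions), so Theorem~\ref{thm:nomatlower} applies and non-materializability yields \coNP-hardness of rAQ-evaluation, hence (by Theorem~\ref{thm:PTime}) \coNP-hardness of CQ- and UCQ-evaluation. This simultaneously delivers the final sentence of the theorem: if none of (1)--(5) holds, then in particular (1) fails, and Theorem~\ref{thm:nomatlower} gives \coNP-hardness. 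Conversely, if (1) holds then all of (2)--(5) hold by the chain above, so indeed either all five conditions hold or query evaluation is \coNP-hard, and the ``unless $\PTime = \text{\sc NP}$'' caveat is only needed to separate ``\PTime'' from ``\coNP-hard'' in the statement (if $\PTime = \text{\sc NP}$ then \coNP-hardness would not preclude membership in \PTime).

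To write this cleanly I would present it as: first reduce to rAQs via Theorem~\ref{thm:PTime}; then state the easy implications (1)~$\Rightarrow$~(2), (3)~$\Rightarrow$~(4)~$\Rightarrow$~(5); then invoke Theorem~\ref{thm:infiniteunravel} for (2)~$\Rightarrow$~(3); then verify invariance under disjoint unions for the five fragments so that Theorem~\ref{thm:nomatlower} is applicable, and use it for the contrapositive of (5)~$\Rightarrow$~(1) as well as for the final \coNP-hardness clause. The only genuine subtlety, and what I expect to be the main obstacle, is \emph{already abstracted away} into Theorem~\ref{thm:infiniteunravel}; at the level of this theorem the argument is a bookkeeping assembly of previously established results. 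One minor point to be careful about is that $\mathcal{ALCHIF}$ ontologies of depth~2 must be recognized as uGC$^-_2$ (indeed uGF$^-_2$-with-functions) ontologies so that Theorems~\ref{thm:PTime}, \ref{thm:nomatlower}, and the disjoint-union invariance all apply to them as claimed in the excerpt's discussion of DLs; this is a translation already described in the preliminaries.
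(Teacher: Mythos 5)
Your overall architecture is exactly the paper's: the cycle (1)$\Rightarrow$(2)$\Rightarrow$(3)$\Rightarrow$(4)$\Rightarrow$(5)$\Rightarrow$(1), with (2)$\Rightarrow$(3) delegated to Theorem~\ref{thm:infiniteunravel}, (3)$\Rightarrow$(4) to Theorem~\ref{thm:datalog}, (4)$\Rightarrow$(5) trivial, and (5)$\Rightarrow$(1) plus the final \coNP-hardness clause obtained from Theorem~\ref{thm:nomatlower} after checking invariance under disjoint unions; the reduction to rAQs via Theorem~\ref{thm:PTime} is also how the paper handles the query-language independence.

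There is, however, one genuine gap: your justification of (1)$\Rightarrow$(2). You call it ``trivial because the class of cg-tree decomposable instances over $\text{sig}(\Omc)$ is a subclass of all instances.'' It is not a subclass. By the paper's convention, instances are finite unless otherwise specified, so materializability in condition (1) quantifies only over finite instances, whereas condition (2) — as made explicit in the hypothesis of Theorem~\ref{thm:infiniteunravel} and in the prose opening Section~\ref{sect:dichotomies} — concerns \emph{possibly infinite} cg-tree decomposable instances. (If condition (2) covered only finite instances, your appeal to Theorem~\ref{thm:infiniteunravel} for (2)$\Rightarrow$(3) would not go through, since that theorem needs materializability of infinite unravellings.) So (1)$\Rightarrow$(2) requires passing from materializability for all finite instances to materializability for certain infinite ones; the paper does this ``by a compactness argument'' (essentially: non-materializability of an infinite instance is witnessed, via Theorem~\ref{thm:disjproperty}, by a failure of the disjunction property, which by compactness of FO already occurs over a finite subinstance). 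You need to supply this step rather than dismiss it as a containment of classes. The rest of the proposal is correct and matches the paper.
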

\begin{proof}
  (1) $\Rightarrow$
  (2) is not difficult to establish by a compactness argument.
%
  (2) $\Rightarrow$
  (3) is Theorem~\ref{thm:infiniteunravel}.  (3) $\Rightarrow$
  (4) is Theorem~\ref{thm:datalog}. (4) $\Rightarrow$
  (5) is folklore. (5) $\Rightarrow$
  (1) is Theorem~\ref{thm:nomatlower} (assuming $\PTime \neq \text{\sc NP}$).
\end{proof}
The qualification `with $\text{sig}(\Dmf)
\subseteq
\text{sig}(\Omc)$' in Point~2 of Theorem~\ref{thm:dichotomy} can be
dropped without compromising the correctness of the theorem, and the
same is true for Theorem~\ref{thm:infiniteunravel}. It will be useful,
though, in the decision procedures developed in Section~\ref{sect:decprob}.

\section{CSP-hardness}
We establish the four CSP-hardness results displayed in the middle
part of Figure~\ref{fig:results}, starting with a formal definition of
CSP-hardness. In addition, we derive from the existence of CSPs in \PTime that are not Datalog definable
the existence of ontologies in any of these languages with \PTime query evaluation that are not Datalog$^{\not=}$ rewritability.

Let $\Amf$ be an instance. The \emph{constraint satisfaction problem}
CSP$(\Amf)$ is to decide, given an instance $\Dmf$, whether there is a
homomorphism from $\Dmf$ to $\Amf$, which we denote with $\Dmf
\rightarrow \Amf$. In this context, $\Amf$ is called the
\emph{template} of CSP$(\Amf)$.  We will generally and w.l.o.g.\
assume that relations in $\text{sig}(\Amf)$ have arity at most two and
that the template $\Amf$ \emph{admits precoloring},
that is, for each $a \in \mn{dom}(\Amf)$, there is a unary relation
symbol $P_{a}$ such that $P_{a}(b)\in \Amf$ iff $b=a$
\cite{ComplexityConstraintLanguages}. It is known that for every
template $\Amf$, there is a template $\Amf'$ of this form such that
$\text{CSP}(\Amf)$ is polynomially equivalent to
$\text{CSP}(\Amf')$~\cite{DBLP:journals/tcs/LaroseT09}. We use
coCSP$(\Amf)$ to denote the complement of CSP$(\Amf)$.
\begin{definition}\label{def:CSPhard}
Let $\Lmc$ be an ontology language and $\Qmc$ a class of queries. Then $\Qmc$-evaluation
w.r.t.~$\Lmc$ is \emph{CSP-hard} if for every template $\Amf$, there exists an $\Lmc$ ontology $\Omc$ such that
\begin{enumerate}
\item there is a $q\in \Qmc$ such that coCSP$(\Amf)$ polynomially
  reduces to evaluating the OMQ $(\Omc,q)$ and
\item for every $q\in \Qmc$, evaluating the OMQ $(\Omc,q)$ is polynomially reducible
to coCSP$(\Amf)$.
\end{enumerate}
\end{definition}
It can be verified that a dichotomy between {\sc PTime} and {\sc coNP}
for \Qmc-evaluation w.r.t.~\Lmc ontologies implies a dichotomy between
{\sc PTime} and {\sc NP} for CSPs, a notorious open problem known as
the Feder-Vardi conjecture
\cite{FederVardi,DBLP:journals/siglog/Barto14}, when \Qmc-evaluation
w.r.t.\ \Lmc is CSP-hard. As noted in the introduction, a tentative
proof of the conjecture has recently been announced, but at the time
this article is published, its status still remains unclear.

\smallskip

The following theorem summarizes our results on CSP-hardness. We
formulate it for CQs, but remark that due to Theorem~\ref{thm:PTime}, a
dichotomy between {\sc PTime} and {\sc coNP} for any of the
mentioned ontology languages and any query language from the set
$\{ \text{rAQ}, \text{CQ}, \text{UCQ} \}$ implies the Feder-Vardi conjecture.
\begin{theorem}\label{thm:csphard}
  For any of the following ontology languages, CQ-evaluation w.r.t.\
  \Lmc is CSP-hard: uGF$_{2}(1,=)$, uGF$_{2}(2)$, uGF$_2(1,f)$, and
  the class of $\mathcal{ALCF}_{\ell}$ ontologies of depth~2.
\end{theorem}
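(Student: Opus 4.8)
The plan is to provide, for each of the four listed ontology languages $\Lmc$ and each precoloring-admitting template $\Amf$, a concrete $\Lmc$-ontology $\Omc_\Amf$ (and a Boolean CQ $q$) realizing both directions of Definition~\ref{def:CSPhard}. The backbone is a standard ``canonical CSP ontology'' idea: encode that a model must, on every element of the input instance $\Dmf$, pick one of the $\mn{dom}(\Amf)$-colors and respect all the edge relations of $\Amf$; then non-existence of a homomorphism $\Dmf\to\Amf$ forces a clash that the query $q$ detects. For the first direction (coCSP$(\Amf)$ reduces to $(\Omc_\Amf,q)$) one adds to $\Omc_\Amf$ a disjunction ``at each element, hold one color-marker $C_a$'' together with inclusions saying that an $R$-edge between elements colored $a$ and $b$ with $(a,b)\notin R^\Amf$ entails a goal predicate $\mn{Goal}$, and $q$ asks whether $\mn{Goal}$ holds somewhere; then $\Omc_\Amf,\Dmf\models q$ iff $\Dmf\not\to\Amf$. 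For the second direction one builds, from any $\Dmf$, the ``product-style'' template that is essentially $\Amf$ itself (plus the precoloring constants and the auxiliary symbols of $\Omc_\Amf$, which one folds into a single enlarged template $\Amf^+$), and argues $\Omc_\Amf,\Dmf\models q(\vec a)$ iff $\Dmf^{(\cdot)}\not\to\Amf^+$, using that all the sentences of $\Omc_\Amf$ are universal-guarded so that a forest/product model chasing against $\Amf^+$ exists whenever a homomorphism does. The genuine content is getting the encoding of the disjunctive color choice and of the forbidden-edge constraints into the \emph{syntactically restricted} fragments in the statement.

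Concretely I would handle the fragments one at a time. For $\mathrm{uGF}_2(1,=)$ the color choice $\bigvee_{a}C_a(x)$ is a depth-$0$ formula guarded by $x=x$ (allowed as outermost guard), ``exactly one color'' uses equality in a non-guard position (also allowed), and a forbidden-edge constraint is $\forall xy\,(R(x,y)\to \neg(C_a(x)\wedge C_b(x_{\text{\tiny err}}))\dots)$—I will phrase it as $\forall xy\,(R(x,y)\to (C_a(x)\to(C_b(y)\to\mathrm{Goal}(x))))$, which is depth $1$ and uses only $x,y$; the query is $q\leftarrow \mathrm{Goal}(z)$ after relativizing, i.e.\ $q$ is Boolean $\exists z\,\mathrm{Goal}(z)$ encoded as a one-atom CQ. For $\mathrm{uGF}_2(2)$ we do not have equality, so ``pick exactly one color'' is replaced by ``pick at least one color'' plus using the extra quantifier depth (depth $2$) to simulate a functional successor into a gadget that picks the color, mirroring the well-known trick that uGF of depth $2$ can ``see one step further'' than depth $1$; the forbidden-edge inclusions are then depth-$2$ uGF$_2$ sentences. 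For $\mathrm{uGF}_2(1,f)$ the functionality axiom replaces equality: declare a fresh functional role $f$, use $f$ to attach to each element a unique ``color witness'' and read the color off it, so that functionality enforces a well-defined single color; all remaining sentences stay at depth $1$. For $\mathcal{ALCF}_\ell$ of depth~$2$ we mimic the $(1,f)$ construction inside the DL: $(\le 1\,f)$ plays the role of the functionality axiom, the color witnesses are existential successors, and reading off a color and propagating the forbidden-edge clash costs two levels of $\exists R$/$\forall R$, hence depth $2$; crucially $\mathcal{ALCF}_\ell$ is still within uGC$^-_2$, and the encoding uses only number restrictions of the form $(\le 1\,R)$ as required.

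Throughout, the verification of the two reductions follows the same template: the forward direction is a direct model construction (precolor the constants of $\Amf$ inside $\Dmf$, push colors along a homomorphism); the backward direction uses Lemma~\ref{lem:forestmodel} to pass to a forest model of $\Dmf$ and $\Omc_\Amf$, then projects that model onto $\Amf^+$ to extract the required homomorphism, noting that the $\cdot^-$-free outermost quantifiers we exploit never obstruct this because the relevant facts are all guarded. The main obstacle I anticipate is the \emph{color-choice gadget in the equality-free, depth-bounded fragments}, i.e.\ $\mathrm{uGF}_2(2)$ and $\mathcal{ALCF}_\ell(2)$: without equality one cannot say ``exactly one color'' directly, so one must either tolerate multiple colors (and then argue that, for a precoloring-admitting template, allowing several colors on an element does not change solvability of the CSP—this is exactly why we reduced to templates admitting precoloring, cf.\ \cite{DBLP:journals/tcs/LaroseT09}) or spend the extra quantifier depth on a functional gadget; showing that the resulting OMQ is still \emph{polynomially} inter-reducible with coCSP$(\Amf)$ in both directions, and that no spurious models sneak in, is the delicate part. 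The remaining combinatorics (finiteness of the auxiliary signature, polynomiality of the two reductions, correctness of the forest-model projection) is routine given the machinery already set up in Sections~2 and~3.
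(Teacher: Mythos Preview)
Your proposal has a genuine gap that breaks Condition~2 of Definition~\ref{def:CSPhard}. You encode colours by \emph{visible} unary predicates $C_a$ and detect forbidden edges via a \emph{visible} predicate $\mathrm{Goal}$. But then the disjunction $\forall x\,\bigvee_a C_a(x)$ is detectable by rAQs: on the one-element instance $\Dmf=\{E(d,d)\}$ with $E$ fresh, one has $\Omc_\Amf,\Dmf\models \bigvee_a C_a(d)$ while $\Omc_\Amf,\Dmf\not\models C_a(d)$ for any single~$a$. Hence $\Omc_\Amf$ fails the rAQ-disjunction property, is not materializable, and by Theorem~\ref{thm:nomatlower} rAQ-evaluation w.r.t.\ $\Omc_\Amf$ is \coNP-hard \emph{for every template $\Amf$}. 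For templates with $\text{CSP}(\Amf)\in\PTime$ this makes Condition~2 impossible (unless $\PTime=\text{\sc NP}$). Your ``product-style template $\Amf^+$'' cannot repair this: the hard query witnessing \coNP-hardness comes from Theorem~\ref{thm:nomatlower} and has nothing to do with $\Amf$. The obstacle you anticipated (``exactly one'' vs.\ ``at least one'' colour) is not the real issue; the real issue is that the colour choice must be \emph{invisible to positive existential queries}.

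The paper solves this by encoding the colour of $x$ not as $C_a(x)$ but as a formula that CQs cannot express. For uGF$_2(1,=)$ it uses $\varphi_a^{\neq}(x)=\exists y\,(R_a(x,y)\wedge x\neq y)$ as the colour marker, and additionally forces $\forall x\,\exists y\,(R_a(x,y)\wedge x=y)$ for every $a$. Since CQs cannot use inequality, the reflexive $R_a$-loop that is present everywhere means a CQ can never distinguish which colour was chosen. Forbidden colourings then lead to \emph{inconsistency} (not to a Goal atom), and the query for Condition~1 is $q\leftarrow N(x)$ with $N$ fresh, so that $\Omc,\Dmf'\models q$ iff $\Dmf'$ is inconsistent w.r.t.~$\Omc$. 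Condition~2 is obtained because $\Omc,\Dmf\models q(\vec d)$ iff either $\Dmf$ is inconsistent w.r.t.~$\Omc$ (which reduces to $\text{coCSP}(\Amf)$) or the instance obtained from $\Dmf$ by adding all loops $R_a(d,d)$ already satisfies $q(\vec d)$ (checkable in \PTime). The same invisibility trick is reused with different markers in the other fragments: $\exists y\,(R_a(x,y)\wedge\neg F(x,y))$ versus $\exists y\,(R_a(x,y)\wedge F(x,y))$ for uGF$_2(1,f)$ with $F$ functional and $\forall x\,F(x,x)$; and $\exists^{\geq 2}y\,R_a(x,y)$ versus $\exists y\,R_a(x,y)$ for $\mathcal{ALCF}_\ell$ of depth~2. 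The case uGF$_2(2)$ is inherited from the known result for $\mathcal{ALC}$ ontologies of depth~3.
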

\begin{proof}
  We sketch the proof for uGF$_2(1,=)$ and then indicate the
  modifications needed for uGF$_2(1,f)$ and $\mathcal{ALCF}_{\ell}$
  ontologies of depth~2. For uGF$_{2}(2)$, the result follows from a
  corresponding result in \cite{KR12-csp} for $\mathcal{ALC}$
  ontologies of depth~3.

  Let $\Amf$ be a template and assume w.l.o.g.\ that \Amf admits
  precoloring. Let $R_a$ be a binary relation for each $a\in
  \text{dom}(\Amf)$, and set
\begin{eqnarray*}
\varphi_{a}^{\not=}(x) & = & \exists y(R_{a}(x,y) \wedge \neg (x=y))\\
\varphi_{a}^{=}(x) & = & \exists y(R_{a}(x,y) \wedge (x=y))
\end{eqnarray*}
Then $\Omc$ contains
$$
\begin{array}{ll}
\multicolumn{2}{l}{\displaystyle\forall x (\bigwedge_{a\not=a'}\neg(\varphi_{a}^{\not=}(x)\wedge \varphi_{a'}^{\not=}(x))
\wedge \bigvee_{a}\varphi_{a}^{\not=}(x))}\\
\forall x (A(x) \rightarrow \neg \varphi_{a}^{\not=}(x)) & \text{when $A(a)\not\in \Amf$}\\
\forall xy (R(x,y) \rightarrow \neg (\varphi_{a}^{\not=}(x) \wedge
\varphi_{a'}^{\not=}(y))) & \text{when $R(a,a')\not\in \Amf$}\\[0.6mm]
\forall x \varphi_{a}^{=}(x) & \text{for all $a\in \text{dom}(\Amf)$}
\end{array}
$$
where $A$ and $R$ range over symbols in $\text{sig}(\Amf)$ of the
respective arity. A formula $\varphi_{a}^{\not=}(x)$ being true at a
constant $c$ in an instance $\Dmf$ means that $c$ is mapped to $a\in
\text{dom}(\Amf)$ by a homomorphism from \Dmf to \Amf. The first
sentence in \Omc thus ensures that every node in $\Dmf$ is mapped to
exactly one node in $\Amf$ and the second and third set of sentences
ensure that we indeed obtain a homomorphism. The last set of sentences
enforces that $\varphi_{a}^{=}(x)$ is true at every constant $c$. This
makes the disjunction in the first sentence `invisible' to the query
(in which inequality is not available), thus avoiding that \Omc is
{\sc coNP}-hard for trivial reasons.
%
%
In the long version, we show that \Omc satisfies Conditions~1 and~2 from
Definition~\ref{def:CSPhard} where there query $q$ used in Condition~1
is $q\leftarrow N(x)$ with $N$ a fresh unary relation.

For uGF$_2(1,f)$, state that a binary relation $F$ is a function and
that $\forall x F(x,x)$.  Now replace in $\Omc$ the formulas
$\varphi_{a}^{\not=}(x)$ by $\exists y (R_{a}(x,y) \wedge \neg
F(x,y))$ and $\varphi_{a}^{=}(x)$ by $\exists y (R_{a}(x,y) \wedge
F(x,y))$.

For $\mathcal{ALCF}_{\ell}$ of depth~2, replace in $\Omc$ the formulas
$\varphi_{a}^{\not=}(x)$ by $\exists^{\geq 2}y R_{a}(x,y)$ and $\varphi_{a}^{=}(x)$ by $\exists y R_{a}(x,y)$. The resulting
ontology is equivalent to a $\mathcal{ALCF}_{\ell}$ ontology of depth~2.
\end{proof}
It is known that for some templates $\Amf$, CSP$(\Amf)$ is in {\sc
  PTime} while coCSP$(\Amf)$ is not
Datalog$^{\not=}$-definable~\cite{FederVardi}.  Then CQ-evaluation
w.r.t.~the ontologies $\Omc$ constructed from \Amf in the proof of
Theorem~\ref{thm:PTime} is in {\sc PTime}, but not
Datalog$^{\not=}$-rewritable.
\begin{theorem}\label{thm:datalognot}
In any of the following ontology languages $\Lmc$ there exist ontologies
with \PTime CQ-evaluation which are not Datalog$^{\not=}$-rewritable:
uGF$_{2}(1,=)$, uGF$_{2}(2)$, uGF$_2(1,f)$, and the class of $\mathcal{ALCF}_{\ell}$ ontologies of depth~2.
\end{theorem}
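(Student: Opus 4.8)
The plan is to derive Theorem~\ref{thm:datalognot} directly from the CSP-hardness construction already given in the proof of Theorem~\ref{thm:csphard} (which the excerpt actually mislabels as the proof of Theorem~\ref{thm:PTime} in the paragraph preceding the statement). Concretely: by Feder and Vardi \cite{FederVardi}, there is a template $\Amf_0$ such that $\mathrm{CSP}(\Amf_0)$ is in \PTime but $\mathrm{coCSP}(\Amf_0)$ is not definable in Datalog (and, since $\Amf_0$ may be taken to admit precoloring, not definable in Datalog$^{\neq}$ either). Fix such an $\Amf_0$, and let $\Omc$ be the ontology built from $\Amf_0$ in the proof of Theorem~\ref{thm:csphard}, for whichever of the four languages uGF$_2(1,=)$, uGF$_2(2)$, uGF$_2(1,f)$, $\mathcal{ALCF}_\ell(2)$ we are treating; in each case the construction yields an $\Lmc$-ontology.

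First I would invoke Condition~2 of Definition~\ref{def:CSPhard}: for \emph{every} $q$, evaluating $(\Omc,q)$ reduces in polynomial time to $\mathrm{coCSP}(\Amf_0)$, which is in \PTime; hence CQ-evaluation (indeed UCQ-evaluation) w.r.t.~$\Omc$ is in \PTime. Next I would use Condition~1: there is a CQ $q_0$ (namely $q_0 \leftarrow N(x)$ with $N$ the fresh unary symbol) such that $\mathrm{coCSP}(\Amf_0)$ polynomially reduces to evaluating $(\Omc,q_0)$. Now suppose, for contradiction, that $(\Omc,q_0)$ were Datalog$^{\neq}$-rewritable, via a program $\Pi$. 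Composing the reduction from $\mathrm{coCSP}(\Amf_0)$ to $(\Omc,q_0)$ with $\Pi$ would give a Datalog$^{\neq}$-definition of $\mathrm{coCSP}(\Amf_0)$ — but one must check that the reduction is of a shape that preserves Datalog$^{\neq}$-definability under composition (i.e.\ it is essentially a first-order, or at least Datalog-computable, instance transformation). Inspecting the reduction used in the proof of Theorem~\ref{thm:csphard} (it just renames/pads the relations of the input instance, adding precoloring atoms and the auxiliary $R_a$-facts), this is indeed the case, so we obtain a contradiction with the choice of $\Amf_0$. Hence $(\Omc,q_0)$, and therefore $\Omc$, is not Datalog$^{\neq}$-rewritable, while CQ-evaluation w.r.t.~$\Omc$ is in \PTime.

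The same argument handles all four languages uniformly, since Theorem~\ref{thm:csphard} provides the required $\Omc$ and the same distinguished query $q_0$ in each. For uGF$_2(1,f)$ and $\mathcal{ALCF}_\ell(2)$ one uses the modified formulas $\varphi_a^{\neq},\varphi_a^{=}$ described there; nothing else in the argument changes.

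I expect the main obstacle to be the careful verification that the polynomial reduction witnessing Condition~1 of Definition~\ref{def:CSPhard} actually transports non-Datalog$^{\neq}$-definability — i.e.\ that it is not merely a polynomial-time reduction but a sufficiently ``simple'' (first-order / quantifier-free interpretation) one, so that the negative Datalog$^{\neq}$ result for $\mathrm{coCSP}(\Amf_0)$ genuinely propagates to $(\Omc,q_0)$. This hinges on the explicit, syntactically trivial nature of the reduction in the proof of Theorem~\ref{thm:csphard}; the bulk of the work is thus already done there, and here it only needs to be recorded that the reduction has the right form. Everything else is routine, so the proof in the paper is presumably the two-sentence remark that precedes the statement.
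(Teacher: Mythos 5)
Your proof follows the paper's argument exactly: the paper's entire ``proof'' of Theorem~\ref{thm:datalognot} is the two-sentence remark that precedes it, which invokes a template $\Amf$ with $\text{CSP}(\Amf)$ in \PTime{} but $\text{coCSP}(\Amf)$ not Datalog$^{\neq}$-definable, and then observes that the ontology $\Omc$ built from $\Amf$ in the proof of Theorem~\ref{thm:csphard} inherits both properties (you also correctly spot that the reference to Theorem~\ref{thm:PTime} in that remark is a typo for Theorem~\ref{thm:csphard}). You are right to flag that the nontrivial step the paper leaves tacit is that the reduction $\Dmf \mapsto \Dmf'$ (which introduces fresh pendant nulls $d'$) transports non-Datalog$^{\neq}$-definability; this does hold, by a standard tagging construction in which each pendant $d'$ is simulated by the pair $(d,a)$ with $a$ ranging over the fixed finite $\dom(\Amf)$, so IDBs are replicated finitely and inequalities translate accordingly.
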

The ontology languages in
Theorem~\ref{thm:datalog} thus behave provably different from the
languages for which we proved a dichotomy in
Section~\ref{sect:dichotomies}, since there {\sc PTime} query evaluation and
Datalog$^{\not=}$-rewritability coincide.

\section{Non-Dichotomy and Undecidability}
\label{sect:nodicho}
We show that ontology languages that admit sentences of depth 2 as
well as functions symbols tend to be computationally problematic as
they do neither enjoy a dichotomy between {\sc PTime} and {\sc coNP}
nor decidability of meta problems such as whether query evaluation
w.r.t.\ a given ontology \Omc is in \PTime,
Datalog$^{\not=}$-rewritable, or {\sc coNP}-hard, and whether \Omc is
materializable. We actually start with these undecidability results.
%
\begin{theorem}
\label{thm:undecidability}
For the ontology languages uGF$^{-}_{2}(2,f)$ and
$\mathcal{ALCIF}_{\ell}$ of depth 2, it is undecidable whether for a
given ontology~\Omc,
\begin{enumerate}

\item query evaluation w.r.t.\ \Omc is in \PTime,
  Datalog$^{\not=}$-rewritable, or {\sc coNP}-hard
  (unless $\text{\sc PTime}=\text{\sc NP}$);

\item \Omc is materializable.

\end{enumerate}
%
 \end{theorem}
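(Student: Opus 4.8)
The plan is to establish undecidability by reduction from an undecidable problem about tilings or, more in the spirit of guarded fragments with functionality, from the undecidability of some problem that becomes expressible once we have depth~2 and partial functions. The natural candidate is to reduce from the (un)solvability of a problem such as the halting problem for two-counter machines, or more directly from the known undecidability of satisfiability-related problems for modal/description logics with deterministic transitions along a grid. Concretely, I would build, from a two-counter machine $M$, an ontology $\Omc_M$ in uGF$^{-}_{2}(2,f)$ (and correspondingly in $\mathcal{ALCIF}_{\ell}$ of depth~2) such that $\Omc_M$ is materializable (equivalently, has \PTime query evaluation, equivalently is Datalog$^{\not=}$-rewritable) if and only if $M$ does \emph{not} halt on the empty input. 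Since the three properties in Point~1 coincide for these fragments under $\text{\sc PTime}\neq\text{\sc NP}$ by the machinery of Theorems~\ref{thm:nomatlower} and~\ref{thm:PTime}, and since non-materializability is what Theorem~\ref{thm:nomatlower} ties to \coNP-hardness, it suffices to engineer the single equivalence ``$\Omc_M$ materializable $\iff$ $M$ does not halt.''

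The key steps, in order, are as follows. First, I would fix a template-independent ``disjunction gadget'' of the kind already used in the proof of Theorem~\ref{thm:nomatlower} and in Example~\ref{ex:examples33}: a piece of the ontology that, on certain instances, forces a disjunction (``colour a node red or blue'') that a positive existential query cannot see, so that the instance has a materialization \emph{only if} the disjunction can be resolved consistently. Second, I would use the functional relations $F$ (and inverse functionality via $\mathcal{I}$ / $\mathcal{F}_\ell$) together with depth~2 to lay down, from any single constant in an instance, an infinite deterministic $\mathbb{N}\times\mathbb{N}$ grid along two functional successor relations, with the usual confluence axioms forcing the grid to commute; this is where depth~2 and functionality are both essential, and is the standard way such languages encode unbounded computation. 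Third, on this grid I would encode a run of $M$: cell contents code tape/counter symbols and head position, and local transition constraints are depth-$\le 2$ guarded sentences. Fourth — the crucial twist — I would arrange that a ``bad'' configuration (the machine entering its halting state) triggers the disjunction gadget at a place where it \emph{cannot} be consistently resolved, making the relevant instance inconsistent-looking to queries in a way that destroys materializability; whereas if $M$ never halts, the grid-plus-run interpretation is itself a materialization for every consistent instance. Fifth, I would separately verify the easy direction that consistency (and hence the shape of minimal models) is well-behaved enough that ``$M$ does not halt'' really yields a materialization for \emph{all} instances, not just the grid-seeded ones — this typically needs a case analysis on which relation symbols the input instance uses, exactly as in Lemma~\ref{lem:ucqcq}.

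For the $\mathcal{ALCIF}_{\ell}$ version I would redo the same construction purely with concept inclusions: functional successors become roles declared with $(\leq 1\,R)$ and inverse roles, the grid confluence becomes value-restriction axioms of depth~2, and the disjunction gadget becomes a $C\sqcup D$ on the right-hand side of an inclusion. Point~2 of the theorem then comes essentially for free: the reduction was built so that materializability is \emph{the} property characterised, so the same $\Omc_M$ witnesses undecidability of materializability without the $\text{\sc PTime}\neq\text{\sc NP}$ caveat.

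The main obstacle I anticipate is the fourth step: making the failure of materializability coincide \emph{exactly} with halting, and not leak for spurious reasons. The subtlety is that materializability quantifies over \emph{all} instances, including ill-formed ones that do not seed a proper grid or that inject extra facts over fresh symbols (Theorem~\ref{thm:nomatlower} explicitly exploits fresh symbols). So the ontology must be written so that (i) on any instance that fails to encode a legitimate grid-and-run, a materialization still exists (the ontology ``gives up'' gracefully), and (ii) on instances that do encode a partial run, a materialization exists iff the run can be continued forever, i.e.\ iff $M$ does not halt. Threading this needle — keeping the forced disjunction genuinely invisible to positive existential queries while still letting it block materialization precisely at a halting configuration — is where the real work lies, and where the depth-2 budget is tight. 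I expect this to require a careful ``relativisation'' of all grid and transition axioms to a unary marker predicate that is only propagated when the instance looks well-formed, together with an argument that on non-well-formed instances one can always complete to a model that is hom-universal.
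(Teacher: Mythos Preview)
Your outline has the right overall shape (reduce an undecidable problem so that ``positive instance $\Rightarrow$ non-materializable'' and ``negative instance $\Rightarrow$ well-behaved''), but two of your key steps do not go through, and a third is a genuine misconception.

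\textbf{You cannot make the ontology generate a grid.} Both uGF$^{-}_{2}(2,f)$ and $\mathcal{ALCIF}_{\ell}$ of depth~2 have (guarded) tree-model properties: over a single seed constant, the chase/canonical model is tree-shaped, and no amount of declaring $X,Y$ functional will force the $XY$-successor and the $YX$-successor of a labelled null to coincide. Your ``usual confluence axioms'' simply do not exist at this depth and in this fragment. The paper's construction is architecturally the opposite of yours: the grid lives entirely in the \emph{input instance} (where constants have identities and functionality of $X,Y,X^{-},Y^{-}$ makes cell-closure a fact about the data), and the ontology merely \emph{verifies} that a grid is present and correctly tiled by propagating a marker from the upper-right to the lower-left corner. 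The single genuinely hard sub-problem is to detect, inside the ontology, that a cell of the instance closes (i.e.\ that the $XY$- and $YX$-successor of an instance constant coincide); the paper does this with a second-order-flavoured trick using marker concepts $(=1\,R_{1}),(=1\,R_{2})$ as Boolean variables.

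\textbf{Materializability alone does not settle Point~1.} You write that the three properties in Point~1 ``coincide for these fragments'' by Theorems~\ref{thm:nomatlower} and~\ref{thm:PTime}. They do not: these are exactly the fragments for which the paper proves there is \emph{no} dichotomy, so materializable ontologies can sit strictly between \PTime{} and \coNP. Theorem~\ref{thm:nomatlower} gives you only one implication (non-materializable $\Rightarrow$ \coNP-hard). For the other side you must show, as the paper does, that when the undecidable problem is negative the constructed ontology is actually Datalog$^{\neq}$-rewritable, not merely materializable.

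\textbf{The ``presetting'' obstacle you flag is real, but your proposed fix fails.} Relativising the axioms to a unary marker that is ``only propagated when the instance looks well-formed'' does not help, because the instance can simply assert that marker. The paper's solution is to use markers that the instance \emph{cannot set to true}: concepts of the form $(=1\,P)$ together with $\top\sqsubseteq\exists P.\top$ in $\mathcal{ALCIF}_{\ell}$ (resp.\ $\neg\exists y(P(x,y)\wedge\neg F(x,y))$ with $F$ a reflexive function in uGF$^{-}_{2}(2,f)$). An instance can falsify such a marker (add two $P$-successors) but never force it true, which is precisely what makes the ``no tiling $\Rightarrow$ Datalog$^{\neq}$-rewritable'' direction provable. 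This asymmetry is the technical heart of the proof and is absent from your plan.
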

\begin{proof}
  The proof is by reduction of the undecidable finite rectangle tiling
  problem. To establish both Points~1 and~2, it suffices to exhibit,
  for any such tiling problem \Pmf, an ontology $\Omc_\Pmf$ such that
  if \Pmf admits a tiling, then $\Omc_\Pmf$ is not materializable and
  thus query evaluation w.r.t.\ $\Omc_\Pmf$ is {\sc coNP}-hard and
  if \Pmf admits no tiling, then query evaluation w.r.t.\ $\Omc_\Pmf$
  is Datalog$^{\neq}$-rewritable and thus materializable (unless
  $\text{\sc PTime}=\text{\sc NP}$).

  The rectangle to be tiled is represented in input instances using
  the binary relations $X$ and $Y$, and $\Omc_\Pmf$ declares these
  relations and their inverses to be functional. The main idea in the
  construction of $\Omc_\Pmf$ is to verify the existence of a properly
  tiled grid in the input instance by propagating markers from the top
  right corner to the lower left corner. During the propagation, one
  makes sure that grid cells close (that is, the XY-successor
  coincides with the YX-successor) and that there is a tiling that
  satisfies the constraints in \Pmf. Once the existence of a properly
  tiled grid is completed, a disjunction is derived by $\Omc_\Pmf$ to
  achieve non-materializability and {\sc coNP}-hardness. The challenge
  is to implement this construction such that when \Pmf has no
  solution (and thus the verification of a properly tiled grid can
  never complete), $\Omc_\Pmf$ is Datalog$^{\neq}$-rewritable. In
  fact, achieving this involves a lot of technical subtleties.

  A central issue is how to implement the markers (as formulas with
  one free variable) that are propagated through the grid during the
  verification.  The markers must be designed in a way so that they
  cannot be `preset' in the input instance as this would make it
  possible to prevent the verification of a (possibly defective) part
  of the input. In $\mathcal{ALCIF}_{\ell}$, we use formulas of the form
  $\exists^{=1} y P(x,y)$ while additionally stating in $\Omc_\Pmf$
  that $\forall x \exists y P(x,y)$. Thus, the choice is only between
  whether a constant has exactly one $P$-successor (which means that
  the marker is set) or more than one $P$-successor (which means that
  the marker is not set). Clearly, this difference is invisible to
  queries and we cannot preset a marker in an input instance in the
  sense that we make it true at some constant. We can, however, easily
  make the marker false at a constant $c$ by adding two $P$-successors
  to $c$ in the input instance. It seems that this effect, which gives
  rise to many technical complications, can only be avoided by
  using marker formulas with higher quantifier depth which would
  result in $\Omc_\Pmf$ not falling within $\mathcal{ALCIF}_{\ell}$ depth~2.
	For uGF$_{2}^{-}(2,f)$ we work with $\neg \exists y (P(x,y) \wedge \neg
  F(x,y))$, where $F$ is a function for which we state $\forall x F(x,x)$ (as in the CSP encoding).

  Full proof details can be found in the long version. We only mention
  that closing of a grid cell is verified by using marker formulas as
  second-order variables.
\end{proof}
\begin{theorem}
\label{thm:nondichotomy}
For the ontology languages uGF$^{-}_{2}(2,f)$ and
$\mathcal{ALCIF}_{\ell}$ of depth 2, there is no dichotomy between
\PTime and {\sc coNP} (unless $\text{\sc PTime}=\text{\sc coNP}$).
\end{theorem}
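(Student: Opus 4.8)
The plan is to derive the non-dichotomy from the run-fitting variant of Ladner's theorem announced in the introduction (and proved in the long version): that variant produces a nondeterministic polynomial-time Turing machine $M$ for which the run-fitting problem $\RunFit(M)$ --- given a partially described run of $M$, decide whether it extends to a full accepting run --- lies in \NPTime but, unless $\PTime = \text{\sc NP}$ (equivalently $\PTime = \text{\sc coNP}$), is neither in \PTime nor \NPTime-complete. Fixing such an $M$, I would build a \emph{single} ontology $\Omc_M$ in uGF$^{-}_{2}(2,f)$ together with the CQ $q_0 \leftarrow N(x)$, $N$ a fresh unary relation, so that an ``\NPTime-intermediate variant'' of Definition~\ref{def:CSPhard} holds: (1) $\co\RunFit(M)$ polynomially reduces to evaluating $(\Omc_M, q_0)$, and (2) for every UCQ $q'$, evaluating $(\Omc_M, q')$ polynomially reduces to $\co\RunFit(M)$. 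By Theorem~\ref{thm:PTime} it is immaterial here whether complexity is measured with CQs or UCQs.

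For the encoding I would reuse the gadgets already developed for Theorems~\ref{thm:undecidability} and~\ref{thm:csphard}, the difference being that the parameter is now a fixed machine rather than a varying tiling problem. An input instance $\Dmf$ represents a candidate partial run of $M$: the functional binary relations used to lay out a tape-versus-time grid in the tiling reduction behind Theorem~\ref{thm:undecidability} lay out the run, unary relations record exactly the cell contents and head positions that the partial description fixes, leaving the rest open, and a unary relation marks the distinguished constant $a$ at which acceptance is to be read off. The ontology $\Omc_M$ forces, at $a$, the disjunction ``$N(a)$, or the verification marker is set at $a$''; when set, the verification marker is pushed through the grid and at each cell and step demands that the guessed symbols and transitions form a legal run of $M$ agreeing with the recorded description and that an accepting state is eventually reached, collapsing to a contradiction as soon as any demand fails. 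Hence $\Omc_M, \Dmf \models N(a)$ exactly when $\Dmf$ has no accepting completion, i.e.\ $\Dmf \in \co\RunFit(M)$ --- this is~(1). Two features make this work without rendering $\Omc_M$ \coNP-hard for spurious reasons, and give~(2). First, the verification marker and all ``guessed'' symbols are realised by the unpresetable gadget $\neg\exists y\,(P(x,y) \wedge \neg F(x,y))$ with $F$ declared functional and $\forall x\,F(x,x)$ asserted (the device from the CSP encoding and from Theorem~\ref{thm:undecidability}), so that in every model a constant either carries or lacks each such marker, and while facts added to $\Dmf$ can switch a marker off they can never switch one on; thus the input cannot short-circuit the verification. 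Second, the disjunction is made invisible to positive existential queries by pairing it with an always-satisfied companion disjunct, just as $\varphi_a^{=}$ is used in the proof of Theorem~\ref{thm:csphard}; a case analysis on the possible homomorphic images of a UCQ $q'$ in the canonical models of $\Omc_M$ then shows that $\Omc_M, \Dmf \models q'(\vec a)$ is decided either by a match over the always-present ``deterministic core'' of $\Dmf$ (a \PTime test) or, otherwise, precisely by the non-existence of an accepting completion --- which is~(2).

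Granting (1) and (2), the theorem is immediate. If query evaluation w.r.t.\ $\Omc_M$ were in \PTime for every UCQ, then $(\Omc_M, q_0)$ would be in \PTime, whence by~(1) $\co\RunFit(M)$ and so $\RunFit(M)$ would be in \PTime; and if some UCQ $q'$ made query evaluation w.r.t.\ $\Omc_M$ \coNP-hard, then composing that hardness reduction with the reduction of~(2) would make $\co\RunFit(M)$ \coNP-hard, hence (being in \coNP) \coNP-complete, so $\RunFit(M)$ would be \NPTime-complete --- and both alternatives contradict the intermediateness of $\RunFit(M)$ under $\PTime \neq \text{\sc coNP}$. Thus $\Omc_M$ is an ontology for which query evaluation is neither in \PTime nor \coNP-hard, which is exactly the failure of a dichotomy. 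The $\mathcal{ALCIF}_{\ell}$-of-depth-2 case uses the same construction with the unpresetable gadget replaced by $\exists^{=1}y\,P(x,y)$ together with $\forall x\,\exists y\,P(x,y)$, as in the proof of Theorem~\ref{thm:undecidability}. The main obstacle is the double demand placed on $\Omc_M$ by (1) and (2): within the tight budget of depth~2 plus functionality it must be expressive enough to verify completability of a partial run, yet so transparent that no UCQ can extract anything beyond that single bit --- the same delicate invisibility engineering as in Theorems~\ref{thm:undecidability} and~\ref{thm:csphard}, now additionally constrained to fit the partial-run/precoloring shape required for the Ladner variant to apply.
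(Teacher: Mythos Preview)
Your high-level strategy matches the paper's: invoke the run-fitting Ladner variant (Theorem~\ref{intermediate}) to obtain a machine $M$ with $\RunFit(M)$ of intermediate complexity, build an ontology $\Omc_M$ satisfying the two reductions (this is exactly Lemma~\ref{lem:nondicho}), and derive the failure of dichotomy as in your final paragraph. That part is correct.

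The gap is in the marker encoding for states and tape symbols. You realise these by $\neg\exists y\,(P(x,y)\wedge\neg F(x,y))$ (respectively $(=1\,P)$ for $\mathcal{ALCIF}_\ell$), which---as you yourself note---the instance can switch \emph{off} but never on. That is the right choice for the grid-verification marker (the input must not be able to fake ``grid verified''), but it is the \emph{wrong} choice for states and tape symbols. The run-fitting problem lets the input \emph{positively} fix symbols at selected cells and leaves the rest free; it has no mechanism for \emph{forbidding} a symbol at a cell. With your gadget, an arbitrary instance in direction~(2) can add $P_G(c,b)$ to forbid symbol $G$ at cell $c$ without specifying any symbol there. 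The resulting question ``is there an accepting run avoiding $G$ at $c$?'' is not a run-fitting instance, and with many such negatively constrained cells there is no polynomial reduction back to $\RunFit(M)$; direction~(2) therefore fails. (You also misidentify the gadget as ``the device from the CSP encoding'': the CSP encoding in Theorem~\ref{thm:csphard} uses the \emph{positive} formula $\exists y\,(R(x,y)\wedge\neg F(x,y))$, respectively $(\geq 2\,R)$, not its negation.) The paper resolves this by using \emph{both} kinds of marker: the tiling-style $(=1\,\cdot)$ markers from the proof of Theorem~\ref{thm:undecidability} for grid verification, and the CSP-style $(\geq 2\,q)$, $(\geq 2\,G)$ markers for states and tape symbols. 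The latter are positively presetable, which is exactly how the partial run is supplied by the instance and is what makes the back-reduction in~(2) land in $\RunFit(M)$. The paper flags this point explicitly: ``the encoding $\exists^{=1}y\,q(x,y)$ from the tiling problem does not work because states and tape symbols can be positively preset in the input instance rather than negatively, which is in correspondence with the run fitting problem.''
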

By Ladner's theorem~\cite{DBLP:journals/jacm/Ladner75}, there is a
non-deterministic polynomial time Turing machine (TM) whose word
problem is neither in $\PTime$ nor $\NPTime$-hard (unless
$\text{\sc PTime}=\text{\sc coNP}$). Ideally, we would like to reduce
the word problem of such TMs to prove Theorem~\ref{thm:nondichotomy}.
However, this does not appear to be easily possible, for the following
reason. In the reduction, we use a grid construction and marker
formulas as in the proof of Theorem~\ref{thm:undecidability}, with the
grid providing the space in which the run of the TM is simulated and
markers representing TM states and tape symbols. We cannot avoid that
the markers can be preset either positively or negatively in the input
(depending on the marker formulas we choose), which means that some
parts of the run are not `free', but might be predetermined or at
least constrained in some way. We solve this problem by first
establishing an appropriate variation of Ladner's theorem.

We consider non-deterministic TMs $M$ with a single one-sided infinite
tape.  Configurations of $M$ are represented by strings $vqw$, where
$q$ is the state, and $v$ and $w$ are the contents of the tape to the
left and to the right of the tape head, respectively.  A \emph{partial
  configuration} of $M$ is obtained from a configuration $\gamma$ of
$M$ by replacing some or all symbols of $\gamma$ by a wildcard symbol
$\star$.  A partial configuration $\tilde\gamma$ \emph{matches} a
configuration $\gamma$ if it has the same length and agrees with
$\gamma$ on all non-wildcard symbols.  A \emph{partial run} of $M$ is
a finite sequence $\tilde\gamma_0,\dotsc,\tilde\gamma_m$ of partial
configurations of $M$ of the same length.  It is a \emph{run} if each
$\tilde\gamma_i$ is a configuration, and it \emph{matches} a run
$\gamma_0,\dotsc,\gamma_n$ if $m = n$ and each $\tilde\gamma_i$
matches $\gamma_i$.  A run is accepting if its last configuration has
an accepting state. Note that runs need not start in any specific
configuration (unless specified by a partial run that they
extend). The \emph{run fitting problem} for $M$ is to decide
whether a given partial run of $M$ matches some accepting run of
$M$. It is easy to see that for any TM $M$, the run fitting
problem for $M$ is in {\sc NP}. We prove the following result in the
long version by a careful adaptation of the proof of Ladner's
theorem given in~\cite{DBLP:books/daglib/0023084}.
%
\begin{theorem}\label{intermediate}
  There is a non-de\-ter\-mi\-nistic Turing machine whose run fitting
  problem is 
  neither in $\PTIME$ nor $\NPTime$-hard (unless $\PTIME = \NPTime$).
\end{theorem}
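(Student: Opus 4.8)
The plan is to transport the lazy‑diagonalisation argument behind Ladner's theorem from the word problem of \NPTime{} machines to the run fitting problem, by realising a known \NPTime‑intermediate language as the run fitting problem of a single, carefully engineered machine. First I would recall from the proof of Ladner's theorem in~\cite{DBLP:books/daglib/0023084} the construction of a ``padding function'' $H\colon\nat\to\nat$ that is computable in time polynomial in $n$ and for which the padded language
$A \isdef \{\, \langle\phi\rangle\,0\,1^{|\langle\phi\rangle|^{H(|\langle\phi\rangle|)}} \mid \phi \in \SAT\,\}$
has the following properties: $A \in \NPTime$; if $A \in \PTIME$ then $H$ is bounded by a constant, so $A$ is $\SAT$ with only polynomial padding, hence $A$ is \NPTime‑complete and $\PTIME=\NPTime$; and if $A$ is \NPTime‑hard then, assuming $\PTIME\neq\NPTime$, we have $A\notin\PTIME$, which forces $H(n)\to\infty$, which makes the padding superpolynomial and lets one decide $A$ in polynomial time by brute‑forcing $\SAT$ on the (tiny) core $\phi$ --- a contradiction. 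Hence $A$ is neither in $\PTIME$ nor \NPTime‑hard unless $\PTIME=\NPTime$.

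The core of the proof is to build a nondeterministic Turing machine $M$ with $R_M \equiv_p A$, where $R_M$ denotes the run fitting problem of $M$ and $\equiv_p$ is interreducibility by polynomial‑time many‑one reductions. Since $\equiv_p$ preserves both ``$\notin\PTIME$'' and ``not \NPTime‑hard'', and since $R_M\in\NPTime$ for every $M$ (as already observed in the paper), this yields the theorem. The machine $M$ will operate on a tape encoding a propositional formula $\phi$, a padding block $1^{k}$, and room for a guessed assignment; it recomputes $k^\ast \isdef |\langle\phi\rangle|^{H(|\langle\phi\rangle|)}$, nondeterministically guesses an assignment, verifies it against $\phi$, and then runs a long, structurally unavoidable ``clock'' of exactly $k^\ast$ steps whose sole purpose is to witness $k = k^\ast$ cell by cell, accepting only at the clock's end; a built‑in step counter makes the length of every accepting run a fixed polynomial of the tape length (finitely many small tape lengths are hard‑wired). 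The reduction $A\le_p R_M$ sends $z = \langle\phi\rangle\,0\,1^{k}$ to the partial run whose first partial configuration is the canonical initial configuration of $M$ on $z$ (padded with blanks) and all of whose later partial configurations are all‑wildcard; this partial run is matched by an accepting run iff $M$ accepts $z$ iff $z\in A$, and it has size polynomial in $|z|$. For $R_M\le_p A$, a partial run is viewed as an $(m{+}1)\times\ell$ array of cells with wildcards; one first verifies in polynomial time the purely local tableau conditions (each row a configuration, consecutive rows linked by a transition of $M$, last row accepting), outputting a fixed non‑member of $A$ if they fail; otherwise one reads off the formula $\phi$ the array forces and outputs $\langle\phi\rangle\,0\,1^{|\langle\phi\rangle|^{H(|\langle\phi\rangle|)}}$. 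The engineering of $M$ guarantees that any accepting run laid out in such an array carries its $k^\ast$‑step clock inside the array, so $|\langle\phi\rangle|^{H(|\langle\phi\rangle|)} = k^\ast \le m \le N$ and the output has size polynomial in the input size $N$; correctness is then that the array is fillable iff $\phi$ is satisfiable iff the output lies in $A$.

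The hard part is precisely this engineering of $M$, because the run fitting problem, left unconstrained, is far too expressive: a partial run need not begin in a canonical initial configuration, and its wildcards may be placed adversarially, so if $M$ admitted even one short accepting run from a maliciously chosen configuration --- say, one that jumps straight to verifying a small, unpadded formula --- then $\SAT$ would reduce to $R_M$ via polynomial‑size arrays and $R_M$ would be \NPTime‑hard. I therefore need to design $M$ and its configuration encoding so that \emph{every} configuration from which an accepting configuration is reachable in $m$ steps already determines, by tracing the (essentially backward‑deterministic) transition relation, a formula $\phi$ with $|\langle\phi\rangle|^{H(|\langle\phi\rangle|)}\le m$ together with the fact that acceptance is equivalent to satisfiability of that $\phi$ with the wildcard cells playing the role of the assignment. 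Concretely this means making acceptance ``unavoidably slow and structural'': the accepting state sits at the end of a long corridor of configurations, each with a unique predecessor forced by local transition constraints, and the $k^\ast$‑step clock together with the cell‑by‑cell comparison against the padding block is built into that corridor, so that no single presettable tape cell (loop counter, flag, or the like) can be forged to fake completion. Making this entanglement work --- while keeping the computation of $H$ inside the available time and space, and hard‑wiring the finitely many anomalous small cases --- is where the genuine effort lies; the remainder is a routine adaptation of the lazy‑diagonalisation bookkeeping of~\cite{DBLP:books/daglib/0023084}.
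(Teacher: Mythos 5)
Your high-level plan differs from the paper's: you aim for $\RunFit(M)\equiv_p A$ with $A$ the padded-\SAT{} language from Impagliazzo's proof, whereas the paper never establishes such an equivalence; instead it transplants the lazy-diagonalization argument directly onto $\RunFit(M_H)$, obtaining the non-hardness direction from a length-decreasing self-reduction of \SAT{} built out of a purported reduction $f\colon\SAT\le_p\RunFit(M_H)$ and a small enumeration over candidate input lengths. Your route would also yield the theorem if both reductions went through, but as written there is a genuine gap in the backward reduction $\RunFit(M)\le_p A$. Your machine puts $\langle\phi\rangle\,0\,1^{k}$ literally on the tape, so when the given partial run has wildcards in the positions occupied by $\langle\phi\rangle$ there is no single formula ``forced'' by the array --- in the worst case (all-wildcard after the initial state marker) exponentially many candidates remain, and the step ``one reads off the formula $\phi$ the array forces'' is simply unavailable. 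Replacing $\phi$ by a Cook--Levin formula $\psi$ for the whole tableau does not repair this: $|\psi|$ is of the order of the array size, so the required output $\langle\psi\rangle\,0\,1^{|\langle\psi\rangle|^{H(|\langle\psi\rangle|)}}$ has superpolynomial length whenever $H$ is unbounded --- which is exactly the regime in which $\RunFit(M)\notin\PTIME$. The clock/backward-determinism engineering you propose bounds run lengths, but it makes the $\langle\phi\rangle$ portion of $\tilde\gamma_0$ no more determinate and does nothing against this size blow-up.

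The paper's machine $M_H$ sidesteps both obstacles by reading a \emph{unary} input $1^{n^{H(n)}}$ and nondeterministically \emph{guessing} the \SAT{} instance. Wildcards in $\tilde\gamma_0$ then leave the input underdetermined only among a set $N$ of lengths of size $\le |x|^{c}$ for some $c<1$ (via monotonicity of $H$ and $H(n)\to\infty$), and for each $n\in N$ the Cook--Levin formula $\phi_n$ built from the segment of the partial run \emph{after} the deterministic length-check phase has size bounded by a polynomial in $n$ alone rather than in the whole array, because the remainder of $M_H$'s computation runs in time and space polynomial in $n$. Both ingredients are needed to make $\bigvee_{n\in N}\phi_n$ shorter than $|x|$, and neither survives in your design of $M$. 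One small misreading: the appendix's definition of a partial run already pins $\tilde\gamma_0$ to begin with $q_0$ and contain no other state symbol, so your concern that ``a partial run need not begin in a canonical initial configuration'' is partly moot; but the formula-underdetermination problem above persists regardless.
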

Now Theorem~\ref{thm:nondichotomy} is a consequence of the following lemma.
\begin{lemma}
\label{lem:nondicho}
For every Turing machine $M$, there is a
uGF$^{-}_{2}(2,f)$ ontology $\Omc$ and an $\mathcal{ALCIF}_{\ell}$
ontology \Omc of depth 2 such that the following hold, where $N$
is a distinguished unary relation:
\begin{enumerate}
\item there is a polynomial reduction of the run fitting
problem for $M$ to the complement of evaluating the OMQ $(\Omc,q\leftarrow N(x))$;
\item for every UCQ $q$, evaluating the OMQ $(\Omc,q)$
is polynomially reducible to the complement of the run fitting problem for $M$.
\end{enumerate}
\end{lemma}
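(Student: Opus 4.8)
The plan is to adapt the tiling-based construction used in the proof of Theorem~\ref{thm:undecidability}, replacing the ``tile a rectangle'' bookkeeping with ``simulate a run of $M$''. Fix a Turing machine $M$. Given a partial run $\tilde\gamma_0,\dots,\tilde\gamma_m$ of $M$, we encode it as an input instance $\Dmf$ over the binary relations $X$ (horizontal, tape position) and $Y$ (vertical, time step): the constants form (the skeleton of) an $(m{+}1)\times k$ grid where $k$ is the common length of the $\tilde\gamma_i$, and the non-wildcard symbols of the $\tilde\gamma_i$ are recorded by unary relations at the corresponding constants, exactly as tile types were recorded in Theorem~\ref{thm:undecidability}. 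The ontology $\Omc$ declares $X$, $Y$ and their inverses functional (in $\mathcal{ALCIF}_\ell$ via $(\leq 1\,R)$ together with $\forall x\exists y\,R(x,y)$-style axioms; in uGF$^{-}_2(2,f)$ via functionality axioms as in the CSP encoding), propagates ``verification markers'' from the top-right corner towards the bottom-left, and at each step checks the local consistency of two adjacent rows with the transition relation of $M$ and the closing of grid cells (XY-successor $=$ YX-successor). When the propagation reaches the bottom-left corner, i.e.\ when the input really does describe a complete, correctly formed, accepting run extending the given partial run, $\Omc$ derives a disjunction $\forall x\,A(x)\vee\forall x\,B(x)$ (as in $\Omc_{\text{Mat/PTime}}$), which makes $\Dmf$ non-materializable and hence lets us answer $q\leftarrow N(x)$ non-trivially: precisely, we arrange things so that $\Omc,\Dmf\not\models N(x)$ for the relevant tuple iff the verification succeeds iff the partial run matches some accepting run. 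This gives Point~1. For Point~2, we observe that whenever the verification does \emph{not} reach the corner, $\Omc$ behaves like a materializable, in fact unravelling-tolerant (hence Datalog$^{\neq}$-rewritable) ontology on that instance, so UCQ-evaluation reduces to the complement of the run-fitting problem: a UCQ fails to be certainly entailed essentially iff the input does not encode a fitting accepting run, and the residual ``good'' cases are handled by a fixed reduction to co-run-fitting built from the Datalog$^{\neq}$ rewriting.

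The key steps, in order: (i) define the grid encoding of partial runs and the unary predicates recording partial configurations; (ii) design the marker formulas so that they cannot be ``preset positively'' in the input---following Theorem~\ref{thm:undecidability}, a marker is ``set'' at $c$ iff $c$ has exactly one $P$-successor, with $\Omc$ forcing $\forall x\exists y\,P(x,y)$, so an adversary can only falsify markers, never assert them (for uGF$^{-}_2(2,f)$ use $\neg\exists y(P(x,y)\wedge\neg F(x,y))$ with $\forall x\,F(x,x)$); (iii) write the propagation axioms of depth~2 that move the marker along $X^{-}$ and $Y^{-}$ while enforcing cell-closing and a legal $M$-transition between consecutive rows, and an initialisation axiom that seeds the marker at the top-right corner provided that corner and its row/column data are locally consistent with an accepting final configuration; (iv) add the axiom that fires the disjunction $\forall x\,A(x)\vee\forall x\,B(x)$ once the marker has fully propagated, and include $N$ so that $q\leftarrow N(x)$ witnesses non-materializability exactly as in Lemma~\ref{lem:ucqcq}/Theorem~\ref{thm:nomatlower}; (v) prove the forward direction of Point~1: if the given partial run matches some accepting run $\gamma_0,\dots,\gamma_m$, instantiate $\Dmf$ from that run, check that all local consistency conditions hold, conclude the marker propagates and the disjunction fires, hence $\Omc$ is not materializable on $\Dmf$ and the OMQ answer flips; (vi) prove the converse and Point~2 together: if the verification cannot complete---because the input is defective or no accepting completion exists---then on $\Dmf$ the ontology never derives the disjunction, one shows it is unravelling tolerant restricted to such inputs, applies Theorem~\ref{thm:datalog} to get a Datalog$^{\neq}$ rewriting of any UCQ $q$, and reads off the polynomial reduction to co-run-fitting.

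The main obstacle is step~(vi), specifically showing that on all inputs where the verification does \emph{not} complete, $\Omc$ stays Datalog$^{\neq}$-rewritable and the relevant certain-answer questions reduce cleanly to the \emph{complement} of run fitting. The difficulty is exactly the one flagged in Theorem~\ref{thm:undecidability}: because markers can be falsified by the adversary, a defective or maliciously ``pre-disabled'' portion of the input can block propagation locally, so one cannot simply say ``the disjunction fires iff the input is a legal accepting run''; one must argue that every way of blocking propagation still leaves a materialization (indeed an unravelling-tolerant behaviour), and that the finitely many residual patterns are absorbed into the Datalog$^{\neq}$ program without introducing any $\coNP$-hardness. This requires the same kind of careful second-order-marker analysis of cell-closing as in Theorem~\ref{thm:undecidability}, plus a proof that the ``no completing accepting run'' case is genuinely homomorphism-robust so that Theorem~\ref{thm:infiniteunravel}/Theorem~\ref{thm:datalog} apply; handling the $f$ and $\mathcal{F}_\ell$ variants in parallel (preserving functionality when adding entailed rAQs during unravelling) adds a further layer of bookkeeping but no new ideas. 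Once the lemma is in place, Theorem~\ref{thm:nondichotomy} follows by plugging the machine from Theorem~\ref{intermediate} into it: $(\Omc,q\leftarrow N(x))$ is then neither in $\PTime$ nor $\coNP$-hard, so no dichotomy can hold for uGF$^{-}_2(2,f)$ or $\mathcal{ALCIF}_\ell$ of depth~2.
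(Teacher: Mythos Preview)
Your plan imports the Theorem~\ref{thm:undecidability} architecture wholesale, but the paper's proof differs at the central point and your version has a genuine gap.

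\textbf{Marker polarity for states and symbols.} You propose to encode everything with $(=1\,P)$-style markers, noting that an adversary ``can only falsify markers, never assert them.'' But a partial run supplies \emph{positive} information: at cell $(i,j)$ the symbol \emph{is} $\sigma$. With negatively-presettable markers you have no way to inject this into $\Dmf$. The paper keeps the $(=1\,P)$ markers only for the grid/cell verification (reusing $\Omc_{\mathfrak P}$ with a trivial tiling), and encodes states $q$ and tape symbols $G$ by the \emph{positively}-presettable concepts $(\geq 2\,q)$ and $(\geq 2\,G)$, exactly as in the CSP encoding; the paper states explicitly that the $(=1\,\cdot)$ encoding ``does not work'' here because states and symbols must be positively preset, which is what the run fitting problem requires.

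\textbf{Mechanism of the reduction.} You want ``verification propagates $\Leftrightarrow$ disjunction fires $\Leftrightarrow$ $\Omc,\Dmf\not\models N(x)$''. Two problems. First, the sentence $\forall x\,A(x)\vee\forall x\,B(x)$ you invoke is $\Omc_{\text{Mat/PTime}}$ from Example~\ref{ex:examples33} and is \emph{not} a uGF sentence (it fails invariance under disjoint unions), so it is outside both target languages. Second, even with a local disjunction $B_1\sqcup B_2$, firing it does not make a fresh $N$ certain; you would get non-materializability, not a specific certain answer. The paper's mechanism is different and cleaner: inside the grid (where $(=1\,A)$ holds) the ontology forces each cell to carry exactly one $(\geq 2\,S)$ with $S\in Q\cup\Gamma$, constrained by the transition relation and the accepting condition; thus $\Dmf$ is \emph{consistent} w.r.t.\ $\Omc$ iff some accepting run matches the partial run, and since $N$ is fresh, $\Omc,\Dmf\not\models N(x)$ iff $\Dmf$ is consistent. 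This gives Point~1 directly.

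\textbf{Point~2.} Your argument (``on instances where verification fails $\Omc$ is unravelling tolerant, apply Theorem~\ref{thm:datalog}'') conflates a per-instance and a per-ontology property; Theorem~\ref{thm:datalog} needs unravelling tolerance over \emph{all} instances. The paper instead shows the equivalence
\[
\Omc,\Dmf\models q(\vec a)\ \Longleftrightarrow\ \Dmf\ \text{is inconsistent w.r.t.}\ \Omc\ \text{ or }\ \{\top\sqsubseteq\exists Q.\top\mid Q\in\text{AUX}\},\Dmf\models q(\vec a),
\]
the second disjunct being in \PTime, and then reduces the consistency check to the run fitting problem by reading off, from each marked grid in $\Dmf$, the induced partial run. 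This is where the $(\geq 2\,S)$ encoding pays off: the ``hidden'' disjunctions over symbols/states are invisible to CQs because of the blanket $\top\sqsubseteq\exists Q.\top$ axioms, so certain answers collapse to those of the trivial ontology whenever $\Dmf$ is consistent.
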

To establish Lemma~\ref{lem:nondicho}, we re-use the ontology
$\Omc_{\mathfrak{P}}$ from the proof of
Theorem~\ref{thm:undecidability}, using a trivial rectangle tiling
problem. When the existence of the grid has been verified, instead of
triggering a disjunction as before, we now start a simulation of $M$
on the grid.
For both $\mathcal{ALCIF}_{\ell}$ and uGF$^{-}_{2}(2,f)$, we
represent states $q$ and tape symbols $G$ using the same formulas
as in the CSP encoding of homomorphisms. Thus, for $\mathcal{ALCIF}_{\ell}$
we use formulas $\exists^{\geq 2}y q(x,y)$ and $\exists^{\geq 2} y G(x,y)$, respectively,
using $q$ and $G$ as binary relations.
Note that here the encoding $\exists^{=1}y q(x,y)$ from the tiling problem does not work
because states and tape symbols can be positively preset in the input instance rather
than negatively, which is in correspondence with the run fitting  problem.


\section{Decision Problems}
\label{sect:decprob}

We study the decidability and complexity of the problem to decide
whether a given ontology admits {\sc PTime} query evaluation.
Realistically, we can only hope for positive results in cases where
there is a dichotomy between \PTime and {\sc coNP}: first, we have
shown in Section~\ref{sect:nodicho} that for cases with provably no
such dichotomy, meta problems are typically undecidable; and second,
it does not seem very likely that in the CSP-hard cases, one can
decide whether an ontology admits {\sc PTime} query evaluation without
resolving the dichotomy question and thus solving the Feder-Vardi
conjecture. Our main results are {\sc ExpTime}-completeness of
deciding {\sc PTime}-query evalutation of $\mathcal{ALCHIQ}$
ontologies of depth one (the same complexity as for satisfiability)
and a {\sc NExpTime} upper bound for uGC$_{2}^{-}(1,=)$
ontologies. Note that, in both of the considered languages, {\sc
  PTime}-query evalutation coincides with rewritability into
Datalog$^{\neq}$.  We remind the reader that according to our
experiments, a large majority of real world ontologies are
$\mathcal{ALCHIQ}$ ontologies of depth~1.  We also show that for \ALC
ontologies of depth~2, the mentioned problem is \NExpTime-hard.

Since the ontology languages relevant here admit at most binary
relations, an interpretation \Bmf is cg-tree decomposable if and only
if the undirected graph $G_{\Bmf}=\{ \{a,b\}\mid R(a,b) \in \Bmf, a\not=b\}$ is a tree. For
simplicity, we speak of \emph{tree interpretations} and of \emph{tree instances}, defined likewise.
The \emph{outdegree of $\Bmf$} is the outdegree of~$G_{\Bmf}$.
\begin{theorem}\label{thm:exp}
  For uGC$^{-}_{2}(1,=)$ ontologies, deciding whether query evaluation
  w.r.t.\ a given ontology is in {\sc PTime} (equivalently: rewritable
  into Datalog$^{\neq}$) is in {\sc NExpTime}. For $\mathcal{ALCHIQ}$
  ontologies of depth~1, this problem is in {\sc ExpTime}-complete.
\end{theorem}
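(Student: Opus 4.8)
The plan is to reduce the problem to deciding \emph{materializability} and then to bound the size of a witness of \emph{non}-materializability. By Theorem~\ref{thm:dichotomy}, for both languages query evaluation w.r.t.\ $\Omc$ is in {\sc PTime} (equivalently: Datalog$^{\neq}$-rewritable) iff $\Omc$ is materializable, and again by Theorem~\ref{thm:dichotomy} this is equivalent to $\Omc$ being materializable for the class of (possibly infinite) tree instances $\Dmf$ with $\text{sig}(\Dmf)\subseteq\text{sig}(\Omc)$. Since both languages are fragments of uGC$_2(=)$, Lemma~\ref{lem:canhom} lets us replace materializability by the existence of hom-universal models, and Lemma~\ref{lem:forestmodel} shows that every model of $\Omc$ and a tree instance $\Dmf$ lies homomorphically above a forest model of $\Omc$ and $\Dmf$, which in the binary-relation setting is itself a tree interpretation. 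Combining these, $\Omc$ is \emph{not} materializable iff there is a tree instance $\Dmf$ over $\text{sig}(\Omc)$, consistent w.r.t.\ $\Omc$, for which no forest model of $\Omc$ and $\Dmf$ homomorphically embeds (preserving $\text{dom}(\Dmf)$) into all forest models of $\Omc$ and $\Dmf$; equivalently, the ``generic'' interpretation $\Bmf_{\mathrm{gen}}(\Dmf)$ obtained by labelling every maximal guarded set $G$ of $\Dmf$ with its \emph{glb type} --- the set of subformulas of $\Omc$ with one free variable that hold at $G$ in \emph{every} forest model --- fails to be a model of $\Omc$ at some $G$.

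The technical heart is a \emph{small-witness lemma}: if such a bad tree instance exists, then one exists whose (finite or, in general, regular) description has size single-exponential in $|\Omc|$ and branching degree at most exponential. Here the relevant ``types'' are maximally consistent sets of subformulas of $\Omc$ with one free variable, using the counting thresholds occurring in $\Omc$, which by the unary-coding convention are only polynomially many; thus there are at most exponentially many types. The glb type at a guarded set $G$ is computed from $\Dmf$ by a fixpoint that propagates type summaries \emph{both towards and away from} $G$ along the tree --- this is the ``mosaic'' content of the argument --- and one prunes $\Dmf$ by identifying guarded sets with the same glb-type behaviour, folding, where $\Dmf$ must stay infinite, into a regular tree of exponential depth and degree. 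The subtle point, and the main obstacle, is to carry out this surgery without destroying the failure of $\Bmf_{\mathrm{gen}}(\Dmf)$ at the violated guarded set: one must control how that failure depends on the two-way information flow and, in the presence of counting, on the \emph{number} of identically-labelled successors and on the interaction of counting with inverse roles.

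Given the lemma, the {\sc NExpTime} upper bound for uGC$^-_2(1,=)$ follows by a guess-and-verify procedure: guess a (finite presentation of an) exponential-size tree instance $\Dmf$ over $\text{sig}(\Omc)$, a guarded set $G$, and a certificate for the failure of $\Bmf_{\mathrm{gen}}(\Dmf)$ at $G$; then verify in exponential time, by running the mosaic fixpoint on the presentation of $\Dmf$, that $\Dmf$ is consistent w.r.t.\ $\Omc$, that the glb type at $G$ is as certified, and that it violates some sentence of $\Omc$. For $\mathcal{ALCHIQ}$ ontologies of depth~$1$ one sharpens this to {\sc ExpTime} by a fully automata-theoretic argument that avoids the guess: here a bad instance can be taken \emph{finite} of exponential size, and the whole condition ``$\Dmf$ is consistent and $\Bmf_{\mathrm{gen}}(\Dmf)\not\models\Omc$'' is recognised by a two-pass (bottom-up then top-down) tree automaton with exponentially many states, whose emptiness is decidable in time polynomial in its size. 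The matching {\sc ExpTime} lower bound is by reduction from the consistency of a given $\mathcal{ALC}$ instance w.r.t.\ an $\mathcal{ALC}$ ontology --- {\sc ExpTime}-hard already for depth-$1$ ontologies --- where the ontology is augmented with a disjunction gadget that is invisible to UCQs, in the spirit of $\Omc_{\text{Mat/PTime}}$ from Example~\ref{ex:examples33} and of the CSP encoding in the proof of Theorem~\ref{thm:csphard}, so that $\Omc$ becomes non-materializable precisely when the given instance is consistent.
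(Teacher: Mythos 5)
Your high-level plan (reduce to materializability via Theorem~\ref{thm:dichotomy}, bound the witness, guess-and-verify for \textsc{NExpTime}, reduce from satisfiability for the \textsc{ExpTime} lower bound) is aligned with the paper, but the ``technical heart'' you describe --- a small-witness lemma producing a \emph{regular tree instance of exponential depth and exponential degree} whose generic model fails $\Omc$ --- is unproven and skips the key structural idea that actually makes the bound work. The paper's crucial step is Lemma~\ref{lem:bouquetchar}: for depth-$1$ ontologies in uGC$_2^{-}(1,=)$ or $\mathcal{ALCHIQ}$, materializability can be checked on \emph{bouquets} (radius-$1$ instances) of outdegree at most $|\Omc|$. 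This collapses the witness search to polynomially sized, depth-$1$ structures, of which there are only exponentially many, rather than to deep or infinite regular trees. You do not invoke the depth-$1$ restriction anywhere in your surgery/folding argument, yet it is precisely what makes the restriction to bouquets sound; without it, your claim that a bad witness can be pruned into a regular tree with bounded branching and depth is not justified, and it is exactly the hard part you flag but do not prove. Folding on ``glb-type behaviour'' is also not obviously compatible with the failure of the generic model, since the two-way type propagation and the counting quantifiers interact nonlocally --- a difficulty the paper sidesteps entirely by never leaving depth $1$.

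Two further gaps. First, the equivalence ``$\Omc$ is non-materializable iff $\Bmf_{\mathrm{gen}}(\Dmf)$ fails $\Omc$ for some tree instance $\Dmf$'' is asserted, not argued; for non-Horn ontologies the least-type interpretation being a non-model is a natural sufficient condition for non-materializability, but its necessity needs justification (the paper instead works through the disjunction property, Theorem~\ref{thm:disjproperty}, and through 1-materializability witnesses). Second, for uGC$_2^{-}(1,=)$ you do not address the reflexive-loop phenomenon illustrated by Example~\ref{loop}: there the bouquet $\{S(a,a),R(a,a)\}$ has a 1-materialization but the ontology is not materializable, which is exactly why the paper's \textsc{NExpTime} argument needs the extra ``contracting hom-pair'' mosaic pieces (Lemma~\ref{thm:conugf}) on top of the bouquet reduction. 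Your glb-type fixpoint sketch gives no mechanism to detect this kind of failure, so even granting a small-witness lemma the verification step would be incomplete. The \textsc{ExpTime} lower bound sketch is fine and matches the paper's remark that it is a straightforward reduction from satisfiability.
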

The main insight underlying the proof of Theorem~\ref{thm:exp} is
that for ontologies formulated in the mentioned languages,
materializability (which by Theorem~\ref{thm:dichotomy} coincides with
\PTime query evaluation) already follows from the existence of
materializations for tree instances of depth~1. We make this precise
in the following lemma.
Given a tree interpretation $\Bmf$ and $a\in \text{dom}(\Bmf)$, define
the \emph{1-neighbourhood} $\Bmf_{a}^{\leq 1}$ of $a$ in $\Bmf$ as $\Bmf_{|X}$,
where $X$ is the union of all guarded sets in $\Bmf$ that contain $a$.
$\Bmf$ is a \emph{bouquet with root $a$} if $\Bmf_{a}^{\leq 1} = \Bmf$
and it is \emph{irreflexive} if there exists no atom of the form $R(b,b)$ in $\Bmf$.
%
\begin{lemma}
\label{lem:bouquetchar}
Let $\Omc$ be a uGC$_{2}^{-}(1,=)$ ontology (resp.\ an
$\mathcal{ALCHIQ}$ ontology of depth~1).  Then $\Omc$ is
materializable iff $\Omc$ is materializable for the class of all
(respectively, all irreflexive) bouquets $\Dmf$ of outdegree
$\leq |\Omc|$ with $\text{sig}(\Dmf)\subseteq \text{sig}(\Omc)$.
%
\end{lemma}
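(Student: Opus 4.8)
The plan is to prove both directions of the biconditional, the ``only if'' direction being trivial (restricting the class of instances for which materializations must exist can only weaken the hypothesis, so materializability implies materializability for bouquets of bounded outdegree). For the ``if'' direction I would proceed contrapositively: assuming $\Omc$ is \emph{not} materializable, I would exhibit a bouquet $\Dmf$ of outdegree $\leq |\Omc|$ (irreflexive, in the $\mathcal{ALCHIQ}$ case) with $\text{sig}(\Dmf) \subseteq \text{sig}(\Omc)$ that has no materialization. By Theorem~\ref{thm:dichotomy} (more precisely the contrapositive of the implication (2) $\Rightarrow$ (1), together with the fact that (1) and (2) are equivalent), non-materializability of $\Omc$ already yields a cg-tree decomposable instance $\Dmf_0$ with $\text{sig}(\Dmf_0) \subseteq \text{sig}(\Omc)$ that has no materialization; since these languages admit only binary relations, $\Dmf_0$ is a tree instance. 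So the real work is to \emph{push the tree depth down to~$1$} and to \emph{bound the outdegree by $|\Omc|$} while preserving non-materializability.

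The key structural idea I would use is a \emph{local characterization of materializability for the languages in question}: in uGC$_2^-(1,=)$ and depth-1 $\mathcal{ALCHIQ}$, whether a type (a set of unary facts together with the information about which guarded neighbourhoods a constant sees) can be consistently extended to a materialization is a property that only depends on the $1$-neighbourhood of each constant, because the ontology sentences have depth~$1$ (up to the outermost equality guard, which does not count) and hence quantify only one step into the structure. Concretely I would argue that non-materializability of the tree instance $\Dmf_0$ is ``witnessed locally'': there is some constant $a$ in $\Dmf_0$ whose $1$-neighbourhood $(\Dmf_0)_a^{\leq 1}$ already fails to admit a consistent assignment of realizable types compatible with the certain answers. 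This uses the fact, implicit in the proof of Theorem~\ref{thm:infiniteunravel} and Theorem~\ref{thm:datalog}, that a materialization of a tree instance can be assembled bag-by-bag from locally realizable, pairwise-compatible types; an obstruction to materializability must therefore appear at a single bag, i.e.\ in a single $1$-neighbourhood, which is exactly a bouquet. Having isolated such a bouquet $\Dmf := (\Dmf_0)_a^{\leq 1}$, I would then prune its outdegree: if $a$ has more than $|\Omc|$ neighbours, a counting/pigeonhole argument on the finitely many guarded-neighbourhood types realized at $a$ shows that the offending local type-clash already occurs in a sub-bouquet retaining at most $|\Omc|$ of those neighbours — the bound $|\Omc|$ being what is needed to realize, for each counting quantifier $\exists^{\ge n}$ occurring in $\Omc$, a ``fresh'' violating successor or to separate the finitely many relevant successor-types. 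For $\mathcal{ALCHIQ}$ one additionally checks that reflexive loops $R(b,b)$ are never needed in the witness (they can be unravelled away without affecting satisfaction of depth-1 $\mathcal{ALCHIQ}$ sentences or the certain answers), giving irreflexivity.

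I expect the main obstacle to be making precise and correct the claim that an obstruction to materializability is \emph{local}, i.e.\ confined to a single $1$-neighbourhood. The subtlety is that a materialization must simultaneously respect the \emph{global} certain answers $\Omc,\Dmf_0 \models q$, which can be influenced by parts of $\Dmf_0$ far from $a$; so I would need to show that passing to the bouquet $(\Dmf_0)_a^{\leq 1}$ does not spuriously create new certain answers or destroy the obstruction. The right way to handle this is to work with the type assignment produced by the Datalog$^{\neq}$ construction of Theorem~\ref{thm:datalog}: non-materializability means exactly that the (finite) system of local compatibility constraints over guarded-neighbourhood types has no solution, and by compactness of a finite constraint system the inconsistency is already caused by a bounded sub-system, which one then realizes as a bounded-outdegree bouquet. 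Getting the bookkeeping right between ``types realizable in some model of $\Omc$'', ``types compatible across a shared element'', and ``types forced by the certain answers'' — and checking that the depth-1 restriction really does confine all of this to $1$-neighbourhoods, with the counting machinery of uGC$_2$ and the inverse roles of $\mathcal{ALCHIQ}$ handled correctly — is where the technical care will be concentrated.
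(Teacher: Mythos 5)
Your high-level picture — that depth~1 of the ontology confines the relevant structure to $1$-neighbourhoods, that bounded outdegree follows by capping the witnesses of each counting quantifier, and that irreflexivity in the $\mathcal{ALCHIQ}$ case comes from unravelling loops away — matches the paper. You also correctly reduce to cg-tree decomposable (hence tree) instances via Theorem~\ref{thm:dichotomy}. However, the core step of your argument is framed contrapositively and its justification is where the proposal breaks down.

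The paper proves the nontrivial direction \emph{directly}: assuming all relevant bouquets are materializable, fix a tree instance $\Dmf$, first replace it by its $\Omc$-saturation, and then for each $a \in \text{dom}(\Dmf)$ take a forest-model materialization of the bouquet $\Dmf^{\leq 1}_a$ and hook the tree interpretation $\Bmf_a$ rooted at $a$ onto $\Dmf$; one then checks, using depth~1, that the result is a materialization of $\Dmf$. Your contrapositive restatement (``if the tree has no materialization, some bouquet has none'') is logically equivalent to this, but you do not reproduce the construction; instead you claim it via ``compactness of a finite constraint system'' on a local-type CSP whose solutions are supposed to be in bijection with materializations. That equivalence — materializability of a tree instance w.r.t.\ a depth-1 ontology $\iff$ solvability of a local type-assignment system — is not established anywhere in the paper at this point (the mosaic/type characterization appears only later, in Lemma~\ref{thm:conugf}, which is developed \emph{using} Lemma~\ref{lem:bouquetchar}, so appealing to it here would be circular), and ``compactness'' is not the right tool for a finite CSP in any case: there is no a priori reason why inconsistency of a finite constraint system should be caused by a bounded subsystem, let alone one living inside a single $1$-neighbourhood. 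What actually makes the obstruction local is the direct hooking construction, which you would need to carry out.

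Two further, smaller gaps: you omit the $\Omc$-saturation of $\Dmf$ before extracting the $1$-neighbourhoods; without it the bouquets $\Dmf^{\leq 1}_a$ do not carry the facts that the rest of the tree forces onto $a$'s neighbourhood, and neither the hooking construction nor the contrapositive ``obstruction is local'' claim goes through. Second, your outdegree bound via ``pigeonhole on successor types'' is not tight enough as stated: the number of distinct guarded-neighbourhood types is exponential in $\card{\Omc}$, so a na\"ive type pigeonhole would only give an exponential bound. The paper's argument instead caps, for each subformula $\chi = \exists^{\geq n} z_1 (\alpha \wedge \varphi)$, the set $Z_\chi$ of witnesses at $n+1$ elements and takes the union — with numbers in counting quantifiers coded in unary this sums to at most $\card{\Omc}$ — and then shows that a materialization of the pruned bouquet can be extended (again by hooking, using $\Omc$-saturation) to a materialization of the full bouquet. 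You would need to run an argument of this shape to obtain the $\card{\Omc}$ bound rather than a weaker one.
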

\begin{proof}
We require some notation. An instance $\Dmf$ is called \emph{$\Omc$-saturated} for an ontology $\Omc$
if for all facts $R(\vec{a})$ with $\vec{a}\subseteq \text{dom}(\Dmf)$
such that $\Omc,\Dmf\models R(\vec{a})$ it follows that $R(\vec{a})\in \Dmf$.
For every $\Omc$ and instance $\Dmf$ there exists a unique minimal (w.r.t.~set-inclusion)
$\Omc$-saturated instance $\Dmf_{\Omc}\supseteq \Dmf$.
We call $\Dmf_{\Omc}$ the \emph{$\Omc$-saturation of $\Dmf$}.
It is easy to see that there is a materialization of $\Omc$ and $\Dmf$ if and only if
there is a materialization of $\Omc$ and the $\Omc$-saturation of $\Dmf$.

We first prove Lemma~\ref{lem:bouquetchar} for uGC$_{2}^{-}(1,=)$ ontologies $\Omc$ and without the condition
on the outdegree. Let $\Sigma_{0}=\text{sig}(\Omc)$ and assume that $\Omc$ is
materializable for the class of all $\Sigma_{0}$-bouquets.
By Theorem~\ref{thm:dichotomy} it suffices to prove that $\Omc$
is materializable for the class of $\Sigma_{0}$-tree instances. Fix a $\Sigma_{0}$-tree instance $\Dmf$ that is
consistent w.r.t.~$\Omc$. We may assume that $\Dmf$ is $\Omc$-saturated.
Note that a forest model materialization $\Bmf$ of an ontology $\Omc$ and an $\Omc$-saturated instance $\Fmf$
consists of $\Fmf$ and tree interpretations $\Bmf_{a}$, $a\in \text{dom}(\Fmf)$, that are hooked to $\Fmf$ at $a$.
Take for any $a\in \text{dom}(\Dmf)$ the bouquet $\Dmf^{\leq 1}_{a}$ with root $a$ and hook to $\Dmf$ at $a$ the interpretation
$\Bmf_{a}$ that is hooked to $\Dmf^{\leq 1}_{a}$ at $a$ in a forest model materialization $\Bmf$ of $\Dmf^{\leq 1}_{a}$
and $\Omc$ (such a forest model materialization exists since $\Dmf^{\leq 1}_{a}$ is materializable).
Denote by $\Amf$ the resulting interpretation.
Using the condition that $\Omc$ is a uGC$^{-}_{2}(1,=)$ ontology it is not difficult to prove
that $\Amf$ is a materialization of $\Omc$ and $\Dmf$.

We now prove the restriction on the outdegree. Assume $\Omc$ is given. Let $\Dmf$ be a bouquet with
root $a$ of minimal outdegree such that there is no materialization of
$\Omc$ and $\Dmf$. We show that the outdegree of $\Dmf$ does not exceed $|\Omc|$. Assume
the outdegree of $\Dmf$ is at least three (otherwise we are done). We may assume
that $\Dmf$ is $\Omc$-saturated.
Take for any formula $\chi=\exists^{\geq n}z_{1} \alpha(z_{1},z_{2}) \wedge \varphi(z_{1},z_{2})$ that
occurs as a subformula in $\Omc$ the set $Z_{\chi}$ of all $b\not=a$ such that
$\Dmf\models \alpha(b,a)\wedge \varphi(b,a)$. Let $Z_{\chi}'=Z_{\chi}$ if $|Z_{\chi}|\leq n+1$;
otherwise let $Z'_{\chi}$ be a subset of $Z_{\chi}$ of cardinality $n+1$.
Let $\Dmf'$ be the restriction $\Dmf_{|Z}$ of $\Dmf$ to the union $Z$ of all $Z_{\chi}'$ and $\{a\}$.
We show that there exists no materialization of $\Dmf'$ and $\Omc$. Assume for a proof
by contradiction that there is a materialization $\Bmf$ of $\Dmf'$. Let $\Bmf'$ be
the union of $\Dmf\cup \Bmf$ and the interpretations $\Bmf_{b}$, $b\in \text{dom}(\Dmf)\setminus (Z\cup\{a\})$,
that are hooked to $\Dmf_{|\{a,b\}}$ at $b$ in a forest model materialization of $\Dmf_{|\{a,b\}}$.
We show that $\Bmf'$ is a materialization of $\Dmf$ and $\Omc$ (and thus derive a contradiction).
Using the condition that $\Dmf$ is $\Omc$-saturated
one can show that the restriction $\Bmf'_{|\text{dom}(\Dmf)}$ of $\Bmf'$ to $\text{dom}(\Dmf)$ coincides with $\Dmf$.
Using the condition that $\Omc$ has depth~1 it is now easy to show that $\Bmf'$ is a model
of $\Omc$. It is a materialization of $\Dmf$ and $\Omc$ since it is composed of materializations of subinstances
of $\Dmf$ and~$\Omc$.

The proof that irreflexive bouquets are sufficient for ontologies of depth~1 in $\mathcal{ALCHIQ}$ is similar
to the proof above and uses the fact that one can
always unravel models of $\mathcal{ALCHIQ}$ ontologies into irreflexive tree models.
\end{proof}
We now develop algorithms that decide {\sc PTime} query evaluation by
checking the conditions given in Lemma~\ref{lem:bouquetchar}, starting
with the (easier) case of $\mathcal{ALCHIQ}$.  Let $\Dmf$ be a
bouquet with root $a$. Call a bouquet $\Bmf\supseteq \Dmf$ a
\emph{1-materialization of \Omc and $\Dmf$} if
\begin{itemize}
\item there exists a model $\Amf$ of $\Omc$ and $\Dmf$ such that
  $\Bmf= \Amf_{a}^{\leq 1}$;
\item for any model $\Amf$ of $\Dmf$ and $\Omc$ there exists a
  homomorphism from $\Bmf$ to $\Amf$ that preserves $\text{dom}(\Dmf)$.
\end{itemize}
It turns out that, when checking materializability, not
only is it sufficient to consider bouquets instead of unrestricted
instances, but additionally one can concentrate on 1-materializations
of bouquets.
\begin{lemma}\label{lem:1materialization}
  Let $\Omc$ be an $\mathcal{ALCHIQ}$ ontology of depth~1. If for all
  irreflexive bouquets $\mathfrak{D}$ that are consistent
  w.r.t.~$\Omc$, of outdegree $\leq |\Omc|$, and satisfy
  $\text{sig}(\Dmf) \subseteq \text{sig}(\Omc)$ there is a
  1-materialization of $\Omc$ and \Dmf, then $\Omc$ is materializable for the
  class of all such bouquets.
\end{lemma}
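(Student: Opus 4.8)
The plan is to show that $\Omc$ and any bouquet $\Dmf$ as in the statement admit a materialization; since UCQs are preserved under homomorphisms, it suffices to construct a model of $\Omc$ and $\Dmf$ that maps homomorphically, preserving $\text{dom}(\Dmf)$, into every model of $\Omc$ and $\Dmf$. I would build such a model by an unravelling-style construction whose local building blocks are the 1-materializations supplied by the hypothesis. After a few harmless normalizations -- restricting throughout to $\text{sig}(\Omc)$, replacing $\Dmf$ by its $\Omc$-saturation, and, crucially using that models of $\mathcal{ALCHIQ}$ ontologies can be unravelled into irreflexive ones (as recalled in the proof of Lemma~\ref{lem:bouquetchar}), arranging that all 1-materializations used are irreflexive -- start from a 1-materialization $\Bmf_{0}$ of $\Omc$ and $\Dmf$ with root $a$ and then process leaves level by level. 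For a current leaf $d$ with its unique neighbour $p$, let $\Cmf_{d}$ be the guarded subinstance of the structure built so far on $\{d,p\}$, re-rooted at $d$; this is an irreflexive bouquet of outdegree $1$ over $\text{sig}(\Omc)$ that is consistent with $\Omc$ (it embeds into the model witnessing the first condition of the 1-materialization used at $p$), so the hypothesis provides a 1-materialization $\Bmf_{d}'$ of $\Omc$ and $\Cmf_{d}$, and we hook the new leaves of $\Bmf_{d}'$ to $d$. Let $\Bmf^{\infty}$ be the union over all levels.

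Two claims remain. First, $\Bmf^{\infty}$ is a model of $\Omc$ and $\Dmf$. Since an $\mathcal{ALCHIQ}$ ontology of depth $1$ holds at an element $e$ of an interpretation iff a certain condition on the $1$-neighbourhood of $e$ is satisfied, and since $\Omc$-saturation guarantees that hooking the new leaves of $\Bmf_{d}'$ adds no atom to the pre-existing guarded set $\{d,p\}$, the $1$-neighbourhood of each $e$ in $\Bmf^{\infty}$ is exactly the 1-materialization used at $e$; by the first condition in the definition of 1-materialization this is the $1$-neighbourhood of some model of $\Omc$, so $\Omc$ holds at $e$ in $\Bmf^{\infty}$. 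Also $\Bmf^{\infty} \supseteq \Bmf_{0} \supseteq \Dmf$. Second, $\Bmf^{\infty}$ maps homomorphically, preserving $\text{dom}(\Dmf)$, into every model $\Amf^{*}$ of $\Omc$ and $\Dmf$. I would first pass to an irreflexive model below $\Amf^{*}$ (again by the irreflexive-unravelling fact) and then build the homomorphism level by level: on level $0$ the second condition of $\Bmf_{0}$ gives a homomorphism preserving $\text{dom}(\Dmf)$; in the inductive step at a leaf $d$ with neighbour $p$ already sent to $d^{*},p^{*}$, one notes that $d^{*} \ne p^{*}$ (because $\Cmf_{d}$ has an edge between $d$ and $p$ while $\Amf^{*}$ is irreflexive), so $\Amf^{*}$ can be relabelled to a model $\Amf'$ of $\Omc$ that literally contains $\Cmf_{d}$ and comes with an isomorphism $\Amf' \to \Amf^{*}$ extending the partial homomorphism on $\{d,p\}$; the second condition of $\Bmf_{d}'$ maps $\Bmf_{d}'$ into $\Amf'$ fixing $\{d,p\}$, and composing with the isomorphism extends the homomorphism to the new leaves of $d$. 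The union over all levels is the required homomorphism.

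The hard part is precisely this inductive step of the second claim: the universal property of a 1-materialization of $\Cmf_{d}$ is stated for models that contain $\Cmf_{d}$ with standard names, whereas $\Amf^{*}$ contains only a homomorphic image of $\Cmf_{d}$, with the images of $d$ and $p$ already fixed; the naive fix of forming the disjoint union of $\Amf^{*}$ and $\Cmf_{d}$ fails because $\Cmf_{d}$ need not itself be a model of $\Omc$. The relabelling argument repairs this, but only when the images of $d$ and $p$ are distinct, which is why the reduction to irreflexive models is used -- a reduction available for $\mathcal{ALCHIQ}$ but not in general for uGC$^{-}_{2}(1,=)$, which is also why Lemma~\ref{lem:bouquetchar} treats the two languages separately. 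Making the sketch rigorous then mainly amounts to the careful bookkeeping around $\Omc$-saturation and the exact shape of the $1$-neighbourhoods in $\Bmf^{\infty}$.
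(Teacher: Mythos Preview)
Your proposal is correct and follows essentially the same approach as the paper: both construct the materialization by iterating 1-materializations level by level over a growing frontier (the paper's $\Bmf^{i}$ and frontier sets $F_{i}$ are exactly your level-by-level processing of leaves), and both establish the universal homomorphism into an arbitrary model by first passing to an irreflexive model and then exploiting injectivity of the partial homomorphism on each two-element guarded set $\{d,p\}$, which is precisely your relabelling step. Your closing remark on why the argument breaks for uGC$^{-}_{2}(1,=)$ (reflexive loops destroy the injectivity needed for relabelling) also matches the paper's discussion around Example~\ref{loop}.
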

\begin{proof}
  For brevity, we call an irreflexive bouquet \Fmf \emph{relevant} if
  it is consistent w.r.t.~$\Omc$, of outdegree $\leq |\Omc|$ and
  satisfies $\text{sig}(\Fmf) \subseteq \text{sig}(\Omc)$.
%
  An \emph{irreflexive 1-materializability witness} $(\Fmf,a,\Bmf)$ consists of a
  relevant irreflexive bouquet $\Fmf$ with root $a$ and a
  1-materialization $\Bmf$ of $\Fmf$ w.r.t.~$\Omc$. One can show that
  $\Bmf$ is an irreflexive tree interpretation.

  Now let \Dmf be a relevant irreflexive bouquet with root $a$ and
  assume that \Dmf is 1-materializable w.r.t.\ \Omc. We have to show
  that there exists a materialization of $\Omc$ and $\Dmf$. Note that
  for every relevant irreflexive bouquet \Fmf, there is a 1-materializability witness
  $(\Fmf,a,\Bmf)$. We construct the desired materialization
  step-by-step using these pairs also memorizing sets of frontier
  elements that have to be expanded in the next step. We start with the
  irreflexive 1-materializability witness $(\Dmf,a,\Bmf)$ and set $\Bmf^{0}=\Bmf$ and
  $F_{0}=\text{dom}(\Bmf)\setminus\{a\}$.  Then we construct a
  sequence of irreflexive tree interpretations $\Bmf^{0}\subseteq
  \Bmf^{1}\subseteq \ldots$ and frontier sets $F_{i+1}\subseteq
  \text{dom}(\Bmf^{i+1})\setminus \text{dom}(\Bmf^{i})$ inductively as
  follows: given $\Bmf^{i}$ and $F_{i}$, take for any $b\in F_{i}$ its
  predecessor $a$ in $\Bmf^{i}$ and an irreflexive
  1-materializability witness $(\Bmf^{i}_{|\{a,b\}},b,\Bmf_{b})$ and set
$$
\Bmf^{i+1}   :=  \Bmf^{i}\cup \bigcup_{b\in F_{i}}\Bmf_{b} \qquad
F_{i+1}  :=  \bigcup_{b\in F_{i}} \text{dom}(\Bmf_{b})\setminus\{b\}
$$
Let $\Bmf^{\ast}$ be the union of all $\Bmf_{i}$. We show that $\Bmf$
is a materialization of $\Omc$ and $\Dmf$.  $\Bmf$ is a model of
$\Omc$ by construction since $\Omc$ is an $\mathcal{ALCHIQ}$ ontology
of depth~1. Consider a model $\Amf$ of $\Omc$ and $\Dmf$. It suffices
to construct a homomorphism $h$ from $\Bmf^{\ast}$ to $\Amf$ that
preserves $\text{dom}(\Dmf)$. We may assume that $\Amf$ is an
irreflexive tree interpretation.  We construct $h$ as the limit of a
sequence $h_{0},\ldots$ of homomorphisms from $\Bmf^{i}$ to $\Amf$.
By definition, there exists a homomorphism $h_{0}$ from $\Bmf^{0}$ to
$\Amf_{a}^{\leq 1}$ preserving $\text{dom}(\Dmf)$.  Now, inductively,
assume that $h_{i}$ is a homomorphism from $\Bmf^{i}$ to
$\Amf$. Assume $c$ has been added to $\Bmf^{i}$ in the construction of
$\Bmf^{i+1}$. Then there exists $b\in F_{i}$ and its predecessor $a$
in $\Bmf^{i}$ such that $c\in \text{dom}(\Bmf_{b})\setminus \{b\}$,
where $\Bmf_{b}$ is the irreflexive tree interpretation that has been
added to $\Bmf^{i}$ as the last component of the irreflexive model
pair $(\Bmf^{i}_{|\{a,b\}},b,\Bmf_{b})$. But then, as $\Bmf_{b}$ is a
1-materialization of $\Bmf^{i}_{|\{a,b\}}$ and $h_{i}$ is injective on
$\Bmf^{i}_{|\{a,b\}}$ (since $\Amf$ is irreflexive), we can expand the
homomorphism $h_{i}$ to a homomorphism to $\Amf$ with domain
$\text{dom}(\Bmf^{i})\cup \{c\}$. Thus, we can expand $h_{i}$ to a
homomorphism from $\Bmf^{i+1}$ to $\Amf$.
\end{proof}
Lemma~\ref{lem:bouquetchar} and Lemma~\ref{lem:1materialization} imply
that an $\mathcal{ALCHIQ}$ ontology $\Omc$ of depth~1 enjoys {\sc PTime} query evaluation
if and only if all irreflexive bouquets
$\mathfrak{D}$ that are consistent w.r.t.~$\Omc$, of outdegree $\leq
|\Omc|$, and satisfy $\text{sig}(\Dmf) \subseteq \text{sig}(\Omc)$
have a 1-materialization w.r.t.~$\Omc$. The latter condition can be
checked in deterministic exponential time since the satisfiability
problem for $\mathcal{ALCHIQ}$ ontologies is in {\sc
  ExpTime}. Moreover, there are only exponentially many relevant
bouquets. We have thus proved the \ExpTime upper bound in
Theorem~\ref{thm:exp}. A matching lower bound can be proved by a
straightforward reduction from satisfiability.

\smallskip

The following example shows that, in contrast to $\mathcal{ALCHIQ}$
depth~1, for uGC$_{2}^{-}(1,=)$ the existence of 1-materializations
does not guarantee materializability of bouquets.
\begin{example}\label{loop}
  We use $\exists^{\not=}yW(x,y)$ to abbreviate
  $\exists y (W(x,y) \wedge (x \not= y))$ and likewise for
  $\exists^{\not=}yW(y,x)$. Let $S,S',R,R'$ be binary relation symbols
  and $\Omc$ the uGF$_{2}^{-}(1,=)$ ontology $\Omc$ that
  contains
$$
\begin{array}{c}
\forall x \big(S(x,x) \rightarrow (R(x,x) \rightarrow
(\exists^{\not=}yR(x,y) \vee \exists^{\not=}yS(x,y)))\big) \\[1mm]
\forall x (\exists^{\not=}yW(y,x) \rightarrow \exists y W'(x,y))
\end{array}
$$
where $(W,W')$ range over $\{(R,R'),(S,S')\}$. Observe that for the instance $\Dmf=\{(S(a,a),R(a,a)\}$
and the Boolean UCQ
$$
q\leftarrow R'(x,y) \vee S'(x,y),
$$
we have $\Omc,\Dmf \models q$. Also, for
$q_{R}\leftarrow R'(x,y)$ and $q_{S}\leftarrow S'(x,y)$  we have $\Omc,\Dmf\not\models q_{R}$ and
$\Omc,\Dmf\not\models q_{S}$. Thus, $\Omc$ is not materializable. It is, however, easy to
show that for every bouquet $\Dmf$ there exists a 1-materialization of $\Dmf$ w.r.t.~$\Omc$.
\end{example}

In uGC$_{2}^{-}(1,=)$, we thus have to check unrestricted
materializability of bouquets, instead of 1-materializability.  In
fact, it suffices to consider materializations that are tree
interpretations. To decide the existence of such materialization, we
use a mosaic approach. In each mosaic piece, we essentially record a
1-neighborhood of the materialization, a 1-neighborhood of a model of
the bouquet and ontology, and a homomorphism from the former to the
latter.  We then identify certain conditions that characterize when a
set of mosaics can be assembled into a materialization in a way that
is similar to the model construction in the proof of
Lemma~\ref{lem:1materialization}. There are actually two different
kinds of mosaic pieces that we use, with one kind of piece explicitly
addressing reflexive loops which, as illustrated by
Example~\ref{loop}, are the reason why we cannot work with
1-materializations. The decision procedure then consists of guessing a
set of mosaics and verifying that the required conditions are
satisfied.  Details are in the long version.

\medskip

Theorem~\ref{thm:exp} only covers ontology languages of depth 1.  It
would be desirable to establish decidability also for ontology
languages of depth 2 that enjoy a dichotomy between {\sc PTime} and
{\sc coNP}, such as uGF$^{-}_{2}(2)$. The following example shows that
this requires more sophisticated techniques than those used above.  In
particular, materializability of bouquets does not imply
materializability.
\begin{example}
\label{exlast}
We give a family of \ALC-ontologies $(\Omc_n)_{n \geq 0}$ of depth~2
such that each $\Omc_n$ is materializable for the class of tree
interpretations of depth at most $2^n-1$ while it is not
materializable. The idea is that any instance \Dmf that witnesses
non-materializability of $\Omc_n$ must contain an $R$-chain of length
$2^n$, $R$ a binary relation. The presence of this chain is verified
by propagating a marker upwards along the chain. To avoid that \Omc is
1-materializable, we represent this marker by a universally quantified
formula and also hide some other unary predicates in the same way. For
each unary predicate $P$, let $H_P(x)$ denote the formula $\forall y
(S(x,y) \rightarrow P(y))$ and include in $\Omc_n$ the sentence
$\forall x \exists y (S(x,y) \wedge P(y))$. The remaining sentences in
$\Omc_n$ are:
  $$
  \begin{array}{r@{\;}c@{\;}l}
\overline{X}_1(x) \wedge \cdots \wedge \overline{X}_{n}(x)
  &\rightarrow& H_V(x)  \\[1mm]
\displaystyle
      X_i(x) \wedge \exists R. (X_i(y) \wedge X_j(y)) &\rightarrow& H_{\mn{ok}_i}(x)  \\[1mm]
\displaystyle
  \overline{X}_i(x) \wedge \exists R . (\overline{X}_i(y) \wedge
    X_j(y)) &\rightarrow& H_{\mn{ok}_i}(x)  
  \end{array}
  $$
  $$
  \begin{array}{r@{\;}c@{\;}l}
\displaystyle
  X_i(x) \wedge \exists R . (\overline{X}_i(y) \wedge X_1(y) \wedge
    \cdots \wedge X_{i-1}(y)) &\rightarrow& H_{\mn{ok}_i}(x)  \\[1mm]
\displaystyle
  \overline{X}_i(x) \wedge \exists R . (X_i(y) \wedge X_1(y) \wedge
    \cdots \wedge \overline{X}_{i-1}(y)) &\rightarrow& H_{\mn{ok}_i}(x)  \\[1mm]
  H_{\mn{ok}_1}(x) \wedge  \cdots \wedge H_{\mn{ok}_{n}}(x)
\wedge
  \exists R . H_V(y) &\rightarrow& H_V(x) \\[1mm]
  \exists R . X_i(y) \wedge \exists R . \overline{X}_i(y) &\rightarrow& \bot \\[1mm]
\multicolumn{3}{c}{X_1(x) \wedge \cdots \wedge X_{n}(x)
    \wedge H_V(x)
   \rightarrow  B_1(x) \vee B_2(x)}
  \end{array}
  $$
  where $x$ is universally quantified, $\exists R . \vp(y)$ is an
  abbreviation for $\exists y (R(x,y) \wedge \vp(y))$, and $i$ ranges
  over 1..$n$. 
  Note that $X_1,\dots,X_n$ and $\overline{X}_1,\dots,\overline{X_n}$
  represent a binary counter and that lines two to five implement
  incrementation of this counter.
  The second last formula is necessary to
  avoid that multiple successors of a node interact in undesired ways.
  On instances that contain no $R$-chain of length $2^n$, a
  materialization
  can be constructed by a straightforward chase procedure.
%
\end{example}
We also observe that the ideas from Example~\ref{exlast} gives rise
to a {\sc NExpTime} lower bound.
\begin{theorem}
\label{thm:complexitydepthtwo}
For $\mathcal{ALC}$ ontologies of depth~2, deciding whether
query-evaluation is in {\sc PTime} is {\sc NExpTime}-hard (unless
$\text{\sc Ptime}=\text{\sc coNP}$).
\end{theorem}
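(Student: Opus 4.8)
The plan is to reduce from the {\sc NExpTime}-complete \emph{square tiling problem}: given a tiling system $\mathcal{T}=(T,H,V)$ (with a designated first row) and a number $n$ in unary, decide whether $\mathcal{T}$ tiles the $2^n\times 2^n$ grid. From $\mathcal{T}$ and $n$ we build, in polynomial time, a depth-2 $\mathcal{ALC}$ ontology $\Omc_{\mathcal{T}}$ such that $\Omc_{\mathcal{T}}$ is \emph{not} materializable iff $\mathcal{T}$ tiles the $2^n\times 2^n$ grid. Since depth-2 $\mathcal{ALC}$ is one of the fragments covered by Theorem~\ref{thm:dichotomy}, $\Omc_{\mathcal{T}}$ is materializable iff query evaluation w.r.t.\ $\Omc_{\mathcal{T}}$ is in {\sc PTime}, and is {\sc coNP}-hard otherwise; assuming ${\sc PTime}\neq\coNP$ (so that the two sides are genuinely complementary and the problem is not trivially true), the reduction transfers {\sc NExpTime}-hardness to the complement of the problem in question, which proves the theorem. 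The overall architecture is exactly that of Example~\ref{exlast}: a witness to non-materializability must \emph{contain}, inside the input instance, a structure of exponential size whose correctness is certified by propagating markers through it, and only once certification succeeds is a disjunction $B_1(x)\vee B_2(x)$ derived; moreover the markers are realised as universally quantified formulas $H_P(x)=\forall y(S(x,y)\rightarrow P(y))$, with $\Omc_{\mathcal{T}}$ also asserting $\forall x\exists y(S(x,y)\wedge P(y))$, so that no marker can be preset to true in the input.

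In more detail, the relevant structure is a $2^n\times 2^n$ grid built from two binary relations $X$ (horizontal) and $Y$ (vertical), with the tiles of $T$ recorded by unary relations placed in the input instance (so that tile choice introduces no disjunction into $\Omc_{\mathcal{T}}$). Every grid node carries an $n$-bit column counter and an $n$-bit row counter, and, exactly as in Example~\ref{exlast}, $O(n)$ concept inclusions of depth $\leq 2$ verify that these counters increment correctly along $X$- and $Y$-edges, that the $XY$- and $YX$-successors of each cell coincide, and that every cell carries a tile; further inclusions check that $X$- and $Y$-adjacent cells carry $H$- resp.\ $V$-compatible tiles, and an inclusion of the form $\exists R.P(y)\wedge\exists R.Q(y)\rightarrow\bot$ (as in Example~\ref{exlast}) prevents distinct successors of a node from interfering. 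Starting from the unique cell whose two counters are zero, a marker $H_V$ is propagated backwards along the grid edges, guarded at each step by the $\mn{ok}_i$-markers that witness a correct counter increment and by the compatibility checks; when $H_V$ reaches the cell whose two counters are all ones, the sentence $\cdots\wedge H_V(x)\rightarrow B_1(x)\vee B_2(x)$ fires. Such a cell is reached precisely when the instance contains a full, properly $\mathcal{T}$-tiled $2^n\times 2^n$ grid, and once the disjunction fires at a constant $c$ the instance is a witness to non-materializability, by the same reasoning as for $\Omc_{\text{Mat/PTime}}$ in Example~\ref{ex:examples33} (the certain answer to $B_1(x)$ and the certain answer to $B_2(x)$ are both empty at $c$ while $B_1(x)\vee B_2(x)$ is certain, so no single model can be a materialization). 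Hence $\mathcal{T}$ tiling the grid implies $\Omc_{\mathcal{T}}$ non-materializable. Conversely, if $\mathcal{T}$ does not tile the grid, then on \emph{no} input instance can the propagation ever reach an all-ones cell --- it must stall at an incorrect counter step, at a non-closing cell, or at a tile violation --- so the disjunction never fires, $\Omc_{\mathcal{T}}$ has no disjunctive consequences on any instance, and the (possibly infinite) chase yields a hom-universal model of $\Omc_{\mathcal{T}}$ and any consistent instance, which by Lemma~\ref{lem:canhom} is a materialization; thus $\Omc_{\mathcal{T}}$ is materializable.

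The delicate point, flagged already in Example~\ref{exlast} and in the proof of Theorem~\ref{thm:undecidability}, is the ``no tiling $\Rightarrow$ materializable'' direction: one has to argue that verification can never spuriously complete on \emph{any} instance --- including partial, disconnected, or otherwise malformed ones that are not grids at all --- and that the chase model is a genuine \emph{materialization} rather than merely a model. The universal-formula encoding of the markers is exactly what rules out spurious completion, since a defective part of the input cannot masquerade as verified when the certifying markers cannot be forced true from outside; the materialization claim then reduces, via Lemma~\ref{lem:canhom}, to hom-universality of the chase, which holds precisely because no disjunction is ever triggered and the $H_P$-gadgets are the only source of nondeterminism. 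Controlling the interaction of several successors of the same node (so that a bad sub-structure cannot influence an unrelated part of the verification), and simultaneously keeping every concept inclusion within depth~2 of $\mathcal{ALC}$ --- the counter gadgets and the $H_V$-propagation rule both sit right at depth~2 --- is where the bulk of the bookkeeping lies, and it is this depth constraint that makes the construction genuinely more intricate than the one in Example~\ref{exlast}.
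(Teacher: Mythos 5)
Your high-level strategy (reduce $2^n\times 2^n$ tiling to non-materializability, then invoke the dichotomy for uGF$^-_2(2)$ to transfer hardness) matches the paper's, but the concrete encoding has a fatal gap: you place a genuine two-dimensional $X,Y$-grid in the input and claim that depth-2 $\mathcal{ALC}$ inclusions can verify ``that the $XY$- and $YX$-successors of each cell coincide.'' In plain $\mathcal{ALC}$ --- no inverses, no functionality, no counting --- this is impossible. The cell-closure gadget you are implicitly importing is the one from the proof of Theorem~\ref{thm:undecidability}, and it relies essentially on the axioms $\top\sqsubseteq(\leq 1\,Z)$ for $Z\in\{X,Y,X^-,Y^-\}$; the inclusion $\exists R.P\sqcap\exists R.\overline{P}\sqsubseteq\bot$ from Example~\ref{exlast} only forces all $R$-successors of a node to carry the same unary labels, it cannot detect that two paths end at distinct elements. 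Without closure your completeness direction breaks: a tree-shaped unravelling of the grid (every node has an $X$-child and a $Y$-child, counters increment, $H$ and $V$ hold edge-locally) passes every check you impose, so the marker reaches an all-ones node and the disjunction fires even when $\mathcal{T}$ does \emph{not} tile the grid --- tree-tileability is a polynomial-time condition and is strictly weaker than grid-tileability. Hence your ontology would be non-materializable on negative instances, and the reduction is wrong.

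The paper avoids the 2D grid entirely. It linearizes the grid in row-major order into a single $r$-chain of length $2^{2n}$ carrying a $2n$-bit counter, so that the \emph{horizontal} matching condition becomes a condition on adjacent chain elements and can be checked while propagating the hidden marker $H_V$ along the chain. The \emph{vertical} condition (between positions $2^n$ apart) is not checked by propagation at all: once $H_V$ reaches the top of the chain, the ontology \emph{existentially generates} a fresh $r$-chain that uses disjunctions to guess a position and a $V$-incompatible tile pair at distance $2^n$, i.e., a vertical defect. If the input chain is defective, every such guess can be realized by a homomorphism of the existential chain into the input, so the disjunctions are invisible to CQs and the chase is a materialization; if the input encodes a correct tiling, no defect chain embeds, the disjuncts are homomorphically incomparable and visible, and materializability fails. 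This defect-guessing mechanism --- putting the disjunctions on the existential side and letting embeddability into the input decide whether they matter --- is the idea your proposal is missing, and it is precisely what lets the whole construction live in $\mathcal{ALC}$ of depth~2.
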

We remark that the decidability of {\sc PTime} query evaluation of
\ALC ontologies of depth~2 remains open.

\section{Conclusion}
Perhaps the most surprizing result of our analysis is that it is
possible to escape Ladner's Theorem and prove a {\sc PTime}/{\sc coNP}
dichotomy for query evaluation for rather large subsets of the guarded
fragment that cover almost all practically relevant DL
ontologies. This result comes with a characterization of {\sc PTime}
query evaluation in terms of materializability and unravelling
tolerance, with the guarantee that {\sc PTime} query evaluation
coincides with Datalog$^{\neq}$-rewritability, and with decidability
of (and complexity results for) meta problems such as deciding whether
a given ontology enjoys {\sc PTime} query evaluation.
Our study also shows that when we increase the expressive power in
seemingly harmless ways, then often there is provably no {\sc
  PTime}/{\sc coNP} dichotomy or one obtains CSP-hardness.
%
The proof of the non-dichotomy results comes with a variation of
Ladner's Theorem that could prove useful in other contexts where some
form of precoloring of the input is unavoidable, such as in consistent
query answering \cite{DBLP:conf/icdt/LutzW15}.

There are a number of interesting future research questions. The main
open questions regarding dichotomies are whether the {\sc PTime}/{\sc
  coNP} dichotomy can be generalized from uGF$_{2}^{-}(2)$ to
uGF$^{-}(2)$ and whether the CSP-hardness results can be sharpened to
either CSP-equivalence results (this is known for $\mathcal{ALC}$
ontologies of depth 3 \cite{KR12-csp}) or to non-dichotomy results. 
Also of interest is the complexity of deciding {\sc PTime} query
evaluation for uGF$(1)$, where the characterization of {\sc PTime}
query evaluation via hom-universal models fails. Improving our current
complexity results to tight complexity bounds for {\sc Ptime} query
evaluation for $\mathcal{ALCHIF}$ ontologies of depth~2 and
uGF$_{2}^{-}(2)$ ontologies appears to be challenging as well. It
would also be interesting to study the case where invariance under
disjoint union is not guaranteed (as we have observed, the
complexities of CQ and UCQ evaluation might then diverge), and to add
the ability to declare in an ontology that a binary relation is transitive.

\section{Acknowledgments}
Andr{\'e} Hernich, Fabio Papacchini, and Frank Wolter were supported
by EPSRC UK grant EP/M012646/1. Carsten Lutz was supported by ERC CoG
647289 CODA.


\clearpage

\appendix

\section{Introduction to Description Logic}
We give a brief introduction to the syntax and semantics of DLs and establish their relationship
to the guarded fragment of FO. We consider the DL $\mathcal{ALC}$
and its extensions by inverse roles, role inclusions, qualified number
restrictions, functional roles, and local functionality. Recall that
\emph{$\mathcal{ALC}$-concepts} are constructed according to the rule
$$
C,D\; := \; \top \mid \bot \mid A \mid C\sqcap D \mid C \sqcup D \mid
\neg C \mid \exists R.C \mid \forall R . C
$$
where $A$ ranges over unary relations and $R$ ranges over binary
relations.
DLs extended by \emph{inverse roles} (denoted in the name of a DL
by the letter \Imc) admit, in addition, \emph{inverse relations}
denoted by $R^{-}$, with $R$ a relation. Thus, in $\mathcal{ALCI}$ inverse relations
can be used in place of relations in any $\mathcal{ALC}$ concept.
DLs extended by \emph{qualified number restrictions} (denoted by \Qmc) admit concepts of the form
$(\geq n\; R\; C)$, $(= n\; R\;C)$, and $(\leq n\;R\;C)$, where $n\geq
1$ is a natural number, $R$ is a relation or an inverse relation (if inverse relations are in the original DL), and $C$ is a concept.
When extending a DL with \emph{local functionality} (denoted by $\Fmc_\ell$)
one can use only number restrictions of the form $(\leq 1 \;R\;\top)$ in which
$R$ is a relation or an inverse relation (if inverse relations are in the original DL).
We abbreviate $(\leq 1 \;R\;\top)$ with $(\leq 1R)$
and use $(= 1R)$ as an abbreviation for
$(\exists R.\top) \sqcap (\leq 1 R)$ and $(\geq
2R)$ as an abbreviation for $(\exists R.\top) \sqcap \neg (\leq 1R)$.

In DLs, ontologies are formalized as finite sets of \emph{concept
  inclusions} $C\sqsubseteq D$, where $C,D$ are concepts in the
respective language.  We use $C \equiv D$ as an abbreviation for $C
\sqsubseteq D$ and $D\sqsubseteq C$.  In the DLs extended with \emph{functionality}
(denoted by \Fmc) one
can use \emph{functionality assertions} ${\sf func}(R)$, where
$R$ is a relation or an inverse relation (if present in the original DL).
Such an $R$ is interpreted as a partial function. Extending a DL with
\emph{role inclusions} (denoted by \Hmc) allows one to use expressions of the form $R \sqsubseteq
S$, where $R$ and $S$ are relations or inverse relations (if present in the
original DL), and which state that $R$ is a subset of $S$.

The semantics of DLs is given by interpretations $\Amf$. The
interpretation $C^{\Amf}$ of a concept $C$ in an interpretation $\Amf$
is defined inductively as follows:
\[
  \small
  \begin{array}{@{}r@{\;}c@{\;}l@{\ \ \ \ \ \ \ }r@{\;}c@{\;}l}
    \top^{\Amf} &=& \text{dom}(\Amf) &
    \bot^{\Amf} &=& \emptyset \\[1mm]
    A^{\Amf} &=& \{ a\in \text{dom}(\Amf) \mid A(a)\in \Amf\} &
    (\neg C)^{\Amf} &=& \text{dom}(\Amf)\setminus C^{\Amf}\\[1mm]
    (C \sqcap D)^{\Amf} &=&  C^{\Amf} \cap D^{\Amf} &
    (C \sqcup D)^{\Amf} &=&  C^{\Amf} \cup D^{\Amf} \\[1mm]
    \multicolumn{6}{c}{(\exists
      R.C)^{\Amf} = \{ a\in \text{dom}(\Amf) \mid \exists
      a':\; R(a,a')\in \Amf \mbox{ and } a'\in C^{\Amf}\}}\\[1mm]
    \multicolumn{6}{c}{\hspace{.5cm}(\forall R.C)^{\Amf} = \{ a\in
      \text{dom}(\Amf) \mid \forall a':\; R(a,a')\in \Amf \mbox{
        implies } a'\in A^{\Amf}\}}\\[1mm]
    \multicolumn{6}{c}{(\geq n\;R\;C)^{\Amf} = \{ a\in \text{dom}(\Amf) \mid \lvert\{ b
      \mid R(a,b)\in \Amf \mbox{ and } b \in C^{\Amf}\}\rvert\geq
      n\}}\\[1mm]
    \multicolumn{6}{c}{(\leq n\;R\;C)^{\Amf} = \{ a\in \text{dom}(\Amf) \mid \lvert\{ b
      \mid R(a,b)\in \Amf \mbox{ and } b \in C^{\Amf}\}\rvert\leq
      n\}}\\[1mm]
    \multicolumn{6}{c}{(= n\;R\;C)^{\Amf} = \{ a\in \text{dom}(\Amf) \mid \lvert\{ b
      \mid R(a,b)\in \Amf \mbox{ and } b \in C^{\Amf}\}\rvert =
      n\}}
  \end{array}
\]

\noindent
Then $\Amf$ \emph{satisfies} a
concept inclusion $C \sqsubseteq D$ if $C^\Amf \subseteq D^\Amf$. Alternatively, one
can define the semantics of DLs by translating them into FO;
the following table gives such a translation:
\[
  \small
  \begin{array}{@{}r@{\;}c@{\;}lr@{\;}c@{\;}l}
    \top^*(x) &=& \top &   \bot^*(x) &=& \bot
    \\[1mm]
    A^*(x) &=& A(x) &   (\neg C)^*(x) &=& \neg (C^*(x))
    \\[1mm]
    (C \sqcap D)^*(x) &=& C^*(x) \wedge D^*(x) &
    (C \sqcup D)^*(x) &=& C^*(x) \vee D^*(x) \\[1mm]
    (\exists R . C)^*(x) &=& \multicolumn{4}{l}{\exists y \, (R(x,y)
      \wedge C^*(y))} \\[1mm]
    (\forall R . C)^*(x) &=& \multicolumn{4}{l}{\forall y \, (R(x,y) \rightarrow C^*(y))}\\[1mm]
    (\geq n\;R\;C)^*(x) &=& \multicolumn{4}{l}{\exists^{\geq
      n}y(R(x,y)\land C^*(y))}
  \end{array}
\]
We observe the following relationships between DLs and fragments of the guarded
fragment. For a DL $\Lmc$ and fragment $\Lmc'$ of the guarded fragment we say that an \emph{$\Lmc$ ontology $\Omc$
can be written as an $\Lmc'$ ontology} if the translation given above translates $\Omc$
into an $\Lmc'$ ontology.
\begin{lemma} The following inclusions hold:
\begin{enumerate}
\item Every $\mathcal{ALCHI}$ ontology can be written as a uGF$_{2}$ ontology.
If the ontology has depth~2, then it can be written as a uGF$^{-}_{2}(2)$ ontology.
\item Every $\mathcal{ALCHIF}$ ontology can be written as a uGF$_{2}^{-}(f)$ ontology.
\item Every $\mathcal{ALCHIQ}$ ontology can be written as a uGC$_{2}$ ontology. If the ontology has depth~1,
then it can be written as a uGC$^{-}_{2}(1)$ ontology.
\end{enumerate}
\end{lemma}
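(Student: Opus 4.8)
The plan is to extend the FO-translation $(\cdot)^{*}$ of the appendix so that it also covers inverse roles, role inclusions, (inverse) functionality assertions, and qualified number restrictions, and then to verify axiom by axiom that every admitted axiom lands in the claimed fragment. The one point that needs care when defining the extended translation is the two-variable bound on concept translations, which I would secure by the standard device of defining $(\cdot)^{*}$ as a pair of mutually recursive maps $(\cdot)^{*}_{x}$, $(\cdot)^{*}_{y}$ that alternate the two variables: on atoms and Boolean connectives they behave in the obvious way, while $(\exists R.C)^{*}_{x} = \exists y\,(R(x,y) \wedge C^{*}_{y})$ and $(\forall R.C)^{*}_{x} = \forall y\,(R(x,y) \rightarrow C^{*}_{y})$ (and symmetrically for $(\cdot)^{*}_{y}$, swapping $x$ and $y$), an inverse role $R^{-}$ being handled by transposing the arguments of $R$, and number restrictions being sent to guarded counting quantifiers via $(\geq n\,R\,C)^{*}_{x} = \exists^{\geq n} y\,(R(x,y) \wedge C^{*}_{y})$, $(\leq n\,R\,C)^{*}_{x} = \neg\,\exists^{\geq n+1} y\,(R(x,y) \wedge C^{*}_{y})$, and $(=n\,R\,C)^{*}_{x}$ as the conjunction of the two. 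A routine induction on the structure of the concept then shows: for an $\mathcal{ALCHI}$ concept $C$ (hence also for an $\mathcal{ALCHIF}$ concept), $C^{*}_{x}(x)$ lies in openGF, uses only the variables $x,y$, contains no equality, and has quantifier depth equal to the depth of $C$; for an $\mathcal{ALCHIQ}$ concept the same holds with ``openGF'' replaced by ``openGC$_2$'' and with guarded counting quantifiers contributing to quantifier depth.

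Granting this, the three kinds of axioms translate as follows. A concept inclusion $C \sqsubseteq D$ is sent to $\forall x\,(x=x \rightarrow (C^{*}_{x}(x) \rightarrow D^{*}_{x}(x)))$; here the single outermost universal quantifier carries the equality guard $x=x$ and the matrix is an openGF (resp.\ openGC$_2$) formula whose quantifier depth is the larger of the depths of $C$ and $D$, so the sentence is in uGF$^{-}_{2}$ (resp.\ uGC$^{-}_{2}$) and its uGF/uGC$_2$-depth equals the DL-depth of the inclusion. A role inclusion $R \sqsubseteq S$ is sent to $\forall x\forall y\,(R(x,y) \rightarrow S(x,y))$, a uGF$_{2}$ sentence of depth $0$ whose guard $R(x,y)$ is not an equality; it may equivalently be written as $\forall x\,(x=x \rightarrow \forall y\,(R(x,y) \rightarrow S(x,y)))$, which is a uGF$^{-}_{2}(1)$ sentence, and role inclusions mentioning inverse roles are handled identically after transposing the appropriate atoms. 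Finally $\mathsf{func}(R)$ is sent to the functionality axiom $\forall x\forall y_{1}\forall y_{2}\,((R(x,y_{1}) \wedge R(x,y_{2})) \rightarrow y_{1}=y_{2})$, and $\mathsf{func}(R^{-})$ to the analogous axiom with the arguments of $R$ transposed; both are permitted in the ``$(f)$'' extension.

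Putting these pieces together yields all three items. For item~1: concept inclusions go into uGF$^{-}_{2}$ and role inclusions into uGF$_{2}$, so an arbitrary $\mathcal{ALCHI}$ ontology becomes a uGF$_{2}$ ontology; if in addition the ontology has depth at most~$2$, concept inclusions go into uGF$^{-}_{2}(2)$ and role inclusions can be put in the equality-guarded form above, which lies in uGF$^{-}_{2}(1)$ and hence in uGF$^{-}_{2}(2)$, so the whole ontology becomes a uGF$^{-}_{2}(2)$ ontology. For item~2: an $\mathcal{ALCHIF}$ concept is an $\mathcal{ALCHI}$ concept, so concept and role inclusions go (using the equality-guarded form for role inclusions) into uGF$^{-}_{2}$ while functionality assertions go into functionality axioms, whence the ontology becomes a uGF$^{-}_{2}(f)$ ontology. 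For item~3: the additional number restrictions make $C^{*}_{x}(x)$ an openGC$_2$ formula, so concept inclusions become uGC$^{-}_{2}$ sentences and role inclusions uGC$^{-}_{2}(1)$ sentences, giving a uGC$_{2}$ ontology in general; when the ontology has depth at most~$1$, the matrix of each translated concept inclusion has depth at most~$1$ and each translated role inclusion has depth~$1$, so the ontology becomes a uGC$^{-}_{2}(1)$ ontology.

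The argument is routine; the points that actually require attention are (i) keeping the concept translation within two variables, which is exactly what the alternating mutual recursion above achieves; (ii) the uniform treatment of inverse roles, both inside concepts and in role and functionality inclusions, which amounts to transposing atoms and leaves every guard atomic (or an equality of the permitted outermost form); and (iii) the depth bookkeeping in the $(\cdot)^{-}$ and depth-restricted cases---in particular that the outermost equality-guarded quantifier of a translated concept inclusion does not count towards depth, that pushing a role inclusion under such a quantifier raises the depth by exactly one and hence stays within the bound whenever the bound is at least~$1$, and that $(\leq n\,R\,C)$ translates, via the negation of a single guarded counting quantifier, to a formula whose depth is one more than the depth of $C$, matching its DL-depth rather than exceeding it.
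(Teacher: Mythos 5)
Your proof is correct and takes the standard, and presumably intended, approach: the paper states this lemma without a proof, and the natural argument is exactly the axiom-by-axiom translation you give, with the alternating two-variable translation $(\cdot)^{*}_{x}$, $(\cdot)^{*}_{y}$ securing the two-variable bound, the equality-guarded outermost quantifier for concept and role inclusions landing in the $\cdot^{-}$ fragments, and the depth bookkeeping matching the paper's own remark that an $\mathcal{ALC}$ ontology of depth $n$ is a uGF$^{-}_{2}$ ontology of depth $n$.

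One small point worth tightening: you assert that both $\mathsf{func}(R)$ and $\mathsf{func}(R^{-})$ ``are permitted in the $(f)$ extension,'' but the paper's definition of uGF$_{2}(f)$ literally allows only functionality axioms of the form $\forall x\forall y_{1}\forall y_{2}((R(x,y_{1})\wedge R(x,y_{2}))\rightarrow y_{1}=y_{2})$, which declares $R$ functional in the forward direction. The transposed axiom expressing functionality of $R^{-}$ is not syntactically of that form, so either the definition must be read as implicitly allowing both directions, or you should say that you introduce a fresh binary relation $R'$ together with the two uGF$^{-}_{2}(1)$ sentences axiomatizing $R'$ as the inverse of $R$ and then declare $R'$ functional; either way the claim goes through, but the justification as written is slightly short. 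Everything else—inverse roles by transposition of atoms, role inclusions in their depth-$1$ equality-guarded form (so within uGF$^{-}_{2}(2)$ and uGC$^{-}_{2}(1)$ where needed), and $(\leq n\,R\,C)$ via a single negated guarded counting quantifier so that translation preserves depth exactly—is fine.
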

\section{Introduction to Datalog}
We give a brief introduction to the notation used for Datalog. A \emph{datalog$^{\not=}$ rule} $\rho$ takes the form
$$
S(\vec{x}) \leftarrow R_{1}(\vec{x}_{1})\wedge \cdots \wedge R_{m}(\vec{x}_{m})
$$
where $S$ is a relation symbol, $m\geq 1$, and $R_{1},\ldots, R_{m}$ are either relation symbols or the symbol $\not=$ for inequality.
We call $S(\vec{x})$ the \emph{head} of $\rho$ and $R_{1}(\vec{x}_{1})\wedge \cdots \wedge R_{m}(\vec{x}_{m})$ its \emph{body}.
Every variable in the head of $\rho$ is required to occur in its body. We call a datalog$^{\not=}$ rule that does
not use inequality a \emph{datalog rule}. A \emph{Datalog$^{\not=}$ program} is a finite set $\Pi$ of datalog$^{\not=}$ rules
with a selected \emph{goal relation symbol} \text{goal} that does not occur in rule bodies in $\Pi$ and only in \emph{goal rules}
of the form $\text{goal}(\vec{x})\leftarrow R_{1}(\vec{x}_{1})\wedge \cdots \wedge R_{m}(\vec{x}_{m})$.
The \emph{arity} of $\Pi$ is the arity of its goal relation. A \emph{Datalog program} is a Datalog$^{\not=}$ program
not using inequality.

For every instance $\Dmf$ and Datalog$^{\not=}$ program $\Pi$, we call a model $\Amf$ of $\Dmf$
a \emph{model of $\Pi$} if $\Amf$ is a model of all FO sentences
$\forall \vec{x}\forall \vec{x}_{1}\cdots \forall \vec{x}_{m}(R_{1}(\vec{x}_{1})\wedge \cdots \wedge R_{m}(\vec{x}_{m})\rightarrow S(\vec{x}))$
with $S(\vec{x}) \leftarrow R_{1}(\vec{x}_{1})\wedge \cdots \wedge R_{m}(\vec{x}_{m})\in \Pi$.
We set $\Dmf\models \Pi(\vec{a})$ if $\text{goal}(\vec{a})\in \Amf$ for all models $\Amf$ of $\Dmf$ and $\Pi$.


\clearpage

\section{Proofs for Section 2}
Following the notation introduced after Theorem~\ref{thm:inv} we denote the guarded fragment
with equality by GF$(=)$ and uGF with equality in non-guard positions by uGF$(=)$. In contrast,
GF and uGF denote the corresponding languages without equality in non-guard positions. We use
this notation in the formulation of Theorem~\ref{thm:inv} below.
\begin{trivlist}\item
  \textbf{Theorem~\ref{thm:inv} (restated)}~\itshape
(1) A sentence in GF$(=)$ is invariant under disjoint unions iff it is equivalent to a sentence in uGF$(=)$.

(2) A sentence in GF is invariant under disjoint unions iff it is equivalent to a sentence in uGF.
\end{trivlist}

\begin{proof}
We proof Point~(1). The proof of Point~(2) is essentially the same.
We first show that every sentence in GF$(=)$ is equivalent to a Boolean combination of sentences in uGF$(=)$:
assume a sentence $\varphi$ in GF$(=)$ is given.
First, we may assume that all subformulas~$\exists y(x=y\land
  \psi(x,y))$ of~$\varphi$ are rewritten as~$\neg \forall y(x=y \to
  \overline{\psi}(x,y))$. Then let~$\psi = \forall y (x=y\to \psi'(x,y))$ be a
  subformula of~$\varphi$ with free variable $x$. It follows
  that~$\psi \equiv \psi'(x,x)$. As $\psi'(x,x)$ is a GF$(=)$ formula, any
  occurrence of~$\psi$ in~$\varphi$ can be replaced
  by~$\psi'(x,x)$. Let~$\varphi'$ be the result of substituting all
  subformulas of the form of~$\psi$ in~$\varphi$ with the
  corresponding equivalent GF$(=)$ formula. It follows that~$\varphi \equiv
  \varphi'$ (note that~$\forall x\forall y (x=y \to \psi'(x,y))$ can
  be rewritten as $\forall x (x=x \to \forall y(x=y \to
  \psi'(x,y)))$).

  A sentence~$\psi$ in $\varphi$ is a \emph{simple sentence} if there
  is no strict subformula $\psi'$ of $\psi$ which is a sentence.
  For the second step, let $\varphi_0 = \varphi'$, and let
$$
\varphi_i
  = (\varphi_{i-1}[\psi/\top]\land \psi) \lor
  (\varphi_{i-i}[\psi/\bot]\land \neg\psi)
$$
for some simple sentence~$\psi$ of~$\varphi_{i-1}$ not already substituted in any
  step~$j < i-1$. As at each step any simple sentence is a GF$(=)$
  sentence with guard $R(\vec{x})$ or~$x=x$ and the guarded formula is
  an openGF formula, then any simple sentence is either the
  negation of a uGF$(=)$ sentence or a uGF$(=)$ sentence. Let~$\varphi_n$ be the
  resulting sentence. It follows that~$\varphi \equiv \varphi_n$, and
  $\varphi_n$ is a Boolean combination of uGF$(=)$ sentences.

%

We return to the proof of Theorem~\ref{thm:inv}. It is easy to see that every sentence in uGF$(=)$
is invariant under disjoint unions. For the converse direction assume that a GF$(=)$ sentence $\varphi$ is
invariant under disjoint unions. Then $\varphi$ is equivalent to a Boolean combination of uGF$(=)$ sentences.
Let $\text{cons}(\varphi)$ be the set of all sentences $\chi$ in uGF$(=)$ with $\varphi \models \chi$.
By compactness of FO it is sufficient to show that $\text{cons}(\varphi)\models \varphi$.
If this is not the case, take a model $\Amf_{0}$ of $\text{cons}(\varphi)$ refuting $\varphi$
and take for any sentence $\psi$ in uGF$(=)$ that is not in $\text{cons}(\varphi)$ a model
$\Amf_{\neg\psi}$ satisfying $\varphi$ and refuting $\psi$. Let $\Amf_{1}$ be the disjoint union
of all $\Amf_{\neg \psi}$. By preservation of $\varphi$ under disjoint unions, $\Amf_{1}$ satisfies
$\varphi$. By reflection of $\varphi$ under disjoint unions, the disjoint union $\Amf$ of $\Amf_{0}$
and $\Amf_{1}$ does not satisfy $\varphi$. Thus $\Amf$ satisfies $\varphi$ and $\Amf_{1}$ does not satisfy $\varphi$ but
by construction $\Amf$ and $\Amf_{1}$ satisfy the same sentences in uGF$(=)$. This is impossible since $\varphi$ is equivalent
to a Boolean combination of uGF$(=)$ sentences.
\end{proof}
For the proof of Lemma~\ref{lem:forestmodel} we require the notion of guarded bisimulations
that characterizes the expressive power of GF$(=)$~\cite{DBLP:books/daglib/p/Gradel014}.
To apply guarded bisimulations to characterize the fragment
uGF$(=)$ of GF$(=)$ we modify the notion slightly by demanding that in the back-and-forth condition only
overlapping guarded sets are used. To cover uGC$_{2}(=)$ we also consider guarded bisimulations preserving
the number of guarded sets of the same type that contain a fixed singleton set.
Assume $\Amf$ and $\Bmf$ are interpretations.
A set $I$ of partial isomorphisms $p: \vec{a} \mapsto \vec{b}$ between guarded tuples $\vec{a}$ and $\vec{b}$
in $\Amf$ and $\Bmf$, respectively, is called a \emph{connected guarded bisimulations} if the following
holds for all $p: \vec{a} \mapsto \vec{b}\in I$:
\begin{itemize}
\item for every guarded tuple $\vec{a}'$ in $\Amf$ which has a non-empty intersection with $\vec{a}$ there exists
a guarded tuple $\vec{b}'$ in $\Bmf$ and $p': \vec{a}'\mapsto \vec{b}'\in I$ that coincides with $p$ on the
intersection of $\vec{a}$ and $\vec{a}'$.
\item for every guarded tuple $\vec{b}'$ in $\Bmf$ which has a non-empty intersection with $\vec{b}$ there exists
a guarded tuple $\vec{a}'$ in $\Amf$ and $p': \vec{a}'\mapsto \vec{b}'\in I$ that coincides with $p$ on the
intersection of $\vec{b}$ and $\vec{b}'$.
\end{itemize}
We say that $(\Amf,\vec{a})$ and $(\Bmf,\vec{b})$ are \emph{connected guarded bisimilar} if there exists
a connected guarded bisimulation between $\Amf$ and $\Bmf$ containing $\vec{a} \mapsto \vec{b}$.
\begin{theorem}\label{thm:guardedbisim}
If $(\Amf,\vec{a})$ and $(\Bmf,\vec{b})$ are connected guarded bisimilar and $\varphi(\vec{x})$ is a formula
in openGF, then $\Amf\models \varphi(\vec{a})$ iff $\Bmf\models \varphi(\vec{b})$.
\end{theorem}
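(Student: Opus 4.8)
The plan is to establish the statement by a straightforward induction on the structure of the openGF formula $\varphi(\vec{x})$, in the spirit of the classical invariance of GF under guarded bisimulations \cite{DBLP:books/daglib/p/Gradel014}; essentially all of the content sits in the guarded-quantifier step, together with one elementary observation about openGF. Throughout I fix a connected guarded bisimulation $I$ between $\Amf$ and $\Bmf$ with $p\colon\vec{a}\mapsto\vec{b}\in I$. Since the two back-and-forth clauses in the definition of a connected guarded bisimulation are symmetric, the set $I^{-1}=\{\,q^{-1}\mid q\in I\,\}$ is again a connected guarded bisimulation; hence it suffices to prove, by induction, that $\Amf\models\varphi(\vec{a})$ implies $\Bmf\models\varphi(\vec{b})$ for \emph{all} $p\colon\vec{a}\mapsto\vec{b}\in I$ at once, and the converse then follows by applying this to $I^{-1}$.

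For the base case, $\varphi$ is an atom $R(\vec{y})$ (or, where permitted, an equality $y=y'$) whose variables all appear among $\vec{x}$. Because $p$ is a partial isomorphism between the subinterpretations of $\Amf$ and $\Bmf$ induced by the sets underlying $\vec{a}$ and $\vec{b}$, and is in particular injective, such an atomic fact holds of the corresponding subtuple of $\vec{a}$ in $\Amf$ iff it holds of the corresponding subtuple of $\vec{b}$ in $\Bmf$. The Boolean cases $\neg$, $\wedge$, $\vee$ follow at once from the induction hypothesis, their immediate subformulas again being openGF formulas over a subtuple of $\vec{x}$.

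The one interesting case is a guarded existential quantifier, i.e.\ a subformula $\exists\vec{y}\,(\alpha(\vec{z},\vec{y})\wedge\psi(\vec{z},\vec{y}))$, where $\vec{z}$ lists those variables among $\vec{x}$ that occur free in it. Here I would first record the crucial observation: since every subformula of an openGF formula is open, this subformula is not a sentence, so $\vec{z}\neq\emptyset$. Now assume $\Amf\models\exists\vec{y}\,(\alpha(\vec{a}_{|\vec{z}},\vec{y})\wedge\psi(\vec{a}_{|\vec{z}},\vec{y}))$, as witnessed by a tuple $\vec{c}$ over $\mathrm{dom}(\Amf)$. Because $\alpha$ is a guard, the set $\vec{a}'$ underlying $(\vec{a}_{|\vec{z}},\vec{c})$ is guarded in $\Amf$, and it has nonempty intersection with $\vec{a}$ precisely because $\vec{z}\neq\emptyset$. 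The forth clause of the connected guarded bisimulation then supplies a guarded tuple $\vec{b}'$ in $\Bmf$ and a partial isomorphism $p'\colon\vec{a}'\mapsto\vec{b}'$ in $I$ that agrees with $p$ on $\vec{z}$; setting $\vec{d}:=p'(\vec{c})$, the tuple $\vec{b}'$ lists $(\vec{b}_{|\vec{z}},\vec{d})$ and $\alpha(\vec{b}_{|\vec{z}},\vec{d})$ holds since $p'$ preserves atoms. As $(\Amf,\vec{a}')$ and $(\Bmf,\vec{b}')$ are connected guarded bisimilar via $I$ (using $p'\in I$), the induction hypothesis applied to the smaller formula $\psi$ yields $\Bmf\models\psi(\vec{b}_{|\vec{z}},\vec{d})$, and therefore $\Bmf\models\exists\vec{y}\,(\alpha(\vec{b}_{|\vec{z}},\vec{y})\wedge\psi(\vec{b}_{|\vec{z}},\vec{y}))$. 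A guarded universal quantifier $\forall\vec{y}\,(\alpha\rightarrow\psi)$ is dealt with by rewriting it as $\neg\exists\vec{y}\,(\alpha\wedge\neg\psi)$ and invoking the existential case together with the negation case, noting that $\neg\psi$ is still openGF.

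I expect the main (indeed essentially only) obstacle to be checking that the weakened, \emph{connected} form of guarded bisimulation---which permits only moves between guarded sets that overlap---is sufficient for openGF; this is exactly what the observation $\vec{z}\neq\emptyset$ secures, since in openGF one never quantifies over a completely fresh guarded set, so the new guarded set always meets the old one, and that overlap is precisely the domain on which $p'$ is required to agree with $p$. I would also take a little care with the bookkeeping for tuples versus the underlying guarded sets (ordering of components, repeated variables, and free variables forming a subtuple of $\vec{x}$), but this is entirely routine.
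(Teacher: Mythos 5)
Your proof is correct. The paper itself gives no proof of this theorem; it is stated as a routine adaptation of the classical invariance of GF under guarded bisimulations (cf.\ \cite{DBLP:books/daglib/p/Gradel014}), and your argument is exactly the intended one: the standard structural induction, with the only non-routine point being the one you isolate, namely that openness of every subformula forces the free-variable tuple $\vec{z}$ of any guarded quantifier to be nonempty, so the witnessing guarded set always overlaps the current one and the weakened \emph{connected} back-and-forth clauses suffice. The bookkeeping you flag (the intersection of $\vec{a}$ and $\vec{a}'$ may properly contain $\vec{a}_{|\vec{z}}$, but $p'$ agreeing with $p$ on the full intersection is more than enough; equality atoms are handled since each $p'$ is an injective partial isomorphism) is indeed routine and poses no problem.
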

For uGC$_{2}(=)$ the guarded sets have cardinality at most two. To preserve counting we require the following modified
version of the definition above. A set $I$ of partial isomorphisms $p: \vec{a} \mapsto \vec{b}$ between guarded tuples
$\vec{a}=(a_{1},a_{2})$ and $\vec{b}=(b_{1},b_{2})$ in $\Amf$ and $\Bmf$, respectively, is called a
\emph{counting connected guarded bisimulations} if the following
holds for all $p: (a_{1},a_{2}) \mapsto (b_{1},b_{2})\in I$:
\begin{itemize}
\item for every finite number of distinct guarded tuples $(a_{1},a_{2}')$ in $\Amf$ there exist
(at least) the same number of guarded tuples $(b_{1},b_{2}')$ in $\Bmf$ and
$p': (a_{1},a_{2}')\mapsto (b_{1},b_{2}')\in I$.
\item for every finite number of distinct guarded tuples $(a_{1}',a_{2})$ in $\Amf$ there exist
(at least) the same number of guarded tuples $(b_{1}',b_{2})$ in $\Bmf$ and $p': (a_{1}',a_{2})\mapsto (b_{1}',b_{2})\in I$.
\item for every finite number of distinct guarded tuples $(b_{1},b_{2})$ in $\Bmf$ there exist
(at least) the same number of guarded tuples $(a_{1},a_{2}')$ in $\Amf$ and $p': (a_{1},a_{2}')\mapsto (b_{1},b_{2}')\in I$.
\item for every finite number of distinct guarded tuples $(b_{1}',b_{2})$ in $\Bmf$ there exist
(at least) the same number of guarded tuples $(a_{1}',a_{2})$ in $\Amf$ and $p': (a_{1}',a_{2})\mapsto (b_{1}',b_{2})\in I$.
\end{itemize}
We say that $(\Amf,\vec{a})$ and $(\Bmf,\vec{b})$ are \emph{counting connected guarded bisimilar} if there exists
a counting connected guarded bisimulation between $\Amf$ and $\Bmf$ containing $\vec{a} \mapsto \vec{b}$.
\begin{theorem}\label{thm:countingguardedbisim}
If $(\Amf,\vec{a})$ and $(\Bmf,\vec{b})$ are counting connected guarded bisimilar and $\varphi(\vec{x})$ is a formula
in openGC$_{2}$, then $\Amf\models \varphi(\vec{a})$ iff $\Bmf\models \varphi(\vec{b})$.
\end{theorem}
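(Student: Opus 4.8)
The plan is to prove the statement by induction on the structure of $\varphi$, following exactly the pattern of the proof of Theorem~\ref{thm:guardedbisim} for openGF and adding a single new case for the guarded counting quantifier. Fix a counting connected guarded bisimulation $I$ between $\Amf$ and $\Bmf$. I will prove, simultaneously for all $p\colon\vec{a}\mapsto\vec{b}\in I$ and all openGC$_2$ formulas $\varphi$ whose free variables are among the (at most two) variables named by $\vec a$, that $\Amf\models\varphi(\vec a)$ iff $\Bmf\models\varphi(\vec b)$, where the free variables are interpreted via $p$; in the Boolean and atomic cases the same partial isomorphism $p$ is used on both sides, so the claim specializes to the two bisimilar tuples in the theorem.

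For the base case, an atomic formula $R(\dots)$ over the variables named by $\vec a$ is transferred because $p$ preserves membership in relations in both directions, and an equality $z_1=z_2$ is transferred because $p$ is injective and defined on both named elements, so it preserves equality in both directions. The cases $\neg$, $\wedge$, $\vee$ are immediate from the induction hypothesis applied to the same $p$.

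The only interesting case is $\varphi = \exists^{\geq n} z_1\,(\alpha(z_1,z_2)\wedge\psi(z_1,z_2))$ with $\{z_1,z_2\}=\{x,y\}$ and $\alpha\in\{R(z_1,z_2),R(z_2,z_1)\}$; say $z_2$ is the free variable, named by an element $a$ of $\vec a$ which $p$ matches to an element $b$ of $\vec b$. Suppose $\Amf\models\varphi(a)$, so there are $n$ pairwise distinct elements $c_1,\dots,c_n$ with $\Amf\models\alpha(c_i,a)\wedge\psi(c_i,a)$ for each $i$ (reading $z_1\mapsto c_i$, $z_2\mapsto a$). Since $\alpha$ is a binary atom, each $\{c_i,a\}$ is guarded, so each of these yields a guarded tuple of $\Amf$ (oriented according to $\alpha$), and the $n$ such tuples are pairwise distinct because the $c_i$ are. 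Applying the appropriate one of the four counting back-and-forth conditions of $I$ to $p$ and these $n$ tuples produces $n$ pairwise distinct guarded tuples of $\Bmf$ sharing the coordinate $b$ — say with the varying coordinates $d_1,\dots,d_n$ — together with partial isomorphisms $p_j\colon (c_{i_j},a)\mapsto (d_j,b)\in I$ for suitable indices $i_j$. Each $p_j$ is a partial isomorphism, so $\Bmf\models\alpha(d_j,b)$; and since the free variables of $\psi$ are among those named by $(c_{i_j},a)$, the induction hypothesis applied to $\psi$ and $p_j$ gives $\Bmf\models\psi(d_j,b)$. As the $d_j$ are pairwise distinct, $\Bmf\models\varphi(b)$. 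The converse direction is symmetric, using the remaining two back-and-forth conditions. The degenerate subcase $c_i=a$, where the ``tuple'' $(c_i,a)$ is the singleton guarded set $\{a\}$, is handled identically, the matching partial isomorphism forcing $d_j=b$.

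I expect the only genuine obstacle to be this counting-quantifier case, and within it the bookkeeping that the four back-and-forth clauses really deliver $n$ \emph{distinct} witnesses on the target side and that the guard atom $\alpha$ — together with its orientation $R(z_1,z_2)$ versus $R(z_2,z_1)$ — is transferred correctly; once the definition of counting connected guarded bisimulation is unwound this is routine, and the remainder of the induction is entirely standard.
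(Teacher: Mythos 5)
The paper states Theorem~\ref{thm:countingguardedbisim} without proof; it is a routine variant of the classical guarded-bisimulation characterization of the guarded fragment (cf.\ \cite{DBLP:books/daglib/p/Gradel014}), so there is no in-paper argument to compare against. Your induction is the expected proof and is correct. The strengthened hypothesis quantifying over all $p\in I$ is the right formulation, and in the counting-quantifier case you correctly use the forth clause on the coordinate of $p$ that names the free variable of $\varphi$ to obtain $n$ pairwise distinct guarded tuples $(b,d_j)$ in $\Bmf$, each linked by some $p_j\in I$ to one of the chosen tuples $(c_{i_j},a)$ in $\Amf$; the partial isomorphism $p_j$ transfers the guard atom $\alpha$, the induction hypothesis on $\psi$ with $p_j$ transfers the body, and the $n$ distinct second (or first) coordinates $d_j$ deliver $\Bmf\models\exists^{\geq n}z_1(\alpha\wedge\psi)(b)$. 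The bookkeeping you flag --- which of the four clauses applies depends on whether the free variable is named by the first or second coordinate of $p$ and on the direction of the biconditional, the orientation of $\alpha$ as $R(z_1,z_2)$ versus $R(z_2,z_1)$, and the self-loop $c_i=a$ which forces $d_j=b$ since $p_j$ is a function --- is indeed the only delicate point, and you handle it correctly.
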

\begin{trivlist}\item
  \textbf{Lemma~\ref{lem:forestmodel} (restated)}~\itshape
Let $\Omc$ be a uGF$(=)$ or uGC$_{2}(=)$ ontology, $\Dmf$ a possibly infinite instance, and $\Amf$ a model of $\Dmf$ and $\Omc$.
Then there exists a forest-model $\Bmf$ of $\Dmf$ and $\Omc$ such that there
exists a homomorphism $h$ from $\Bmf$ to $\Amf$ that preserves $\text{dom}(\Dmf)$.
%
\end{trivlist}
\begin{proof}
Assume first a uGF$(=)$ ontology $\Omc$, an instance $\Dmf$, and a model $\Amf$ of $\Omc$ and $\Dmf$ are
given. Let $G$ be a maximal guarded set in $\Dmf$. We unfold $\Amf$ at $G$ into a cg-tree decomposable interpretation
$\Bmf_{G}$ as follows: let $T(G)$ be the undirected tree with nodes
$t=G_{0}G_{1}\cdots G_{n}$, where $G_{i}$, $0\leq i \leq n$,
are maximal guarded sets in $\Amf$, $G_{0}=G$, $G_{1}\not\subseteq \text{dom}(\Dmf)$, and
(a) $G_{i}\not= G_{i+1}$, (b) $G_{i}\cap G_{i+1}\not=\emptyset$, and (c) $G_{i-1}\not= G_{i+1}$.
Let $(t,t')\in E$ if $t'=tF$ for some maximal guarded $F$. Set $\text{tail}(G_{0}\cdots G_{n})=G_{n}$.
Take an infinite supply of copies of any $d\in \text{dom}(\Amf)$. We assume all
copies are in $\Delta_{D}$. We set $e^{\uparrow}=d$ if $e$ is
a copy of $d$. Define instances $\text{bag}(t)$ for $t\in T(G)$ inductively as follows.
$\text{bag}(G)$ is an instance whose domain is a set of copies of elements $d\in G$ such that the
mapping $e \mapsto e^{\uparrow}$ is an isomorphism from $\text{bag}(G)$ onto $\Amf_{|G}$. Assume
$\text{bag}(t)$ has been defined, $\text{tail}(t)=F$, and $\text{bag}(t')$ has not yet been defined for
$t'=tF'\in T(G)$. Then take for any $d\in F'\setminus F$ a fresh copy $d'$ of $d$ and
define $\text{bag}(t')$ with domain $\{ d' \mid d\in F'\setminus F\}\cup
\{ e \in \text{bag}(t) \mid e^{\uparrow} \in F' \cap F)\}$
such that the mapping $e\mapsto e^{\uparrow}$ is an isomorphism from $\text{bag}(t')$ onto $\Amf_{|F'}$.

Let $\Bmf_{G}= \bigcup_{t\in T(G)}\text{bag}(t)$. Now hook $\Bmf_{G}$ to $\Dmf$ at $G$ by identifying
$\text{dom}(\text{bag}(G))$ with $G$ using the isomorphism $e\mapsto e^{\uparrow}$ and let $\Bmf$ be the union of
all $\Bmf_{G}$ with $G$ a maximal guarded set in $\Dmf$.
It is not difficult to prove the following properties of $\Bmf$:
\begin{enumerate}
\item $(T(G),E,\text{bag})$ is a cg-tree decomposition of $\Bmf_{G}$, for every maximal guarded
set $G$ in $\Dmf$;
\item The mapping $h:e\mapsto e^{\uparrow}$ is a homomorphism from $\Bmf$ to $\Amf$ which preserves $\text{dom}(\Dmf)$ and is
an isomorphism on each $\text{bag}(t)$ with $t\in T(G)$ for some maximal guarded set $G$ in $\Dmf$.
\item For any maximal guarded set $G=\{a_{1},\ldots,a_{k}\}$ in $\Dmf$, there is a connected guarded bisimulation between
$(\Bmf,(a_{1},\ldots,a_{k}))$ and $(\Amf,(a_{1},\ldots,a_{k}))$.
\end{enumerate}
It follows from Theorem~\ref{thm:guardedbisim} that $\Bmf$ is a model of $\Dmf$ and $\Omc$ and thus as required.

\medskip

Assume now that $\Omc$ is a uGC$_{2}(=)$ ontology, that $\Dmf$ is an instance, and that $\Amf$ is a model of $\Omc$
and $\Dmf$. We can assume that $\Dmf$ and $\Amf$ use unary and binary relation symbols only.
We have to preserve the number of guarded sets of a given type intersecting with a singleton
and therefore the unfolding is slightly different from the case without guarded counting quantifiers.
Let $c\in \text{dom}(\Dmf)$. We unfold $\Amf$ at $c$ into a cg-tree decomposable $\Bmf_{\{c\}}$.
Let $T(c)$ be the undirected tree with nodes $t=\{c\}G_{1}\cdots G_{n}$, where $G_{i}$, $1\leq i \leq n$,
are maximal guarded sets in $\Amf$, $c\in G_{1}$ and $G_{1}\not\subseteq \text{dom}(\Dmf)$, and
(a) $G_{i}\not= G_{i+1}$, (b) $G_{i}\cap G_{i+1}\not=\emptyset$, and (c) $G_{i-1}\cap G_{i}\not= G_{i+1}\cap G_{i}$.
Let $(t,t')\in E$ if $t'=tF$ for some maximal guarded $F$ in $\Dmf$.
Observe that Condition~(c) is stronger than the corresponding Condition~(c) in
the construction of forest models for uGF ontologies. The new condition ensures that
we do not introduce more successors of the same type when we unfold an interpretation.
Take an infinite supply of copies of any $d\in \text{dom}(\Amf)$. We set, as before, $e^{\uparrow}=d$ if $e$ is
a copy of $d$. Define instances $\text{bag}(t)$ for $t\in T(c)$ inductively as follows.
Let $c'$ be a copy of $c$ and set $\text{bag}(\{c\})= \{P(c') \mid P(c)\in \Amf\}$.
Assume $\text{bag}(t)$ has been defined, $\text{tail}(t)=F$, and $\text{bag}(t')$ has not yet been defined for
$t'=tF'\in T(c)$. Take for the unique $d\in F'\setminus F$ a fresh copy $d'$ of $d$ and
define $\text{bag}(t')$ with domain $\{ d' \}\cup \{ e \in \text{bag}(t) \mid e^{\uparrow} \in F' \cap F)\}$
such that the mapping $e\mapsto e^{\uparrow}$ is an isomorphism from $\text{bag}(t')$ onto $\Amf_{|F'}$.
Let $\Bmf_{\{c\}}= \bigcup_{t\in T(c)}\text{bag}(t)$. Now hook $\Bmf_{\{c\}}$ to $\Dmf$ at $\{c\}$
by identifying $c'$ with $c$ and let
$$
\Bmf = \{ R(\vec{a}) \in \Amf \mid \vec{a}\subseteq \text{dom}(\Dmf)\} \cup \bigcup_{c\in \text{dom}(\Dmf)}\Bmf_{\{c\}}.
$$
We can, of course, present the interpretation $\Bmf$ as an interpretation obtained from hooking interpretations
$\Bmf_{G}$ with $G$ maximal guarded sets to $\Dmf$ by choosing for each $c$ a single maximal guarded set $f(c)$
with $c\in f(c)$ and setting for $G=\{c_{1},c_{2}\}$,
$$
\Bmf_{G}= \{ R(\vec{a})\in \Amf \mid \vec{a} \subseteq G\} \cup \bigcup_{f(c)= G}\Bmf_{\{c\}}
$$
It is not difficult to prove the following properties of $\Bmf$:
\begin{enumerate}
\item $(T(c),E,\text{bag})$ is a guarded tree decomposition of $\Bmf_{\{c\}}$, for every $c\in \text{dom}(\Dmf)$;
\item The mapping $h:e\mapsto e^{\uparrow}$ is a homomorphism from $\Bmf$ to $\Amf$ which preserves $\text{dom}(\Dmf)$ and is
an isomorphism on each $\text{bag}(t)$ with $t\in T(c)$ for some $c\in \text{dom}(\Dmf)$.
\item For any maximal guarded set $G=\{a_{1},a_{2}\}$ in $\Dmf$, there is a counting connected guarded bisimulation between
$(\Bmf,(a_{1},a_{2}))$ and $(\Amf,(a_{1},a_{2}))$.
\end{enumerate}
It follows from Theorem~\ref{thm:countingguardedbisim} that $\Bmf$ is a model of $\Omc$ and $\Dmf$ and thus
as required.
\end{proof}

\section{Proofs for Section 3}
\begin{trivlist}\item
  \textbf{Theorem~\ref{thm:materializabilityeq} (restated)}~\itshape
Let \Omc be a uGF$(=)$ or uGC$_{2}(=)$ ontology and \Mmc a class of
instances. Then the following conditions are equivalent:
\begin{enumerate}
\item $\Omc$ is rAQ-materializable for \Mmc;
\item $\Omc$ is CQ-materializable for \Mmc;
\item $\Omc$ is UCQ-materializable for \Mmc.
\end{enumerate}
\end{trivlist}
\begin{proof}
The equivalence (2) $\Leftrightarrow$ (3) is straightforward. We show (1) $\Rightarrow$ (2).
Assume $\Omc$ is rAQ-materializable for $\Mmc$ and assume the instance $\Dmf\in \Mmc$ is consistent w.r.t.~$\Omc$.
Let $\Amf$ be a rAQ-materialization of $\Omc$ and $\Dmf$. Consider a forest model $\Bmf$ of $\Dmf$ and $\Omc$
such that there is a homomorphism from $\Bmf$ to $\Amf$ preserving $\text{dom}(\Dmf)$ (Lemma~\ref{lem:forestmodel}).
Recall that $\Bmf$ is obtained from $\Dmf$ by hooking cg-tree decomposable $\Bmf_{G}$ to
any maximal guarded set $G$ in $\Dmf$. We show that $\Bmf$ is a CQ-materialization of $\Omc$ and $\Dmf$.
To this end it is sufficient to prove that for any finite subinterpretation $\Bmf'$ of $\Bmf$ and any model
$\Amf'$ of $\Omc$ and $\Dmf$ there exists a homomorphism $h$ from $\Bmf'$ to $\Amf'$ that preserves
$\text{dom}(\Dmf) \cap \text{dom}(\Bmf')$. We may assume that for any maximal guarded set $G$ in $\Dmf$, if
$\text{dom}(\Bmf') \cap G\not=\emptyset$, then
$G \subseteq \text{dom}(\Bmf')$ and that $\Bmf'\cap \Bmf_{G}$ is connected if nonempty.
But then we can regard every $\Bmf' \cap \Bmf_{G}$ as an rAQ $q_{G}$ with answer variables
$G$. From $\Bmf\models q_{G}(\vec{b})$ for $\vec{b}$ a suitable tuple containing all $b\in G$
we obtain $\Amf\models q_{G}(\vec{b})$ since $\Bmf$ is a rAQ-materialization of
$\Dmf$ and $\Omc$. Let $h_{G}$ be the homomorphism that witnesses $\Bmf\models q_{G}(\vec{b})$.
Then $h_{G}$ is a homomorphism from $\Bmf' \cap \Bmf_{G}$ to $\Amf$ that preserves $G$.
The union $h$ of all $h_{G}$, $G$ a maximal guarded set in $\Dmf$, is the desired homomorphism
from $\Bmf'$ to $\Amf'$ preserving $\text{dom}(\Dmf) \cap \text{dom}(\Bmf')$.
\end{proof}

\begin{trivlist}\item
  \textbf{Lemma~\ref{lem:canhom} (restated)}~\itshape
A uGC$_{2}(=)$ ontology is materializable iff it admits hom-universal
models. This does not hold for uGF$(2)$ ontologies.
\end{trivlist}
\begin{proof}
The direction ``$\Leftarrow$'' is straightforward.
Conversely, assume that $\Omc$ is materializable. Let $\Dmf$ be an instance
that is consistent w.r.t.~$\Omc$. By Lemma~\ref{lem:forestmodel} there exists
a forest model $\Bmf$ of $\Dmf$ and $\Omc$ that is a CQ-materialization of $\Omc$
and $\Dmf$. Using a straightforward selective filtration procedure one can show that
there exists $\Bmf'\subseteq \Bmf$ such that $\Bmf'$ is a model of $\Omc$ and $\Dmf$
and for any guarded set $G$ the number of guarded sets
$G'$ with $G\cap G'\not=\emptyset$ is finite. We show that for any model $\Amf$
of $\Omc$ and $\Dmf$ there is a homomorphism from $\Bmf'$ into $\Amf$
that preserves $\mn{dom}(\Dmf)$. Let $\Amf$ be a model of $\Omc$ and $\Dmf$. Again we may
assume that $\Amf$ is a forest model such that for any guarded set $G$ the number of guarded sets
$G'$ with $G\cap G'\not=\emptyset$ is finite. Now, for any finite subset $F$ of $\text{dom}(\Bmf')$
there is a homomorphism $h_{F}$ from $\Bmf'_{|F}$ to $\Amf$ preserving $\text{dom}(\Dmf)\cap F$.
Let $F_{m}$ be the set of all $d\in \text{dom}(\Bmf')$ such that there is a sequence
of at most $m$ guarded sets $G_{0},\ldots,G_{m}$ with $G_{i}\cap G_{i+1}\not=\emptyset$ for $i<m$,
$G_{0}\cap \text{dom}(\Dmf)\not=\emptyset$,
and $d\in G_{m}$. Then each $F_{m}$ is finite and $\bigcup_{m\geq 0}F_{m}= \text{dom}(\Bmf')$.
Now we can construct the homomorphism $h$ from $\Bmf'$ to $\Amf$ as the limit of homomorphisms $h_{F_{n_{m}}}$
from $\Bmf'_{|F_{n_{m}}}$ to $\Amf$ preserving $\text{dom}(\Dmf)$ for an infinite sequence $n_{0}<n_{1}\cdots$
using a standard pigeonhole argument.

\medskip
We construct an ontology $\Omc$ that expresses that every element of a unary relation
$C$ is in the centre of a `partial wheel' represented by a ternary relation $W$.
The wheel can be generated by turning right or left. The two resulting models are
homomorphically incomparable but cannot be distinguished by answers to CQs. Thus, neither of the two
models is hom-universal but one can ensure that $\Omc$ is materializable.
As a first attempt to construct $\Omc$ we take unary relations $L$ (turn left) and $R$ (turn right),
the following sentence that says that one has to choose between $L$ and $R$ when generating the
wheel with cente $C$:
$$
\forall x \big(C(x) \rightarrow ((L(x) \vee R(x)) \wedge \exists y_{1},y_{2} W(x,y_{1},y_{2}))\big)
$$
and the following sentences that generates the wheel accordingly:
$$
\forall x,y,z \big(W(x,y,z) \rightarrow (L(x) \rightarrow  \exists z' W(x,z,z'))\big)
$$
$$
\forall x,y,z \big(W(x,y,z) \rightarrow (R(x) \rightarrow  \exists y' W(x,y',y))\big)
$$
Clearly this ontology does not admit hom-universal models. Note, however, that is also not
materializable since for the instance $\Dmf=\{C(a)\}$
we have $\Omc,\Dmf\models L(a) \vee R(a)$ but we neither have $\Omc,\Dmf\models L(a)$ nor
$\Omc,\Dmf\models R(a)$. The first step to solve this problem is to replace $L(x)$ by $\exists y
(\text{gen}(x,y) \wedge \neg L(x))$ and $R(x)$ by $\exists y (\text{gen}(x,y) \wedge \neg R(x))$
in the sentences above and also add $\forall x \exists y (\text{gen}(x,y) \wedge L(x))$ and
$\forall x \exists y (\text{gen}(x,y) \wedge R(x))$ to $\Omc$. Then a CQ `cannot detect' whether one
satisfies the disjunct $\exists y (\text{gen}(x,y) \wedge \neg R(x))$ or the disjunct
$\exists y (\text{gen}(x,y) \wedge \neg L(x))$ at a given node $x$ in $C$. Unfortunately, the
resulting ontology is still not materializable: if $W(a,b,c)\in \Dmf$ then CQs can detect whether
one introduces a $c'$ with $W(a,c,c')\in \Amf$ or a $b'$ with $W(a,b',b)\in \Amf$. To solve this problem
we ensure that a wheel has to be generated from $W(a,b,c)$ only if $b,c$ are not in the instance $\Dmf$.
In detail, we construct $\Omc$ as follows.
We use unary relations $A$, $L$, and $R$ and binary relations $\text{aux}$ and $\text{gen}$.
For a binary relation $S$ and unary relation $B$ we abbreviate
\begin{eqnarray*}
\forall S.B & := & \forall y (S(x,y) \rightarrow B(y))\\
\exists S.\neg B(x) & := & \exists y (S(x,y) \wedge \neg B(y))
\end{eqnarray*}
Now let $\Omc$ state that every node has $\text{aux}$-successors in $A$,
and $\text{gen}$-successors in $L$ and $R$:
$$
\forall x \exists y (\text{aux}(x,y) \wedge A(y))
$$
and
$$
\forall x \exists y (\text{gen}(x,y) \wedge L(y))
\quad \forall x \exists y (\text{gen}(x,y) \wedge R(y))
$$
We abbreviate
$$
W'(x,y,z) := W(x,y,z) \wedge \forall \text{aux}.A(y) \wedge \forall \text{aux}.A(z)
$$
Next we introduce the disjunction that determines whether one generates the wheel
by turning left or right, as indicated above:
$$
\forall x \big(C(x) \rightarrow (\exists \text{gen}.\neg L(x) \vee \exists \text{gen}.\neg R(x))\big)
$$
The following axiom starts the generation of the wheel. Observe that we use $W'(x,y_{1},y_{2})$ rather than $W(x,y_{1},y_{2})$
to ensure that new individuals are created for $y_{1}$ and $y_{2}$:
$$
\forall x \big(C(x) \rightarrow \exists y_{1},y_{2} W'(x,y_{1},y_{2})\big)
$$
Finally, we turn either left or right:
$$
\forall x,y,z \big(W'(x,y,z) \rightarrow (\exists \text{gen}.\neg L(x) \rightarrow  \exists z' W'(x,z,z'))\big)
$$
$$
\forall x,y,z \big(W'(x,y,z) \rightarrow (\exists \text{gen}.\neg R(x) \rightarrow  \exists y' W'(x,y',y))\big)
$$
This finishes the definition of $\Omc$. $\Omc$ is a uGF$(2)$ ontology. To show that it is materializable
observe that for any instance $\Dmf$ we can ensure that in a model $\Amf$ of $\Dmf$ and $\Omc$ for all $a\in \text{dom}(\Dmf)$ there
exists $b$ with $\text{aux}(a,b)\in \Amf$ and $A(b)\not\in \Amf$.
Thus, the wheel generation sentences apply only to $W(a,b,c)$ with $b,c\not\in \text{dom}(\Amf)$.
It is now easy to prove that CQ evaluation w.r.t.~$\Omc$ is in {\sc PTime}.
On the other hand, for $\Dmf=\{C(a)\}$ there does not exist a model $\Bmf$ of $\Omc$ and $\Dmf$
such that there is a homomorphism from \Bmf into every model of \Omc and \Dmf that preserves $\mn{dom}(\Dmf)$.
\end{proof}

In the proof of Lemma~\ref{thm:nomatlower} (and related proofs below), it will be more convenient to work with a certain
disjunction property instead of with materializability. We now
introduce this property and show the equivalence of the two notions.
Let $\Qmc$ be a class of non-Boolean connected CQs.
An ontology $\Omc$ has the \emph{$\Qmc$-disjunction property} if for all
instances $\Dmf$, queries
$q_{1}(\vec{x}_{1}),\ldots,q_{n}(\vec{x}_{n})\in \Qmc$ and $\vec{d}_{1},\ldots,\vec{d}_{n}$
in $\Dmf$ with $\vec{d}_{i}$ of the same length as $\vec{x}_{i}$:
if $\Omc,\Dmf\models q_{1}(\vec{d}_{1}) \vee \ldots \vee q_{n}(\vec{d}_{n})$,
then there exists $i\leq n$ such that $\Omc,\Dmf\models q_{i}(\vec{d}_{i})$.
\begin{theorem}
\label{thm:disjproperty}
Let $\Qmc$ be a class CQs and \Omc an FO$(=)$-ontology. Then \Omc is
$\Qmc$-materializable iff \Omc has the $\Qmc$-disjunction property.
\end{theorem}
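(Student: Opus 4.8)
The plan is to prove the two implications separately; the forward direction is immediate, and the backward direction is a compactness argument. For ``$\Qmc$-materializability $\Rightarrow$ $\Qmc$-disjunction property'': given an instance $\Dmf$, queries $q_1(\vec x_1),\dots,q_n(\vec x_n)\in\Qmc$, and tuples $\vec d_1,\dots,\vec d_n$ over $\text{dom}(\Dmf)$ of the appropriate arities with $\Omc,\Dmf\models q_1(\vec d_1)\vee\cdots\vee q_n(\vec d_n)$, either $\Dmf$ is inconsistent w.r.t.\ $\Omc$ (and then every $q_i(\vec d_i)$ is vacuously a certain answer) or we fix a $\Qmc$-materialization $\Bmf$ of $\Omc$ and $\Dmf$; as $\Bmf$ is a model of $\Omc$ and $\Dmf$ it satisfies the disjunction, so $\Bmf\models q_i(\vec d_i)$ for some $i$, and since $\vec d_i$ lies in $\text{dom}(\Dmf)$ and $\Bmf$ is a $\Qmc$-materialization this yields $\Omc,\Dmf\models q_i(\vec d_i)$.

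For the converse, assume $\Omc$ has the $\Qmc$-disjunction property and let $\Dmf$ be consistent w.r.t.\ $\Omc$. Working in the first-order language of $\text{sig}(\Omc)\cup\text{sig}(\Dmf)$ extended with a constant symbol for each element of $\text{dom}(\Dmf)$, note that for $q(\vec x)\in\Qmc$ and a tuple $\vec a$ over $\text{dom}(\Dmf)$ of arity that of $q$, the statement $q(\vec a)$ is logically equivalent to a single existential sentence in this language. Form the theory
$$
T \;=\; \Omc \,\cup\, \Dmf \,\cup\, \{\, c \neq c' : c, c' \in \text{dom}(\Dmf),\ c \neq c'\,\} \,\cup\, \{\, \neg q(\vec a) : q(\vec x)\in\Qmc,\ \Omc,\Dmf\not\models q(\vec a)\,\}.
$$
I claim $T$ is consistent. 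Otherwise some finite $T_0\subseteq T$ is inconsistent, involving only finitely many of the negated queries, say $\neg q_1(\vec a_1),\dots,\neg q_n(\vec a_n)$; adjoining all of $\Omc$, $\Dmf$ and the inequalities keeps it inconsistent, and since $\Omc\cup\Dmf$ plus the inequalities is consistent (any paper-model of $\Omc$ and $\Dmf$ respects standard names, and one exists since $\Dmf$ is consistent w.r.t.\ $\Omc$) we get $n\geq 1$ and $\Omc\cup\Dmf\cup\{\text{inequalities}\}\models q_1(\vec a_1)\vee\cdots\vee q_n(\vec a_n)$ as a first-order entailment. Every paper-model of $\Omc$ and $\Dmf$ is among the models of this theory, so $\Omc,\Dmf\models q_1(\vec a_1)\vee\cdots\vee q_n(\vec a_n)$, and the disjunction property gives $\Omc,\Dmf\models q_i(\vec a_i)$ for some $i$ --- contradicting $\neg q_i(\vec a_i)\in T$. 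Thus $T$ has a model, which by the inequalities may be taken with standard names; identifying each constant with the element it denotes turns $\Dmf$ into a subinterpretation, so we obtain a model $\Bmf$ of $\Omc$ and $\Dmf$ such that, for every $q(\vec x)\in\Qmc$ and tuple $\vec a$ over $\text{dom}(\Dmf)$, $\Bmf\models q(\vec a)$ iff $\Omc,\Dmf\models q(\vec a)$ (if the entailment holds, $\Bmf$ satisfies $q(\vec a)$ as it models $\Omc$ and $\Dmf$; if not, $\neg q(\vec a)\in T$). Hence $\Bmf$ is a $\Qmc$-materialization.

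The argument is routine; the only points requiring attention are (i) writing each $q(\vec a)$ as a genuine first-order sentence so that its negation may be placed into a first-order theory and compactness applied, and (ii) the passage between arbitrary first-order structures (used for compactness) and the paper's models of an instance, which must contain $\Dmf$, obey standard names, and have finite signature --- the inequality sentences handle the first two, and for the third one may harmlessly assume all $q\in\Qmc$ use only symbols from the finite signature $\text{sig}(\Omc)\cup\text{sig}(\Dmf)$, since for any other $q$ both sides of the materialization equivalence are false. No property of $\Qmc$ beyond consisting of CQs is used.
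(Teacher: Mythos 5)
Your proof is correct and follows essentially the same compactness argument as the paper's; the paper phrases the nontrivial direction contrapositively (non-materializability implies failure of the disjunction property via a finite unsatisfiable fragment of $\Omc \cup \Dmf \cup \{\neg q(\vec d)\,:\,\Omc,\Dmf \not\models q(\vec d)\}$), while you run the direct implication, but these are the same argument. The one place you go further is in explicitly adding the inequality sentences and restricting to the finite signature $\text{sig}(\Omc)\cup\text{sig}(\Dmf)$ to justify the passage from an arbitrary first-order structure produced by compactness to a bona fide model of $\Dmf$ in the paper's standard-names sense; the paper states that "any satisfying interpretation would be a $\Qmc$-materialization" without making this transfer explicit, so your version is slightly more careful but not a different route.
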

\begin{proof}
  For the nontrivial ``$\Leftarrow$'' direction, let $\Dmf$ be an intance
  that is consistent w.r.t.~$\Omc$ and such that there is no
  $\Qmc$-materialization of $\Omc$ and $\Dmf$.  Then the set of FO-sentences $\Omc\cup \Dmf \cup
  \Gamma$ is not satisfiable, where
$$
    \Gamma = \{ \neg q(\vec{d}) \mid \Omc,\Dmf\not\models q(\vec{d}), \vec{d}\subseteq \text{dom}(\Dmf),
              q\in \Qmc\}
$$
In fact, any satisfying interpretation would be a $\Qmc$-materialization of $\Omc$.
By compactness, there is a finite subset $\Gamma'$ of $\Gamma$
such that $\Omc \cup \Dmf \cup \Gamma'$ is not satisfiable, i.e.
$$
\Omc,\Dmf \models \bigvee_{\neg q(\vec{d}) \in \Gamma'} q(\vec{d})
$$
%
By definition of $\Gamma'$, $\Omc,\Dmf\not\models q(\vec{d})$ for all $q(\vec{d}) \in \Gamma'$. Thus,
$\Omc$ lacks the $\Qmc$-disjunction property.
\end{proof}

\begin{trivlist}\item\textbf{Theorem~\ref{thm:nomatlower} (restated)}~\itshape
Let \Omc be an FO$(=)$-ontology that is invariant under disjoint unions.
If \Omc is not materializable, then rAQ-evaluation w.r.t.\ \Omc is
{\sc coNP}-hard.
\end{trivlist}
\begin{proof}[sketch]
  It was proved in \cite{KR12-csp} that if an \ALC ontology \Omc is not
  ELIQ-materializable, then ELIQ-evaluation w.r.t.\ \Omc is {\sc
    coNP}-hard, where an ELIQ is a rAQ $q(\vec{x})$ such that the
  associated instance $\Dmf_{q(\vec{x})}$ viewed as an undirected
  graph is a tree (instead of cg-tree decomposable) with a single
  answer variable at the root.\footnote{In the context of \ALC,
    relation symbols are at most binary and thus it should be clear
    what `tree' means.}  The proof is by reduction from 2+2-SAT, the
  variant of propositional satisfiability where the input is a set of
  clauses of the form $(p_1 \vee p_2 \vee \neg n_1 \vee \neg n_2)$,
  each $p_1,p_2,n_1,n_2$ a propositional letter or a truth constant
  \cite{Schaerf-93}. The proof of Theorem~\ref{thm:nomatlower} can be
  obtained from the proof in \cite{KR12-csp} by minor modifications, which
  we sketch in the following.

  The proof crucially exploits that if \Omc is not rAQ-materializable,
  then by Theorem~\ref{thm:disjproperty} it does not have the
  rAQ-disjunction property. In fact, we take an instance \Dmf, rAQs
  (resp.\ ELIQs) $q_{1}(\vec{x}_{1}),\ldots,q_{n}(\vec{x}_{n})\in
  \Qmc$, and elements $\vec{d}_{1},\ldots,\vec{d}_{n}$ of $\Dmf$ that
  witness failure of the disjunction property, copy them an
  appropriate number of times, and use the resulting set of gadgets to
  choose a truth value for the variables in the input 2+2-SAT
  formula. The fact that \Omc is invariant under disjoint unions
  ensures that the choice of truth values for different variables is
  independent.  A main difference between ELIQs and rAQs is that rAQs
  can have more than one answer variable. A straightforward way to
  handle this is to replace certain binary relations from the
  reduction in \cite{KR12-csp} with relations of higher arity (these are
  `fresh' relations introduced in the reduction, that is, they do not
  occur in \Omc). To deal with a rAQ of arity $k$, one would use a
  $k+1$-ary relation. However, with a tiny bit of extra effort (and if
  desired), one can replace these relations with $k$ binary
  relations.
  %
\end{proof}

We now turn to the proof of Theorem~\ref{thm:PTime}.

\begin{trivlist}\item\textbf{Theorem~\ref{thm:PTime} (restated)}~\itshape
  Let $\Omc$ be a $\uGF$ or $\uGC$ ontology.
  The following are equivalent:
  \begin{enumerate}
  \item\label{thm:PTime/PTime}
    UCQ-evaluation w.r.t.~$\Omc$ is in \PTime
    iff CQ-evaluation w.r.t.~$\Omc$ is in \PTime
    iff rAQ-evaluation w.r.t.~$\Omc$ is in \PTime.
  \item\label{thm:PTime/Datalog}
    UCQ-evaluation w.r.t.~$\Omc$ is Datalog$^{\neq}$-rewritable
    iff CQ-evaluation w.r.t.~$\Omc$ is Datalog$^{\neq}$-rewritable
    iff rAQ-evaluation w.r.t.~$\Omc$ is Datalog$^{\neq}$-rewritable.
    If \Omc is a uGF ontology,
    then the same is true for Datalog-rewritability.
  \item\label{thm:PTime/coNPhard}
    UCQ-evaluation w.r.t.~$\Omc$ is \coNP-hard
    iff CQ-evaluation w.r.t.~$\Omc$ is \coNP-hard
    iff rAQ-evaluation w.r.t.~$\Omc$ is \coNP-hard.
  \end{enumerate}
\end{trivlist}

The ``only if'' directions of the first two parts of Theorem~\ref{thm:PTime}
and the ``if'' direction of the third part are trivial.
For the non-trivial ``if'' directions of part~\ref{thm:PTime/PTime}
and~\ref{thm:PTime/Datalog}
we decompose UCQs into finite disjunctions of queries $q \wedge \bigwedge_i q_i$
where $q$ is a ``core CQ''
that only needs to be evaluated over the input instance $\Dmf$
(ignoring labeled nulls)
and each $q_i$ is a rAQ.
This is similar to squid decompositions
in~\cite{DBLP:journals/jair/CaliGK13}, but more subtle due to the
presence of subqueries that are not connected to any answer variable
of $q$.
The same decomposition of UCQs is also used in the ``only if'' direction
of part~\ref{thm:PTime/coNPhard}.
The following definition formally introduces decompositions of UCQs.

\begin{definition}\label{def:decomposition}\upshape
  Let $\Omc$ be an ontology and let $q(\vec{x})$ be a UCQ.
  A \emph{decomposition} of $q(\vec{x})$ w.r.t.\ $\Omc$
  is a finite set $\Qmc$ of pairs $(\phi(\vec{y}),\Cmc)$,
  where
  \begin{itemize}
  \item
    $\phi(\vec{y})$ is a conjunction of atoms (possibly equality atoms)
    such that $\vec{y}$ contains all variables in $\vec{x}$;
  \item
    $\Cmc$ is a finite set of CQs $q(\vec{z})$,
    where $\Dmf_q$ is cg-tree decomposable;
  \end{itemize}
  such that the following are equivalent
  for each instance $\Dmf$ and tuple $\vec{a} \in \dom(\Dmf)^{\len{\vec{x}}}$:
  \begin{enumerate}
  \item
    $\Omc,\Dmf \models q(\vec{a})$.
  \item
    For each model $\Amf$ of $\Dmf$ and $\Omc$
    there exists a pair $(\phi(\vec{y}),\Cmc)$ in $\Qmc$
    and an assignment $\pi$ of elements in $\dom(\Dmf)$
    to the variables in $\vec{y}$
    such that
    \begin{itemize}
    \item
      $\pi(\vec{x}) = \vec{a}$,
    \item
      $\Dmf \models \phi(\pi(\vec{y}))$,
      and
    \item
      $\Amf \models q'(\pi(\vec{z}))$ for each $q'(\vec{z})$ in $\Cmc$.
  \end{itemize}
  \end{enumerate}
  The decomposition is \emph{strong} if for each pair $(\phi(\vec{y}),\Cmc)$,
  the set $\Cmc$ consists of rAQs.
\end{definition}

We aim to prove that for every $\uGF$ or $\uGC$ ontology $\Omc$
and each UCQ $q(\vec{x})$
there exists a strong decomposition of $q(\vec{x})$ w.r.t.\ $\Omc$.
As an intermediate step, we prove:

\begin{lemma}\label{weak-decomposition}
  For each $\uGF$ or $\uGC$ ontology $\Omc$ and each UCQ $q(\vec{x})$
  there is a decomposition $\Qmc$ of $q(\vec{x})$ w.r.t.\ $\Omc$.
  Moreover, if $\Omc$ is an $\uGC$ ontology,
  then each rAQ that occurs in a pair of $\Qmc$ has a single free variable.
\end{lemma}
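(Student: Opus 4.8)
The plan is to take for $\Qmc$ an explicit, finite set of ``folds with added guards'' of the disjuncts of $q$, and to verify the two halves of Definition~\ref{def:decomposition} using forest models. Fix the \uGF{} or \uGC{} ontology $\Omc$ and write $q=\bigvee_i q_i$. Relation symbols outside $\sigma:=\text{sig}(\Omc)\cup\text{sig}(q)$ never affect $\Omc,\Dmf\models q(\vec a)$ and may be ignored in $\phi$ and in the queries of the $\Cmc$'s, so we work over the finite signature $\sigma$. Let $\Qmc$ consist of all pairs $(\phi(\vec y),\Cmc)$ over $\sigma$ such that: $\vec x\subseteq\vec y=\mn{vars}(\phi)$; $\Cmc$ is a finite set of CQs $q'(\vec z)$ with $\vec z\subseteq\vec y$ and $\Dmf_{q'}$ cg-tree decomposable; the pair has at most $N(|\Omc|,|q|)$ variables and atoms in total, for a suitable bound $N$; and there is a disjunct $q_i$ of $q$ and a homomorphism from $\Dmf_{q_i}$ into the instance obtained by gluing $\Dmf_\phi$ with all $\Dmf_{q'}$, $q'\in\Cmc$, along their frontiers, sending the answer variables of $q_i$ to $\vec x$. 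Finiteness of $\sigma$ together with the size bound makes $\Qmc$ finite, and the last condition is what drives the easy direction.

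The direction (2) $\Rightarrow$ (1) is routine: given a model $\Amf$ of $\Omc$ and $\Dmf$, the witnessing pair $(\phi,\Cmc)$ and assignment $\pi$ give a homomorphism from $\Dmf_\phi\cup\bigcup_{q'\in\Cmc}\Dmf_{q'}$ into $\Amf$ -- use $\pi$ on $\vec y$ (it lands in $\Dmf\subseteq\Amf$ since $\Dmf\models\phi(\pi(\vec y))$) and the homomorphisms witnessing $\Amf\models q'(\pi(\vec z))$ elsewhere, which agree with $\pi$ on the frontiers since those lie in $\vec y$. Precomposing with the homomorphism $\Dmf_{q_i}\to\Dmf_\phi\cup\bigcup_{q'}\Dmf_{q'}$ coming from membership in $\Qmc$ yields $\Amf\models q_i(\vec a)$ and hence $\Amf\models q(\vec a)$; as $\Amf$ was arbitrary, $\Omc,\Dmf\models q(\vec a)$.

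For (1) $\Rightarrow$ (2), fix a model $\Amf$ of $\Omc$ and $\Dmf$. By Lemma~\ref{lem:forestmodel} there is a forest model $\Bmf=\Dmf\cup\bigcup_G\Bmf_G$ of $\Omc$ and $\Dmf$ (the union over the maximal guarded sets $G$ of $\Dmf$, each $\Bmf_G$ cg-tree decomposable with root bag $G$, meeting $\Dmf$ only in $G$ and any other $\Bmf_{G'}$ only in $G\cap G'$) together with a homomorphism $h\colon\Bmf\to\Amf$ preserving $\mn{dom}(\Dmf)$. From (1), $\Bmf\models q(\vec a)$, so some disjunct has a match $g\colon\Dmf_{q_i}\to\Bmf$ sending the answer variables to $\vec a$. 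I read off a pair: quotient $q_i$ by ``$g$ maps to the same element''; the classes mapped into $\mn{dom}(\Dmf)$ form $\vec y$, and since distinct $\Bmf_G$ meet only inside $\Dmf$, every connected piece of the quotient not contained in $\Dmf$, together with its incident atoms, is mapped into a single $\Bmf_G$. Let $\phi$ be the quotient atoms whose image lies in $\Dmf$; for each relevant $G$ let the corresponding subquery consist of the quotient atoms mapping into $\Bmf_G$ but not into $\Dmf$, with frontier its variables in $\vec y$. Via $g$ this subquery maps injectively on vertices into $\Bmf_G$; to turn it into a cg-tree decomposable query I add, for every bag of the fixed cg-tree decomposition of $\Bmf_G$ whose image is met in more than one vertex, the $\sigma$-atom guarding that bag. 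The augmented subquery then carries all guards of a restriction of that decomposition, hence is cg-tree decomposable, still maps into $\Bmf_G$ fixing its frontier, and still receives the quotient atoms of $q_i$. The resulting pair $(\phi,\Cmc)$ therefore lies in $\Qmc$, and the assignment $\pi$ sending each class of $\vec y$ to its common $g$-value witnesses (2): $\Dmf\models\phi(\pi(\vec y))$ by the choice of $\phi$, and $\Amf\models q'(\pi(\vec z))$ by composing the map of $q'$ into $\Bmf_G$ with $h$, which fixes the frontier as it lies in $\mn{dom}(\Dmf)$.

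The main obstacle is exactly this cg-tree-decomposability step: in the presence of relations of arity at least three, a subset of a guarded set need not be guarded, so neither the $g$-fiber fold of a tree subquery nor its core is automatically cg-tree decomposable -- for instance a triangle of binary atoms mapped onto three elements of a set guarded by a single ternary atom -- which is why the guard-augmentation is needed, and one must check that the inserted guards stay inside $\sigma$ (they do, since a guarded set of the $\sigma$-reduct of $\Bmf_G$ is by definition guarded by a $\sigma$-atom), so that $\Qmc$ stays finite; bounding $N$ then amounts to counting the at most $|\mn{vars}(q_i)|$ variables and the at most one-guard-per-touched-bag atoms of such folds. For the ``moreover'' about \uGC{} ontologies, all relations are binary and every $\Bmf_G$ is a tree of at most two-element bags whose consecutive bags share exactly one element; splitting the tree subqueries into connected components and tracking, for a component reaching into the anonymous part $\mn{dom}(\Bmf_G)\setminus\mn{dom}(\Dmf)$, the single element of $G$ through which it enters, one checks that its frontier is a single variable, the two-element guarded sets of $\Dmf$ requiring some extra care; altogether every rAQ occurring in a pair of $\Qmc$ then has exactly one free variable.
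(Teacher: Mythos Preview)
Your approach is the paper's squid-decomposition strategy: use the forest-model lemma to split a match of a disjunct $q_i$ into a core part landing in $\Dmf$ (becoming $\phi$) and pieces landing in the $\Bmf_G$ (becoming $\Cmc$). You are right to flag that these pieces need not be cg-tree decomposable when relations of arity $\geq 3$ are present---the paper simply asserts ``each $T_i$ is cg-tree decomposable'' without argument, and your triangle observation is precisely the obstruction. Augmenting each piece with guard atoms from the tree decomposition of $\Bmf_G$ is the natural repair, and it is a genuine improvement over the paper's exposition.

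Two gaps remain, however. First, your justification that the inserted guards lie in $\sigma$ does not go through as written: you argue that a guarded set of the $\sigma$-reduct of $\Bmf_G$ is guarded by a $\sigma$-atom, but the bag you need is guarded in $\Bmf_G$ itself, and $\Bmf_G$ is built from $\Amf$, which may contain non-$\sigma$ atoms (for instance a ternary atom of $\Dmf$ using a relation outside $\sigma$ that guards a bag in which the query's binary triangle sits). The fix is to carry your ``work over $\sigma$'' remark through: pass to $\Dmf_\sigma$ and $\Amf_\sigma$ before invoking Lemma~\ref{lem:forestmodel}; certain answers are unchanged, $\Amf_\sigma$ still models $\Omc$ and $\Dmf_\sigma$, and every atom of the resulting forest model---hence every bag guard---then lies in $\sigma$. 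Second, your size bound $N$ is not established: the bags of $\Bmf_G$ met in more than one vertex by the subquery's image can be arbitrarily many (a long path of bags all containing the same two image elements), so ``one guard per touched bag'' does not bound the number of guards by a function of $|q|$, and each guard may introduce fresh variables beyond your claimed $|\mn{vars}(q_i)|$. A bound does exist---merging adjacent bags whose restriction to the image coincides leaves at most $2^{O(|q|)}$ distinct bags---but this must be argued. The ``moreover'' for \uGC{} also needs the observation that atoms of $\Bmf$ with both endpoints in $\mn{dom}(\Dmf)$ but not in $\Dmf$ must be handled separately; ``some extra care'' is not enough.
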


\begin{proof}
  The construction is similar to that of squid decompositions
  in~\cite{DBLP:journals/jair/CaliGK13}.
  We first construct $\Qmc$
  and then show that it is a decomposition of $q(\vec{x})$ w.r.t.\ $\Omc$.
  To simplify the presentation,
  we will assume that each element that occurs in $\Dmf_{q'}$ for a CQ $q'$
  is identified with the variable it represents.
  Consider the set $\Imc$ of all tuples $(q',f,V,H,(T_i)_{i=1}^k)$,
  where
  \begin{itemize}
  \item
    $q'$ is a disjunct of $q$;
  \item
    $f\colon \dom(\Dmf_{q'}) \to \dom(\Dmf_{q'})$ is an idempotent mapping
    such that for each answer variable $x$ of $q'$
    we have that $f(x)$ is an answer variable;
  \item
    $V \subseteq f(\dom(\Dmf_{q'}))$
    and $V$ contains all variables in $f(\vec{x})$;
  \item
    $H,T_1,\dotsc,T_n$ form a partition of $f(\Dmf_{q'})$,
    and each atom in $H$ only contains variables in $V$;
  \item
    each $T_i$ is cg-tree decomposable;
  \item
    if $\Omc$ is a $\uGC$ ontology,
    then each $\dom(T_i)$ contains at most one variable from $V$;
  \item
    $k$ is at most the number of atoms in $q'$.
  \end{itemize}
  Clearly, $\Imc$ is finite.
  Each tuple $I \in \Imc$ contributes exactly one pair $p_I$ to $\Qmc$,
  which we describe next.

  Let $I = (q',f,V,H,(T_i)_{i=1}^k) \in \Imc$.
  Define $\phi(\vec{y})$ to be the conjunction of all atoms of $H$
  and all equality atoms $x = f(x)$ for each variable $x$ in $\vec{x}$
  with $f(x) \neq x$.
  Furthermore, for each $i \in \set{1,\dotsc,k}$,
  let $q_i(\vec{z}_i)$ be the CQ whose atoms are the atoms in $T_i$,
  and whose tuple $\vec{z}_i$ of answer variables
  consists of all variables in $\dom(T_i)$ that also occur in $V$.
  Let
  \[
    p_I = (\phi(\vec{y}),\set{q_i(\vec{z}_i) \mid 1 \leq i \leq k}).
  \]
  We show that $\Qmc = \set{p_I \mid I \in \Imc}$ 
  is a decomposition of $q(\vec{x})$ w.r.t.\ $\Omc$.

  First note that each pair in $\Qmc$ has the form $(\phi(\vec{y}),\Cmc)$,
  where $\phi(\vec{y})$ is a conjunction of atomic formulas
  that contains all variables of $\vec{x}$,
  and $\Cmc$ is a finite set of CQs $q'(\vec{z})$
  whose instance $\Dmf_{q'}$ is cg-tree decomposable.
  Furthermore, if $\Omc$ is formulated in $\uGC$,
  then each CQ in $\Cmc$ has at most one answer variable.

  Next we establish the equivalence between conditions 1 and 2
  from Definition~\ref{def:decomposition}.
  For the direction from 2 to 1,
  suppose that for each model $\Amf$ of $\Dmf$ and $\Omc$
  there exists a pair $(\phi(\vec{y}),\Cmc) \in \Qmc$
  and an assignment $\pi$ of elements in $\dom(\Dmf)$
  to the variables in $\vec{y}$
  such that $\pi(\vec{x}) = \vec{a}$,
  $\Dmf \models \phi(\pi(\vec{y}))$,
  and $\Amf \models q'(\pi(\vec{z}))$ for each $q'(\vec{z}) \in \Cmc$.
  By construction of $(\phi(\vec{y}),\Cmc)$
  this implies $\Omc,\Dmf \models q$.

  For the direction from 2 to 1, suppose that $\Omc,\Dmf \models q(\vec{a})$,
  and consider a model $\Amf$ of $\Dmf$ and $\Omc$.
  By Lemma~\ref{lem:forestmodel},
  there is a forest-model $\Bmf$ of $\Dmf$ and $\Omc$
  and a homomorphism $h$ from $\Bmf$ to $\Amf$ that preserves $\dom(\Dmf)$.
  Let
  \[
    \Bmf = \Dmf \cup \bigcup_{G \in \Gmc} \Bmf_G,
  \]
  where $\Gmc$ is the set of all maximal guarded subsets of $\Dmf$,
  and for each $\Bmf_G$ there is a cg-tree decomposition of $\Bmf_G$
  whose root is labeled with a bag with domain $G$.
  Now, $\Omc,\Dmf \models q(\vec{a})$ implies $\Bmf \models q(\vec{a})$,
  so there is a disjunct $q'(\vec{x})$ of $q(\vec{x})$
  and a homomorphism $\pi$ from $\Dmf_{q'}$ to $\Bmf$
  with $\pi(\vec{x}) = \vec{a}$.
  Note that $\pi$ is the composition of
  \begin{itemize}
  \item
    an idempotent mapping $f\colon \dom(\Dmf_{q'}) \to \dom(\Dmf_{q'})$
    such that $f(\vec{x})$ contains only variables from $\vec{x}$,
    and
  \item
    an injective mapping $g$ from $f(\dom(\Dmf_{q'}))$ to $\Bmf$.
  \end{itemize}
  Let
  \begin{itemize}
  \item
    $V \isdef
    \set{f(x) \in \dom(\Dmf_{q'}) \mid g(f(x)) \in \dom(\Dmf)}$;
  \item
    $H \isdef
    \set{f(\alpha) \mid \alpha \in \Dmf_{q'},\,
      g(f(\alpha)) \in \Dmf \setminus \bigcup_{G \in \Gmc} \Bmf_G}$;
  \item
    $T_G \isdef
    \set{f(\alpha) \mid \alpha \in \Dmf_{q'},\, g(f(\alpha)) \in \Bmf_G}$
    for $G \in \Gmc$.
  \end{itemize}
  Here, for each atom $\alpha = R(x_1,\dotsc,x_k)$
  and mapping $h$ with domain $\{x_1,\dotsc,x_k\}$
  we use the notation $h(\alpha)$
  to denote the atom $R(h(x_1),\dotsc,h(x_k))$.
  Let $T_1,\dotsc,T_k$ be an enumeration of all connected components
  of the sets $T_G$, $G \in \Gmc$.
  Then each $T_i$ is cg-tree decomposable,
  and $k$ is bounded by the number of atoms in $q'$.
  Now,
  \[
    I = (q'(\vec{x}),f,V,H,(T_i)_{i=1}^k) \in \Imc,
  \]
  and $p_I = (\phi(\vec{y}),\Cmc) \in \Qmc$.
  It is straightforward to verify that $\Dmf \models \phi(\pi(\vec{y}))$
  and $\Bmf \models q'(\pi(\vec{z}))$ for each $q'(\vec{z}) \in \Cmc$.
  Since $h$ is a homomorphism from $\Bmf$ to $\Amf$ that preserves $\dom(\Dmf)$,
  we obtain that $\pi' \isdef h \circ \pi$ is an assignment
  of elements in $\dom(\Dmf)$ to the variables in $\vec{y}$
  with $\pi'(\vec{x}) = \vec{a}$,
  $\Dmf \models \phi(\pi'(\vec{y}))$,
  and $\Amf \models q'(\pi'(\vec{z}))$ for each $q'(\vec{z}) \in \Cmc$.

  Altogether,
  this completes the proof that $\Qmc$ is a decomposition of $q(\vec{x})$
  w.r.t.\ $\Omc$.
\end{proof}

Next we show that for $\uGF$ or $\uGC$ ontologies $\Omc$
we can transform any decomposition of a UCQ $q(\vec{x})$ w.r.t.\ $\Omc$
into a strong decomposition of $q(\vec{x})$ w.r.t.\ $\Omc$.
This transformation is based on Lemma~\ref{bounded-depth} below.
Given a $\uGF$ or $\uGC$ ontology $\Omc$
and a decomposition $\Qmc$ of a UCQ w.r.t.\ $\Omc$,
the lemma tells us that for each model $\Amf$ of $\Dmf$ and $\Omc$,
each pair $(\phi(\vec{y}),\Cmc) \in \Qmc$,
each assignment $\pi$ of elements in $\dom(\Dmf)$
to the variables in $\vec{y}$,
and each $q'(\vec{z}) \in \Cmc$,
it suffices to consider homomorphisms from $\Dmf_{q'}$ to $\Amf$
whose image contains an element at a bounded distance
from an element in $\dom(\Dmf)$.
Let us first make more precise what we mean by the distance
between elements in an instance.

\begin{definition}\upshape
  Let $\Amf$ be an instance.
  \begin{itemize}
  \item
    The \emph{Gaifman graph} of an instance $\Amf$ is the undirected graph
    whose vertex set is $\dom(\Amf)$
    and which has an edge between any two distinct $a,b \in \dom(\Amf)$
    if $a$ and $b$ occur together in an atom of $\Amf$.
  \item
    Given $a,b \in \dom(\Amf)$,
    the \emph{distance} from $a$ to $b$ in $\Amf$, denoted by $\dist_\Amf(a,b)$,
    is the length of a shortest path from $a$ to $b$
    in the Gaifman graph of $\Amf$,
    or $\infty$ if there exists no such path.
  \item
    Given $A,B \subseteq \dom(\Amf)$,
    the \emph{distance} from $A$ to $B$ in $\Amf$
    is defined as $\dist_\Amf(A,B) := \min_{a \in A,b \in B} \dist_\Amf(a,b)$.
  \end{itemize}
\end{definition}

\begin{lemma}\label{bounded-depth}
  Let $\Omc$ be a $\uGF$ or $\uGC$ ontology,
  and let $\Qmc$ be a decomposition of a UCQ $q_0(\vec{x})$ w.r.t.\ $\Omc$.
  Then there exists an integer $d \geq 0$
  such that the following is true for all instances $\Dmf$
  and all tuples $\vec{a} \in \dom(\Dmf)^{\len{\vec{x}}}$.

  If $\Omc,\Dmf \models q_0(\vec{a})$,
  then for each model $\Amf$ of $\Dmf$ and $\Omc$
  there exists a pair $(\phi(\vec{y}),\Cmc)$ in $\Qmc$
  and an assignment $\pi$
  of elements in $\dom(\Dmf)$ to the variables in $\vec{y}$ such that
  \begin{itemize}
  \item
    $\pi(\vec{x}) = \vec{a}$,
  \item
    $\Dmf \models \phi(\pi(\vec{y}))$, and
  \item
    for each $q(\vec{z}) \in \Cmc$
    there is a homomorphism $h$ from $\Dmf_q$ to $\Amf$
    such that $h(\vec{z}) = \pi(\vec{z})$
    and the distance from $\dom(h(\Dmf_q))$ to $\dom(\Dmf)$ in $\Amf$
    is at most $d$.
  \end{itemize}
\end{lemma}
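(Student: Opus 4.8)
The plan is to first dispose of the CQs $q(\vec z)\in\Cmc$ with $\vec z$ non-empty and then concentrate on the Boolean ones. If $\vec z\neq\emptyset$ and $h$ is any homomorphism witnessing $\Amf\models q(\pi(\vec z))$, then $h(\vec z)=\pi(\vec z)\subseteq\text{dom}(\Dmf)$; since $\Dmf_q$ is cg-tree decomposable, hence connected, $\text{dom}(h(\Dmf_q))$ is a connected subset of $\Amf$ meeting $\text{dom}(\Dmf)$, so $\dist_\Amf(\text{dom}(h(\Dmf_q)),\text{dom}(\Dmf))=0$. Thus, writing $N$ for the maximal number of atoms of any CQ occurring in a pair of the (finite) decomposition $\Qmc$, it suffices to produce $d\geq N$ such that every Boolean CQ $b$ occurring in $\Qmc$ that is satisfied in $\Amf$ is already satisfied by a homomorphism whose image is within distance $d$ of $\text{dom}(\Dmf)$, and then to read off the lemma from the equivalence of conditions~1 and~2 in Definition~\ref{def:decomposition}.

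To get at the Boolean CQs I would pass from $\Amf$ to a forest model. By Lemma~\ref{lem:forestmodel} there is a forest model $\Bmf$ of $\Dmf$ and $\Omc$ together with a homomorphism $g\colon\Bmf\to\Amf$ preserving $\text{dom}(\Dmf)$; moreover $g$ witnesses a (counting) connected guarded bisimulation mapping the copy of each maximal guarded set of $\Dmf$ in $\Bmf$ to the guarded set itself in $\Amf$. Homomorphisms do not increase Gaifman distance and $g$ is the identity on $\text{dom}(\Dmf)$, so a homomorphism from $\Dmf_b$ into $\Bmf$ with image within distance $d$ of $\text{dom}(\Dmf)$ composes with $g$ to one into $\Amf$ with the same property. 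Hence it is enough to realise each satisfied Boolean $b$ within distance $d$ of $\text{dom}(\Dmf)$ in the forest model $\Bmf$.

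The heart of the argument is that the tree parts $\Bmf_G$ of $\Bmf$ are \emph{regular} over the finite signature $\Sigma_0:=\text{sig}(\Omc)\cup\text{sig}(q_0)$. Let $k$ be the depth of $\Omc$, put $k':=k+N+1$, and let $M$ be the (finite) number of $\Sigma_0$-guarded types of rank $k'$ — a quantity depending only on $\Omc$ and $\Qmc$. Since $\Bmf$ is guarded bisimilar to $\Amf$, the rank-$k'$ $\Sigma_0$-type of a bag of $\Bmf_G$ coincides with that of the corresponding guarded set of $\Amf$, so along any branch of $\Bmf_G$ of length exceeding $M$ some rank-$k'$ $\Sigma_0$-type repeats. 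Cutting out the segment between two such occurrences — identifying the two guarded sets by the atomic-type isomorphism contained in the common $k'$-type and grafting the lower subtree onto the higher node — yields a forest interpretation which (i) is still a model of $\Dmf$ and $\Omc$, being rank-$k$ $\Sigma_0$-guarded bisimilar to $\Bmf$ at every guarded tuple (by Theorem~\ref{thm:guardedbisim}, resp.\ Theorem~\ref{thm:countingguardedbisim} in the counting case, together with invariance under disjoint unions), (ii) still admits a $\text{dom}(\Dmf)$-preserving homomorphism to $\Amf$, and (iii) brings any realisation of a Boolean CQ that lay below the cut strictly closer to $\text{dom}(\Dmf)$. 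Iterating until no branch of any tree part exceeds length $M$ gives a model in which every element is within distance $M\cdot r$ of $\text{dom}(\Dmf)$, $r$ bounding the arity of $\Omc$; so every satisfied Boolean $b$ is realised within distance $d:=M\cdot r+N$ of $\text{dom}(\Dmf)$, a bound depending only on $\Omc$ and $\Qmc$.

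I expect the main obstacle to be step~(ii): carrying out the cutting so that it stays compatible with a homomorphism into $\Amf$ while keeping $d$ independent of $\Amf$. The resolution is to work with $\Sigma_0$-guarded-bisimulation types of guarded sets of $\Amf$ — of which there are boundedly many, independently of $\Amf$ — rather than with types over the unbounded signature of $\Amf$, and to check carefully that the re-gluing respects both satisfaction of $\Omc$ and the tracking of $\Amf$-elements along the graft. A secondary subtlety, specific to \uGC, is that cutting must not alter the multiplicities of $R$-successors, which is exactly why the uGC$_2$-style bisimulation of Theorem~\ref{thm:countingguardedbisim} and the associated unravelling are used in place of the plain guarded versions.
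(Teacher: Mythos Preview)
Your reduction to the Boolean pieces and the passage to a forest model $\Bmf$ with a $\dom(\Dmf)$-preserving homomorphism into $\Amf$ are both fine, and the idea of exploiting type repetition along branches of $\Bmf_G$ is exactly the right ingredient. However, the central step ``iterating until no branch of any tree part exceeds length $M$'' does not terminate, and this is where the argument breaks down. The trees $\Bmf_G$ are typically infinite (any existential in $\Omc$, say $\forall x\,\exists y\,R(x,y)$, forces unbounded branches), and grafting the subtree rooted at the deeper repeat onto the shallower one removes only the finite segment between them: the grafted subtree is itself infinite, so the resulting branch is still infinite. No finite iteration yields a forest of bounded depth, and any limit of an $\omega$-iteration of such shortenings would in general fail to be a model of~$\Omc$. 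Your anticipated difficulty (ii) with tracking a homomorphism into $\Amf$ through the cuts is a symptom of the same problem: to preserve the hom one would have to match $\Amf$-targets across the graft, which requires types that record the $\Amf$-image and hence a bound depending on~$\Amf$.

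The paper's proof uses the same pumping ingredient but in the \emph{opposite} direction and inside a contradiction argument, which simultaneously removes the need to bound the tree and the need to track~$\Amf$. One supposes $\Qmc_d(\Bmf)=\emptyset$, picks a realisation of a problematic Boolean CQ in some subtree at depth $\geq d$, finds two bags $t_i,t_j$ ($i<j$) of the same extended type along the path, and replaces the subtree at $t_j$ by a fresh copy of the \emph{larger} subtree at $t_i$. This lengthens the tree and pushes that realisation strictly further from $\dom(\Dmf)$ while (by the type match) keeping the interpretation a model of $\Omc$ and preserving the answers to all non-Boolean pieces. Iterating produces forest models $\Bmf_i$ with $\Qmc_{d+i}(\Bmf_i)=\emptyset$ that agree on ever larger neighbourhoods of $\dom(\Dmf)$, and compactness then yields a single model $\Bmf'$ of $\Dmf$ and $\Omc$ with $\Qmc_e(\Bmf')=\emptyset$ for all $e$, i.e.\ $\Bmf'\not\models q_0(\vec a)$, contradicting $\Omc,\Dmf\models q_0(\vec a)$. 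No homomorphism into $\Amf$ needs to be maintained during this process; one only uses it once at the start to transfer the conclusion $\Qmc_d(\Bmf)\neq\emptyset$ back to $\Amf$.
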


\begin{proof}
  For each model $\Amf$ of $\Dmf$ and $\Omc$ and each positive integer $d$,
  let $\Qmc_d(\Amf)$ be the set of all triples $(\phi(\vec{y}),\Cmc,\pi)$
  such that $(\phi(\vec{y}),\Cmc) \in \Qmc$
  and $\pi$ is an assignment of elements in $\dom(\Dmf)$
  to the variables in $\vec{y}$ such that
  \begin{itemize}
  \item
    $\pi(\vec{x}) = \vec{a}$,
  \item
    $\Dmf \models \phi(\pi(\vec{y}))$, and
  \item
    for each $q(\vec{z}) \in \Cmc$
    there is a homomorphism $h$ from $\Dmf_q$ to $\Amf$
    such that $h(\vec{z}) = \pi(\vec{z})$
    and the distance from $\dom(h(\Dmf_q))$ to $\dom(\Dmf)$ in $\Amf$
    is at most $d$.
  \end{itemize}
  We show that there exists a constant $d \geq 0$ with the following property:
  if $\Omc,\Dmf \models q_0(\vec{a})$,
  then for each model $\Amf$ of $\Dmf$ and $\Omc$
  we have $\Qmc_d(\Amc) \neq \emptyset$.

  The constant $d$ depends on the number of $C$-types, which we define next.
  Let $\Qmc'$ be the set of all \emph{Boolean} CQs that occur in the set $\Cmc$
  of some pair $(\phi(\vec{y}),\Cmc) \in \Qmc$.
  For the definition of $C$-types we assume that each $q \in \Qmc'$
  is a GF sentence (if $\Omc$ is formulated in ${\uGF}$),
  or a GC$_2$ sentence (if $\Omc$ is formulated in ${\uGC}$).
  We can assume this w.l.o.g.\ because for each $q \in \Qmc'$
  the instance $\Dmf_q$ has a cg-tree decomposition,
  which can be used to construct a GF or GC$_2$ sentence
  that is equivalent to $q$.
  Now, let $\cl(\Omc,q)$ be the smallest set satisfying:
  \begin{itemize}
  \item
    $\Omc \cup \Qmc' \subseteq \cl(\Omc,q)$;
  \item
    $\cl(\Omc,q)$ contains one atomic formula $R(\vec{x})$
    for each relation symbol $R$ that occurs in $\Omc$ or $q$,
    where $\vec{x}$ is a tuple of distinct variables;
  \item
    $\cl(\Omc,q)$ is closed under subformulas and single negation,
    where the subformulas
    of a formula $\phi = Q \vec{x}\, \psi$ with a quantifier $Q$
    are $\phi$ itself and $\psi$.
  \end{itemize}
  Given a set $C \subseteq \Delta_D$,
  a \emph{$C$-type} is a set of formulas $\phi(\vec{c})$,
  where $\phi(\vec{x}) \in \cl(\Omc,q)$
  and $\vec{c}$ is a tuple of constants in $C$ of the appropriate length.
  Let $w$ be the maximum arity of a relation symbol in $\Omc$ or $q$,
  and define $\tau$ to be the number of $C$-types for a set $C$ of size $2w$.
  Note that $\tau$ depends only on $\Omc$ and $q$,
  and is bounded by $2^{\card{\cl(\Omc,q)} \cdot (2w)^w}$.
  Fix
  \[
    d := s + d^* \quad \text{with} \quad d^* \isdef \tau^2+1,
  \]
  where $s$ is the maximum number of atoms in a non-Boolean CQ
  that occurs in $\Cmc$ for some pair $(\phi(\vec{y},\Cmc) \in \Qmc$.
  This finishes the definition of the constant $d$.

  Suppose now that $\Omc,\Dmf \models q_0(\vec{a})$.
  We aim to show that each model $\Amf$ of $\Dmf$ and $\Omc$
  satisfies $\Qmc_d(\Amc) \neq \emptyset$.
  Let $\Amf$ be a model of $\Dmf$ and $\Omc$.
  By Lemma~\ref{lem:forestmodel},
  there is a forest-model $\Bmf$ of $\Dmf$ and $\Omc$
  and a homomorphism $g$ from $\Bmf$ to $\Amf$ that preserves $\dom(\Dmf)$.
  We show that $\Qmc_d(\Bmc) \neq \emptyset$.
  Since $g$ preserves CQs and does not increase distances
  (i.e., for all $a,b \in \dom(\Bmf)$
   we have $\dist_\Amf(g(a),g(b)) \leq \dist_\Bmf(a,b)$),
  this implies $\Qmc_d(\Amc) \neq \emptyset$, as desired.

  For a contradiction, suppose that $\Qmc_d(\Bmf) = \emptyset$.
  We use $\Bmf$ to construct a model of $\Dmf$ and $\Omc$
  that does not satisfy $q_0(\vec{a})$.
  This contradicts $\Omc,\Dmf \models q_0(\vec{a})$,
  and finishes the proof.
  The following is the core of the construction.

  \begin{trivlist}\item\textit{Claim.~}\itshape
    Let $\Cmf$ be a forest-model of $\Dmf$ and $\Omc$
    and let $e \geq d$ be such that $\Qmc_e(\Cmf) = \emptyset$.
    Then there is a forest-model $\Cmf'$ of $\Dmf$ and $\Omc$
    with $\Qmc_{e+1}(\Cmf') = \emptyset$,
    and $\Cmf'$ agrees with $\Cmf$ on all elements
    at distance at most $e-d^*+1$ from $\dom(\Dmf)$.
  \end{trivlist}

  \begin{trivlist}\item\textit{Proof.}
    Recall the definition of the set $\Qmc'$
    given at the beginning of the proof of Lemma~\ref{bounded-depth}.
    Let $X$ be the set of all $q \in Q'$
    such that for each homomorphism $h$ from $\Dmf_q$ to $\Cmf$,
    \[
      \dist_\Cmf(\dom(h(\Dmf_q)),\dom(\Dmf)) \geq e+1.
    \]
    For each instance $\Cmf'$ and each integer $e'$,
    let $H_{e'}(\Cmf')$ be the set of all pairs $(q,h)$
    such that $q \in X$
    and $h$ is a homomorphism from $\Dmf_q$ to $\Cmf'$
    with
    \[
      \dist_{\Cmf'}(\dom(h(\Dmf_q)),\dom(\Dmf)) \leq e'.
    \]
    Pick a forest-model $\Cmf'$ of $\Dmf$ and $\Omc$ such that
    \begin{enumerate}
    \item
      $H_e(\Cmf')$ is empty;
    \item
      $H_{e+1}(\Cmf')$ is inclusion-minimal;
    \item
      for each $(\phi(\vec{y}),\Cmc) \in \Qmc$,
      each assignment $\pi$ with $\pi(\vec{x}) = \vec{a}$
      and $\Dmf \models \phi(\pi(\vec{y}))$,
      and each non-Boolean $q(\vec{z}) \in \Cmc$
      we have $\Cmf \models q(\pi(\vec{z}))$
      iff $\Cmf' \models q(\pi(\vec{z}))$;
    \item
      $\Cmf'$ agrees with $\Cmf$ on all elements
      at distance at most $e-d+1$ from $\dom(\Dmf)$.
    \end{enumerate}
    Such a model exists since $\Cmf$ is a forest-model of $\Dmf$ and $\Omc$
    that satisfies conditions~1 and~3.
    We show that $H_{e+1}(\Cmf') = \emptyset$.
    This implies $\Qmc_{e+1}(\Cmf') = \emptyset$ as follows.
    Suppose, for a contradiction, that $H_{e+1}(\Cmf') = \emptyset$
    and $\Qmc_{e+1}(\Cmf') \neq \emptyset$.
    Then any $(\phi(\vec{y}),\Cmc,\pi) \in \Qmc_{e+1}(\Cmf')$
    satisfies $\pi(\vec{x}) = \vec{a}$, $\Dmf \models \phi(\pi(\vec{y}))$,
    and $\Cmf \models q(\pi(\vec{z}))$ for all non-Boolean $q(\vec{z}) \in \Cmc$
    (by point~3).
    But then, there is a Boolean $q \in \Cmc$ with $q \in X$
    (as otherwise $(\phi(\vec{y}),\Cmc,\pi) \in \Qmc_{e}(\Cmf)$).
    Since by $(\phi(\vec{y}),\Cmc,\pi) \in \Qmc_{e+1}(\Cmf')$
    there exists a homomorphism $h$ from $\Dmf_q$ to $\Cmf'$
    such that the distance from $\dom(h(\Dmf_q))$ to $\dom(\Dmf)$
    is at most $e+1$,
    we obtain $(q,h) \in H_{e+1}(\Cmf') = \emptyset$,
    a contradiction.
    Hence, it suffices to show that $H_{e+1}(\Cmf') = \emptyset$.

    For a contradiction, suppose that $H_{e+1}(\Cmf') \neq \emptyset$.
    We construct a new forest-model $\Cmf''$ of $\Dmf$ and $\Omc$
    that satisfies conditions~1,~3, and~4,
    but with $H_{e+1}(\Cmf'') \subsetneq H_{e+1}(\Cmf')$.
    This will contradict that $H_{e+1}(\Cmf')$ is inclusion-minimal.
    The construction is based on a pumping argument.

    By definition we have $\Cmf' = \Dmf \cup \bigcup_{G \in \Gmc} \Cmf'_G$,
    where $\Gmc$ is the set of all maximal guarded subsets of $\Dmf$,
    and for each $G \in \Gmc$
    there is a cg-tree decomposition $(T_G,E_G,\bag_G)$ of $\Cmf'_G$
    whose root $r_G$ satisfies $\dom(\bag_G(r_G)) = G$.
    For each node $t \in T_G$, let $\bag_G^*(t)$ be the union of $\bag_G(t')$,
    where $t'$ ranges over the descendants of $t$ in $(T_G,E_G)$,
    including $t$.

    Observe that for each $(q,h) \in H_{e+1}(\Cmf')$
    there is a $G \in \Gmc$ and a node $t \in T_G$
    at depth at least $e$ in the tree $(T_G,E_G)$
    such that $h(\Dmf_q) \subseteq \bag^*_G(t)$.
    Indeed, since
    \begin{align}
      \label{bounded-depth/1}
      \dist_{\Cmf'}(\dom(h(\Dmf_q)),\dom(\Dmf)) \geq e+1,
    \end{align}
    the set $h(\Dmf_q)$ does not contain any constant from $\dom(\Dmf)$.
    Moreover, since $\Dmf_q$ is connected,
    there must be a $G \in \Gmc$ and a node $t \in T_G$
    such that $h(\Dmf_q) \subseteq \bag^*_G(t)$.
    This node $t$ can be chosen to have depth at least $e$ in $(T_G,E_G)$,
    because otherwise we could construct a path of length at most $e$
    from an element in $\dom(h(\Dmf_q))$ to an element in $\dom(\Dmf)$
    in the Gaifman graph of $\Cmf'$,
    contradicting~\eqref{bounded-depth/1}.

    Pick any $G \in \Gmc$ and $t \in T_G$ at depth $e$ in $(T_G,E_G)$
    such that for some $(q,h) \in H_{e+1}(\Cmf')$
    we have $h(\Dmf_q) \subseteq \bag^*_G(t)$.
    Let $t_0,t_1,\dotsc,t_{d^*}$ be the last $d^*+1$ nodes
    on the path from $r_G$ to $t$ in $(T_G,E_G)$,
    and define $C_i \isdef \dom(\bag_G(t_i))$
    as well as $C_i^+ \isdef C_{i-1} \union C_i$.
    For each $i \in \set{1,\dotsc,d^*}$,
    define the type and extended type of $t_i$ as follows:
    \begin{align*}
      \theta_i
      & \isdef \set{\phi(\vec{a}) \mid \phi(\vec{x}) \in \cl(\Omc,q),\,
          \vec{a} \in C_i^{\len{\vec{x}}},\, \Cmf' \models \phi(\vec{a})},
      \\
      \theta_i^+
      & \isdef \set{\phi(\vec{a}) \mid \phi(\vec{x}) \in \cl(\Omc,q),\,
          \vec{a} \in (C_i^+)^{\len{\vec{x}}},\, \Cmf' \models \phi(\vec{a})}.
    \end{align*}
    Note that $\theta_i$ is a $C_i$-type and $\theta_i^+$ is a $C_i^+$-type.
    By our choice of $d^*$,
    there are $1 \leq i < j \leq d^*$
    and a bijective mapping $f\colon C_i^+ \to C_j^+$
    with $f(\theta_i^+) = \theta_j^+$ and $f(\theta_i) = \theta_j$,
    where $f(\theta_i^+)$ is obtained from $\theta_i^+$
    by replacing each constant $a$ that occurs in it by $f(a)$,
    and likewise for $f(\theta_i)$.
    In particular, $f$ is an isomorphism from $\bag_G(t_i)$ to $\bag_G(t_j)$.
    We extend $f$ to an injective mapping with domain $\dom(\bag_G^*(t_i))$
    such that each element that does not occur in $C_i$
    is mapped to an element in $\Delta_D \setminus \dom(\Cmf')$.

    We are now ready to construct $\Cmf''$.
    First, define
    \[
      \Cmf''_G \isdef (\Cmf'_G \setminus \bag^*_G(t_j)) \union f(\bag^*_G(t_i)).
    \]
    Note that a cg-tree decomposition of $\Cmf''_G$
    can be obtained from $(T_G,E_G,\bag_G)$
    by removing the subtree of $(T_G,E_G)$ rooted at $t_j$,
    adding a fresh copy of the subtree rooted at $t_i$,
    making its root a child of $t_{j-1}$,
    and renaming each element $a$
    that occurs in the bag of a node in the new subtree to $f(a)$.
    The root of this cg-tree decomposition is the root of $(T_G,E_G,\bag_G)$,
    and is therefore labeled with a bag with domain $G$.
    We let
    \[
      \Cmf'' \isdef \Dmf \union \Cmf''_G \union
      \bigcup_{G' \in \Gmc \setminus \set{G}} \Cmf'_{G'}.
    \]
    It is straightforward to verify that $\Cmf''$
    is a forest-model of $\Dmf$ and $\Omc$
    that has the desired properties.
    For point 4, note that $\Cmf''$ agrees with $\Cmf'$
    on all elements whose distance from $\dom(\Dmf)$ in $\Cmf''$
    is at most the depth of $t_j$ in $(T_G,E_G)$,
    which is at least $e-d^*+1$.
    Note that point 3 follows from point 4,
    because $e-d^*+1 \geq s+1$
    and $s$ was chosen in such a way that any non-Boolean CQ
    that occurs in $\Cmc$ for some $(\phi(\vec{y}),\Cmc) \in \Qmc$
    and that has a match into $\Cmf''$
    has a match into the subinstance of $\Cmf''$
    induced by all elements at distance at most $s$ from $\dom(\Dmf)$.
    \hfill$\lrcorner$
  \end{trivlist}

  To complete the proof of Lemma~\ref{bounded-depth},
  we apply the claim repeatedly to $\Bmf$
  to obtain a sequence $\Bmf_0,\Bmf_1,\Bmf_2,\dotsc$
  of forest models $\Bmf_i$ of $\Dmf$ and $\Omc$
  such that for each $i \geq 0$,
  \begin{itemize}
  \item
    $\Qmc_{d+i}(\Bmf_i) = \emptyset$;
  \item
    $\Bmf_i$ and $\Bmf_{i+1}$ agree on all elements
    at distance at most $s+i+1$ from elements in $\dom(\Dmf)$.
  \end{itemize}
  By compactness, we obtain a forest-model $\Bmf'$ of $\Dmf$ and $\Omc$
  with $\Qmc_{e}(\Bmf') = \emptyset$ for all integers $e \geq 0$.
  Since $\Qmc$ is a decomposition of $q_0(\vec{x})$ w.r.t.\ $\Omc$,
  this implies $\Bmf' \not\models q_0(\vec{a})$, as desired.
\end{proof}

We now use Lemma~\ref{bounded-depth}
to show that any decomposition of a UCQ $q(\vec{x})$
w.r.t.\ a $\uGF$ or $\uGC$ ontology $\Omc$
can be turned into a strong decomposition of $q(\vec{x})$ w.r.t.\ $\Omc$.

\begin{lemma}\label{weak-to-strong-decomposition}
  Let $\Omc$ be a $\uGF$ or $\uGC$ ontology,
  and let $\Qmc$ be a decomposition of a UCQ $q(\vec{x})$ w.r.t.\ $\Omc$.
  Then there is a strong decomposition of $q(\vec{x})$ w.r.t.\ $\Omc$.
\end{lemma}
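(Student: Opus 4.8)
The plan is to post-process the given decomposition $\Qmc$ using the bounded-distance guarantee of Lemma~\ref{bounded-depth}, ``anchoring'' every cg-tree decomposable CQ occurring in $\Qmc$ to $\dom(\Dmf)$ and thereby turning it into a finite disjunction of rAQs. Fix the \uGF or \uGC ontology $\Omc$, the UCQ $q(\vec x)$, the decomposition $\Qmc$, and the constant $d$ produced by Lemma~\ref{bounded-depth}. Two preliminary normalisations help: by a routine monotonicity argument for certain answers one may assume that relation symbols not occurring in $\Omc$ or $q$ appear in a model only as they do in $\Dmf$, so that the ``dangling'' part of any match of a CQ into a model uses only the finitely many relations of $\text{sig}(\Omc)\cup\text{sig}(q)$; and one may work with $\Omc$-saturated instances $\Dmf$ and with forest models whose $\dom(\Dmf)$-part is exactly $\Dmf$, so that the ``grounded'' atoms of a match can be recorded syntactically.

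From each pair $(\phi(\vec y),\Cmc)\in\Qmc$ and each $q'(\vec z)\in\Cmc$ that is not already an rAQ we generate finitely many new pairs, distinguishing two shapes. If $q'$ is \emph{Boolean}, Lemma~\ref{bounded-depth} tells us that a relevant match of $q'$ in a model $\Amf$ has its (connected) image within distance $d$ of $\dom(\Dmf)$, hence is reached from some $a\in\dom(\Dmf)$ by a path of length $\ell\le d$ whose internal vertices lie outside $\dom(\Dmf)$; enumerating all connection patterns --- a length $\ell\le d$, a sequence of atoms $\gamma_1(w_0,w_1),\dots,\gamma_\ell(w_{\ell-1},w_\ell)$ over $\text{sig}(\Omc)\cup\text{sig}(q)$ with fresh $w_i$, and a variable $v$ of $q'$ that is relabelled to $w_\ell$ --- yields an rAQ with the single answer variable $w_0$, rooted at $\{w_0\}$, whose canonical instance is $\Dmf_{q'}[v\mapsto w_\ell]$ together with the path, and we put $w_0$ into the tuple of $\phi$ (with a trivial atom $w_0=w_0$) so that its only effect is to force $\pi(w_0)\in\dom(\Dmf)$. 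If $q'$ is \emph{non-Boolean but not an rAQ}, we re-run the squid-style split from the proof of Lemma~\ref{weak-decomposition} on $\Dmf_{q'}$ itself: the atoms forced into $\dom(\Dmf)$ move into $\phi$ (identifying variables via the relevant idempotent map), and the remainder breaks into connected pieces each attaching to the grounded part at a guarded set $G$ of $\Dmf$; importing a guard atom from $\Dmf$ (there are finitely many choices), with fresh variables in the positions not hit by the attachment variables and the atom recorded in $\phi$, makes each attachment set guarded, so each piece becomes an rAQ rooted there. The refinement terminates, since a non-rAQ CQ is always replaced by strictly smaller rAQs together with an enlarged $\phi$.

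It remains to check that the resulting finite $\Qmc'$ is a strong decomposition of $q(\vec x)$ w.r.t.\ $\Omc$; finiteness and the rAQ property are immediate from the construction. For ``condition~2 of $\Qmc'$ implies $\Omc,\Dmf\models q(\vec a)$'': each newly introduced rAQ maps homomorphically onto the CQ of $\Qmc$ from which it was built (collapse the attached path onto $v$, respectively delete the imported guard variables), so a model-match of the new rAQ together with the recorded grounded atoms, which lie in $\Dmf\subseteq\Amf$, assemble into a match of $q'$, whence $\Omc,\Dmf\models q(\vec a)$ by condition~2 for $\Qmc$. For the converse, given $\Omc,\Dmf\models q(\vec a)$ and a model $\Amf$, Lemma~\ref{bounded-depth} supplies $(\phi(\vec y),\Cmc)\in\Qmc$, an assignment $\pi$ with $\pi(\vec x)=\vec a$ and $\Dmf\models\phi(\pi(\vec y))$, and for each $q'(\vec z)\in\Cmc$ a homomorphism $h\colon\Dmf_{q'}\to\Amf$ with $h(\vec z)=\pi(\vec z)$ and $\dist_\Amf(\dom(h(\Dmf_{q'})),\dom(\Dmf))\le d$; reading off from $h$ the connecting path (Boolean case) or the guard atoms and the grounded atoms of the split (non-Boolean case) picks out a pair of $\Qmc'$ and an extension of $\pi$ that witnesses condition~2 for $\Qmc'$.

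The main obstacle is the non-Boolean case for \uGF, where $\vec z$ may contain several variables that need not form a guarded set of $\Dmf_{q'}$: one genuinely has to re-decompose $q'$ and argue both that importing guards from $\Dmf$ keeps the pieces rooted (checking the connectedness condition on bag-occurrences and $\Amf_{|\dom(\bag(t))}=\bag(t)$ for the new root bags) and that it is semantically sound (which is where the normalisation ensuring that grounded atoms really come from $\Dmf$ is used). For \uGC everything collapses: relations are binary and, by Lemma~\ref{weak-decomposition}, each CQ already has at most one free variable, so its answer variable is automatically a guarded set and only the Boolean path-anchoring is needed.
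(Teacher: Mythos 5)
Your handling of the \emph{Boolean} CQs coincides with the paper's construction: both anchor each Boolean CQ to $\dom(\Dmf)$ by attaching an explicit path of length at most the constant $d$ from Lemma~\ref{bounded-depth}, enumerating finitely many path shapes to obtain a finite set of rAQs. So that part is essentially the same argument.

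The divergence is in the \emph{non-Boolean} case, and here you have put your finger on the genuinely delicate point. The paper simply asserts, in one sentence, that a non-Boolean CQ $q_i(\vec z_i)$ appearing in a decomposition is already an rAQ, i.e.\ that $\vec z_i$ is the domain of a root bag of some cg-tree decomposition of $\Dmf_{q_i}$. For \uGC this is immediate because $\vec z_i$ is a singleton, and for uGF$_2$ it follows because with binary relations any two grounded variables in a connected component must cohabit a bag of the cg-tree decomposition of $\Bmf_G$; but for $\uGF$ with relations of arity $\geq 3$ the assertion is not automatic. For instance, a connected piece $T_i = \{S(y_1,y_4,y_2),\, T(y_1,y_2,y_3)\}$ with grounded variables $\vec z_i = (y_1,y_4)$ is cg-tree decomposable yet has no atom whose variable set is exactly $\{y_1,y_4\}$, so $\{y_1,y_4\}$ is not guarded in $\Dmf_{q_i}$ and $q_i$ is not an rAQ as defined. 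You are right that this must be repaired rather than asserted.

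However, your repair has its own gap. You propose to restore guardedness by ``importing a guard atom from $\Dmf$'' and claim there are finitely many choices. But the maximal guarded set of $\Dmf$ into which $\vec z_i$ maps is guarded by an atom of $\Dmf$ whose relation symbol (and arity) need not lie in $\text{sig}(\Omc) \cup \text{sig}(q)$ --- nothing in the setup bounds the relations or arities occurring in $\Dmf$. A decomposition must be a finite, $\Dmf$-independent object, so one cannot enumerate guard atoms over all possible input relations. Your preliminary normalisation (fresh relations appear in a model only as in $\Dmf$) concerns how models extend $\Dmf$; it does not bound $\Dmf$'s own signature, which is where the guard comes from. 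One would need a further argument --- for instance, replacing the imported guard by a bounded family of guards over $\text{sig}(\Omc)\cup\text{sig}(q)$ together with an argument that this loses no certain answers, or re-casting the root bag using a fresh auxiliary relation that the rewriting itself populates from the maximal guarded sets of $\Dmf$ --- before the ``finitely many choices'' claim and the termination/soundness argument that rests on it can go through. As it stands, you have correctly located the soft spot in the paper's proof but have not yet closed it.
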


\begin{proof}
  Fix the constant $d$ from Lemma~\ref{bounded-depth}.
  Consider a pair $p = (\phi(\vec{x}),\Cmc) \in \Qmc$,
  and let $q_1(\vec{z}_1),\dotsc,q_n(\vec{z}_n)$ be an enumeration
  of all CQs in $\Cmc$.
  For each $i \in \set{1,\dotsc,n}$,
  we define a set $\Cmc_i$ of rAQs as follows.
  If $q_i(\vec{z}_i)$ is non-Boolean, then $q_i(\vec{x})$ is an rAQ
  and we define $\Cmc_i \isdef \set{q_i(\vec{x})}$.
  Otherwise, if $q_i$ is Boolean, let $q_i \gets \phi$.
  Then, $\Cmc_i$ consists of all rAQs of the form
  \[
    q'(x) \gets R_1(\vec{y}_1) \land \dotsb \land R_e(\vec{y}_e) \land \phi,
  \]
  for $e \leq d$,
  where
  \begin{itemize}
  \item
    $x$ occurs in $\vec{y}_1$;
  \item
    each $R_i$ is an at least binary relation symbol in $\Omc$;
  \item
    $\vec{y}_e$ contains a variable $y$ in one of the atoms of $\phi$,
    but no other variable from $\phi$ occurs in any of the $\vec{y}_i$.
  \end{itemize}
  Let $\Qmc'_p$ be the set of all pairs
  $(\phi(\vec{x}),\set{q'_i(\vec{z}_i) \mid 1 \leq i \leq n})$,
  where $q'_i(\vec{z}_i) \in \Cmc_i$ for each $i \in \set{1,\dotsc,n}$.
  By the construction of $\Qmc'_p$ and our choice of $d$,
  it follows that the following are equivalent
  for each model $\Amf$ of $\Dmf$ and $\Omc$:
  \begin{enumerate}
  \item
    there exists an assignment $\pi$ such that $\pi(\vec{x}) = \vec{a}$,
    $\Dmf \models \phi(\pi(\vec{y}))$,
    and $\Amf \models q_i(\pi(\vec{z}_i))$ for each $1 \leq i \leq n$;
  \item
    there exists a pair $(\phi(\vec{x}),\Cmc') \in \Qmc'_p$
    and an assignment $\pi$ such that $\pi(\vec{x}) = \vec{a}$,
    $\Dmf \models \phi(\pi(\vec{y}))$,
    and $\Amf \models q'(\pi(\vec{z}))$ for each $q'(\vec{z}) \in \Cmc'$.
  \end{enumerate}
  Altogether, this implies that $\Qmc' \isdef \bigcup_{p \in \Qmc} \Qmc'_p$
  is a strong decomposition of $q(\vec{x})$ w.r.t.\ $\Omc$.
\end{proof}

\begin{corollary}\label{strong-decomposition}
  For each $\uGF$ or $\uGC$ ontology $\Omc$ and each UCQ $q(\vec{x})$
  there is a strong decomposition $\Qmc$ of $q(\vec{x})$ w.r.t.\ $\Omc$.
  Moreover, if $\Omc$ is an $\uGC$ ontology,
  then each rAQ that occurs in a pair of $\Qmc$ has a single free variable.
\end{corollary}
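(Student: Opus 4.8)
The plan is to obtain the corollary by simply chaining together the two preceding results. First I would invoke Lemma~\ref{weak-decomposition} to produce a decomposition $\Qmc_0$ of $q(\vec{x})$ w.r.t.\ $\Omc$, recording that if $\Omc$ is a $\uGC$ ontology then Lemma~\ref{weak-decomposition} additionally guarantees that every rAQ occurring in a pair of $\Qmc_0$ has a single free variable. Then I would feed $\Qmc_0$ into Lemma~\ref{weak-to-strong-decomposition}, which transforms it into a \emph{strong} decomposition $\Qmc$ of $q(\vec{x})$ w.r.t.\ $\Omc$, i.e.\ one in which for every pair $(\phi(\vec{y}),\Cmc)\in\Qmc$ the set $\Cmc$ consists entirely of rAQs, as required by Definition~\ref{def:decomposition}.

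The only point that needs a short supplementary argument is the preservation of the single-free-variable clause in the $\uGC$ case, since Lemma~\ref{weak-to-strong-decomposition} does not mention it. I would trace it through the construction of $\Qmc$ from $\Qmc_0$ inside the proof of Lemma~\ref{weak-to-strong-decomposition}: each CQ $q_i(\vec{z}_i)$ occurring in a set $\Cmc$ of a pair of $\Qmc_0$ is handled in one of two ways. If $q_i$ is non-Boolean, it is already an rAQ and is kept as such; being inherited from $\Qmc_0$, it has by Lemma~\ref{weak-decomposition} exactly one free variable. If $q_i$ is Boolean, the construction replaces it by rAQs of the form $q'(x)\gets R_1(\vec{y}_1)\land\cdots\land R_e(\vec{y}_e)\land\phi$ with a single fresh answer variable $x$. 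In either case the resulting rAQ has exactly one free variable, so the ``moreover'' part of Lemma~\ref{weak-decomposition} carries over verbatim to $\Qmc$.

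I do not anticipate a real obstacle here: the corollary is essentially the formal composition of Lemmas~\ref{weak-decomposition} and~\ref{weak-to-strong-decomposition}, with the arity bookkeeping described above being the only item requiring attention. The substantive work has already been done in establishing those two lemmas (in particular the pumping argument of Lemma~\ref{bounded-depth} that underlies the weak-to-strong transformation).
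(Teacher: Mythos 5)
Your proof is correct and matches the paper's (implicit) derivation: the paper presents Corollary~\ref{strong-decomposition} with no explicit proof, treating it as the direct composition of Lemma~\ref{weak-decomposition} and Lemma~\ref{weak-to-strong-decomposition}, exactly as you do. Your supplementary tracing of the single-free-variable claim through the two cases (non-Boolean $q_i$ kept as is; Boolean $q_i$ replaced by rAQs with one fresh answer variable $x$) is sound and is the right thing to check, since Lemma~\ref{weak-to-strong-decomposition} as stated does not mention it.
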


We are now ready to give the proof of Theorem~\ref{thm:PTime}.

\begin{proof}[of Theorem~\ref{thm:PTime}]
  \textsc{Part~\ref{thm:PTime/PTime} and~\ref{thm:PTime/Datalog}:}
  Since the ``only if'' directions are trivial,
  it suffices to focus on the direction from rAQ-evaluation to UCQ-evaluation.
  Suppose that rAQ-evaluation w.r.t.~$\Omc$ is in \PTime
  (resp., Datalog$^{\neq}$-rewritable),
  and let $q(\vec{x})$ be a UCQ.
  We show that evaluating $q(\vec{x})$ w.r.t.~$\Omc$ is in \PTime
  (resp., Datalog$^{\neq}$-rewritable).

  Let $\Qmc$ be a strong decomposition of $q(\vec{x})$ w.r.t.\ $\Omc$,
  which exists by Corollary~\ref{strong-decomposition}.
  Then for all instances $\Dmf$ and all tuples $\vec{a}$ over $\dom(\Dmf)$
  of length $\lvert\vec{x}\rvert$
  we have $\Omc,\Dmf \models q(\vec{a})$ iff the following is true:
  \begin{align}
    \label{eq:PTime/1}
    \parbox{0.85\linewidth}{
      There exists $(\phi(\vec{y}),\Cmc) \in \Qmc$
      and an assignment $\pi$ of elements in $\dom(\Dmf)$
      to the variables in $\vec{y}$ with
      \begin{enumerate}
      \item
        $\pi(\vec{x}) = \vec{a}$,
      \item
        $\Dmf \models \phi(\pi(\vec{y}))$, and
      \item
        $\Omc,\Dmf \models q'(\pi(\vec{z}))$
        for each rAQ $q'(\vec{z})$ in $\Cmc$.
      \end{enumerate}
    }
  \end{align}
  Indeed, since rAQ-evaluation w.r.t.~$\Omc$ is in \PTime
  (this also holds if rAQ-evaluation w.r.t.~$\Omc$
   is Datalog$^{\neq}$-rewritable),
  we may assume that $\Omc$ is rAQ-materializable
  (by Theorem~\ref{thm:nomatlower}).
  Pick a rAQ-materialization $\Amf$ of $\Omc$ and $\Dmf$.
  Then condition~2 of Definition~\ref{def:decomposition}
  means that there exists a pair $(\phi(\vec{y}),\Cmc) \in \Qmc$
  and an assignment $\pi$ of elements in $\dom(\Dmf)$
  to the variables in $\vec{y}$
  such that $\pi(\vec{x}) = \vec{a}$, $\Dmf \models \phi(\pi(\vec{y}))$,
  and $\Amf \models q'(\pi(\vec{z}))$ for each rAQ $q'(\vec{z})$ in $\Cmc$.
  Since $\Amf$ is a rAQ-materialization of $\Omc$ and $\Dmf$,
  we can replace $\Amf \models q'(\pi(\vec{z}))$ in the third condition
  by $\Omc,\Dmf \models q'(\pi(\vec{z}))$,
  which yields~\eqref{eq:PTime/1}.

  If rAQ-evaluation w.r.t.~$\Omc$ is in \PTime,
  then~\eqref{eq:PTime/1} yields a polynomial time procedure
  for evaluating $q$ w.r.t.~$\Omc$.

  Moreover, if rAQ-evaluation w.r.t.~$\Omc$ is Datalog$^{\neq}$-re\-writ\-able,
  then we can construct a Datalog$^{\neq}$ program
  for evaluating $q$ w.r.t.~$\Omc$ as follows.
  Let $\Qmc'$ be the set of rAQs that occur in some pair of $\Qmc$.
  For each $q' \in \Qmc'$,
  let $\Pi_{q'}$ be a Datalog$^{\neq}$ program that evaluates $q'$ w.r.t.~$\Omc$.
  Without loss of generality we assume that the intensional predicates
  used in different programmes $\Pi_{q'}$ and $\Pi_{q''}$ are disjoint,
  and that the goal predicate of $\Pi_{q'}$ is $\mathsf{goal}_{q'}$.
  Now let $\Pi$ be the Datalog$^{\neq}$ program
  containing the rules of all programs $\Pi_{q'}$, for $q' \in \Qmc'$,
  and the following rule for each $(\phi(\vec{y}),\Cmc) \in \Qmc$:
  \[
    \mathsf{goal}(\vec{x})
    \gets
    \phi(\vec{y})
    \land
    \bigwedge_{q'(\vec{z}) \in \Cmc}\mathsf{goal}_{q'}(\vec{z}).
  \]
  Note that if each $\Pi_{q'}$ is a Datalog program,
  then $\Pi$ is a Datalog program as well.
  Using~\eqref{eq:PTime/1}
  it is straightforward to verify that for all instances $\Dmf$
  we have $\Dmf \models \Pi(\vec{a})$ iff $\Omc,\Dmf \models q(\vec{a})$.

  \medskip\noindent
  \textsc{Part~\ref{thm:PTime/coNPhard}:}
  The ``if'' directions are trivial,
  so we focus on the direction from UCQ-evaluation to rAQ-evaluation.
  Suppose that UCQ-evaluation w.r.t.~$\Omc$ is \coNP-hard,
  and let $q(\vec{x})$ be a UCQ that witnesses this.
  We show that rAQ-evaluation w.r.t.~$\Omc$ is \coNP-hard
  via a polynomial-time reduction from evaluating $q$ w.r.t.~$\Omc$.

  We start by describing a translation of instances $\Dmf$ and tuples $\vec{a}$,
  and will show afterwards that this translation is a polynomial-time reduction
  from evaluating $q$ w.r.t.~$\Omc$
  to evaluating a suitably chosen rAQ w.r.t.~$\Omc$.

  Let $\Qmc$ be a strong decomposition of $q(\vec{x})$ w.r.t.\ $\Omc$.
  Fix an enumeration $q'_1(\vec{z}_1),\dotsc,q'_m(\vec{z}_m)$
  of all rAQs that occur in some pair of $\Qmc$,
  and let $k_i$ be the length of $\vec{z}_i$.
  Without loss of generality, we can assume that each $\Dmf_{q'_i}$
  is consistent w.r.t.\ $\Omc$.
  We use fresh relation symbols $R$, $S$, and $T_i$ ($1 \leq i \leq m$),
  where $R$ and $S$ are binary, and $T_i$ is $(k_i+1)$-ary.
  Note that each of these relation symbols is at most binary
  in the case that $\Omc$ is a $\uGC$ ontology.

  Given an instance $\Dmf$
  and a tuple $\vec{a} \in \dom(\Dmf)^{\lvert\vec{x}\rvert}$,
  we compute a new instance $\tilde{\Dmf}$
  by adding the following atoms to $\Dmf$.
  First, we add all atoms of $\Dmf_{q'_i}$, for each $i \in \{1,\dotsc,m\}$,
  where we assume w.l.o.g.\ that the domain of $\Dmf_{q'_i}$
  is disjoint from that of $\Dmf$ and $\Dmf_{q'_j}$ for $j \neq i$.
  Let $\vec{c}_i$ be the tuple of elements in $\Dmf_{q'_i}$
  that represents the tuple $\vec{z}_i$ of the answer variables of $q'_i$.
  Next, we add the following atoms
  for each pair $p = (\phi(\vec{y}),\Cmc) \in \Qmc$,
  and for each assignment $\pi$ of elements in $\dom(\Dmf)$
  to the variables in $\vec{y}$
  that satisfies $\pi(\vec{x}) = \vec{a}$
  and $\Dmf \models \phi(\pi(\vec{y}))$:
  \begin{itemize}
  \item
    $R(a_0,a_p)$;
  \item
    $S(a_p,a_{p,\pi})$;
  \item
    $T_i(a_{p,\pi},\pi(\vec{z}_i))$
    for each $i \in \{1,\dotsc,m\}$ with $q'_i \in \Cmc$;
  \item
    $T_i(a_{p,\pi},\vec{c}_i)$
    for each $i \in \{1,\dotsc,m\}$ with $q'_i \notin \Cmc$. 
  \end{itemize}
  Since $\Qmc$ is of constant size,
  the instance $\tilde{\Dmf}$ can be computed
  in time polynomial in the size of $\Dmf$.

  It is now straightforward to verify that $\Omc,\Dmf \models q(\vec{a})$
  holds iff $\tilde{\Dmf},\Omc \models \tilde{q}(a_0)$,
  where $\tilde{q}(x)$ is the rAQ
  \begin{align*}
    \tilde{q}(x)
    & \gets R(x,y) \land S(y,z) \land
    \bigwedge_{i=1}^m \bigl(
      T_i(z,\vec{u}_i) \land q'_i(\vec{u}_i)
    \bigr).
  \end{align*}
  Since evaluating $q(\vec{x})$ w.r.t.~$\Omc$ is \coNP-hard,
  we conclude that evaluating $\tilde{q}(x)$ w.r.t.~$\Omc$ is \coNP-hard.
\end{proof}


\section{Proofs for Section 4}

\begin{trivlist}\item\textbf{Theorem~\ref{thm:datalog} (restated)~}\itshape
   For all $\uGF$ and $\uGC$ ontologies \Omc, unravelling tolerance of
  \Omc implies that rAQ-evaluation w.r.t.\ \Omc is
  Datalog$^{\neq}$-rewritable (and Datalog-rewritable if \Omc is
  formulated in uGF).
\end{trivlist}

\begin{proof}
  Let $q$ be an rAQ.
  We show that answering $q$ w.r.t.\ $\Omc$ is Datalog$^{\neq}$-rewritable.
  We provide separate proofs
  for the case that $\Omc$ is a $\uGF$ ontology (Part 1)
  and for the case that $\Omc$ is a $\uGC$ ontology (Part 2).

  \medskip\noindent
  \textit{Part 1: $\Omc$ is a $\uGF$ ontology.}
  To simplify the presentation, we will regard $q$ as an $\openGF$ formula,
  which we can do w.l.o.g.\
  because $q$ has a connected guarded tree decomposition
  whose root is labeled by the answer variables of $q$.
  Let $\cl(\Omc,q)$ be the smallest set satisfying:
  \begin{itemize}
  \item
    $\Omc \cup \{q\} \subseteq \cl(\Omc,q)$;
  \item
    $\cl(\Omc,q)$ contains one atomic formula $R(\vec{x})$
    for each relation symbol $R$ that occurs in $\Omc$ or $q$,
    where $\vec{x}$ is a tuple of distinct variables;
  \item
    $\cl(\Omc,q)$ contains a single equality atom $x=y$,
    where $x$ and $y$ are distinct variables;
  \item
    $\cl(\Omc,q)$ is closed under subformulas and single negation,
    where the subformulas of a formula $\phi = Q \vec{x}\, \psi$
    with a quantifier $Q$
    are $\phi$ itself and $\psi$.
  \end{itemize}
  Let $w$ be the maximum arity of a relation symbol in $\Omc$ or $q$,
  and let $x_1,\dotsc,x_{2w-1}$ be distinct variables
  that do not occur in $\Omc$ or $q$.
  Given a tuple $\vec{x}$ over $X = \set{x_1,\dotsc,x_{2w-1}}$,
  an $\vec{x}$-\emph{type} $\theta$ is a maximal consistent set of formulas,
  where each formula in $\theta$
  is obtained from a formula $\phi(\vec{y}) \in \cl(\Omc,q)$
  by substituting a variable from $\vec{x}$ for each variable in $\vec{y}$,
  and one formula in $\theta$ is a relational atomic formula
  containing all the variables in $\theta$.
  If $\theta_i$ is an $\vec{x}_i$-type for each $i \in \set{1,2}$,
  then $\theta_1$ and $\theta_2$ are \emph{compatible}
  if they agree on all formulas
  containing only the variables that occur both in $\vec{x}_1$ and $\vec{x}_2$.
  An $\vec{x}$-type $\theta$ is \emph{realizable} in an interpretation $\Amf$
  if there is an assignment $\pi$ of elements in $\dom(\Amf)$
  to the variables in $\vec{x}$
  such that $\Amf \models \phi(\pi(\vec{y}))$
  for each $\phi(\vec{y}) \in \theta$.
  In this case we also say that $\vec{a} = \pi(\vec{x})$
  \emph{realizes} $\theta$ in $\Amf$.
  Let $\tp(\vec{x})$ be the set of all $\vec{x}$-types
  that are realizable in some model of $\Omc$.
  Since $\Omc$ and $q$ are fixed,
  each $\tp(\vec{x})$ is of constant size
  and can be computed in constant time
  using a standard satisfiability procedure
  for the guarded fragment~\cite{DBLP:journals/jsyml/Gradel99}.

  We now describe a Datalog$^{\neq}$ program $\Pi$
  for evaluating $q$ w.r.t.\ $\Omc$.
  For each $l \in \set{1,\dotsc,w}$,
  each $l$-tuple $\vec{x}$ over $X$,
  and each $\Theta \subseteq \tp(\vec{x})$,
  the program uses an intensional $l$-ary relation symbol $P^l_\Theta$.
  Intuitively, $P^l_\Theta(\vec{a})$ encodes an assignment
  of possible types (namely those in $\Theta$)
  for $\vec{a}$ in a model of $\Omc$.
  In the description below,
  $l_1$ and $l_2$ range over integers in $\set{1,\dotsc,w}$,
  and $\vec{x}_1$ and $\vec{x}_2$ range over tuples over $X$
  of length $l_1$ and $l_2$, respectively,
  such that $\vec{x}_2$ contains at least one variable from $\vec{x}_1$.
  Let $k$ be the arity of $q$.
  The program contains the following rules:
  \begin{enumerate}
  \item
    $P^{l_1}_\Theta(\vec{x}_1) \gets R(\vec{y}) \land \alpha(\vec{z})$,
    where $\Theta$ is the set of all $\theta \in \tp(\vec{x}_1)$
    that contain $R(\vec{y})$ and $\alpha(\vec{z})$,
    $\alpha$ is an atomic formula (possibly an equality)
    or an inequality,
    and $\vec{y}$ contains exactly the variables from $\vec{x}_1$;
  \item
    $P^{l_1}_\Theta(\vec{x}_1)
    \gets P^{l_1}_{\Theta_1}(\vec{x}_1) \land P^{l_2}_{\Theta_2}(\vec{x}_2)$,
    where $\Theta_i \subseteq \tp(\vec{x}_i)$,
    and $\Theta$ is the set of all $\theta_1 \in \Theta_1$
    such that there exists a $\theta_2 \in \Theta_2$
    that is compatible with $\theta_1$;
  \item
    $P^{l_1}_{\Theta_1 \isect \Theta_2}(\vec{x}_1)
    \gets P^{l_1}_{\Theta_1}(\vec{x}_1) \land P^{l_1}_{\Theta_2}(\vec{x}_1)$,
    where $\Theta_i \subseteq \tp(\vec{x}_1)$;
  \item
    $\mathsf{goal}(x_{i_1},\dotsc,x_{i_k}) \gets P^{l_1}_\Theta(\vec{x}_1)$,
    where $\Theta \subseteq \tp(\vec{x}_1)$,
    and $q(x_{i_1},\dotsc,x_{i_k}) \in \theta$ for each $\theta \in \Theta$;
  \item
    $\mathsf{goal}(x_1,\dotsc,x_k) \gets P^l_\emptyset(\vec{y})$,
    where $l \leq w$
    and $\vec{y}$ is a tuple of $l$ variables from $X$
    distinct from $x_1,\dotsc,x_k$.
  \end{enumerate}

  It remains to show that $\Pi$ is a Datalog$^{\neq}$-rewriting
  for $q$ w.r.t.\ $\Omc$.
  To this end, let $\Dmf$ be an instance.
  We show that $\Omc,\Dmf \not\models q(\vec{a})$
  iff $\Dmf \not\models \Pi(\vec{a})$.

  For the ``only if'' direction, assume $\Omc,\Dmf \not\models q(\vec{a})$,
  and let $\Bmf$ be a model of $\Dmf$ and $\Omc$
  with $\Bmf \not\models q(\vec{a})$.
  Consider a tuple $\vec{b} = (b_1,\dotsc,b_l) \in \dom(\Dmf)^l$
  for some $l \in \set{1,\dotsc,w}$.
  Note that if $\vec{b}$ is not guarded in $\Bmf$,
  then for all tuples $\vec{x} \in X^l$ and all $\Theta \subseteq \tp(\vec{x})$
  the program $\Pi$ does not derive $P^l_\Theta(\vec{b})$ on input $\Dmf$.
  This is because of the rules in line 1,
  which require $\vec{b}$ to be guarded
  in order to derive $P^l_\Theta(\vec{b})$.
  For each tuple $\vec{y} = (y_1,\dotsc,y_l) \in X^l$,
  define $\theta_{\vec{y}}(\vec{b})$ to be the set of all formulas
  obtained from a formula $\phi(z_1,\dotsc,z_n) \in \cl(\Omc,q)$
  and $1 \leq i_1,\dotsc,i_n \leq l$
  with $\Bmf \models \phi(b_{i_1},\dotsc,b_{i_n})$
  by substituting $y_{i_j}$ for each $z_j$.
  Since $\vec{b}$ is guarded in $\Dmf$,
  $\theta_{\vec{y}}(\vec{b})$ contains a relational atomic formula
  that contains all variables from $\vec{y}$,
  hence $\theta_{\vec{y}}(\vec{b})$ is a $\vec{y}$-type.
  Furthermore, $\theta_{\vec{y}}(\vec{b})$ is realizable in $\Bmf$,
  which implies $\theta_{\vec{y}}(\vec{b}) \in \tp(\vec{y})$.
  By induction on rule applications of $\Pi$ one can show that
  for each guarded tuple $\vec{b}$ in $\dom(\Bmf)$ of length $l$,
  each tuple $\vec{y} \in X^l$,
  and each $\Theta \subseteq \tp(\vec{y})$
  such that $P^l_\Theta(\vec{b})$ is derivable by $\Pi$ on $\Dmf$
  we have $\theta_{\vec{y}}(\vec{a}) \in \Theta$.
  Since $\Bmf \not\models q(\vec{a})$,
  for each guarded tuple $\vec{b} = (b_1,\dotsc,b_l)$ in $\Dmf$
  with $\vec{a} = (b_{i_1},\dotsc,b_{i_k})$
  and each $\vec{y} = (y_1,\dotsc,y_l) \in X^l$
  we have $q(y_{i_1},\dotsc,y_{i_k}) \notin \theta_{\vec{y}}(\vec{b})$.
  Altogether, this implies that $\Dmf \not\models \Pi(\vec{a})$.

  For the ``if'' direction, assume $\Dmf \not\models \Pi(\vec{a})$.
  Let $G_{\vec{a}}$ be a maximal guarded set of $\Dmf$
  containing the elements of $\vec{a}$.
  Recall the definition of the $\uGF$-unravelling of $\Dmf$
  from Section~\ref{sec:unravelling-tolerance}.
  We show that $\Omc,\Dmf^u \not\models q(\vec{b})$,
  where $\vec{b}$ is the copy of $\vec{a}$ in $\text{bag}(G_{\vec{a}})$,
  which implies $\Omc,\Dmf \not\models q(\vec{a})$
  by unravelling tolerance of $\Omc$.
  More precisely, we construct a model $\Bmf$ of $\Dmf^u$ and $\Omc$
  with $\Bmf \not\models q(\vec{b})$.

  Observe that for each $l \in \set{1,\dotsc,w}$,
  each $\vec{c} \in \dom(\Dmf)^l$,
  each $\Theta \subseteq \tp(\vec{x})$,
  and each bijective mapping $f\colon X \to X$,
  the program $\Pi$ derives $P^l_{\Theta}(\vec{c})$
  iff it derives $P^l_{f(\Theta)}(\vec{c})$,
  where $f(\Theta)$ is the $f(\vec{x})$-type obtained from $\Theta$
  by substituting $f(x)$ for each variable $x$ in $\vec{x}$.
  In other words,
  the sets $\Theta$ of types that $\Pi$ derives for each tuple $\vec{c}$
  do not depend on the choice of the variables $\vec{x}$.
  Observe also that for each guarded tuple $\vec{c}$ in $\Dmf$
  there is a unique minimal $\Theta(\vec{c}) \subseteq \tp(x_1,\dotsc,x_l)$
  such that $\Pi$ derives $P^l_{\Theta(\vec{c})}(\vec{c})$.
  Since $\Pi$ does not derive $\mathsf{goal}(\vec{a})$,
  we must have $\Theta(\vec{c}) \neq \emptyset$
  for all guarded tuples $\vec{c}$ in $\Dmf$.
  Furthermore, if $\vec{c} = (c_1,\dotsc,c_l)$
  and $\vec{a} = (c_{i_1},\dotsc,c_{i_k})$,
  then $q(x_{i_1},\dotsc,x_{i_k}) \notin \theta$
  for some $\theta \in \Theta(\vec{c})$.
  Let us denote the set of such types in $\Theta(\vec{c})$
  by $\Theta^{\lnot q}(\vec{c})$.

  To construct the desired model $\Bmf$,
  we first assign to each maximally guarded tuple $\vec{c}$ of $\Dmf^u$
  a type in $\Theta(\vec{c})$ as follows.
  Recall the definition of the tree $T(\Dmf)$
  and the bags $\bag(t)$, $t \in T(\Dmf)$,
  used in the definition of the uGF-unravelling
  in Section~\ref{sec:unravelling-tolerance}.
  For each node $t$ of $T(\Dmf)$,
  let $G_t \isdef \dom(\bag(t))$
  and $\vec{c}_t$ a $\card{G_t}$-tuple of all elements in $G_t$.
  We inductively assign to each node $t$ of $T(\Dmf)$
  a $(x_1,\dotsc,x_{\card{G_t}})$-type $\theta_t$ as follows.
  If $t = G$ for a maximal guarded subset $G$ of $\Dmf$,
  pick a type $\theta_t \in \Theta^{\lnot q}(\vec{c}_t)$.
  For the induction step,
  suppose that $t = t' G$ and $\text{tail}(t') = G'$.
  Let $\vec{c}_{t'} = (c'_1,\dotsc,c'_m)$, $\vec{c}_t = (c_1,\dotsc,c_n)$,
  and define a bijective mapping $f\colon X \to X$
  such that for all $i \in \set{1,\dotsc,n}$,
  \begin{itemize}
  \item
    if $c_i = c'_j$, then $f(x_i) = x_j$;
  \item
    if $c_i \notin \set{c'_1,\dotsc,c'_m}$,
    then $f(x_i) \notin \set{x_1,\dotsc,x_m}$.
  \end{itemize}
  Since $\theta_{t'} \in \Theta(\vec{c}_{t'})$,
  there exists a type $\theta_t \in \Theta(\vec{c}_t)$
  such that $\theta_{t'}$ and $f(\theta_t)$ are compatible.
  This follows from the rules in line 2 of the definition of $\Pi$.

  Finally, for each node $t$ of $T(\Dmf)$,
  pick a model $\Bmf_t$ of $\Omc$ such that $\vec{c}_t$ realizes $\theta_t$
  in $\Bmf_t$.
  Without loss of generality we assume that
  $\dom(\Bmf_t) \isect \dom(\Dmf^u) = G_t$ for each node $t$ of $T(\Dmf)$,
  and $\dom(\Bmf_t) \isect \dom(\Bmf_{t'}) = G_t \cap G_{t'}$
  for every two distinct $t,t' \in T(\Dmf)$
  Let $\Bmf$ be the interpretation obtained from $\Dmf^u$
  by hooking $\Bmf_t$ to $\Dmf^u$ for each node $t$ of $T(\Dmf)$:
  \[
    \Bmf \ :=\ \Dmf^u \cup \bigcup_{t \in T(\Dmf)} \Bmf_t.
  \]
  Clearly, $\Bmf$ is a model of $\Dmf^u$.
  It remains to show that $\Bmf$ is a model of $\Omc$
  with $\Bmf \not\models q(\vec{b})$.

  \begin{trivlist}
  \item\textit{Claim.}
    \textit{For all $\openGF$ formulas $\phi(\vec{x})$,
      all guarded tuples $\vec{c}$ of $\Bmf$,
      and all nodes $t \in T(\Dmf)$ with $\vec{c} \subseteq \dom(\Bmf_t)$,
      \begin{align}
        \label{eq:datalog/main}
        \Bmf_t \models \phi(\vec{c})
        \iff
        \Bmf \models \phi(\vec{c}).
      \end{align}}
  \end{trivlist}

  \begin{trivlist}
  \item\textit{Proof.}
    By induction on the structure of $\phi$.
    For the base case assume that $\phi$ is an atomic formula.
    Let $\vec{c}$ be a guarded tuple of $\Bmf$
    and let $t \in T(\Dmf)$ be such that $\vec{c} \subseteq \dom(\Bmf_t)$.
    Then, $\Bmf_t \models \phi(\vec{c})$ implies $\Bmf \models \phi(\vec{c})$
    because $\Bmf_t \subseteq \Bmf$.
    For the converse assume that $\Bmf \models \phi(\vec{c})$.
    Since $\phi$ is atomic,
    this implies $\Bmf_{t'} \models \phi(\vec{c})$ for some $t' \in T(\Dmf)$
    with $\vec{c} \subseteq \dom(\Bmf_{t'})$.
    By the choice of the types $\theta_{t}$ and $\theta_{t'}$,
    we obtain $\Bmf_t \models \phi(\vec{c})$.

    For the inductive step, we distinguish the following cases:

    \medskip\noindent\textit{Case 1:
      $\phi(\vec{x}) = \lnot \phi'(\vec{x})$.}
    For each guarded tuple $\vec{c}$ of $\Bmf$
    and each node $t \in T(\Dmf)$ with $\vec{c} \subseteq \dom(\Bmf_t)$,
    we have
    \begin{align*}
      \Bmf_t \models \phi(\vec{c})
      & \iff \Bmf_t \not\models \phi'(\vec{c}) \\
      & \iff \Bmf \not\models \phi'(\vec{c})
        \iff \Bmf \models \phi(\vec{c}),
    \end{align*}
    where the second equivalence follows from the induction hypothesis.

    \medskip\noindent\textit{Case 2:
      $\phi(\vec{x}) = \phi_1(\vec{x}_1) \land \phi_2(\vec{x}_2)$.}
    Consider a guarded tuple $\vec{c}$ of $\Bmf$
    and a node $t \in T(\Dmf)$ with $\vec{c} \subseteq \dom(\Bmf_t)$.
    Let $\vec{c}_i$ be the projection of $\vec{c}$
    onto the positions of $\vec{x}$ that contain a variable from $\vec{x}_i$.
    Then,
    \begin{align*}
      \Bmf_t \models \phi(\vec{c})
      & \iff \Bmf_t \models \phi_i(\vec{c}_i)
        \ \text{for each $i \in \{1,2\}$} \\
      & \iff \Bmf \models \phi_i(\vec{c}_i)
        \ \text{for each $i \in \{1,2\}$} \\
      & \iff \Bmf \models \phi(\vec{c}),
    \end{align*}
    where the second equivalence follows from the induction hypothesis.

    \medskip\noindent\textit{Case 3:
      \(
        \phi(\vec{x}) =
        \forall \vec{y} (\alpha(\vec{x},\vec{y}) \to \psi(\vec{x},\vec{y})).
      \)}
    Consider a guarded tuple $\vec{c}$ of $\Bmf$
    and a $t \in T(\Dmf)$ with $\vec{c} \subseteq \dom(\Bmf_t)$.

    We first prove that $\Bmf_t \models \phi(\vec{c})$
    implies $\Bmf \models \phi(\vec{c})$.
    Suppose that $\Bmf_t \models \phi(\vec{c})$,
    and let $\Bmf \models \alpha(\vec{c},\vec{d})$ for some tuple $\vec{d}$.
    Since $\alpha$ is atomic,
    we have $\Bmf_{t'} \models \alpha(\vec{c},\vec{d})$
    for some node $t' \in T(\Dmf)$
    with $\vec{c},\vec{d} \subseteq \dom(\Bmf_{t'})$.
    Note that $\vec{c}$ contains only values
    in $\dom(\Bmf_t) \cap \dom(\Bmf_{t'})$
    and that $\vec{c}$ is non-empty (because $\phi$ is open).
    By the choice of the types $\theta_t$ and $\theta_{t'}$
    and since $\Bmf_t \models \phi(\vec{c})$,
    we obtain $\Bmf_{t'} \models \phi(\vec{c})$.
    Hence, $\Bmf_{t'} \models \psi(\vec{c},\vec{d})$.
    The induction hypothesis now yields $\Bmf \models \psi(\vec{c},\vec{d})$.
    We have thus shown that $\Bmf \models \phi(\vec{c})$.

    For the converse,
    assume $\Bmf \models \phi(\vec{c})$,
    and let $\Bmf_t \models \alpha(\vec{c},\vec{d})$ for some tuple $\vec{d}$.
    Since $\alpha$ is atomic and $\Bmf_t \subseteq \Bmf$,
    we have $\Bmf \models \alpha(\vec{c},\vec{d})$,
    and therefore $\Bmf \models \psi(\vec{c},\vec{d})$.
    Since $\vec{c},\vec{d}$ is guarded and contained in $\dom(\Bmf_t)$,
    the induction hypothesis implies $\Bmf_t \models \psi(\vec{c},\vec{d})$.
    This shows that $\Bmf_t \models \phi(\vec{c})$.

    \medskip\noindent\textit{Case 4:
      \(
        \phi(\vec{x}) =
        \exists \vec{y} (
          \alpha(\vec{x},\vec{y}) \land \psi(\vec{x},\vec{y})
        ).
      \)}
    Consider a guarded tuple $\vec{c}$ of $\Bmf$
    and a node $t \in T(\Dmf)$ with $\vec{c} \subseteq \dom(\Bmf_t)$.

    If $\Bmf_t \models \phi(\vec{c})$,
    then there exists a tuple $\vec{d} \subseteq \dom(\Bmf_t)$
    such that $\Bmf_t \models \alpha(\vec{c},\vec{d})$
    and $\Bmf_t \models \psi(\vec{c},\vec{d})$.
    By the induction hypothesis,
    this implies $\Bmf \models \alpha(\vec{c},\vec{d})$
    and $\Bmf \models \psi(\vec{c},\vec{d})$,
    and hence $\Bmf \models \phi(\vec{c})$.

    Conversely, if $\Bmf \models \phi(\vec{c})$,
    then there exists a tuple $\vec{d}$
    such that $\Bmf \models \alpha(\vec{c},\vec{d})$
    and $\Bmf \models \psi(\vec{c},\vec{d})$.
    Since $\alpha$ is atomic,
    there exists a node $t' \in T(\Dmf)$
    with $\vec{c},\vec{d} \subseteq \dom(\Bmf_{t'})$,
    and hence by the induction hypothesis we obtain
    $\Bmf_{t'} \models \alpha(\vec{c},\vec{d})$
    and $\Bmf_{t'} \models \psi(\vec{c},\vec{d})$,
    and therefore $\Bmf_{t'} \models \phi(\vec{c})$.
    By the choice of $\theta_t$ and $\theta_{t'}$,
    we obtain $\Bmf_t \models \phi(\vec{c})$.
    \hfill$\lrcorner$
  \end{trivlist}

  Using the claim we now show that $\Bmf$ is a model of $\Omc$
  with $\Bmf \not\models q(\vec{b})$.
  By construction we have $\theta_t \in \Theta^{\lnot q}(\vec{c}_t)$,
  where $t = G_{\vec{a}}$.
  This implies $\Bmf_t \not\models q(\vec{b})$,
  and using the claim we obtain $\Bmf \not\models q(\vec{b})$.

  To prove that $\Bmf$ is a model of $\Omc$,
  let $\psi = \forall \vec{x} (\alpha(\vec{x}) \to \phi(\vec{x}))$
  be a sentence in $\Omc$,
  and let $\vec{c}$ be a tuple with $\Bmf \models \alpha(\vec{c})$.
  Since $\alpha$ is atomic, $\vec{c}$ is a guarded tuple in $\Bmf$.
  In particular, since $\Bmf \models \alpha(\vec{c})$,
  there must be a node $t_0 \in T(\Dmf)$
  with $\vec{c} \subseteq \dom(\Bmf_{t_0})$
  and $\Bmf_{t_0} \models \alpha(\vec{c})$.
  In fact, by the choice of the types assigned to the
  maximally guarded tuples of $\Dmf^u$,
  every node $t \in T(\Dmf)$ with $\vec{c} \subseteq \dom(\Bmf_t)$
  must satisfy $\Bmf_t \models \alpha(\vec{c})$.
  Now let $t$ be a node in $T(\Dmf)$ such that $\vec{c} \subseteq \dom(\Bmf_t)$.
  Since $\Bmf_t \models \alpha(\vec{c})$ and $\Bmf_t$ is a model of $\Omc$,
  we get $\Bmf_t \models \phi(\vec{c})$,
  so by~\eqref{eq:datalog/main} we have $\Bmf \models \phi(\vec{c})$.
  Therefore, $\Bmf \models \psi$.
  This holds for all sentences in $\Omc$,
  hence $\Bmf$ is a model of $\Omc$.

  If $\Omc$ is a uGF-ontology, we obtain a Datalog-rewriting from $\Pi$
  by removing inequalities from types and the rules in line 1.

  \medskip\noindent
  \textit{Part 2: $\Omc$ is a $\uGC$ ontology.}
  Analogous to part~1 we regard $q$ as an $\openGC$ formula,
  which we can do w.l.o.g.\
  because $q$ has a cg-tree decomposition
  where the domain of each bag consists of at most two elements
  and whose root bag has the answer variables of $q$ as its domain.
  We define $\cl(\Omc,q)$, types,
  and the corresponding notion of realization as in part~1.
  Since in the context of $\uGC$ we work over an at most binary signature,
  we will only use types over one or two variables.
  The sets $\tp(x)$ and $\tp(x,y)$ of these types
  can be computed in constant time
  using a satisfiability procedure
  for the two-variable guarded counting fragment~%
  \cite{DBLP:journals/logcom/Pratt-Hartmann07}.
  %
  Given a $\vec{x}_0$-type $\theta_0$
  and a set $\Theta_i \subseteq \tp(\vec{x}_i)$
  for each $i \in \set{1,\dotsc,\ell}$
  we write $\theta_0 \rightsquigarrow (\Theta_i)_{i=1}^\ell$
  iff there exists a $\theta_i \in \Theta_i$
  for each $i \in \set{1,\dotsc,\ell}$
  such that $\bigcup_{i=0}^\ell \theta_i$ is realizable in a model of $\Omc$.
  Furthermore, if $(x,y)$ and $(u,v)$ are pairs of distinct variables
  and $\Theta \subseteq \tp(u,v)$,
  then we write $\Theta_{x,y}$ for the set of all types in $\tp(x,y)$
  that can be obtained from a type in $\Theta$
  by renaming $u$ to $x$ and $v$ to $y$.

  We now turn to the description of the Datalog$^{\neq}$ program $\Pi$
  for evaluating $q$ w.r.t.\ $\Omc$.
  The program uses distinct variables $x,y,z_0,z_1,z_2,\dotsc,z_{N \tau 2^\tau}$.
  Here, $\tau$ is one more than the number of types of two distinct variables,
  and $N$ is the largest integer $n$
  such that a formula of the form $\exists^{\geq n} x\, \phi$
  occurs in $\cl(\Omc,q)$,
  or $1$ if there is no such formula in $\cl(\Omc,q)$.
  For each $i \in \set{1,2}$ and $\Theta \subseteq \tp(u,v)$,
  $\Pi$ uses an intensional binary relation symbol $P_\Theta$
  with the same intended interpretation as the symbols $P_\Theta^l$ in part~1.
  Let $\mathsf{neq}_\ell \isdef \bigwedge_{0 \leq i < j \leq \ell} z_i \neq z_j$.
  Then $\Pi$ consists of the following rules:
  \begin{enumerate}
  \item
    $P_\Theta(x,y) \gets R(\vec{v}) \land \alpha(\vec{w})$,
    where $\Theta$ is the set of all types in $\tp(x,y)$
    containing $R(\vec{v})$ and $\alpha(\vec{w})$,
    $\alpha$ is an atomic formula (including an equality)
    or an inequality,
    and $\vec{v}$ contains both $x$ and $y$;
  \item
    \(
      P_\Theta(x,z_0) \gets
      \bigwedge_{i=0}^l P_{\Theta_i}(x,z_i) \land \mathsf{neq}_\ell,
    \)
    where $\ell \leq N \tau 2^\tau$,
    $\Theta_i \subseteq \tp(x,z_i)$ for $i = 0,1,\dotsc,\ell$,
    and $\Theta$ consists of all $\theta_0 \in \Theta_0$
    with $\theta_0 \rightsquigarrow (\Theta_i)_{i=1}^\ell$;
  \item
    $P_{\Theta_1 \isect \Theta_2}(x,y) \gets \bigwedge_{i=1}^2 P_{\Theta_i}(x,y)$
    with $\Theta_i \subseteq \tp(x,y)$;
  \item
    $\mathsf{goal}(\vec{v}) \gets P_\Theta(x,y)$,
    where $\Theta \subseteq \tp(x,y)$,
    and $q(\vec{v}) \in \theta$ for each $\theta \in \Theta$;
  \item
    $\mathsf{goal}(\vec{v}) \gets P_\emptyset(x)$.
  \end{enumerate}
  In addition, for each set $\Theta \subseteq \tp(x,y)$,
  the program contains $P_{\Theta_{x,z_i}}(x,y) \gets P_\Theta(x,y)$
  and $P_\Theta(x,z_0) \gets P_{\Theta_{x,z_0}}(x,z_0)$
  to rename variables in types,
  and $P_{\Theta_{y,x}}(y,x) \gets P_\Theta(x,y)$ to swap positions of variables.

  Intuitively,
  $\Pi$ computes for each guarded tuple $(a,b)$ of an instance $\Dmf$
  a set $\Theta(a,b)$ of $(x,y)$-types $\theta_0$
  that contain all information
  about the atomic formulas that hold in $\Dmf_{|\set{a,b}}$,
  and for each collection $(a,c_1),\dotsc,(a,c_\ell)$
  of $\ell \leq N \tau 2^\tau$ guarded tuples of $\Dmf$
  that have a non-empty intersection with $(a,b)$
  there are types $\theta_i \in \Theta(a,c_i)$
  such that $\bigcup_{i=0}^\ell (\theta_i)_{x,z_i}$
  is realizable in a model of $\Omc$.
  The renaming of the variables in each $\theta_i$ is necessary
  to account for the overlap of the tuples $(a,b)$ and $\vec{c}_i$.
  $\Pi$ derives $\mathsf{goal}(\vec{c})$ if some $\Theta(a,b)$ is empty
  (which will happen if $\Dmf$ is inconsistent w.r.t.\ $\Omc$)
  or if $\Theta(a,b)$ contains $q(\vec{v})$ for some $(a,b)$,
  where the $i$th variable in $\vec{v}$ is $x$
  if the $i$th position of $\vec{a}$ is $a$,
  and it is $y$ otherwise.

  It remains to show that $\Pi$ is a Datalog$^{\neq}$-rewriting
  for $q$ w.r.t.\ $\Omc$.
  To this end, let $\Dmf$ be an instance.
  We show that $\Omc,\Dmf \not\models q(\vec{a})$
  iff $\Dmf \not\models \Pi(\vec{a})$.

  The ``only if'' direction is similar to part 1.
  For the ``if'' direction, assume $\Dmf \not\models \Pi(\vec{a})$.
  Recall the definition of the uGF$_2$-unravelling of $\Dmf$
  from Section~\ref{sec:unravelling-tolerance},
  and let $G_{\vec{a}}$ be a maximal guarded set of $\Dmf$
  containing the elements of $\vec{a}$.
  As in part 1, it suffices to construct a model $\Bmf$ of $\Dmf^u$ and $\Omc$
  with $\Bmf \not\models q(\vec{b})$,
  where $\vec{b}$ is the copy of $\vec{a}$ in $\text{bag}(G_{\vec{a}})$.

  Observe that for each guarded tuple $\vec{c}$ in $\Dmf$
  there is a unique minimal set $\Theta(\vec{c}) \subseteq \tp(x,y)$
  such that $\Pi$ derives $P_{\Theta(\vec{c})}(\vec{c})$ on input $\Dmf$.
  Since $\Pi$ does not derive $\mathsf{goal}(\vec{a})$,
  we must have $\Theta(\vec{c}) \neq \emptyset$
  for all guarded tuples $\vec{c}$ in $\Dmf$.
  Furthermore, if $\vec{a} = f(\vec{v})$
  for a mapping from the variables in $\vec{v}$ to the constants in $\vec{c}$,
  then $q(\vec{v}) \notin \theta$ for some $\theta \in \Theta(\vec{c})$.
  Let us denote the set of such types in $\Theta(\vec{c})$
  by $\Theta^{\lnot q}(\vec{c})$.

  To construct the desired model $\Bmf$,
  we first assign to each maximally guarded tuple $\vec{c}$ of $\Dmf^u$
  a type in $\Theta(\vec{c}^\uparrow)$ and a model $\Amf_t$ of $\Omc$ as follows.
  Recall the definition of the tree $T(\Dmf)$
  and the bags $\bag(t)$, for $t \in T(\Dmf)$,
  used in the definition of the uGF$_2$-unravelling
  in Section~\ref{sec:unravelling-tolerance}.
  We inductively assign to each node $t$ of $T(\Dmf)$
  a $(x,y)$-type $\theta_t$ as follows.
  If $t = \set{a,b}$ for a maximal guarded set $\set{a,b}$ in $\Dmf$,
  let $\theta_t$ be any type in $\Theta^{\lnot q}(a,b)$.
  For the induction step,
  assume that we have assigned a type $\theta_t$ to $t \in T(\Dmf)$,
  but that $\theta_{t'}$ is undefined for each node $t' = t G$ in $T(\Dmf)$.
  Let $\text{tail}(t) = \set{a,b}$.
  We distinguish the following two cases.

  \medskip\noindent
  \textit{Case 1: There exists a $t_0 \in T(\Dmf)$ with $t = t_0 \set{a,b}$.}
  In this case, there is only one kind of successor node of $t$ in $T(\Dmf)$:
  nodes of the form $t \set{a,c}$,
  where $a$ does not occur in $\text{tail}(t_0)$.
  Let $c_1,\dotsc,c_n$ be an enumeration of all elements of $\dom(\Dmf)$
  such that $t_i \isdef t \set{a,c_i}$ is a node in $T(\Dmf)$.
  Denote by $\sim$ the equivalence relation on $\set{c_1,\dotsc,c_n}$
  with $c_i \sim c_j$ iff $\Theta(a,c_i) = \Theta(a,c_j)$.
  There are $s \leq 2^\tau$ equivalence classes w.r.t.\ $\sim$.
  Let $E_1,\dotsc,E_s$ be an enumeration of these classes.
  For each $i \in \set{1,\dotsc,s}$,
  pick an arbitrary subset $E'_i$ of $E_i$ of size $N\tau$,
  or the entire set if $E_i$ has fewer than $N\tau$ elements.
  Let $c'_1,\dotsc,c'_\ell$ be an enumeration
  of the elements in $E'_1 \cup \dotsb \cup E'_s$.
  Then, $\ell \leq N \tau 2^\tau$.

  Since $\theta_t \in \Theta(a,b)$,
  the rules in line~2 of the definition of $\Pi$
  ensure $\theta_t \rightsquigarrow (\Theta_i)_{i=1}^\ell$,
  where $\Theta_i \isdef \set{\theta_{x,z_i} \mid \theta \in \Theta(a,c'_i)}$.
  Hence, for each $i \in \set{1,\dotsc,\ell}$
  there is a type $\theta_i \in \Theta(a,c'_i)$
  such that $\theta_t \cup \bigcup_{i=1}^\ell (\theta_i)_{x,z_i}$
  is realizable in a model of $\Omc$.
  Let $\theta_{t\set{a,c'_i}} \isdef \theta_i$
  for each $i \in \set{1,\dotsc,\ell}$.

  It remains to assign types to the elements in $E_i \setminus E'_i$,
  for each $i \in \set{1,\dotsc,s}$.
  By construction, we have $E_i \setminus E'_i \neq \emptyset$
  only if $\card{E'_i} = N\tau$.
  Thus there must be at least one type $\theta_i^*$
  that is assigned to at least $N$ of the nodes $t \set{a,c_i}$.
  We define $\theta_{t\set{a,c}} \isdef \theta_i^*$
  for each $c \in E_i \setminus E'_i$.

  This finishes the assignment of types $\theta_{t'}$
  to all successor nodes $t'$ of $t$.
  Note that by our choice of $N$ and the type $\theta_i^*$,
  the set $\theta \isdef \theta_t \cup \bigcup_{i=1}^n (\theta_{t_i})_{x,z_i}$
  is realizable in a model $\Amf_t$ of $\Omc$.
  In fact, $\Amf_t$ can be chosen such that
  the assignment $\pi_t$ with $\pi_t(x)^\uparrow = a$
  and $\pi_t(z_i)^\uparrow = c_i$
  realizes $\theta$ in $\Amf_t$.

  \medskip\noindent
  \textit{Case 2: $t = \set{a,b}$.}
  In this case, there are two kinds of successor nodes of $t$:
  nodes of the form $t \set{a,c}$
  and nodes of the form $t \set{b,c}$.
  Let $t_1,\dotsc,t_n$ be an enumeration of all nodes of the form $t \set{a,c}$,
  and let $u_1,\dotsc,u_m$ be an enumeration
  of all nodes of the form $t \set{b,c}$.
  We assign types $\theta_{t_i}$ to each node $t_i$ as in Case 1.
  We do the same for all successors nodes $u_j$,
  but here we have to use the rule $P_{\Theta_{y,x}}(y,x) \gets P_\Theta(x,y)$
  which implies that
  $\Theta(b,a) = \set{\theta_{y,x} \mid \theta \in \Theta(a,b)}$.
  One can now show that
  \(
    \theta \isdef
    \theta_t \cup
    \bigcup_{i=1}^n (\theta_{t_i})_{x,z_i} \cup
    \bigcup_{i=1}^m (\theta_{u_i})_{y,z'_i}
  \)
  is realizable in a model $\Amf_t$ of $\Omc$,
  and that this model can be chosen in such a way
  that the assignment $\pi_t$ with $\pi_t(x)^\uparrow = a$,
  $\pi_t(z_i)^\uparrow \in \text{tail}(t_i) \setminus \set{a}$
  and $\pi_t(u_j)^\uparrow \in \text{tail}(u_j) \setminus \set{b}$
  realizes $\theta$ in $\Amf_t$.

  \medskip\noindent
  This concludes the induction step.

  \medskip
  We are now ready to construct $\Bmf$.
  Note that any two $\Amf_t$ and $\Amf_{t'}$
  with $t'$ a neighbor of $t$ in $T(\Dmf)$
  coincide on all atoms that only involve elements of $\dom(\Dmf^u)$.
  Let $\Bmf_0$ be the collection of all atoms that occur in some $\Amf_t$
  and only involve elements of $\dom(\Dmf_u)$.
  For each $a \in \dom(\Dmf^u)$
  we now attach the following structure $\Bmf_{\set{a}}$ to $\Bmf_0$.
  Pick any $t \in T(\Dmf)$ such that $a \in \dom(\bag(t))$.
  Using the construction
  in the second part of the proof of Lemma~\ref{lem:forestmodel},
  we unfold $\Amf_t$ at $a$ into a cg-decomposable model $\Bmf_{\set{a}}$,
  where we identify $a$ and its neighbors $b$
  with the copies of $a$ and $b$ in the bag of the root and its neighbors.
  We assume that all other elements have been renamed
  so that they do not occur in $\dom(\Dmf^u)$.
  Let
  \[
    \Bmf \,\isdef\, \Bmf_0 \cup \bigcup_{a \in \dom(\Dmf^u)} \Bmf_{\set{a}}.
  \]
  It is not difficult to prove that $\Bmf$ is a model of $\Dmf^u$ and $\Omc$
  with $\Bmf \not\models q(\vec{b})$.
\end{proof}

\section{Proofs for Section 5}
\begin{trivlist}\item
\textbf{Theorem~\ref{thm:infiniteunravel} (restated)}~\itshape
Let $\Omc$ be  either a uGF$(1)$, uGF$^{-}(1,=)$,
uGF$^{-}_{2}(2)$, uGC$^{-}_{2}(1,=)$, or $\mathcal{ALCHIF}$ ontology
of depth~2.
If $\Omc$ is materializable for (possibly infinite)
cg-tree decomposable instances $\Dmf$ with $\text{sig}(\Dmf) \subseteq \text{sig}(\Omc)$,
then $\Omc$ is unravelling tolerant.
\end{trivlist}
\begin{proof}

We first give the proof for the ontology languages uGF$(1)$, uGF$_{2}^{-}(2)$ and uGF$^{-}(1,=)$.
Assume such an ontology $\Omc$ is given. Let $\Dmf$ be an instance and $\Dmf^{u}$ its uGF-unravelling.
Let $G_{0}$ be a maximal guarded set in $\Dmf$, $\vec{a}$ in $G_{0}$, $\vec{b}$ the copy of $\vec{a}$ in $\text{bag}(G_{0})$,
and $q$ an rAQ. We have to show that $\Omc,\Dmf\models q(\vec{a})$ implies $\Omc,\Dmf^{u}\models q(\vec{b})$.

Define an equivalence relation $\sim$ on $T(\Dmf)$ by setting $t\sim t'$ if $\text{tail}(t)=\text{tail}(t')$.
Recall that for any $t,t'\in T(\Dmf)$ with $t\sim t'$ the mapping $h_{t,t'}$ that sends
every $e\in \text{dom}(\text{bag}(t))$ to the unique $f\in \text{dom}(\text{bag}(t'))$ with
$e^{\uparrow}= f^{\uparrow}$ is an isomorphism from $\text{bag}(t)$ to $\text{bag}(t')$. Call
$h_{t,t'}$ the \emph{canonical isomorphism} from $\text{bag}(t)$ onto $\text{bag}(t')$. By construction
of the undirected graph $(T(\Dmf),E)$,
for any $t,t'\in T(\Dmf)$ with $t\sim t'$ there is an
automorphism $i_{t,t'}$ of $(T(\Dmf),E)$ such that $i_{t,t'}(t)=t'$ and $i_{t,t'}(s)\sim s$ for every
$s\in T(\Dmf)$. $i_{t,t'}$ is uniquely determined on the connected component of $t$ in $(T(\Dmf),E)$ and induces
the \emph{extended canonical automorphism} $\hat{h}_{t,t'}$ of $\Dmf^{u}$ by setting
$\hat{h}_{t,t'} = \bigcup_{s\in T(\Dmf)}h_{s,i_{t,t'}(s)}$.

\medskip
\noindent
{\bf Fact~1.} If $t,t'\in T(\Dmf)$ such that $t\sim t'$ then $\hat{h}_{t,t'}$ is an automorphism of $\Dmf^{u}$.

\bigskip

\noindent
Our aim now is to construct a materialization $\Bmf$ of $\Omc$ and $\Dmf^{u}$ that is a forest model
such that the automorphism $\hat{h}_{t,t'}$ can be lifted to an automorphism of $\Bmf$. Once this is done
it is straightforward to construct a materialization $\Bmf'$ of $\Omc$ and $\Dmf$ by hooking to any maximal guarded set $G$ of $\Dmf$ a
copy of $\Bmf_{\text{bag}(G)}$, the guarded tree decomposable subinstances of $\Bmf$ hooked to $\text{bag}(G)$ in $\Bmf$.
Then $(\Bmf',\vec{a})$ and $(\Bmf,\vec{b})$ are guarded bisimilar and if $\Omc,\Dmf^{u}\not\models q(\vec{b})$ then
$\Omc,\Dmf\not\models q(\vec{a})$ follows.

Let $\Bmf$ be a materialization of $\Omc$ and $\Dmf^{u}$.
(To show that $\Bmf$ exists let $\text{red}(\Dmf^{u})$ be the $\text{sig}(\Omc)$-reduct of
$\Dmf$. As $\Omc$ is invariant under disjoint unions and materializable
for the class of (possibly infinite) cg-tree decomposable instances $\Dmf$ with
$\text{sig}(\Dmf) \subseteq \text{sig}(\Omc)$ there exists a materialization
$\Bmf_{\text{red}}$ of $\text{red}(\Dmf^{u})$. Clearly
$\{ R \mid R(\vec{a})\in \Bmf_{\text{red}}\}\subseteq \text{sig}(\Omc)$.
Now let
$$
\Bmf= \Bmf_{\text{red}} \cup \{R(\vec{a}) \in \Dmf^{u}\mid R\not\in \text{sig}(\Omc)\}
$$
One can show that $\Bmf$ is a materialization of $\Dmf^{u}$ and $\Omc$.)

Fact~1 entails the following.
  %
  \\[2mm]
  {\bf Fact~2.}
   For all $t,t'$ with $t\sim t'$ and $\vec{e}$ in $\text{dom}(\text{bag}(t))$ and
   any rAQ $q(\vec{x})$ such that $\vec{x}$ has the same length as $\vec{e}$:
$$
\Bmf\models q(\vec{e}) \quad \Leftrightarrow \quad \Bmf\models q(h_{t,t'}(\vec{e}))
$$
We now distinguish two cases.

\medskip

\noindent
\emph{Case 1.} $\Omc$ is a uGF$(1)$ or a uGF$_{2}^{-}(2)$ ontology. The following observation is crucial for the proof
(and does not hold for languages with equality such as uGF$^{-}(1,=)$).

\medskip
\noindent
{\bf Observation 1.} If there is a homomorphism $h$ from an instance $\Dmf$ to
an instance $\Dmf'$ then $\Omc,\Dmf\models q(\vec{a})$ implies $\Omc,\Dmf'\models q(h(\vec{a}))$
for any CQ $q$ and $\vec{a}$ in $\text{dom}(\Dmf)$.

\medskip
\noindent

We hook in $\Dmf^{u}$ to any $\text{bag}(t)$ with $t\in T(\Dmf)$ a copy of any rAQ $q$ (regarded as an instance)
from which there is a homomorphism into $\Bmf$
that is injective on the root bag of $q$ and maps it into $\text{dom}(\text{bag}(t))$.
In more detail, let $X_{t}$ be the set of all pairs $(\pi,\Dmf_{q})$ such that $\Dmf_{q}$ is an instance corresponding
to an rAQ $q$ and $\pi$ is a homomorphism from $\Dmf_{q}$ to $\Bmf$ mapping the constants $d_{x}$ corresponding to
answer variables $x$ of $q$ to distinct $\pi(d_{x})\in \text{dom}(\text{bag}(t))$.
By renaming constants in $\Dmf_{q}$ we obtain an instance $\Dmf_{q,\pi}$ isomorphic to $\Dmf_{q}$ such that
$\pi(d_{x})= d_{x}$ and such that $\text{dom}(\Dmf_{q,\pi})\cap \text{dom}(\Dmf^{u})$ is the set of all $d_{x}$
with $x$ an answer variable of $q$. Now let
$$
\Dmf_{t}=\bigcup_{(q,\pi)\in X_{t}}\Dmf_{q,\pi},
\quad
\Dmf^{u+}= \Dmf^{u} \cup \bigcup_{t\in T(\Dmf)}\Dmf_{t}
$$
The following properties of $\Dmf^{u+}$ follow directly from the definition:
\begin{enumerate}
\item For any $t,t'\in T(\Dmf)$ with $t\sim t'$ there is an isomorphism from $\Dmf_{t}$ onto $\Dmf_{t'}$
that extends the canonical isomorphism $h_{t,t'}$;
\item there is a homomorphism from $\Dmf^{u+}$ into $\Bmf$ preserving $\text{dom}(\Dmf^{u})$.
Thus, by Observation~1, any materialization of $\Dmf^{u+}$ is a materialization of $\Dmf^{u}$ and
for every CQ $q(\vec{x})$ and $\vec{d}$ in $\Dmf^{u}$ of the same length as $\vec{x}$:
$$
\Omc,\Dmf^{u+}\models q(\vec{d}) \quad \Leftrightarrow \quad \Omc,\Dmf^{u}\models q(\vec{d})
$$
\end{enumerate}
Now take a materialization $\Bmf^{u+}$ of $\Dmf^{u+}$ and $\Omc$ that is a forest model of $\Dmf^{u+}$
and $\Omc$.
Thus $\Bmf^{u+}$ is obtained from $\Dmf^{u+}$ by hooking cg-tree decomposable
models $\Bmf_{t}^{u+}$ of $\Dmf_{t}$ to every $\text{bag}(t)$ with $t\in T(\Dmf)$.
By Point~1 and Fact~1, we have the following:

\medskip
\noindent
{\bf Fact~3.} For any $t,t'\in T(\Dmf)$ with $t\sim t'$ the canonical automorphism $\hat{h}_{t,t'}$
extends to an isomorphism from $\Bmf^{u+}_{|\text{dom}(\Dmf_{t})}$ onto $\Bmf^{u+}_{|\text{dom}(\Dmf_{t'})}$.

\medskip
\noindent
%
{\bf Fact 4.} For any $t,t'\in T(\Dmf)$ with $t\sim t'$ and any finite subinstance $\Amf$ of $\Bmf^{u+}_{t}$
there exists an isomorphic embedding of $\Amf$ into $\Bmf^{u+}_{|\text{dom}(\Dmf_{t'})}$ extending
the canonical automorphism $\hat{h}_{t,t'}$.

\medskip
\noindent
We prove Fact~4. By Fact~3 it suffices to prove that for any $t\in T(\Dmf)$ and any finite subinstance
$\Amf$ of $\Bmf^{u+}_{t}$ there exists an isomorphic embedding of $\Amf$ into $\Bmf^{u+}_{|\text{dom}(\Dmf_{t})}$
preserving $\text{dom}(\text{bag}(t))$. But this follows from the
fact that $\Bmf^{u+}$ is a materialization of $\Dmf^{u}$ (Point~(2)): then there is an isomorphism
from $\Amf$ to some $\Dmf_{\pi,q}$ used in the construction of $\Dmf^{u+}$ which preserves $\text{dom}(\text{bag}(t))$.
Fix such a $\Dmf_{\pi,q}$. It remains to be proved that there does not exist any $R(\vec{a})$
with $\vec{a}$ in $\text{dom}(\Dmf_{\pi,q})$ such that $R(\vec{a})\in \Bmf^{u+}\setminus \Dmf_{\pi,q}$.
But using the fact that $\Amf$ is a subinstance of the model $\Bmf^{u+}$ of $\Omc$ and $\Dmf^{u+}$
isomorphic to $\Dmf_{\pi,q}$ one can easily construct a model of $\Dmf^{u+}$ and $\Omc$ that contains no
$R(\vec{a})\not\in \Dmf_{\pi,q}$ with $\vec{a}$ in $\text{dom}(\Dmf_{\pi,q})$. Thus $\Bmf^{u+}$ contains
no such $R(\vec{a})$ since $\Bmf^{u+}$ is a materialization of $\Omc$ and $\Dmf^{u+}$.
This finishes the proof of Fact~4.

Fact~4 easily generalizes to $\Bmf^{u+}$ (by Fact~3): for any $t,t'\in T(\Dmf)$ with $t\sim t'$ and any
finite subinstance $\Amf$ of $\Bmf^{u+}$ there exists an isomorphic embedding of $\Amf$ into $\Bmf^{u+}$ extending
the canonical automorphism $\hat{h}_{t,t'}$.
%
%
We now uniformize $\Bmf^{u+}$. For each $t\in T(\Dmf)$ with $t$ not a guarded set in $\Dmf$ we hook at $\text{bag}(t)$ to $\Dmf^{u}$
an isomorphic copy $\Bmf^{u\ast}_{t}$ of the interpretation $\Bmf^{u+}_{\text{bag}(G)}$ with $t\sim G$ and remove $\Bmf^{u+}_{t}$.
Denote the resulting model by $\Bmf^{u\ast}$. $\Bmf^{u\ast}$ is uniform in the sense that for any two $t,t'\in T(\Dmf)$,
the cg-tree decomposable
models hooked to $\text{bag}(t)$ and $\text{bag}(t')$ in $\Bmf^{u\ast}$ are isomorphic. We now show that $\Bmf^{u\ast}$ is a
materialization of $\Dmf$ and $\Omc$. For $a\in \text{dom}(\Bmf^{u\ast}_{t})$ and $t\sim G$ denote by $a^{\sim}$ the
corresponding element of $\Bmf^{u+}_{\text{bag}(G)}$ such that for $a\in \text{bag}(t)$ we have $a^{\uparrow}= (a^{\sim})^{\uparrow}$.

\medskip
\noindent
{\bf Fact 5.} $\Bmf^{u\ast}$ is a materialization of $\Omc$ and $\Dmf^{u}$.

\medskip
We show that $\Bmf^{u\ast}$ is a model of $\Omc$. Then $\Bmf^{u\ast}$ is a materialization
of $\Omc$ and $\Dmf^{u}$ since it is a model of $\Dmf^{u}$ and since $\Bmf^{u\ast}$ is obtained
from the materialization $\Bmf^{u+}$ of $\Omc$ and $\Dmf^{u}$ by replacing certain interpretations
that are hooked to $\text{bag}(t)$ by interpretations that are hooked to $\text{bag}(t')$ for $t,t'\in T(\Dmf)$
with $t\sim t'$ which preserves the answers to rAQs (use Fact~1).


Consider first a sentence $\varphi$ of the form $\forall \vec{y} (R(\vec{y}) \rightarrow \psi(\vec{y}))$
in $\Omc$, where $\psi$ is a formula in openGF of depth one. We show that
$\Bmf^{u\ast}\models \varphi$.
Let $\Bmf^{u\ast}\models R(\vec{a})$ for some $\vec{a}=(a_{1},\ldots,a_{k})$
in $\text{dom}(\Bmf^{u\ast})$. Then $a_{1},\ldots,a_{k}$ are contained in $\Bmf^{u\ast}_{t}$ for some $t\in T(\Dmf)$.
We show that $\Bmf^{u\ast}\models \psi(\vec{a})$ iff $\Bmf^{u+}\models \psi(\vec{a}^{\sim})$ where
$\vec{a}^{\sim}=(a_{1}^{\sim},\ldots,a_{k}^{\sim})$. Assume $t\sim G$. If $a_{1},\ldots,a_{k} \not\in \text{dom}(\Dmf^{u})$, then
this is clear by construction since the truth of $\psi(\vec{a})$ then only depends on the subinterpretation
$\Bmf_{t}^{u\ast}$ of $\Bmf^{u\ast}$ and this is isomorphic to the subinterpretation $\Bmf_{\text{bag}(G)}^{u+}$ of the model
$\Bmf^{u+}$ of $\Omc$. Now assume that $\{a_{1},\ldots,a_{k}\}\cap \text{dom}(\Dmf^{u})=Z\not=\emptyset$. By Fact~4,
for any guarded set $F$ in $\Bmf^{u\ast}$ with $G'\cap Z\not=\emptyset$ there exists a guarded set $F'$ in $\Bmf^{u\ast}$
such that there exists an isomorphism $h$ from $\Bmf^{u\ast}_{|F}$ onto $\Bmf^{u\ast}_{|F'}$ that extends the canonical
automorphism $\hat{h}_{t,G}$. The converse direction holds as well: let $Z'=\{a_{1}^{\sim},\ldots,a_{k}^{\sim}\}\cap \text{dom}(\Dmf^{u})$.
By Fact~4,
for any guarded set $F'$ in $\Bmf^{u\ast}$ with $F'\cap Z'\not=\emptyset$ there exists a guarded set $F$ in $\Bmf^{u\ast}$
such that there exists an isomorphism $h$ from $\Bmf^{u\ast}_{|F'}$ onto $\Bmf^{u\ast}_{|F}$ that extends the canonical
automorphism $\hat{h}_{G,t}$.Thus $\Bmf^{u\ast}\models \psi(\vec{a})$ iff
$\Bmf^{u+}\models \psi(\vec{a}^{\sim})$ follows.

For sentences $\varphi$ of the form $\forall x\psi(x)$ in $\Omc$, where $\psi(x)$ is an openGF formula of
depth two using at most binary relations the argument is similar using the fact that guarded sets $\{a,b\}$ are either
completely contained in $\text{dom}(\Dmf^{u})$ or contain at most one element from $\text{dom}(\Dmf^{u})$.
This finishes the proof of Fact~5.

\medskip
Finally we hook for any maximal guarded $G$ in $\Dmf$ the interpretation $\Bmf^{u+}_{\text{bag}(G)}$ to $G$ in the original
instance $\Dmf$ and obtain a forest model $\Bmf^{+}$. It is straightforward to prove that for any maximal
guarded set $G$ in $\Dmf$, any tuple $\vec{e}$ containing all elements of $G$, and the copy $\vec{f}$ of $\vec{e}$ in
$\text{bag}(G)$ there is a connected guarded bisimulation between $(\Bmf^{u\ast},\vec{f})$ and
$(\Bmf^{+},\vec{e})$. It follows that $\Bmf^{+}$ is a model of $\Omc$ and $\Dmf$ and that $\Bmf^{+}\not\models q(\vec{a})$
if $\Bmf^{u\ast}\not\models q(\vec{b})$.

\medskip

\noindent
\emph{Case 2.} $\Omc$ is a uGF$^{-}(1,=)$ ontology. In this case the construction is
simpler as we do not modify $\Bmf$ further. There is no need to manipulate
$\Bmf$ as we are in a fragment of depth one in which the outermost universal quantifier is guarded by an equality.
Define $\Bmf^{+}$ by adding to $\Dmf$
\begin{itemize}
\item all atoms $R(a_{1}^{\uparrow},\ldots,a_{k}^{\uparrow})$ such that $R(a_{1},\ldots,a_{k})\in \Bmf$ and
$a_{1},\ldots,a_{k} \in \text{dom}(\Dmf^{u})$;
\item for any $a\in \text{dom}(\Dmf)$ for a fixed copy $a'$ of $a$ in $\Dmf^{u}$ a copy $\Bmf_{a'}$ of $\Bmf$ that is
hooked to $\Dmf$ at $a$ by identifying $a'$ and $a$.
\end{itemize}
Using Fact~1 and the condition that $\Omc$ is a uGF$^{-}(1,=)$ ontology it is straightforward
to prove that $\Bmf^{+}$ is a model of $\Omc$ and $\Dmf$. Also by Fact~1 and construction $\Bmf^{+}\not\models q(\vec{a})$
if $\Bmf\not\models q(\vec{b})$.

\medskip

We now assume that $\Omc$ is a uGC$^{-}_{2}(1,=)$ or a $\mathcal{ALCHIF}$ ontology
of depth~2. Let $\Dmf$ be an instance, $G_{0}$ be a maximal guarded set in $\Dmf$, $\vec{a}$ in $G_{0}$, $\vec{b}$ the copy of
$\vec{a}$ in $\text{bag}(G_{0})$, and $q$ an rAQ. We have to show that $\Omc,\Dmf\models q(\vec{a})$ implies
$\Omc,\Dmf^{u}\models q(\vec{b})$. Observe that now we have to consider the uGC$_{2}$-unravelling
rather than the uGF-unravelling of $\Dmf$. First
we establish again the existence of certain automorphisms of $\Dmf^{u}$. In this case, however, they are \emph{not}
induced by automorphisms of the tree $(T(\Dmf),E)$ but are determined directly on the interpretation.
Recall that for any $t,t'\in T(\Dmf)$ with $t\sim t'$ the mapping $h_{t,t'}$ that sends
every $e\in \text{dom}(\text{bag}(t))$ to the unique $f\in \text{dom}(\text{bag}(t'))$ with
$e^{\uparrow}= f^{\uparrow}$ is an isomorphism from $\text{bag}(t)$ to $\text{bag}(t')$. Call
$h_{t,t'}$ the \emph{canonical isomorphism} from $\text{bag}(t)$ onto $\text{bag}(t')$.
One can easily prove the existence of an \emph{extended canonical automorphism}
$\hat{h}_{t,t'}$ of $\Dmf^{u}$ that extends $h_{t,t'}$.

\medskip
\noindent
{\bf Fact~1.} If $t,t'\in T(\Dmf)$ such that $t\sim t'$ there is an automorphism $\hat{h}_{t,t'}$ of $\Dmf^{u}$ that
extends $h_{t,t'}$.

\bigskip

\noindent
Let $\Bmf$ be a materialization of $\Omc$ and $\Dmf^{u}$. Fact~1 entails the following.
  %
  \\[2mm]
  {\bf Fact~2.}
   For all $t,t'$ with $t\sim t'$ and $\vec{e}$ in $\text{dom}(\text{bag}(t))$ and
   any rAQ $q(\vec{x})$ such that $\vec{x}$ has the same length as $\vec{e}$:
$$
\Bmf\models q(\vec{e}) \quad \Leftrightarrow \quad \Bmf\models q(h_{t,t'}(\vec{e}))
$$
We now distinguish two cases.

\medskip\noindent
\emph{Case 3.} $\Omc$ is a uGC$_{2}^{-}(1,=)$ ontology. This case is similar to Case~2. No further modification of $\Bmf$ is needed.
We may assume that $\Bmf$ is obtained from $\Dmf^{u}$ by hooking cg-tree decomposable interpretations $\Bmf_{c}$
to $c$ for every $c\in \text{dom}(\Dmf^{u})$ and by adding atoms $R(c,d)$ to $\Dmf^{u}$ for distinct $c,d\in \text{dom}(\Dmf^{u})$.
Define $\Bmf^{+}$ by adding to $\Dmf$
\begin{itemize}
\item all atoms $R(a_{1}^{\uparrow},\ldots,a_{k}^{\uparrow})$ such that $R(a_{1},\ldots,a_{k})\in \Bmf$ and
$a_{1},\ldots,a_{k} \in \text{dom}(\Dmf^{u})$;
\item for any $a\in \text{dom}(\Dmf)$ for a fixed copy $a'$ of $a$ in $\Dmf^{u}$ a copy of $\Bmf_{a'}$ that is
hooked to $\Dmf$ at $a$ by identifying $a'$ and $a$.
\end{itemize}
Using Fact~1 and the condition that $\Omc$ is a uGC$_{2}^{-}(1,=)$ ontology it is straightforward
to prove that $\Bmf^{+}$ is a model of $\Omc$ and $\Dmf$. Also by Fact~1 and construction $\Bmf^{+}\not\models q(\vec{a})$
if $\Bmf\not\models q(\vec{b})$.

\medskip
\noindent
\emph{Case 4.} $\Omc$ is a $\mathcal{ALCHIF}$ ontology of depth~2. The proof that follows
is similar to Case~1, but one cannot hook to any $\text{bag}(t)$ a copy of any rAQ from which there is a
homomorphism into $\Bmf$ that is injective on the root bag of $q$ and maps it into $\text{dom}(\text{bag}(t))$ as this can
lead to violations of functionality. Two modifications are needed: firstly,
we do not independently hook interpretations to bags $\text{bag}(t)$ with two elements in
$\Dmf^{u}$. This is to avoid violations of functionality due to guarded sets $G_{1}$ and $G_{2}$ with $G_{1}\cap G_{2}=\{d\}$
when the interpretations we hook to $G_{1}$ and $G_{2}$ independently add an $R$-successor to $d$ for a function
$R$ (this has already been done in Case~3). Secondly, we cannot hook arbitrarily many rAQs to bags as this will
lead to violations of functionality as well.

We may again assume that $\Bmf$ is obtained from $\Dmf^{u}$ by hooking cg-tree decomposable interpretations $\Bmf_{c}$
to $c$ for every $c\in \text{dom}(\Dmf^{u})$ and by adding atoms $R(c,d)$ to $\Dmf^{u}$ for distinct $c,d\in \text{dom}(\Dmf^{u})$.
Observe that $G_{\Bmf_{c}}= \{ \{a,b\} \mid R(a,b)\in \Bmf_{c},a\not=b\}$ is an undirected tree. We call $c$ its root and in
this proof call such an interpretation a \emph{tree interpretation with root $c$}.
We have to modify $\Dmf^{u}$ to be able to uniformize. For any $c\in \text{dom}(\Dmf^{u})$ we define the tree instance
$\Dmf_{c}$ with root $c$ as follows.
Let $\Dmf_{q}$ be the instance corresponding to an rAQ $q = q(x)\leftarrow \phi$ with a single answer
variable $x$ and a single additional variable $y$ such that there is an injective
homomorphism $h$ from $\Dmf_{q}$ to $\Bmf$ mapping $x$ to some $c$ in $\text{dom}(\Dmf^{u})$
and such that neither $R(h(x),h(y))\in \Bmf$ nor $R(h(y),h(x))\in \Bmf$ for any $R$ that is functional in $\Omc$.
Then $\Dmf_{c}$ contains a copy of $\Dmf_{q}$ obtained by identifying the variable $x$ with $c$. Set
$$
\Dmf^{u+} = \{ R(\vec{a}) \in \Bmf \mid \vec{a}\subseteq \text{dom}(\Dmf^{u})\} \cup
\bigcup_{c\in \text{dom}(\Dmf^{u})}\Dmf_{c}.
$$
The following properties of $\Dmf^{u+}$ follow directly from the definition and standard properties
of $\mathcal{ALCHIF}$:
\begin{enumerate}
\item For any $c,d\in \text{dom}(\Dmf^{u})$ with $c^{\uparrow}=d^{\uparrow}$ there is an isomorphism from $\Dmf_{c}$ onto $\Dmf_{d}$
mapping $c$ to $d$;
\item there is a homomorphism from $\Dmf^{u+}$ into $\Bmf$ preserving $\text{dom}(\Dmf^{u})$ and functionality.
Thus, in particular, any materialization of $\Dmf^{u+}$ is a materialization of $\Dmf^{u}$ and
for every CQ $q(\vec{x})$ and $\vec{d}$ in $\Dmf^{u}$ of the same length as $\vec{x}$:
$$
\Omc,\Dmf^{u+}\models q(\vec{d}) \quad \Leftrightarrow \quad \Omc,\Dmf^{u}\models q(\vec{d})
$$
\end{enumerate}
Now take a materialization $\Bmf^{u+}$ of $\Dmf^{u+}$ and $\Omc$.
Thus $\Bmf^{u+}$ is obtained from $\Dmf^{u+}$ by hooking tree-interpretations $\Bmf_{c}^{u+}$ that are models of $\Dmf_{c}$
to every $c\in \text{dom}(\Dmf^{u})$. By Point~1 and Fact~1 and the properties of $\mathcal{ALCHIF}$ we have the following:

\medskip
\noindent
{\bf Fact~3.} For any $c,d\in \text{dom}(\Dmf^{u})$ with $c^{\uparrow}=d^{\uparrow}$, $c\in \text{dom}(\text{bag}(t))$,
and $d\in \text{dom}(\text{bag}(t'))$ such that $t\sim t'$ the canonical automorphism $\hat{h}_{t,t'}$
extends to an isomorphism from $\Bmf^{u+}_{|\text{dom}(\Dmf_{c})}$ onto $\Bmf^{u+}_{|\text{dom}(\Dmf_{d})}$.

\medskip
\noindent
The following fact can now be proved by modifying in a straightforward way the proof of Fact~4 above.
%

\medskip
\noindent
{\bf Fact 4.} For any $c,d\in \text{dom}(\Dmf^{u})$ with $c^{\uparrow}=d^{\uparrow}$ and any $e\in \text{dom}(\Bmf^{u+}_{c})$
there exists an isomorphic embedding of $\Bmf^{u+}_{|\{c,e\}}$ into $\Bmf^{u+}_{|\text{dom}(\Dmf_{d})}$.

\medskip
\noindent
Let for $c,d\in \text{dom}(\Dmf^{u})$, $c\sim d$ if $c^{\uparrow}=d^{\uparrow}$. Fix for any equivalence class $[c]=\{ d \mid d\sim c\}$
a unique $c_{\sim}\in [c]$. We now uniformize $\Bmf^{u+}$. For each $d\in \text{dom}(\Dmf^{u})$ we hook at $d$ to $\Dmf^{u}$
an isomorphic copy $\Bmf^{u\ast}_{d}$ of the interpretation $\Bmf^{u+}_{c_{\sim}}$ with $c_{\sim}^{\uparrow}=d^{\uparrow}$
and remove $\Bmf^{u+}_{d}$. Denote the resulting model by $\Bmf^{u\ast}$. $\Bmf^{u\ast}$ is uniform in the sense that for any
$c\sim d$ the interpretations $\Bmf_{c}^{u\ast}$ hooked to $c$ in $\Dmf^{u}$
and $\Bmf_{d}^{u\ast}$ hooked to $d$ in $\Dmf^{u}$ are isomorphic. One can now prove similary to the proof of
Fact~5 above the following:

\medskip
\noindent
{\bf Fact 5.} $\Bmf^{u\ast}$ is a materialization of $\Omc$ and $\Dmf^{u}$.

\medskip
\noindent
It remains construct the materialization $\Bmf^{+}$ of $\Dmf$. To this end we hook to any $c\in \text{dom}(\Dmf)$
the tree interpretation $\Bmf^{u\ast}_{d}$ with $d^{\uparrow}=c$. In addition we include in $\Bmf^{+}$ all
$R(a_{1}^{\uparrow},\ldots,a_{k}^{\uparrow})$ such that $R(a_{1},\ldots,a_{k})\in \Bmf^{u+}$ and
$a_{1},\ldots,a_{k} \in \text{dom}(\Dmf^{u})$. Using Fact~5 one can show that $\Bmf^{+}$ is a model of $\Omc$
and $\Dmf$ such that $\Bmf^{+}\not\models q(\vec{a})$ if $\Bmf\not\models q(\vec{b})$.
\end{proof}

\section{Proofs for Section 6}

\begin{trivlist}\item
  \textbf{Theorem~\ref{thm:csphard} (restated)}~\itshape
 For any of the following ontology languages, CQ-evaluation w.r.t.\
  \Lmc is CSP-hard: uGF$_{2}(1,=)$, uGF$_{2}(2)$, uGF$_2(1,f)$, and
  the class of $\mathcal{ALCF}_{\ell}$ ontologies of depth~2.
\end{trivlist}
\begin{proof}
We provide additional details of the proof for uGF$_{2}(1,=)$.
Recall that $\Omc$ contains
$$
\begin{array}{ll}
\multicolumn{2}{l}{\displaystyle\forall x (\bigwedge_{a\not=a'}\neg(\varphi_{a}^{\not=}(x)\wedge \varphi_{a'}^{\not=}(x))
\wedge \bigvee_{a}\varphi_{a}^{\not=}(x))}\\
\forall x (A(x) \rightarrow \neg \varphi_{a}^{\not=}(x)) & \text{when $A(a)\not\in \Amf$}\\
\forall xy (R(x,y) \rightarrow \neg (\varphi_{a}^{\not=}(x) \wedge
\varphi_{a'}^{\not=}(y))) & \text{when $R(a,a')\not\in \Amf$}\\[0.6mm]
\forall x \varphi_{a}^{=}(x) & \text{for all $a\in \text{dom}(\Amf)$}
\end{array}
$$
where $A$ and $R$ range over symbols in $\text{sig}(\Amf)$ of the
respective arity. We first show that coCSP$(\Amf)$ polynomially reduces to the query evaluation problem for
$(\Omc,q\leftarrow N(x))$. Assume $\Dmf$ with $\text{sig}(\Dmf)\subseteq \text{sig}(\Amf)$ is given and let
$$
\Dmf'=\Dmf \cup \{R_{a}(d,d') \mid P_{a}(d)\in \Amf\},
$$
where the relations $P_{a}\in \text{sig}(\Amf)$ determine the precoloring and $d'$ is a fresh labelled
null for each $d\in \text{dom}(\Dmf)$. We show that $\Dmf\rightarrow \Amf$ iff $\Omc,\Dmf'\not\models q$.
First let $h$ be a homomorphism from
$\Dmf$ to $\Amf$. Define a model $\Bmf$ of $\Dmf'$ and $\Omc$ by adding to $\Dmf'$ for any $d\in \text{dom}(\Dmf)$ with
$h(d) = a$ and infinite chain
$$
R_{a}(d_{0,d},d_{1,d}),R_{a}(d_{1,d},d_{2,d}),\ldots
$$
with $d_{0,d}=d$ and fresh labelled nulls $d_{i,d}$ for $i>0$. Also add
$R_{a}(d,d)$ to $\Dmf$ for all $d\in \text{dom}(\Dmf)$ and all labelled nulls used in the chains.
Using the definition of $\Omc$ it is not difficult to show that $\Bmf$ is a model of $\Omc$ and $\Dmf'$.
Thus $\Omc,\Dmf'\not\models q$, as required.
Now assume that $\Omc,\Dmf'\not\models q$. Then there is a model $\Bmf$ of
$\Omc$ and $\Dmf'$ such that $\Bmf\not\models q$. Define a mapping $h$
from $\Dmf$ to $\Amf$ by setting $h(d)=a$ iff there exists $d'$ with $d'\not=d$ and $R_{a}(d,d')\in \Bmf$.
Using the definition of $\Omc$ it is not difficult to show that $h$ is well defined and a homomorphism.
This finishes the proof of the polynomial reduction of coCSP$(\Amf)$ to the query evaluation problem
for $(\Omc,q\leftarrow N(x))$.

Now we show that for any rAQ $q$ the query evaluation problem for $(\Omc,q)$ can be polynomially reduced to
coCSP$(\Amf)$. We first show that there is a polynomial reduction of the problem whether an instance $\Dmf$ is
consistent w.r.t.~$\Omc$ to CSP$(\Amf)$. Assume $\Dmf$ is given. Let $\Dmf^{\bullet}$ be the
$\text{sig}(\Amf)$-reduct of $\Dmf$ extended with $P_{a}(d)$ for any $d$ with $R_{a}(d,d')\in \Dmf$ for some $d'\not=d$.
Using the definition of $\Omc$ one can show that $\Dmf$ is consistent w.r.t.~$\Omc$ iff $\Dmf^{\bullet}\rightarrow \Amf$.

Now $\Omc,\Dmf\models q(\vec{d})$ iff $\Dmf$ is not consistent w.r.t.~$\Omc$ or $\Dmf'\models q(\vec{d})$
where $\Dmf'= \Dmf\cup \{R_{a}(d,d)\mid a\in \text{dom}(\Amf),d\in \text{dom}(\Dmf)\}$. The latter problem is
in {\sc PTime}.
\end{proof}

\section{Proofs for Section 7}

\begin{trivlist}\item
  \textbf{Theorem~\ref{thm:undecidability} (restated)}~\itshape
For the ontology languages uGF$^{-}_{2}(2,f)$ and
$\mathcal{ALCIF}_{\ell}$ of depth 2, it is undecidable whether for a
given ontology \Omc,
\begin{enumerate}

\item query evaluation w.r.t.\ \Omc is in \PTime,
  Datalog$^{\not=}$-rewritable, or {\sc coNP}-hard
  (unless $\text{\sc PTime}=\text{\sc NP}$);

\item \Omc is materializable.

\end{enumerate}
\end{trivlist}
As discussed in the main part of the paper,
we prove Theorem~\ref{thm:undecidability} in two steps: we first
construct an ontology $\Omc_{\text{cell}}$ that marks lower left corners
of cells and then we construct an ontology $\Omc_{\mathfrak{P}}$ that marks
the lower left corner of grids that represent a solution to a rectangle
tiling problem $\mathfrak{P}$. We construct the ontologies in $\mathcal{ALCIF}_{\ell}$.
Thus, in addition to $\mathcal{ALCI}$ concepts we use concepts of the form $(\leq 1 R)$, $(= 1 R)$, and $(\geq 2 R)$.
The proof is given using DL notation.
%

\bigskip
\noindent
{\bf Marking the lower left corner of grid cells.}
Let $X$ and $Y$ be binary relations and $X^{-},Y^{-}$ their inverses
in $\mathcal{ALCI}$. Using the sentences
$$
\top
\sqsubseteq (\leq 1 Z)
$$
for all $Z\in \{X,Y,X^{-},Y^{-}\}$ we ensure
that in any instance $\Dmf$ that is consistent w.r.t.~our ontology the
relations $X$ and $Y$ as well as their inverses $X^{-}$ and $Y^{-}$
are functional in $\Dmf$ in the sense that $R(d,d'),R(d,d'')\in \Dmf$
implies $d'=d''$ for all $R\in \{X,Y,X^{-},Y^{-}\}$. For an instance
$\Dmf$ and $d\in \text{dom}(\Dmf)$ we write $\Dmf\models
\text{cell}(d)$ if there exist $d_{1},d_{2},d_{3}$ with $X(d,d_{1})$,
$Y(d_{1},d_{3})$, $Y(d,d_{2})$, $X(d_{2},d_{3})\in \Dmf$. Since $X$
and $Y$ are functional in $\Dmf$, $\Dmf\models \text{cell}(d)$ implies
$d_{3}=d_{4}$ if $X(d,d_{1})$, $Y(d_{1},d_{3})$, $X(d,d_{2}),
Y(d_{2},d_{4})\in \Dmf$. As a marker for all $d$ such that
$\Dmf\models \text{cell}(d)$ we use the concept $(=1P)$
for a binary relation $P$. For $P$ and all binary relations $R$ introduced below
we add the inclusion $\top \sqsubseteq \exists R.\top$
to our ontology so that when building models one can only choose between having exactly one $R$-successor
or at least two $R$-successors. To do the marking we use concepts $(=1 R_{1})$ and $(=1 R_{2})$ with binary
relation symbols $R_{1},R_{2}$ as `second-order variables', ensure
that all nodes in $\Dmf$ are contained in $(=1 R_{1}) \sqcup (=1
R_{2})$, and then state (as a first attempt) that
$$
\bigsqcup_{i=1,2}\exists X. \exists Y. (=1 R_{i}) \sqcap \exists
Y. \exists X. (=1 R_{i}) \sqsubseteq (=1 P)
$$
Clearly, if $\Dmf\models \text{cell}(d)$ then $\Omc,\Dmf\models (=1
P)(d)$ for the resulting ontology $\Omc$.  Conversely, the idea is
that if $\Dmf\not\models \text{cell}(d)$ and $X(d,d_{1})$,
$Y(d,d_{2}),Y(d_{1},d_{3})$, $X(d_{2},d_{4})\in \Dmf$ but
$d_{3}\not=d_{4}$, then one can extend $\Dmf$ by adding a single
$R_{1}$-successor and two $R_{2}$-successors to $d_{3}$, a single
$R_{2}$-successor and two $R_{1}$-successors to $d_{4}$, and two
$P$-successors to $d$ and thus obtain a model $\Bmf$ of $\Omc$ and $\Dmf$ in
which $d\not\in (=1P)^{\Bmf}$, see Figure~\ref{fig:no_cell_no_p}.
\begin{figure}[ht]
  \centering
  \includegraphics{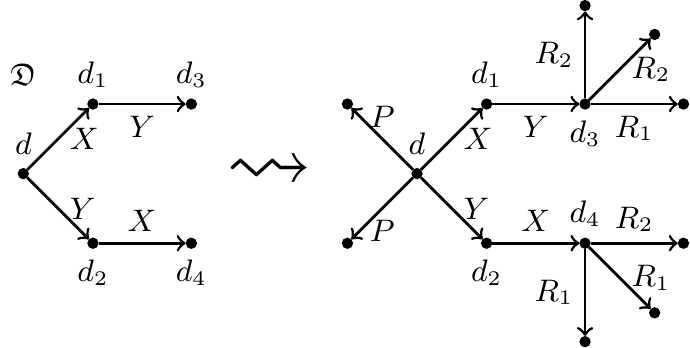}
  \caption{$\Dmf \not\models \text{cell}(d) \Rightarrow \Omc, \Dmf
    \not\models (= 1P)(d)$}
  \label{fig:no_cell_no_p}
\end{figure}
In general, however, this does not work and the latter sentence has
depth~3. The depth issue is easily resolved by introducing auxiliary
binary relation symbols $R_{i}^{X}$, $R_{i}^{Y}$, $R_{i}^{XY}$ and
$R_{i}^{YX}$, $i=1,2$, and replacing concepts such as $\exists
X.\exists Y.(=1 R_{i})$ by $(=1 R_{i}^{XY})$ and the sentences
$$
(=1 R_{i}^{XY}) \equiv \exists X.(=1 R_{i}^{Y}) \text{ and } (=1 R_{i}^{Y}) \equiv \exists Y.(=1 R_{i})
$$
Details are given below. Resolving the first issue is more involved.
There are two reasons why the converse does not hold. First, we might
have $X(d,d_{1})$, $Y(d_{1},d_{3})$, $X(d,d_{2}), Y(d_{2},d_{4})\in
\Dmf$ with $d_{3}\not=d_{4}$ but both $d_{3}$ and $d_{4}$ have already
two $R_{2}$-successors in $\Dmf$. Then the marker $(=1 P)$ is entailed
without the cell being closed at $d$ (i.e, without $\Dmf\models
\text{cell}(d)$).  Second, we might have an odd cycle of mutually
distinct $e_{0},e_{1},\ldots,e_{n}\in \Dmf$ such that each $e_{i}$
reaches $e_{(i+1)\bmod{n+1}}$ via a $Y^{-}X^{-}YX$-path in $\Dmf$, for $i=0,1,\ldots,n$.
Figure~\ref{fig:p_implies_cell_counterexample} illustrates this
for $n=2$.  Then, since in at least two neighbouring
$e_{i},e_{(i+1)\bmod{n+1}}$ the same concept $(=1 R_{i})$ is
enforced, the marker $(=1 P)$ is enforced at some node $d$ from which
$e_{i}$ and $e_{(i+1)\bmod{3}}$ are reachable along $XY$ and
$YX$-paths, respectively, without satisfying $\text{cell}(d)$.  We
resolve both problems by enforcing that $\Dmf$ is not consistent
w.r.t.~~our ontology if such constellations appear.

\begin{figure}[ht]
  \centering
  \includegraphics{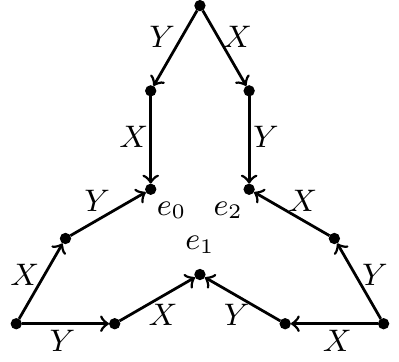}
  \caption{$\mathfrak D \models (= 1P)(d) \not\Rightarrow \mathfrak D
    \models \text{cell}(d)$}
  \label{fig:p_implies_cell_counterexample}
\end{figure}

In detail, we construct an ontology $\Omc_{\text{cell}}$ that uses in
addition to $X,Y,X^{-},Y^{-}$ the set $\text{AUX}_{\text{cell}}$ of
binary relations $P,R_{i},R_{i}^{W}$, where $i\in\{1,2\}$ and $W$
ranges over a set of words over the alphabet $\{X,Y,X^{-},Y^{-}\}$ we
define below. The $R_{i}^{W}$ serve as auxiliary symbols to avoid
sentences of depth larger than two. No unary relations are used.  To
ensure that CQ-evaluation is Datalog$^{\not=}$-rewritable
w.r.t.~$\Omc_{\text{cell}}$ we include in $\Omc_{\text{cell}}$ the
concept inclusions
$$
\top \sqsubseteq \exists Q.\top
$$
for all binary relations $Q\in \text{AUX}_{\text{cell}}$.  If an
instance $\Dmf$ is consistent w.r.t.~$\Omc_{\text{cell}}$, then
its materialization adds a certain number of $Q$-successors to any
$d\in \text{dom}(\Dmf)$ to satisfy $\top \sqsubseteq \exists Q.\top$
for $Q\in \text{AUX}_{\text{cell}}$.  The remaining sentences in
$\Omc_{\text{cell}}$ only influence the number of $Q$-successors that
have to be added and thus do not influence the certain answers to
CQs. In fact, we will have the following equivalence
$$
\Omc_{\text{cell}},\Dmf\models q(\vec{d}) \text{ iff } \{\top
\sqsubseteq \exists Q.\top \mid Q\in \text{AUX}_{\text{cell}}\},\Dmf\models
q(\vec{d})
$$
for any CQ $q$ and $\Dmf$ that is consistent w.r.t.~$\Omc_{\text{cell}}$.
Define for any non-empty word $W$
over $\{X,Y,X^{-},Y^{-}\}$ the set $\exists^{W}(=1 R_{i})$ of
sentences inductively by setting for $Z\in \{X,Y,X^{-},Y^{-}\}$:
\begin{eqnarray*}
  \exists^{Z}(=1 R_{i}) &  =  & \{(=1 R_{i}^{Z}) \equiv \exists Z.(=1 R_{i})\}\\
  \exists^{ZW}(=1 R_{i}) &  =  &\{(=1 R_{i}^{ZW}) \equiv \exists Z.(=1
  R_{i}^{W})\}\\
  &&{}\cup \exists^{W}(=1 R_{i})
\end{eqnarray*}
Thus, $\exists^{W}(=1 R_{i})$ states that the unique $d'$ reachable
from $d$ along a $W$-path has exactly one $R_{i}$-successor iff $d$
has exactly one $R_{i}^{W}$-successor.  Now $\Omc_{\text{cell}}$ is
defined as follows.
\begin{enumerate}
\item Functionality of $X,Y,X^{-}$ and $Y^{-}$ is stated using
$$
\top \sqsubseteq (\leq 1 Z)
$$
for $X,Y,X^{-},Y^{-}$.
\item All nodes have exactly one $R_{1}$ successor or exactly one
  $R_{2}$-successor:
$$
\top \sqsubseteq (=1 R_{1}) \sqcup (=1 R_{2})
$$
\item If all nodes reachable along an $XY$-path and a $YX$-path have
  exactly one $R_{1}$ and exactly one $R_{2}$-successor, then the
  marker $(=1 P)$ is set:
$$
\bigsqcap_{i=1,2}(=1 R_{i}^{XY}) \sqcap (=1 R_{i}^{YX}) \sqsubseteq
(=1 P)
$$
\item For $i=1,2$, the concept $(=1 R_{i})$ is true at least at every
  third node on the cycles in $\Dmf$ introduced above:
  $$
  (=1 R_{j}^{CC}) \sqsubseteq (=1 R_{i}) \sqcup (=1
  R_{i}^{C}) \sqcup (=1 R_{i}^{CC})
  $$
for $\{i,j\}=\{1,2\}$ and~$C=X^{-}Y^{-}XY$
\item If $(=1 R_{1})$ and $(=1 R_{2})$ are both true in a node in
  $\Dmf$ then they are both true in all `neighbouring' nodes in
  $\Dmf$:
$$
(=1 R_{1}^{X^{-}Y^{-}XY}) \sqcap (=1 R_{2}^{X^{-}Y^{-}XY}) \sqsubseteq
R^{12}
$$
$$
(=1 R_{1}^{Y^{-}X^{-}YX}) \sqcap (=1 R_{2}^{Y^{-}X^{-}YX}) \sqsubseteq R^{12}
$$
for $R^{12} \isdef{ (=1 R_{1}) \sqcap (=1 R_{2}) }$
\item The auxiliary sentences $\exists^{W}(=1 R_{i})$ for all
  relations $R_{i}^{W}$ used above.
\end{enumerate}
\begin{lemma}\label{lem:cell}
  The ontology $\Omc_{\text{cell}}$ has the following properties for
  all instances $\Dmf$:
\begin{enumerate}
\item for all $d\in \text{dom}(\Dmf)$: $\Omc_{\text{cell}},\Dmf\models
  (= 1 P)(d)$ iff $\Dmf$ is not consistent w.r.t.~$\Omc_{\text{cell}}$ or
  $\Dmf\models \text{cell}(d)$; moreover, if
  $\Dmf$ is consistent w.r.t.~$\Omc_{\text{cell}}$, then there
  exists a materialization $\Bmf$ of $\Dmf$ and $\Omc_{\text{cell}}$
  such that $d\in (=1 P)^\Bmf$ iff $d\in \text{dom}(\Bmf)$ and $\Dmf
  \models \text{cell}(d)$;
\item If all binary relations are functional in $\Dmf$, then $\Dmf$ is
  consistent w.r.t.~$\Omc_{\text{cell}}$;
\item CQ-evaluation w.r.t~$\Omc_{\text{cell}}$ is
  Datalog$^{\not=}$-rewritable.
\end{enumerate}
\end{lemma}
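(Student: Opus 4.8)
The plan is to prove the three items in the order (2), (1), (3): item (2) is a quick model construction, item (1) is the technical core, and item (3) follows from the structure exposed by the first two.

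\emph{Item (2).} Given $\Dmf$ in which all binary relations are functional, I would build a model $\Bmf \supseteq \Dmf$ of $\Omc_{\text{cell}}$ directly. Since $X,Y,X^-,Y^-$ are functional in $\Dmf$ and no axiom of $\Omc_{\text{cell}}$ adds $X$- or $Y$-edges, the functionality axioms $\top \sqsubseteq (\leq 1 Z)$ hold. I would then let every element of $\Dmf$ have exactly one $R_1$- and one $R_2$-successor (possible because $R_1,R_2$ are functional in $\Dmf$ and only need to be nonempty), so that $(=1R_1)\sqcap(=1R_2)$ holds at every original element; this satisfies the disjunction axiom and makes the consequents of the two "cycle" axioms true everywhere. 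Next I propagate the truth of $(=1R_i^W)$ through the trees of fresh successors exactly as dictated by the definitional axioms $\exists^W(=1R_i)$ — one successor where the $W$-path exists and leads to a $(=1R_i)$-node, at least two otherwise — and finally give every element one $P$-successor where the left-hand side of the marker axiom holds and at least two otherwise. The only thing to verify is that no axiom ever forces a single element to have both exactly one and at least two $Q$-successors for the same $Q$; this is immediate from the shape of the axioms, since each consequent is a positive statement about a single successor count and the two branches of a definitional equivalence are mutually exclusive.

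\emph{Item (1).} If $\Dmf$ is inconsistent w.r.t.\ $\Omc_{\text{cell}}$ then $\Omc_{\text{cell}},\Dmf\models\psi$ for all $\psi$, so the equivalence is trivial; assume $\Dmf$ consistent. For "$\Dmf\models\text{cell}(d)\Rightarrow\Omc_{\text{cell}},\Dmf\models (=1P)(d)$", I would take an arbitrary model $\Bmf$ of $\Dmf$ and $\Omc_{\text{cell}}$: since $X,Y,X^-,Y^-$ are functional in $\Bmf$ and the cell closes at a common far corner $c$, the words $X^-Y^-XY$ and its square are loops at $c$; evaluating the left-hand side of the cycle axiom at $c$ (where $(=1R_j^{CC})$ collapses to $(=1R_j)(c)$) and chasing the disjunction yields $c\in(=1R_1)^\Bmf\cap(=1R_2)^\Bmf$, which the remaining cycle axiom and the definitional axioms propagate back along both sides of the cell to give $d\in(=1R_i^{XY})^\Bmf\cap(=1R_i^{YX})^\Bmf$ for both $i$, whence the marker axiom forces $d\in(=1P)^\Bmf$. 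For the converse together with the "moreover" clause, I would start from the canonical forest-model $\Amf$ of the existential-only ontology $\Omc'=\{\top\sqsubseteq\exists Q.\top\mid Q\in\text{AUX}_{\text{cell}}\}$ over $\Dmf$ and pad it into a model $\Bmf$ of $\Omc_{\text{cell}}$ as in item~(2), but at the far corners of cells that do \emph{not} close one chooses the $R_i$-labels asymmetrically so that the left-hand side of the marker axiom fails there and one can give such a $d$ at least two $P$-successors. \textbf{The main obstacle of the whole lemma is here}: one must show this asymmetric choice is always available when $\Dmf\not\models\text{cell}(d)$. I would argue that the only ways it could fail are the two constellations described just before the lemma (a non-closed cell whose far corners already carry two $R_2$-successors, and odd $Y^-X^-YX$-cycles), and that the cycle axioms have been designed exactly so that the presence of either constellation renders $\Dmf$ inconsistent w.r.t.\ $\Omc_{\text{cell}}$; hence for consistent $\Dmf$ neither occurs, the labelling works at every non-cell, and the resulting $\Bmf$ is a materialization of $\Dmf$ and $\Omc_{\text{cell}}$ with $(=1P)^\Bmf=\{d\mid\Dmf\models\text{cell}(d)\}$.

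\emph{Item (3).} I would first prove $\Omc_{\text{cell}},\Dmf\models q(\vec d)\iff\Omc',\Dmf\models q(\vec d)$ for every CQ $q$ and every $\Dmf$ consistent w.r.t.\ $\Omc_{\text{cell}}$. One direction is immediate from $\Omc'\subseteq\Omc_{\text{cell}}$. For the other, I would take the canonical, homomorphically universal forest-model $\Amf$ of $\Omc'$ and $\Dmf$, pad it to a model $\Bmf$ of $\Omc_{\text{cell}}$ and $\Dmf$ exactly as in item~(1), and exhibit a homomorphism $\Bmf\to\Amf$ that preserves $\text{dom}(\Dmf)$ by collapsing the additional fresh successors of $\Bmf$ onto the unique corresponding successors of $\Amf$ (possible because the trees hanging off $\Amf$ are "generic"). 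Since homomorphisms preserve CQ answers, any $\vec d$ not entailed by $\Omc'$ is not entailed by $\Omc_{\text{cell}}$. Finally, to obtain a Datalog$^{\neq}$-rewriting of $(\Omc_{\text{cell}},q)$: inconsistency of $\Dmf$ w.r.t.\ $\Omc_{\text{cell}}$ is a Datalog$^{\neq}$-expressible property — it amounts to detecting a violation of a functionality axiom together with the bounded, functional-path constellations isolated in item~(1) — so a Datalog$^{\neq}$ program can return all tuples on inconsistent $\Dmf$ and otherwise run a standard (equality-free) Datalog-rewriting of $(\Omc',q)$, which exists because $\Omc'$ is a set of very simple guarded existential rules.
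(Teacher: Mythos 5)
Your plan for Items (2) and (3) is sound and follows the paper's strategy (a direct model construction exploiting that $R_1,R_2$ are already functional; an equivalence of certain answers with the trivial ontology $\Omc'$ via a homomorphism from a padded materialization into the $\Omc'$-chase). The genuine gap is in Item (1), precisely at the step you flag as the main obstacle, and your proposed way of closing it does not work.

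You claim that the only obstructions to the asymmetric $R_i$-labelling at non-closed cells are the two constellations described in the discussion preceding the lemma, and that the cycle axioms (items 4 and 5 of $\Omc_{\text{cell}}$) render either constellation inconsistent. Both halves of this claim are false. First, those constellations are presented in the paper as motivation for replacing the ``first-attempt'' \emph{disjunctive} marker axiom $\bigsqcup_i(\dots)\sqsubseteq (=1P)$ with the \emph{conjunctive} one $\bigsqcap_i(\dots)\sqsubseteq(=1P)$; they are not inconsistency patterns for $\Omc_{\text{cell}}$. Concretely, a non-closed cell whose two far corners $d_3\neq d_4$ both already have two $R_2$-successors is perfectly consistent with $\Omc_{\text{cell}}$: assign both $d_3,d_4 \in (=1R_1)$ and $\geq 2$ $R_2$-successors; no axiom fails, and the conjunctive marker does not fire at $d$ (since $(=1R_2^{XY})(d)$ fails). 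Likewise an odd $Y^-X^-YX$-cycle of length $n\geq 3$ with no pre-forced colours admits a $2$-colouring without three consecutive same colours (e.g.\ $1,2,1$ for $n=3$), so it too is consistent; again the conjunctive marker does not fire. The reason these constellations do no harm is not that they are inconsistent, but that the final marker axiom is a conjunction over $i=1,2$ and so only fires at a $d$ whose \emph{single} far corner $c$ (a $\leq$-self-loop) is forced by axioms (2) and (4) into $(=1R_1)\sqcap(=1R_2)$.

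What is actually needed, and what the paper proves as its Claim~1, is a characterisation of consistency in terms of the $\sim^*$-equivalence classes of the (partial, functional) relation $\leq$: every class is a chain or a simple cycle, and $\Dmf$ is consistent iff (a) no self-loop element is pre-forced to $(\geq 2 R_i)$, and (b) every non-singleton class admits a partition $E_1,E_2$ extending the pre-forced sets $E_1',E_2'$ and containing no three $\leq$-consecutive elements in the same block. This condition is strictly stronger than ``no three \emph{pre-forced} consecutive same'' (e.g.\ $e_0,e_1\in E_1'$, $e_2$ free, $e_3,e_4\in E_2'$ has no valid $2$-colouring, since $e_2$ is forced to $2$ by the first pair and then $e_2,e_3,e_4$ are all $2$), so it cannot be reduced to a short list of local ``constellations'' as your sketch assumes. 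Claim~1 both identifies the inconsistent instances and, in the consistent case, produces exactly the asymmetric colouring you want — and this colouring is what makes the conjunctive marker fire only at closed cells. Without this structural analysis your Item (1), and consequently the Datalog$^{\neq}$ inconsistency test on which your Item (3) relies, remain unproved.
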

\begin{proof}
  We first derive a necessary and sufficient condition for consistency
  of instances $\Dmf$ w.r.t.~$\Omc_{\text{cell}}$. Lemma~\ref{lem:cell} then follows in a
  straightforward way. It is easy to see that if any of the following
  conditions is not satisfied, then $\Dmf$ is not consistent w.r.t.~$\Omc_{\text{cell}}$:
\begin{itemize}
\item all $X,Y,X^{-},Y^{-}$ are functional in $\Dmf$;
\item $\Dmf$ is consistent w.r.t.~the sentences $\exists^{W}(=1
  R_{i})$ in $\Omc_{\text{cell}}$;
\item if $\Dmf\models \text{cell}(d)$, then $d$ has at most one
  $P$-successor in $\Dmf$.
\end{itemize}
We thus assume in what follows that all three conditions are
satisfied.  Clearly, they can be encoded in Datalog$^{\not=}$.
Moreover, by Point~2 we can assume that $\Dmf$ is saturated for the
sentences $\exists^{W}(=1 R)$ in the sense that if $(=1 R^{ZW})\equiv
\exists Z.(=1 R^{W})\in \Omc_{\text{cell}}$ then for any $Z(d,d')\in
\Dmf$ the following holds: $d$ has at least two $R^{ZW}$-successors in
$\Dmf$ iff $d'$ has at least two $R^{W}$-successors in $\Dmf$.
Now let $e_{1}\leq e_{2}$ iff there are
$X(d,d_{1}),Y(d_{1},e_{1}),Y(d,d_{2}), X(d_{2},e_{2})\in \Dmf$. Let
$e_{1}\sim e_{2}$ iff $e_{1}\leq e_{2}$ or $e_{2}\leq e_{1}$ and let
$\sim^{\ast}$ be the smallest equivalence relation containing
$\sim$. For any equivalence class $E$ w.r.t.~$\sim^{\ast}$ either
\begin{itemize}
\item $E$ is of the form $e_{0}\leq \cdots \leq e_{n}$ with
  $e_{i}\not=e_{j}$ for all $i\not=j$, or
\item $E$ is a cycle $e_{0}\leq \cdots \leq e_{n}$ with $e_{i}=e_{j}$
  iff $\{i,j\}=\{0,n\}$ for all $i\not=j$.
\end{itemize}
Thus, if $E$ is not a singleton $\{e\}$ with $e\leq e$, we can
partition $E$ into two sets $E_{1}$ and $E_{2}$ (with one of them
possibly empty) such that

\medskip
\noindent
($\dagger$) there are no three $e\leq e' \leq e''$ in the same $E_{i}$.


\medskip
\noindent
Now set for any equivalence class $E$ and $\{i,j\}=\{1,2\}$,
$$
E_{j}'= \{d\in E \mid \Dmf\models (\geq 2 R_{i})(d)\}
$$

\medskip
\noindent
{\bf Claim 1}.
$\Dmf$ is consistent w.r.t.~$\Omc_{\text{cell}}$ iff the following conditions hold for all
equivalence classes $E$:
\begin{itemize}
\item[(a)] if $E=\{e\}$ with $e\leq e$ then $e\not\in E_{1}'\cup
  E_{2}'$;
\item[(b)] otherwise, there exists a partition $E_{1},E_{2}$ of $E$
  with $E_{i}\supseteq E_{i}'$ satisfying ($\dagger$).
\end{itemize}
Moreover, if (a) and (b) hold, then a materialization $\Bmf$
satisfying the conditions of Lemma~\ref{lem:cell} (1) exists.

\medskip
\noindent
($\Rightarrow$) First assume that Point~(a) does not hold for some
$E=\{e\}$ with $e\leq e$.  Then $\Dmf$ is not consistent w.r.t.~$\Omc_{\text{cell}}$
by the axioms given under (2) and (4) since it is
not possible to satisfy $(=1 R_{i})$ in $e$ if $e\in E_{j}'$
($i\not=j$).  Now assume that (b) holds. So there exists $E$ that has
either at least two elements or $e\not\leq e$ if $E=\{e\}$ but there
exists no partition $E_{1},E_{2}$ of $E$ with $E_{i}\supseteq E_{i}'$
satisfying ($\dagger$).  Then the axioms under (4) cannot be
satisfied without having at least one node in $E$ that is in both $(=1
R_{1})$ and $(=1 R_{2})$. But then by the axioms under (5) all
nodes in $E$ are in $(=1 R_{1})$ and in $(=1 R_{2})$ which implies
that $E_{1}'=E_{2}'=\emptyset$. This contradicts our assumption that
there is no partition $E_{1},E_{2}$ of $E$ with $E_{i}\supseteq
E_{i}'$ satisfying ($\dagger$).

\medskip
\noindent
($\Leftarrow$) Assume (a) and (b) hold for every equivalence class
$E$. For $E=\{e\}$ with $e\leq e$ we can thus construct the relevant
part of a model $\Bmf$ of $\Dmf$ and $\Omc_{\text{cell}}$ such that
$e$ has exactly one $R_{i}$-successor for $i=1,2$ and also exactly one
$P$-successor.  Thus axiom (2) is satisfied.  All $e$ that are not
members of such an equivalence class are given at least two
$P$-successors.  For any equivalence class with at least two members
or $E=\{e\}$ with $e\not\leq e$ we can construct the relevant part of
$\Bmf$ such that each $d\in E_{i}$ has exactly one $R_{i}$-successor
and each $d\in E\setminus E_{i}$ has at least two
$R_{i}$-successors. As $E_{1}$ and $E_{2}$ are mutually disjoint, the
axioms under (5) are satisfied.  As ($\dagger$) is satisfied, the
axioms under (4) are satisfied.  As $E_{1}\cup E_{2}$ contains $E$,
the axioms under (2) are satisfied.

\medskip
The conditions (a) and (b) can be encoded in a Datalog$^{\not=}$
program in a straightforward way and thus there is a
Datalog$^{\not=}$-program checking consistency of an instance $\Dmf$
w.r.t.~$\Omc_{\text{cell}}$. Datalog$^{\not=}$-rewritability of
CQ-evaluation w.r.t.~$\Omc_{\text{cell}}$ now follows from the
observation that
$$
\Omc_{\text{cell}},\Dmf\models q(\vec{d}) \text{ iff }
\{\top \sqsubseteq \exists Q.\top \mid Q\in
\text{AUX}_{\text{cell}}\},\Dmf\models q(\vec{d})
$$
for any CQ $q$, $\Dmf$ that is consistent w.r.t.~$\Omc_{\text{cell}}$,
and any $\vec{d}$ in $\Dmf$.
\end{proof}
This finishes the construction and analysis of $\Omc_{\text{cell}}$.
\newcommand{\xinv}{\ensuremath{x^{-}}}
\newcommand{\yinv}{\ensuremath{y^{-}}}

\bigskip
\noindent
{\bf Marking the lower left corner of grids.}
We now encode the rectangle tiling problem.
Recall that an instance of the \emph{finite rectangle tiling problem} is given by a
triple $\mathfrak{P}=(\Tmf,H,V)$ with $\Tmf$ a non-empty, finite set
of \emph{tile types} including an \emph{initial tile} $T_\mn{init}$ to
be placed on the lower left corner and nowhere else and a \emph{final
  tile} $T_\mn{final}$ to be placed on the upper right corner and
nowhere else, $H \subseteq \Tmf \times \Tmf$ a \emph{horizontal
  matching relation}, and $V \subseteq \Tmf \times \Tmf$ a
\emph{vertical matching relation}. A \emph{tiling} for $(\Tmf,H,V)$ is
a map $f:\{0,\dots,n\} \times \{0,\dots,m\} \rightarrow \Tmf$ such
that $n,m \geq 0$, $f(0,0)=T_\mn{init}$, $f(n,m)=T_\mn{final}$,
$(f(i,j),f(i+1,j)) \in H$ for all $i < n$, and $(f(i,j),f(i,j+1)) \in
V$ for all $i < m$. We say that $\mathfrak{P}$ \emph{admits a tiling}
if there exists a map $f$ that is a tiling for $\mathfrak{P}$. It is
undecidable whether an instance of the finite rectangle tiling problem
admits a tiling.

Now let $\mathfrak{P}=(\Tmf,H,V)$ with $\Tmf=\{T_1,\dots,T_p\}$.  We
regard the tile types in $\mathfrak{T}$ as unary relations and take
the binary relations symbols $X,Y,X^{-},Y^{-}$ from above and an
additional set $\text{AUX}_{\text{grid}}$ of binary relations $F,
F^{X}, F^{Y}, U, R, L, D$, and $A$.  The ontology
$\Omc_{\mathfrak{P}}$ is defined by taking $\Omc_{\text{cell}}$ and adding the sentences
$$
\top \sqsubseteq \exists Q.\top
$$
for all $Q\in \text{AUX}_{\text{grid}}$ as well as all sentences in
Figure~\ref{fig:O_P} to it, where $(T_i,T_j,T_\ell)$ range over all triples
from \Tmf such that $(T_i,T_j) \in H$ and $(T_i,T_\ell) \in V$:

  $$
  where $i$ ranges over $1..n$.

\end{enumerate}
We now establish correctness of the reduction. A \emph{tiling chain}
is an instance $\Dmf$ that consists of the following assertions:
\begin{itemize}

\item $r(a_i,a_{i+1})$ for $0 \leq i < 2^{2n}-1$

\item $X_j(a_i)$ whenever the $j$-th bit of $i$ is one

\item $\overline{X}_j(a_i)$ whenever the $j$-th bit of $i$ is zero

\item exactly one $T_t(a_i)$ for each $i$, $t \in \Tmc$, such that
  when $T_t(a_i),T_{t'}(a_{i+1}) \in \Dmf$, then $(t,t') \in H$.

\end{itemize}
We say that the tiling chain $\Dmf$ is \emph{defective} if there is an
$i \leq 2^{2n}-(2^n+1)$ such that
$T_t(a_i),T_{t'}(a_{i+2^n}) \in \Dmf$ and $(t,t') \notin V$.

  \medskip

\begin{lemma}
  $P$ has a solution iff $\Omc$ is not materializable.
\end{lemma}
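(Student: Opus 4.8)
The plan is to prove both directions by relating non-materializability of $\Omc$ to the existence of a solution for $P$, using Lemma~\ref{lem:undecidability}-style reasoning adapted to this encoding.

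\textbf{Direction ``$\Rightarrow$''.}  Suppose $P$ has a solution $\tau$.  First I would build a tiling chain $\Dmf_\tau$ of length $2^{2n}$ in which $T_{\tau(i \bmod 2^n, \lfloor i/2^n\rfloor)}(a_i)$ records the tiling, and the $X_j,\overline{X}_j$ encode the position $i$ in binary as in the definition of tiling chains.  Since $\tau$ is a genuine solution, this chain satisfies the horizontal matching condition (so it is a legal tiling chain) and is \emph{not} defective (so it satisfies the vertical matching condition at distance $2^n$).  I would then argue that $\Omc,\Dmf_\tau \models H_V(a_0)$: the marker $H_V$ starts at the initial position by group~(2), propagates upward through groups~(3)–(4) because all $\mn{ok}_i$ markers can be forced (the horizontal condition holds) and $H_V$ is passed along each $r$-edge, so $H_V$ reaches $a_{2^{2n}-1}$ and hence, walking back down, $a_0$.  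Once $H_V(a_0)$ holds, group~(5) forces $\exists r.C$ at the predecessor step, so in every model an existential ``defect chain'' is generated.  Along that chain, groups~(6)–(8) only ever offer a disjunction: at each point one either implements a defect here (via $C_D$, which itself carries a disjunction over $V$-incompatible tile pairs) or defers it.  Because $\Dmf_\tau$ is not defective, I would show that \emph{no} consistent choice of defect position and tile pair lets the defect chain map homomorphically into $\Dmf_\tau$ — the guessed incompatible pair $(t,t')$ with $(t,t')\notin V$ would have to be matched against an actual pair at vertical distance $2^n$ in $\Dmf_\tau$, which is impossible since $\tau$ is a solution.  Therefore the defect chain is a genuine fresh chain visible to CQs, and the disjunctions in groups~(6),(8) give rise to a failure of the disjunction property (Theorem~\ref{thm:disjproperty}), so $\Omc$ is not materializable.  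I would use the hiding trick (group~(1): each hidden concept $A$ replaced by $\forall s.A$ with $\top\sqsubseteq\exists s.A$) to explain why these disjunctions are invisible to queries yet still force branching in models, exactly as in the proof sketch of Example~\ref{exlast} and Theorem~\ref{thm:undecidability}.

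\textbf{Direction ``$\Leftarrow$''.}  Conversely, suppose $P$ has no solution.  I would show that $\Omc$ is materializable by a chase-style construction, following the remark at the end of Example~\ref{exlast} that ``on instances that contain no $R$-chain of length $2^n$, a materialization can be constructed by a straightforward chase procedure.''  Given any instance $\Dmf$ consistent w.r.t.\ $\Omc$, I would build a model by: satisfying every $\top\sqsubseteq\exists s.A$ and $\top\sqsubseteq\exists r.(\dots)$ inclusion by adding fresh witnesses, choosing the disjuncts in groups~(2),(6),(8) in a canonical ``most permissive'' way (e.g.\ always defer, i.e.\ choose $H_L$; for $C_D$-disjunctions, since $P$ has no solution the marker $H_V$ can never be forced all the way to a position that would trigger group~(5) in a way that cannot be absorbed, so these branches are vacuous or can be collapsed).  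The key point is that without a solution, the verification of a properly tiled $2^n\times 2^n$ grid can never complete: $H_V$ cannot reach the top of a legal non-defective tiling chain, so no forced disjunction ever fires, and the chase produces a single model into which every other model homomorphically maps.  Hence $\Omc$ admits hom-universal models, so by Lemma~\ref{lem:canhom} (or directly) $\Omc$ is materializable.

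\textbf{Main obstacle.}  The delicate part is the ``$\Rightarrow$'' direction, specifically verifying that when $\Dmf_\tau$ is a non-defective legal tiling chain, the existentially generated defect chain genuinely cannot fold back into $\Dmf_\tau$ under any homomorphism — one must track the interaction between the $Y$-counter (group~(6)–(7)), the $Z$-counter (group~(8)), and the tile markers $S_t$ through $2^n$ steps, and check that the ``anti-wraparound'' inclusions (the $\overline{X}_i$/$\overline{Y}_i$/$\overline{Z}_i$ guards) together with the $\exists r.X_i \sqcap \exists r.\overline{X}_i \sqsubseteq \bot$ inclusions really do pin down the counter behavior so that a guessed $V$-violation is forced to correspond to an actual vertical adjacency in the grid.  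This is where the depth-2 restriction bites and where the bulk of the technical work lies; the materializability (chase) direction is comparatively routine.
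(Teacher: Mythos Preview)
Your ``$\Rightarrow$'' direction has the right shape and matches the paper's argument: build the non-defective tiling chain, observe that the existential defect-chain generated by groups~(5)--(8) cannot fold back into it, and conclude via failure of the disjunction property. (The remark about ``walking back down to $a_0$'' is confused---$H_V$ ends up at the top element $a_{2^{2n}-1}$, and group~(5) then fires at its predecessor---but this is cosmetic.)

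The ``$\Leftarrow$'' direction, however, contains a genuine gap. You claim that if $P$ has no solution then ``$H_V$ cannot reach the top of a legal non-defective tiling chain, so no forced disjunction ever fires,'' and propose to ``always defer, i.e.\ choose $H_L$.'' Both claims are wrong. The propagation of $H_V$ in groups~(2)--(4) checks only the \emph{horizontal} matching condition; it says nothing about vertical matching. So an instance may perfectly well contain a (defective) tiling chain along which $H_V$ propagates all the way to the top, and then group~(5) \emph{does} fire. Moreover, the second inclusion in group~(6) forces $H_D$ at the position corresponding to the start of the second row, so you cannot defer forever. Thus the disjunctions in (6) and (8) are genuinely exercised even when $P$ has no solution.

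The paper's argument for this direction is different and more subtle than you anticipate: one shows that whenever group~(5) fires at some element $a$, there must be a tiling chain in the instance ending at (the successor of) $a$; since $P$ has no solution, that chain is defective, i.e.\ it contains an actual vertical mismatch at distance $2^n$. One then \emph{chooses} the disjuncts in (6) and (8) so that the generated existential chain exactly describes that defect---picking the right defect position and the right $V$-incompatible tile pair---which makes the existential chain homomorphically embed back into the instance. This renders the choices invisible to CQs and yields a materialization. So you have the relative difficulty of the two directions backwards: the materializability argument is not a routine chase but hinges on the fold-back-into-a-defect idea.
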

\begin{proof}
  (sketch) ``if''. Assume that $P$ has no solution. Take an instance $\Dmf$.
	We have to construct a CQ-materialization $\Amf$ of $\Dmf$ and $\Omc$. To do this, start with $\Dmf$ viewed as an
  interpretation~\Amf. To satisfy the concept inclusions in Points~1-4, which all
  fall within monadic Datalog, apply a standard chase procedure. To
  satisfy the concept inclusions in Point~5 to~8, consider every
  $a \in (\exists r . (X_1 \sqcap \cdots \sqcap X_{2n} \sqcap
  H_V))^\Imc$.
  The concept $H_V= \forall s . V$ cannot be made true by an atom in an instance and, consequently, it was made true at $a$ by the chase.
  Analyzing the concept inclusions in $\Omc$, one can show that there must thus be a
  tiling chain $\Cmc \subseteq \Dmf$ whose last element $a_{2^{2n}-1}$
  is $a$. Since there is no solution, \Cmc must be defective. When
  generating the existential chain, we can thus make the guesses in a
  way such that the chain homomorphically maps into \Cmc, thus into
  \Dmf.  It can be verified that this yields the desired
  CQ-materialization of $\Omc$ and $\Dmf$.

  \smallskip

  ``only if''.  Assume that $P$ has a solution. Let \Dmf be the tiling
  chain that represents it. We show that there is no materialization
  of $\Omc$ and $\Dmf$. In fact, \Omc propagates the $H_V$ marker all the
  way up to the last element $a=a_{2^{2n}-1}$ of $\Dmf$, where then an
  existential chain is generated. Because of the use of disjunctions
  to generate all different kinds of violations of the vertical tiling
  conditions, there are clearly at least two chains that can be
  generated and that are incomparable in terms of
  homomorphisms. Moreover, since $\Dmf$ represents a proper tiling, none
  of the chains homomorphically maps to $\Dmf$. Consequently, we can
  find CQs $q_1(x),\dots,q_m(x)$ such that
  $\Omc,\Dmf \models q_1(a) \vee \cdots \vee q_m(a)$, but
  $\Omc,\Dmf\not\models q_i(a)$ for any $i$. Thus, \Omc does not
  have the disjunction property and consequently is not materializable.
\end{proof}

\end{document}